\declaretheorem{definition}
\declaretheorem{theorem}
\declaretheorem[sibling=theorem]{lemma}
\declaretheorem[sibling=theorem]{claim}
\declaretheorem[sibling=theorem]{corollary}
\newcommand{\zhiyi}[1]{\todo[inline,color=green!40]{ZH: #1}}
\newcommand{\gaorq}[1]{\todo[inline,color=blue!40]{Ruiquan: #1}}
\newcommand{\eset}{\mathcal{E}}
\newcommand{\uset}{\mathcal{U}}
\newcommand{\head}{\mathsf{H}}
\newcommand{\tail}{\mathsf{T}}
\newcommand{\origin}{\mathsf{O}}
\newcommand{\yes}{\mathsf{M}}
\newcommand{\no}{\mathsf{U}}
\newcommand{\matched}{q_\yes}
\newcommand{\unmatched}{q_\no}
\newcommand{\ready}{q_{\no^2}}
\newcommand{\state}{q}
\newcommand{\choice}{d}
\renewcommand{\vec}[1]{\bm{#1}}
\newcommand*{\defeq}{\stackrel{\text{def}}{=}}
\newcommand{\E}{\mathbf{E}}
\renewcommand{\Pr}{\mathbf{Pr}}
\newcommand{\exante}{\text{\rm ex-ante}}
\newcommand{\forest}{\text{\rm forest}}
\newcommand{\indexset}{\mathcal{I}}
\newcommand{\elabel}{\ell}
\newcommand{\matchinglp}{\textsc{Matching LP}\xspace}
\newcommand{\balancelp}{\textsc{Balance LP}\xspace}
\newcommand{\alg}{\textsc{Alg}}
\newcommand{\balance}{\textsc{Balance}\xspace}
\newcommand{\balanceocs}{\textsc{Balance-OCS}\xspace}
\newcommand{\ranking}{\textsc{Ranking}\xspace}
\title{
    Improved Online Correlated Selection
    \thanks{This is the second version on arXiv. Compared to the first version, this one adds a discussion on two concurrent works on the same topic, gives a more accurate description of previous results, and improves the presentation based on the feedbacks by anonymous reviewers. The conference version appears in FOCS 2021.}
}
\author{
Ruiquan Gao
\thanks{IIIS, Tsinghua University. Email: grq18@mails.tsinghua.edu.cn, hezt18@mails.tsinghua.edu.cn.}
\and
Zhongtian He
\footnotemark[2]
\and
Zhiyi Huang
\thanks{The University of Hong Kong. Email: zhiyi@cs.hku.hk, bisjack@connect.hku.hk, yanzhong.cs@gmail.com.}
\and
Zipei Nie
\thanks{Nine-Chapter Lab, Huawei. Email: niezipei@huawei.com.}
\and
Bijun Yuan
\footnotemark[3]
\and
Yan Zhong
\footnotemark[3]
}
\date{December 2021}
\begin{document}

\begin{titlepage}
    \thispagestyle{empty}
    \maketitle
    \begin{abstract}
        \thispagestyle{empty}
        This paper studies the online correlated selection (OCS) problem.
It was introduced by Fahrbach, Huang, Tao, and Zadimoghaddam (2020) to obtain the first edge-weighted online bipartite matching algorithm that breaks the $0.5$ barrier.
Suppose that we receive a pair of elements in each round and immediately select one of them.
Can we select with negative correlation to be more effective than independent random selections?
Our contributions are threefold.
For semi-OCS, which considers the probability that an element remains unselected after appearing in $k$ rounds, we give an optimal algorithm that minimizes this probability for all $k$.
It leads to $0.536$-competitive unweighted and vertex-weighted online bipartite matching algorithms that randomize over only two options in each round, improving the $0.508$-competitive ratio by Fahrbach et al.~(2020).
Further, we develop the first multi-way semi-OCS that allows an arbitrary number of elements with arbitrary masses in each round.
As an application, it rounds the \balance algorithm in unweighted and vertex-weighted online bipartite matching and is $0.593$-competitive.
Finally, we study OCS, which further considers the probability that an element is unselected in \emph{an arbitrary subset of rounds}.
We prove that the optimal ``level of negative correlation'' is between $0.167$ and $0.25$, improving the previous bounds of $0.109$ and $1$ by Fahrbach et al.~(2020).
Our OCS gives a $0.519$-competitive edge-weighted online bipartite matching algorithm, improving the previous $0.508$-competitive ratio by Fahrbach et al.~(2020).

    \end{abstract}
\end{titlepage}

\tableofcontents
\thispagestyle{empty}
\newpage
\addtocontents{toc}{\protect\thispagestyle{empty}} 
\setcounter{page}{1}

\section{Introduction}
\label{sec:introduction}

Real-life optimization problems often need to make decisions based on the information at hand instead of the full picture in hindsight.
Online advertising platforms show advertisements within milliseconds after receiving each user query.
Ride hailing applications match riders and drivers without full knowledge of future ride requests.
Cloud service providers assign computational tasks to physical servers not knowing what tasks the users may submit later.
Due to the broad applications, the design and analysis of online algorithms for these optimization problems are a central topic in computer science and operations research.

Lacking accurate knowledge of the full picture, there is usually no universally good decision for all possible future input in these online optimization problems.
As a result, online algorithms need to hedge against different possibilities through randomized decisions.
Consider the online bipartite matching problem by \citet*{KarpVV:STOC:1990} as a running example.
We want to find a matching in a bipartite graph and maximize its size.
Initially, we only know the left-hand-side of the bipartite graph, a.k.a., the offline vertices.
Online vertices on the right-hand-side arrive one at a time.
We must immediately and irrevocably match each of them upon arrival.
Any deterministic greedy algorithm gives a maximal matching, and therefore its size is at least half of the maximum matching in hindsight.
Beating this trivial bound of half, however, necessitates randomization even for bipartite graphs with only two vertices on each side.

\bigskip

\begin{tcolorbox}[colback=lightgray!40,arc=0pt,outer arc=0pt,width=\textwidth,boxrule=.5pt]
    \textbf{Example.~}
    Consider vertices $1$ and $2$ on the left and $3$ and $4$ on the right.
    Vertex $3$ arrives first with edges to both $1$ and $2$.
    A deterministic algorithm then immediately matches $3$, e.g., to $1$.
    If vertex $4$ only has an edge to $1$, however, the algorithm cannot match it even though a perfect matching exists in hindsight.
    A randomized algorithm that matches $3$ to $1$ and $2$ with equal probability, on the other hand, matches $\frac{3}{2}$ edges in expectation.
\end{tcolorbox}

\medskip

Would it suffice to independently randomize over two offline neighbors?
Unfortunately, the answer is negative (e.g., \citet*{FahrbachHTZ:FOCS:2020}).
We need to correlate different rounds' selections to break the $\frac{1}{2}$ barrier.
One can introduce correlation through problem specific methods.
The \ranking algorithm of \citet{KarpVV:STOC:1990}, for example, samples a random order of the offline vertices at the beginning, and then matches each online vertex to the first unmatched offline neighbor by that order.
\ranking and its variants achieve the optimal $1-\frac{1}{e}$ competitive ratio for unweighted~\cite{KarpVV:STOC:1990} and vertex-weighted online bipartite matching~\cite{AggarwalGKM:SODA:2011}, but have difficulties extending to the more general edge-weighted problem (a.k.a., Display Ads)~\cite{FeldmanKMMP:WINE:2009} and AdWords~\cite{MehtaSVV:JACM:2007}.

\citet{FahrbachHTZ:FOCS:2020}, on the other hand, formulate a generic online selection problem and design online correlated selection (OCS) algorithms that lead to the first edge-weighted online bipartite matching algorithm that breaks the $\frac{1}{2}$ barrier.
Subsequently, \citet*{HuangZZ:FOCS:2020} break the $\frac{1}{2}$ barrier in AdWords using a similar approach.
The online selection problem considers a set of ground elements (e.g., the offline vertices) and a sequence of pairs of these elements (e.g., a pair of offline neighbors for each online vertex).
The algorithm immediately selects an element upon receiving each pair.
If it independently selects a random element from each pair, then with probability $2^{-k}$ an element remains unselected in a subset of $k$ pairs involving it.
Can we be more effective than independent random selections?

\citet{FahrbachHTZ:FOCS:2020} study two versions of online selection called semi-OCS and OCS.
Semi-OCS focuses on the probability that an element is unselected at the end when it is in $k$ pairs.
They give a semi-OCS that upper bounds this probability by $2^{-k}(1-\gamma)^{k-1}$ for $\gamma = 0.109$, and call it a $\gamma$-semi-OCS.
They also prove that $1$-semi-OCS is impossible.
OCS further considers the probability that an element is unselected in \emph{an arbitrary subset} of pairs involving it.
When the subset is the union of $m$ consecutive subsequences of the pairs involving the element, with lengths $k_1, k_2, \dots, k_m$, their OCS bounds the unselected probability by $\prod_{i=1}^m 2^{-k_i} (1-\gamma)^{k_i-1}$ also for $\gamma = 0.109$.
They call it a $\gamma$-OCS.
The weaker guarantee of semi-OCS is sufficient for unweighted and vertex-weighted online bipartite matching, while the stronger guarantee of OCS is sufficient for the edge-weighted problem.
The main idea is to randomly match the pairs so that two matched pairs share a common element that is not in any pair in between.
Their semi-OCS and OCS then select oppositely from the matched pairs with respect to the common element.

\citet{FahrbachHTZ:FOCS:2020} explicitly leave two open questions:
(1) \emph{What is the best possible $\gamma$ for which $\gamma$-semi-OCS and $\gamma$-OCS exist?}
(2) \emph{Are there multi-way online selection algorithms that select from multiple elements in each round, with sufficient negative correlation such that the resulting online matching algorithms are better than the two-way counterparts?}
We remark that the matching-based approach fails fundamentally in the multi-way extension.
If each round has $n$ elements, it can then be matched to $\Omega(n)$ other rounds. For large $n$, any matching is sparse and the resulting negative correlation is negligible.

\subsection{Our Contributions}

\paragraph{Semi-OCS and Weighted Sampling without Replacement.}
This paper gives a complete answer to the first open question for semi-OCS.
In fact, we not only show that the optimal $\gamma$ equals $\frac{1}{2}$ for semi-OCS, but also find that the unselected probability converges to zero much faster than the guanrantee of $\gamma$-semi-OCS when the element is in $k \ge 3$ rounds.

\paragraph{Informal Theorem 1.}
\emph{
    There is a polynomial-time semi-OCS such that an element that is in $k$ pairs is selected with probability at least $1-2^{-2^k+1}$.
    This is the best possible for all $k \ge 1$.
}

\bigskip

In each round the optimal semi-OCS selects the element that appears more in previous rounds, and is unselected thus far, breaking ties randomly.
It is the limit case of weighted sampling without replacement, when an element's weight is exponential in the number of previous rounds with the element, and when the base of the exponential tends to infinity.
The main lemma in our analysis, which may be of independent interest, shows that the selections of different elements are negatively correlated in weighted sampling without replacement with two elements per round.
See Section~\ref{sec:semi-ocs}.

This paper further answers the second open question affirmatively for semi-OCS by studying a multi-way online selection problem that allows an arbitrary marginal distribution over the elements in each round.
We will refer to the marginal probability of an element as its mass in that round.
Consider the probability that an element is unselected at the end when its total mass is $y$.
On the one hand, sampling independently from the marginal distributions (with replacement) bounds the probability by $\exp(-y)$.
On the other hand, for an overly idealized algorithm, which samples from the marginal distributions and ensures that each element is sampled at most once, this probability is $\max \{1-y, 0\}$.
For $y \in [0, 1]$, it is $\exp(-y-\frac{y^2}{2}-\frac{y^3}{3}-\dots)$ by the Taylor series of $\ln(1-y)$.
We match the overly idealized bound up to the quadratic term, with a smaller cubic coefficient.

\paragraph{Informal Theorem 2.}
\emph{
    There is a polynomial-time multi-way semi-OCS such that an element with total mass $y$ is selected with probability at least $1-\exp(-y-\frac{y^2}{2}-\frac{4-2\sqrt{3}}{3}y^3)$.
}

\bigskip

It may be tempting to conjecture that sampling independently from the marginal distributions \emph{without replacement} already improves the trivial bound of $\exp(-y)$.
Unfortunately, this is false.
Consider an element $e$ that is in all $T$ rounds each with mass $\epsilon = \frac{y}{T}$, and let there be a distinct element other than $e$ with mass $1-\epsilon$ in each round.
Then element $e$ remains unselected at the end with probability $\exp(-y)$ when $T$ tends to infinity.

This example suggests that if an element has accumulated some mass and is still unselected, we shall give it a higher priority than new elements that are not in any previous rounds.
It motivates weighted sampling without replacement where the weight is a function of the total mass of the element in previous rounds.
We choose the weight to be the inverse of the upper bound on the unselected probability so that \emph{the expected sampling weight of any element is at most its mass in the round}, an invariant that is the key to our analysis.
See Section~\ref{sec:multi-way}.

\begin{table}[t]
    \centering
    \caption{A summary of the results in this paper on online correlated selection and their applications in online bipartite matching, with a comparison to those by \citet{FahrbachHTZ:FOCS:2020}.}
    \label{tab:summary}
    \renewcommand{\arraystretch}{1.25}
    \begin{tabular}{lcc}
        \toprule
        & \citet{FahrbachHTZ:FOCS:2020} & \textbf{This Paper} \\
        \midrule
        Semi-OCS\footnotemark[1] & $2^{-k} (1-0.109)^{k-1}$ & $2^{-2^k+1}$ \\
        Multi-way Semi-OCS\footnotemark[2] & - & $\exp(-y-\frac{y^2}{2}-\frac{4-2\sqrt{3}}{3} y^3)$ \\
        $\gamma$-OCS & $0.109 \le \gamma < 1$ & $0.167 \le \gamma \le \frac{1}{4}$ \\
        Unweighted/Vertex-weighted ($2$-Way) & $0.508$ & $0.536$ \\
        Unweighted/Vertex-weighted (Multi-way)\footnotemark[3] & - & $0.593$ \\
        Edge-weighted & $0.508$ & $0.519$ \\
        \bottomrule%
    \end{tabular}
    
    \medskip

    \begin{minipage}{.95\textwidth}
        \footnotesize
        \footnotemark[1]
        The table presents upper bounds on the probability that an element is unselected when it is in $k$ pairs.%
        \\
        \footnotemark[2]
        The table presents upper bounds on the probability that an element is unselected when its total mass is $y$.%
        \\
        \footnotemark[3]
        \ranking by \citet{KarpVV:STOC:1990} is $1-\frac{1}{e}$-competitive which is optimal.
        Nonetheless, the algorithms in this paper are the first ones other than \ranking whose competitive ratios are beyond the $0.5+\epsilon$ regime.
    \end{minipage}
\end{table}

\paragraph{OCS and Probabilistic Automata.}
This paper also contributes to the first open question for OCS by narrowing the gap between the upper and lower bounds on the best possible $\gamma$.

\paragraph{Informal Theorem 3.}
\emph{
    There is a polynomial-time $0.167$-OCS.
    Further, there is no $\gamma$-OCS for any $\gamma > \frac{1}{4}$, even with unlimited computational power.
}

\bigskip

The improved OCS also abandons the matching-based approach and instead introduces an automata-based approach.
Informally, it picks an element from each round to probe the element's state.
If the element was selected last time, select the other element this time.
If the element has not been selected in the last two appearances, select it this time.
Finally, only when the element was not selected last time but was selected before that, the OCS selects with fresh randomness.
The actual algorithm is more involved.
For example, we cannot pick an element to probe independently in each round in general.
See Section~\ref{sec:ocs} for detail.

\paragraph{Applications in Online Bipartite Matching.}
The new results on online correlated selection from this paper lead to better online bipartite matching algorithms.
For the unweighted and vertex-weighted problems, we get a $0.536$-competitive two-way algorithm, improving the $0.508$-competitive algorithm by \citet{FahrbachHTZ:FOCS:2020}.
We further show that the multi-way semi-OCS can round the (fractional) \balance algorithm (e.g., \cite{KalyanasundaramP:TCS:2000, MehtaSVV:JACM:2007}), and be $0.593$-competitive.
For the edge-weighted problem, the $0.167$-OCS gives a $0.512$-competitive algorithm.
In the process, we refine the reductions from online matching problems to online correlated selection so that the competitive ratios admit close-formed expressions, and a guarantee strictly weaker than $\gamma$-OCS already suffices for the edge-weighted problem.
Motivated by the relaxed guarantee, we design a variant of OCS that further improves the edge-weighted competitive ratio.
See Section~\ref{sec:matching}.

\paragraph{Informal Theorem 4.}
\emph{
    There is a polynomial-time $0.519$-competitive algorithm for edge-weighted online bipartite matching.%
    \footnote{We assume free disposals, which is standard in the edge-weighted problem under worst-case analysis.}
}

\subsection{Other Related Works}

\paragraph{Online Rounding.}
Online correlated selection is related to online rounding algorithms.
It is common to first design online algorithms for an easier fractional online optimization problem, and to round it using online rounding algorithms to solve the original integral problem.
Independent rounding is the simplest and most general online rounding;
it corresponds to independent random selections in the online selection problem in this paper.
For instance, \citet{BuchbinderN:MOR:2009} first design fractional online covering and packing algorithms under the online primal-dual framework, and then round them with independent rounding.
More involved online rounding algorithms are usually designed on a problem-by-problem basis in the literature, e.g., for $k$-server~\cite{BansalBMN:JACM:2015}, online submodular maximization~\cite{ChanHJKT:TALG:2018}, online edge coloring~\cite{CohenPW:FOCS:2019}, etc.

To our knowledge, the only general online rounding method other than independent rounding is the online contention resolution schemes initiated by \citet{FeldmanSZ:SODA:2016} and further developed by \citet{AdamczykW:FOCS:2018}, \citet{LeeS:ESA:2018}, and \citet{Dughmi:ICALP:2020, Dughmi:arXiv:2021}.
It has found applications mainly in online problems with stochastic information, such as prophet inequality~\cite{FeldmanSZ:SODA:2016, EzraFGT:EC:2020}, posted pricing~\cite{FeldmanSZ:SODA:2016}, stochastic probing~\cite{FeldmanSZ:SODA:2016}, and stochastic matching~\cite{GamlathKS:SODA:2019, EzraFGT:EC:2020, FuTWWZ:ICALP:2021}.

There is an important difference between the above usages of online rounding algorithms and the applications of OCS in online bipartite matching.
The above online rounding algorithms and the corresponding fractional online algorithms are designed separately;
the final competitive ratio is the product of their ratios.%
\footnote{Using this two-step approach to analyze the applications of the multi-way semi-OCS in this paper in unweighted and vertex-weighted online bipartite matching would lead to a much worse competitive ratio of about $0.514$. Using it with the (two-way) semi-OCS and OCS even gives a ratio strictly smaller than $0.5$!}
By contrast, this paper and previous works on OCS~\cite{FahrbachHTZ:FOCS:2020, HuangZZ:FOCS:2020} take an end-to-end approach:
the online matching algorithms make fractional decisions based on the guarantee of OCS to directly optimize the expected objective of the rounded matching.
For example, the algorithms for vertex-weighted and edge-weighted matching in Section~\ref{sec:matching} rely on discount functions derived from optimization problems that take the OCS guarantees as parameters.
Another example with a similar spirit is the convex rounding technique by \citet{DughmiRY:JACM:2016} and \citet{Dughmi:EC:2011} from the algorithmic game theory literature.

\paragraph{Online Matching.}
We refer readers to \citet{Mehta:FTTCS:2013} for a survey on online matching problems.
The unweighted, vertex-weighted, and edge-weighted online bipartite matching problems are first studied by \citet{KarpVV:STOC:1990}, \citet{AggarwalGKM:SODA:2011}, and \citet{FeldmanKMMP:WINE:2009}.
Later, \citet{DevanurJK:SODA:2013} and \citet{DevanurHKMY:TEAC:2016} simplify the analyses under the online primal-dual framework.
In particular, \citet{DevanurHKMY:TEAC:2016} view the expected maximal edge-weight matched to an offline vertex as an integral of the complementary cumulative distribution function, a key ingredient of the application of OCS in edge-weighted matching.
Finally, \citet{BuchbinderNW:Arxiv:2021}, \citet{CohenW:SODA:2018}, \citet{GamlathKMSW:FOCS:2019}, \citet{PapadimitriouPSW:EC:2021}, and \citet{SaberiW:ICALP:2021} also build on negative correlation properties to analyze their online algorithms, although these negative correlation properties and their usage are orthogonal to those in this paper.

\subsection{Concurrent Works}

Concurrently and independently, \citet{BlancC:FOCS:2021} and \citet{ShinA:Arxiv:2021} also improved the results of \citet{FahrbachHTZ:FOCS:2020}.
They mainly study multi-way OCS, while this paper focuses on $2$-way semi-OCS, $2$-way OCS, and multi-way semi-OCS.
Hence, these two papers are almost orthogonal to ours.
Using a $6$-way OCS, \citet{BlancC:FOCS:2021} obtained a $0.5368$-competitive algorithm for edge-weighted online bipartite matching.
They also gave similar simplifications for the reduction of online matching problems to OCS. 
\citet{ShinA:Arxiv:2021} gave a method for converting $2$-way OCS to $3$-way OCS.
Applying their method to the OCS of \citet{FahrbachHTZ:FOCS:2020} gives a $0.509$-competitive algorithm for edge-weighted online bipartite matching.
Applying it to the improved $2$-way OCS in this paper further improves the ratio to $0.513$.

\section{Preliminaries}
\label{sec:preliminary}

The \emph{online selection problem} considers a set $\eset$ of elements and a selection process that proceeds in $T$ rounds.
For any round $1 \le t \le T$, a pair of elements $\eset^t$ arrive and the \emph{online selection algorithm} needs to immediately select an element from $\eset^t$.
Let $s^t$ denote the selected element in round $t$.

For any subset of rounds $T' \subseteq T$, we say that an element $e$ is \emph{unselected} in $T'$ if the algorithm does not select the element in any round in $T'$, i.e., if $s^t \ne e$ for any $t \in T'$.
If $T' = T$, we simply say that element $e$ is unselected.
For any $0 \le t \le T$, let $\uset^t$ denote the set of elements that are unselected in rounds $1, 2, \dots, t$.
Let $\uset = \uset^T$ denote the set of unselected elements at the end.

Semi-OCS considers the probability that an element $e \in \eset$ is unselected at the end, and seeks to bound it as a function of the number rounds containing element $e$.

\begin{definition}[$\gamma$-semi-OCS, c.f., \citet{FahrbachHTZ:FOCS:2020}]
    An online selection algorithm is a $\gamma$-semi-OCS if for any online selection instance and any element $e$ that appears in $k$ rounds, element $e$ is unselected with probability at most:
    \[
        2^{-k} (1-\gamma)^{k-1}
        ~.
    \]
\end{definition}

Selecting an element in each round independently and uniformly at random is a $0$-semi-OCS.
\citet{FahrbachHTZ:FOCS:2020} give a $0.109$-semi-OCS and prove that there is no $1$-semi-OCS.

OCS further considers the probability that an element $e \in \eset$ is unselected in an \emph{arbitrary subset of rounds} containing the element.
The upper bounds on this probability depend on the structure of the subset of rounds.
A consecutive subsequence of the rounds containing element $e$ is a subset of rounds $\{t_1, t_2, \dots, t_k\}$ such that each round $t_i$ contains $e$, i.e., $e \in \eset^{t_i}$ for any $1 \le i \le k$, and no round in between contains $e$, i.e., $e \notin \eset^t$ for any $1 \le i \le k-1$ and any $t_i < t < t_{i+1}$.

\begin{definition}[$\gamma$-OCS, c.f., \citet{FahrbachHTZ:FOCS:2020}]
    An online selection algorithm is a $\gamma$-OCS if for any online selection instance, any element $e$, and any subset of rounds $T' \subseteq T$ containing $e$ such that $T'$ is the union of $m$ consecutive subsequences of the rounds containing $e$, with lengths $k_1, k_2, \dots, k_m$, element $e$ is unselected in $T'$ with probability at most:
    \[
        \prod_{i=1}^m 2^{-k_i} (1-\gamma)^{k_i-1}
        ~.
    \]
\end{definition}

For example, suppose that rounds $1, 2, 5, 6, 9$ are the ones that contain element $e$, and consider $T' = \{1, 2, 6, 9\}$.
Then, $T'$ is the union of two consecutive subsequences $1, 2$ and $6, 9$, whose lengths are $2$.
\citet{FahrbachHTZ:FOCS:2020} give a $0.109$-OCS.
Since OCS is stronger than semi-OCS, the impossibility result for semi-OCS implies that there is no $1$-OCS.

\section{Optimal Semi-OCS}
\label{sec:semi-ocs}

\subsection{Algorithms}

This paper considers a semi-OCS that remembers the number of rounds involving each element thus far, and selects from each round the element that appears more and is unselected so far, breaking ties uniformly at random and independently in different rounds.
See Algorithm~\ref{alg:optimal-semi-ocs}.

\begin{algorithm}
    \caption{Optimal Semi-OCS}
    \label{alg:optimal-semi-ocs}
    \begin{algorithmic}
        \medskip
        \State \textbf{State variables:} (for each element $e$)
        \begin{itemize}
            \item The number of previous rounds that contain element $e$, denoted as $k_e$.
            \item Whether element $e$ has been selected in any previous rounds.
        \end{itemize}
        \State \textbf{For each round $t$:}
            (suppose $\eset^t = \{e, e'\}$)
        \begin{enumerate}
            \item If both $e$ and $e'$ have been selected, select arbitrarily, e.g., $s^t \in \{ e, e' \}$ uniformly at random.
            \item If only one of $e$ and $e'$ has been selected, select $s^t$ to be the one that has not been selected.
            \item If neither $e$ nor $e'$ has been selected:
            \begin{itemize}
                \item If $k_e \ne k_{e'}$, select $s^t$ to be the one with more previous appearances.
                \item Otherwise, select $s^t \in \{e, e'\}$ uniformly at random.
            \end{itemize}
        \end{enumerate}
    \end{algorithmic}
\end{algorithm}

The main lemma in the analysis of Algorithm~\ref{alg:optimal-semi-ocs} will prove that the (un)selections of elements are negatively correlated.
It is more instructive to prove this lemma for a broader family of weighted sampling algorithms (Algorithm~\ref{alg:weighted-2way-sampling}).
Algorithm~\ref{alg:optimal-semi-ocs} is the special case when we let $w_e^t = 1$ if $e$ appears in previous rounds at least as many times as the other element does, and let $w_e^t = 0$ otherwise.

\begin{algorithm}
    \caption{Weighted $2$-Way Sampling without Replacement}
    \label{alg:weighted-2way-sampling}
    \begin{algorithmic}
        \medskip
        \State \textbf{Parameters:}
        \begin{itemize}
            \item $w_e^t \ge 0$, weight of element $e \in \eset^t$ in round $t$
        \end{itemize}
        \State \textbf{For each round $t$:}
        \begin{enumerate}
            \item If both elements in $\eset^t$ have been selected, select arbitrarily, e.g., uniformly at random.
            \item Otherwise, select each unselected element $e \in \eset^t$ with probability proportional to $w^t_e$.
        \end{enumerate}
    \end{algorithmic}
\end{algorithm}


\subsection{Negative Correlation in Weighted $2$-Way Sampling without Replacement}

Recall that $\uset^t$ denotes the set of unselected elements after the first $t$ rounds.
Hence the event that a subset $S \subseteq \eset$ of elements are all unselected after the first $t$ rounds can be written as $S \subseteq \uset^t$.
We shall establish the negative correlation of such events in the next lemma.

\begin{lemma}
    \label{lem:2way-sampling-negative-correlation}
    For weighted $2$-way sampling (Algorithm~\ref{alg:weighted-2way-sampling}) with any weights, any $0 \le t \le T$, and any disjoint subsets of elements $A, B \subseteq \eset$:
    \[
        \Pr \big[ A \cup B \subseteq \uset^t \big] \le \Pr \big[ A \subseteq \uset^t \big] \Pr \big[ B \subseteq \uset^t \big]
        ~.
    \]
\end{lemma}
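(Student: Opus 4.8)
The plan is to argue by induction on $t$, conditioning on the set $\uset^{t-1}$ of elements that are still unselected at the start of round $t$. The base case $t=0$ is immediate, as $\uset^0=\eset$ makes all three probabilities equal to $1$. For the inductive step I would first record the elementary structural fact that $\uset^t$ is obtained from $\uset^{t-1}$ by deleting at most one element — the element selected in round $t$, if it was still unselected — and that, crucially, the conditional law of this deletion given $\uset^{t-1}$ depends on $\uset^{t-1}$ only through which of the two elements $e,e'$ of $\eset^t$ lie in $\uset^{t-1}$. (I adopt the natural convention that if exactly one of $e,e'$ is unselected it is the one selected, so that the only randomness in round $t$ occurs when both are unselected.)

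Writing $\eset^t=\{e,e'\}$, I would then split into three cases according to how $\{e,e'\}$ meets $A\cup B$. If $\{e,e'\}\cap(A\cup B)=\emptyset$, then deleting $e$ or $e'$ cannot change membership in $A$, in $B$, or in $A\cup B$, so $\Pr[S\subseteq\uset^t]=\Pr[S\subseteq\uset^{t-1}]$ for each $S\in\{A\cup B,\,A,\,B\}$, and the claim for $t$ is exactly the inductive hypothesis for $t-1$. If both $e$ and $e'$ belong to $A\cup B$ — whether both to $A$, both to $B$, or one to each — then $A\cup B\subseteq\uset^t$ is impossible, since it would force $e,e'\in\uset^{t-1}$ and then one of the two is selected in round $t$; hence the left-hand side is $0$ and there is nothing to prove.

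The only substantive case is when exactly one of $e,e'$, say $e$, lies in $A\cup B$; by symmetry assume $e\in A$ and $e'\notin A\cup B$. The key observation is that $A\cup B\subseteq\uset^t$ holds if and only if $A\cup B\subseteq\uset^{t-1}$, \emph{and} $e'\in\uset^{t-1}$, \emph{and} $e'$ (not $e$) is the element selected in round $t$; letting $\beta$ denote the conditional probability of this last event given that $e$ and $e'$ are both unselected at the start of round $t$ (namely $w^t_{e'}/(w^t_e+w^t_{e'})$), this gives
\[
  \Pr\big[\,A\cup B\subseteq\uset^t\,\big]
  \;=\;\beta\cdot\Pr\big[\,(A\cup\{e'\})\cup B\subseteq\uset^{t-1}\,\big].
\]
The same computation with $A$ in place of $A\cup B$ yields $\Pr[A\subseteq\uset^t]=\beta\cdot\Pr[A\cup\{e'\}\subseteq\uset^{t-1}]$, and $\Pr[B\subseteq\uset^t]=\Pr[B\subseteq\uset^{t-1}]$ as in the first case. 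After cancelling the common factor $\beta$, the inequality to prove is precisely the inductive hypothesis for $t-1$ applied to the \emph{disjoint} pair $A\cup\{e'\}$ and $B$ — disjoint because $e'\notin A\cup B$ — which closes the induction.

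I expect this last case to be the one requiring care, and it is also where a naive induction breaks: the event ``$e$ survives round $t$'' is not independent of the first $t-1$ rounds, but it \emph{forces} $e$'s partner $e'$ to have survived those rounds and then to win round $t$. Turning that coupling into the move of absorbing $e'$ into $A$ — and thereby invoking the hypothesis on a slightly larger set — is the crux; the remainder is the routine verification that $\uset^t$ and $\uset^{t-1}$ differ by a single deletion whose conditional law factors as claimed, together with the harmless treatment of rounds with vanishing weights.
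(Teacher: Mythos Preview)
Your proposal is correct and follows essentially the same argument as the paper: induction on $t$, the same three-way case split on $|\eset^t\cap(A\cup B)|$, and in the only nontrivial case the same key move of absorbing the partner $e'$ into $A$ and cancelling the common factor $w^t_{e'}/(w^t_e+w^t_{e'})$ before invoking the inductive hypothesis on the disjoint pair $A\cup\{e'\}$ and $B$.
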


\begin{proof}
    We shall prove the lemma by induction on $t$.
    The base case when $t = 0$ is trivial since $\uset^0 = \eset$, and thus $\Pr \big[ A \subseteq \uset^0 \big] = \Pr \big[ B \subseteq \uset^0 \big] = \Pr \big[ A \cup B \subseteq \uset^0 \big] = 1$.
    Next suppose that the lemma holds for round $t-1$ and consider round $t$.
    
    \paragraph{Case 1:}
    $\eset^t \cap (A \cup B) = \emptyset$, i.e., no element in this round belongs to $A$ or $B$.
    Since the selection in round $t$ does not affect the events of concern, 
    %
    %
    the lemma continues to hold after round $t$ by the inductive hypothesis.
     
    \paragraph{Case 2:}
    $\big| \eset^t \cap (A \cup B) \big| = 1$, i.e., exactly one element in this round belongs to $A$ or $B$.
    Denote this element as $e \in \eset^t$ and the other element as $e' \in \eset^t$.
    Further suppose without loss of generality that $e \in A$.
    Since the elements in $B$ are not involved in round $t$, we have:
    %
    %
    \[
        \Pr \big[ B \subseteq \uset^t \big] = \Pr \big[ B \subseteq \uset^{t-1} \big]
        ~.
    \]

    Next consider the elements in $A$.
    If $e'$ has been selected in the first $t-1$ rounds, $e$ would certainly be selected after round $t$.
    Hence, to have $A \subseteq \uset^t$, we need not only $A \subseteq \uset^{t-1}$, but also $e' \in \uset^{t-1}$.
    Further the algorithm must select $e'$ in round $t$.
    Putting together we have:
    \[
        \Pr \big[ A \subseteq \uset^t \big] = \Pr \big[ A \cup \{ e' \} \subseteq \uset^{t-1} \big] \frac{w^t_{e'}}{w^t_e + w^t_{e'}}
        ~.
    \]

    Similarly we have:
    \[
        \Pr \big[ A \cup B \subseteq \uset^T \big] = \Pr \big[ A \cup B \cup \{ e' \} \subseteq \uset^{t-1} \big] \frac{w^t_{e'}}{w^t_e + w^t_{e'}}
        ~.
    \]

    Cancelling the common term $\frac{w^t_{e'}}{w^t_e + w^t_{e'}}$, the inequality in the lemma is equivalent to:
    \[
        \Pr \big[ A \cup B \cup \{ e' \} \subseteq \uset^{t-1} \big] 
        \le
        \Pr \big[ A \cup \{ e' \} \subseteq \uset^{t-1} \big] \Pr \big[ B \subseteq \uset^{t-1} \big]
        ~.
    \]

    This follows by the inductive hypothesis for subsets $A \cup \{ e' \}$ and $B$ in round $t-1$.

    \paragraph{Case 3:}
    $\eset^t \subseteq (A \cup B)$, i.e., both elements in round $t$ belong to $A$ or $B$.
    Since one element in $\eset^t$ is selected in round $t$, we have $\Pr [ A \cup B \subseteq \uset^t ] = 0$.
    %
    %
    Hence the stated inequality trivially holds.
\end{proof}

We remark that the lemma no longer holds if we have $3$ or more elements in each round.
Appendix~\ref{app:positive-correlation} provides a counter-example.
See also \citet{Alexander:AnnStat:1989}.

\subsection{Analysis}

\begin{theorem}
    \label{thm:optimal-semi-ocs}
    For any instance and any element that appears in $k$ rounds, the probability that the element is never selected by Algorithm~\ref{alg:optimal-semi-ocs} is at most:
    \[
        2^{-2^k+1}
        ~.
    \]
\end{theorem}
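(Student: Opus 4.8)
The plan is to track, for the fixed element $e$, the probability that it remains unselected as a function of how many times it has appeared so far. Let me index appearances of $e$ by $i = 0, 1, \dots, k$, writing $t_i$ for the round of the $i$-th appearance. The key quantity to control is $p_i \defeq \Pr[e \in \uset^{t_i}]$, the probability $e$ survives its first $i$ appearances, with $p_0 = 1$ and the goal $p_k \le 2^{-2^k+1}$. The heart of the argument is a recursion bounding $p_{i+1}$ in terms of $p_i$: at the $(i+1)$-st appearance, $e$ is paired with some element $e'$; since $e$ has appeared $i$ times before and the algorithm is conditioned on $e$ being unselected, $e$ is selected for sure unless $e'$ is also unselected and has appeared \emph{at least} $i$ times before round $t_{i+1}$. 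So I would write
\[
    p_{i+1} \;\le\; \Pr\big[\, e \in \uset^{t_{i+1}-1},\ e' \in \uset^{t_{i+1}-1} \,\big] \cdot \tfrac12,
\]
where the factor $\tfrac12$ is an upper bound on the conditional probability the tie (or near-tie) breaks in favor of $e'$ rather than $e$, using that $e'$ has at least as many previous appearances as $e$.

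The main obstacle — and where Lemma~\ref{lem:2way-sampling-negative-correlation} does the work — is controlling $\Pr[e \in \uset^{t_{i+1}-1},\ e' \in \uset^{t_{i+1}-1}]$. I would apply the lemma with $A = \{e\}$ and $B = \{e'\}$ (disjoint singletons) at time $t = t_{i+1}-1$ to get
\[
    \Pr\big[\, \{e,e'\} \subseteq \uset^{t_{i+1}-1} \,\big] \;\le\; \Pr\big[\, e \in \uset^{t_{i+1}-1} \,\big] \cdot \Pr\big[\, e' \in \uset^{t_{i+1}-1} \,\big] \;=\; p_i \cdot \Pr\big[\, e' \in \uset^{t_{i+1}-1} \,\big].
\]
Now I need a bound on $\Pr[e' \in \uset^{t_{i+1}-1}]$. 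Since $e'$ has appeared at least $i$ times before round $t_{i+1}$, and since appearances of $e'$ that occur while $e'$ is still unselected only decrease its survival probability in exactly the same way — i.e., the same recursion applies to $e'$ in isolation — an induction on the statement "any element that has appeared $j$ times and is still unselected has survival probability at most $2^{-2^j+1}$" gives $\Pr[e' \in \uset^{t_{i+1}-1}] \le 2^{-2^i+1} = p_i$ as well (monotonicity in the number of appearances, combined with the inductive bound at $j = i$). Wait — more carefully, I should fold this into a single induction on $i$ with hypothesis $p_i \le 2^{-2^i+1}$ applied to \emph{every} element simultaneously, so that when I invoke it for $e'$ I already know $\Pr[e' \in \uset^{t_{i+1}-1}] \le 2^{-2^i+1}$.

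Putting the pieces together yields the recursion
\[
    p_{i+1} \;\le\; \tfrac12\, p_i \cdot 2^{-2^i+1} \;\le\; \tfrac12 \cdot 2^{-2^i+1} \cdot 2^{-2^i+1} \;=\; 2^{-2^{i+1}+1},
\]
which closes the induction, and at $i = k$ gives the claimed bound $2^{-2^k+1}$. The base case $i = 0$ is trivial: $p_0 = 1 = 2^{-2^0+1} = 2^{-1+1}$. The remaining care points are (a) making the "at least $i$ previous appearances for $e'$" claim airtight — this uses that when neither is selected the algorithm picks the one with strictly more appearances, so if $e'$ were picked over $e$ it had $\ge i$ prior appearances, and in the tie case it had exactly $i$; in either case the bound $\Pr[e' \in \uset^{t_{i+1}-1}] \le 2^{-2^i+1}$ holds by monotonicity of the survival bound in the appearance count (fewer-than-$i$ appearances would only make things smaller, but here we have $\ge i$, so I should state the inductive hypothesis uniformly as "an element appearing $j$ times survives with probability $\le 2^{-2^j+1}$" and note $2^{-2^j+1}$ is decreasing in $j$, so $\ge i$ appearances still gives $\le 2^{-2^i+1}$); and (b) handling rounds in which $e$ does \emph{not} appear, which leave $p_\cdot$ for $e$ unchanged, so it suffices to track only the $k$ relevant rounds. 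I expect step (a), pinning down exactly why the partner $e'$ at each appearance of $e$ carries survival probability at most $p_i$, to be the one requiring the most delicate bookkeeping, since it is where the tie-breaking rule of Algorithm~\ref{alg:optimal-semi-ocs} and the negative-correlation lemma must be combined just so.
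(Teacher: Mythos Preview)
Your argument has a genuine gap at the step where you introduce the factor $\tfrac12$. You write
\[
    p_{i+1} \;\le\; \Pr\big[\, e \in \uset^{t_{i+1}-1},\ e' \in \uset^{t_{i+1}-1} \,\big] \cdot \tfrac12,
\]
claiming the $\tfrac12$ upper-bounds the conditional probability that the round ``breaks in favor of $e'$.'' But Algorithm~\ref{alg:optimal-semi-ocs} selects the element with \emph{more} prior appearances, so when both are unselected and $k_{e'} > k_e = i$, the algorithm selects $e'$ with probability $1$, and $e$ survives with probability $1$, not $\le \tfrac12$. The displayed inequality is simply false in that case.

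You might hope to recover the lost factor of $2$ from the partner's survival bound, but your induction on $i$ prevents this. Since you only have the hypothesis at levels $j \le i$, the best you can say about $e'$ (which has $k_{e'} \ge i+1$ prior appearances) is $\Pr[e' \in \uset^{t_{i+1}-1}] \le 2^{-2^i+1}$ via monotonicity, exactly as you note in point~(a). Combining this with the correct conditional probability of $1$ and the negative-correlation lemma gives only
\[
    p_{i+1} \;\le\; p_i \cdot 2^{-2^i+1} \;\le\; 2^{-2^{i+1}+2},
\]
which is a factor of $2$ too weak to close the induction. The paper avoids this by inducting on the number of rounds $T$ rather than on the appearance count $i$: then the inductive hypothesis applies to $e'$ at its \emph{actual} appearance count $k_{e'}$, giving $\Pr[e' \in \uset^{T-1}] \le 2^{-2^{k_{e'}}+1} \le 2^{-2^{i+1}+1}$ in the case $k_{e'} > i$, which exactly compensates for the missing $\tfrac12$. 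The paper then does a three-way case split on whether $k_{e'}$ is less than, equal to, or greater than $k_e$; the equal case is where the coin-flip $\tfrac12$ genuinely appears, and the strictly-greater case uses the sharper level-$k_{e'}$ bound on $e'$.
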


\begin{proof}
    We shall prove the theorem by induction on the number of rounds $T$ in the instance.
    The base case when $T = 0$ is trivial since $k$ must be $0$ in this case.
    Next suppose that the lemma holds for up to $T-1$ rounds.
    Consider an arbitrary instance with $T$ rounds, and any element $e$ that appears $k$ times.
    Without loss of generality, we may assume that $e$ is in the last round $T$;
    otherwise it follows directly from the inductive hypothesis.
    Further suppose that the other element in round $T$ is $e'$.
    Consider three cases depending on the relation between the number of appearances $k_e$ and $k_{e'}$ before round $T$.
    Observe that $k = k_e + 1$.

    \paragraph{Case 1:}
    $k_e > k_{e'}$.
    %
    By the definition of Algorithm~\ref{alg:optimal-semi-ocs}, element $e$ is selected with certainty after $T$.
    Hence the probability of concern is $0$, and is trivially smaller than the stated bound.

    \paragraph{Case 2:}
    $k_e < k_{e'}$.
    By the definition of Algorithm~\ref{alg:optimal-semi-ocs}, element $e'$ is selected with certainty in round $T$ if it is not yet selected previously.
    Hence, element $e$ is never selected by the algorithm at the end if and only if both $e$ and $e'$ are unselected before round $T$.
    This probability is:
    \begin{align*}
        \Pr \big[ \{ e, e' \} \subseteq \uset^{T-1} \big]
        &
        \le
        \Pr \big[ e \in \uset^{T-1} \big] \Pr \big[ e' \in \uset^{T-1} \big]
        \tag{Lemma~\ref{lem:2way-sampling-negative-correlation}} \\
        &
        \le 2^{-2^{k_e}+1} \cdot 2^{-2^{k_{e'}}+1}
        \tag{Inductive hypothesis} \\
        &
        \le 2^{-2^{k_e}+1} \cdot 2^{-2^{k_e+1}+1}
        \tag{$k_e < k_{e'}$} \\
        &
        \le 2^{-2^k+1}
        \tag{$k = k_e + 1$}
        ~.
    \end{align*}

    \paragraph{Case 3:}
    $k_e = k_{e'}$.
    By the definition of Algorithm~\ref{alg:optimal-semi-ocs}, elements $e$ and $e'$ would be selected with equal probability if neither has been selected before.
    Therefore, element $e$ is never selected by the algorithm at the end if and only if both $e$ and $e'$ are unselected before round $T$, and the algorithm selects $e'$ in round $T$.
    The latter happens with probability half and is independent with the former.
    Hence, this probability is:
    \begin{align*}
        2^{-1} \Pr \big[ \{ e, e' \} \subseteq \uset^{T-1} \big]
        &
        \le
        2^{-1} \Pr \big[ e \in \uset^{T-1} \big] \Pr \big[ e' \in \uset^{T-1} \big]
        \tag{Lemma~\ref{lem:2way-sampling-negative-correlation}} \\
        &
        \le 2^{-1} \cdot 2^{-2^{k_e}+1} \cdot 2^{-2^{k_{e'}}+1}
        \tag{Inductive hypothesis} \\
        &
        \le 2^{-1} \cdot 2^{-2^{k_e}+1} \cdot 2^{-2^{k_e}+1}
        \tag{$k_e = k_{e'}$} \\
        &
        = 2^{-2^k+1}
        \tag{$k = k_e + 1$}
        ~.
    \end{align*}

    Summarizing the three cases completes the inductive step and thus the proof of the theorem.
\end{proof}

Since $2^{-2^k+1} \le 2^{-2k+1} = 2^{-k}(1-\frac{1}{2})^{k-1}$, Theorem~\ref{thm:optimal-semi-ocs} leads to the following corollary in terms of the original definition of semi-OCS.

\begin{corollary}
    \label{cor:optimal-semi-ocs}
    Algorithm~\ref{alg:optimal-semi-ocs} is a $\frac{1}{2}$-semi-OCS.
\end{corollary}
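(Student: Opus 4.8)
The plan is to derive the corollary directly from Theorem~\ref{thm:optimal-semi-ocs}, which already establishes that any element appearing in $k$ rounds is unselected with probability at most $2^{-2^k+1}$. Matching this against the definition of $\gamma$-semi-OCS with $\gamma = \tfrac12$, it suffices to check the elementary inequality
\[
    2^{-2^k+1} \le 2^{-k}\Bigl(1-\tfrac12\Bigr)^{k-1} = 2^{-k} \cdot 2^{-(k-1)} = 2^{-2k+1}
\]
for every integer $k \ge 1$. Since the function $x \mapsto 2^{-x}$ is decreasing, this reduces to showing $2^k \ge 2k$ for all $k \ge 1$.

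First I would dispatch $2^k \ge 2k$ by a one-line induction: it holds with equality at $k=1$ and $k=2$, and if $2^k \ge 2k$ then $2^{k+1} = 2\cdot 2^k \ge 4k \ge 2k+2 = 2(k+1)$ whenever $k \ge 1$. Then I would conclude $-2^k+1 \le -2k+1$, hence $2^{-2^k+1} \le 2^{-2k+1}$, and combine this with Theorem~\ref{thm:optimal-semi-ocs} to see that the unselected probability is at most $2^{-k}(1-\tfrac12)^{k-1}$, which is exactly the guarantee required of a $\tfrac12$-semi-OCS.

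There is essentially no obstacle here; the only ``hard'' part is the routine bound $2^k \ge 2k$, and even that is standard. The content of the corollary lies entirely in Theorem~\ref{thm:optimal-semi-ocs}, which has already been proved; the corollary merely restates the sharper bound $2^{-2^k+1}$ in the weaker but more familiar language of $\gamma$-semi-OCS. (One could also remark that the inequality is tight only at $k \in \{1,2\}$, so for $k \ge 3$ Algorithm~\ref{alg:optimal-semi-ocs} is strictly better than any $\gamma$-semi-OCS guarantee with a fixed $\gamma < 1$, but this is not needed for the statement.)
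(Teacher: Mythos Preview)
Your proposal is correct and matches the paper's approach exactly: the paper also derives the corollary from Theorem~\ref{thm:optimal-semi-ocs} via the single inequality $2^{-2^k+1} \le 2^{-2k+1} = 2^{-k}(1-\tfrac12)^{k-1}$, stated in one line just before the corollary. Your version simply adds the (routine) verification that $2^k \ge 2k$ for $k \ge 1$, which the paper leaves implicit.
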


We remark that the guarantee of $\frac{1}{2}$-semi-OCS only gives an $\frac{8}{15} \approx 0.533$-competitive algorithm for unweighted and vertex-weighted matching, while Theorem~\ref{thm:optimal-semi-ocs} leads to least $0.536$.
That is, the tighter analysis in Theorem~\ref{thm:optimal-semi-ocs} indeed results in better competitive ratios in online matching.

\subsection{Hardness}

Finally, we show that the semi-OCS (Algorithm~\ref{alg:optimal-semi-ocs}) and its analysis (Theorem~\ref{thm:optimal-semi-ocs}) are optimal for all $k$ simultaneously.
The proof is deferred to Appendix~\ref{app:semi-ocs-hardness}.

\begin{theorem}
    \label{thm:semi-ocs-hardness}
    For any algorithm and any $k \ge 0$, there is an instance and an element that appears in $k$ rounds, such that with probability at least $2^{-2^k+1}$ the algorithm never selects the element.
\end{theorem}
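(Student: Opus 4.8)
The plan is to construct, for each $k \ge 0$, a recursively-defined hard instance together with a designated "target" element $e$, and to argue that \emph{any} online selection algorithm leaves $e$ unselected with probability at least $2^{-2^k+1}$. The natural construction mirrors the recursion in the upper bound of Theorem~\ref{thm:optimal-semi-ocs}: let $I_k$ denote a hard instance in which the target element $e$ has appeared $k$ times but \emph{no other element of $I_k$ has appeared in a round outside $I_k$}, so that $I_k$ can be used as a black box inside a larger instance. Define $I_0$ to be the empty instance (target unselected with probability $1 = 2^{-2^0+1}$). To build $I_{k}$ from $I_{k-1}$: take two disjoint fresh copies of $I_{k-1}$, call their targets $e$ and $e'$, concatenate them, and then append one final round $\eset^t = \{e, e'\}$. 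The target of $I_k$ is $e$. Crucially, because the two copies use disjoint element sets and the target appears $k-1$ times in each, the relevant state symmetry between $e$ and $e'$ at the final round is preserved no matter what the algorithm does.

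The core of the argument is an induction on $k$ establishing: for every algorithm, $\Pr[e \text{ unselected in } I_k] \ge 2^{-2^k+1}$, and in fact the stronger statement that $e$ and $e'$ are \emph{simultaneously} unselected just before the final appended round with probability at least $(2^{-2^{k-1}+1})^2 = 2^{-2^k+2}$. The first copy of $I_{k-1}$, run in isolation, leaves its target unselected with probability $\ge 2^{-2^{k-1}+1}$ by the inductive hypothesis; the key point is that the algorithm's behavior on the second copy is independent of whether the first copy's target survived — or more precisely, one should argue that \emph{conditioning} on the first target surviving can only be exploited adversarially, so a minimax/averaging argument (or an explicit symmetrization over which copy is "first") forces the product lower bound. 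Once both $e$ and $e'$ survive into the final round, the algorithm must select one of them, leaving the other unselected; whichever one the algorithm is less likely to drop, we could have designated as the target — so by choosing the target \emph{after} fixing the algorithm, or by symmetrizing the two copies, $e$ specifically survives the final round with conditional probability $\ge \frac12$. Multiplying, $\Pr[e \text{ unselected in } I_k] \ge \frac12 \cdot 2^{-2^k+2} = 2^{-2^k+1}$, closing the induction.

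The main obstacle is handling the adaptivity/adversary interaction cleanly: the algorithm sees the whole instance structure and may correlate its choices across the two copies of $I_{k-1}$ and with the final round, so one cannot naively multiply "$\Pr$ first target survives" by "$\Pr$ second target survives." The right way around this is a symmetrization argument: randomly permute the roles of the two $I_{k-1}$ copies (or, equivalently, present them in a random order), and have the adversary reveal which element is the target only at the end. Under this symmetrization, by convexity the worst-case survival probability of the eventual target at the final round is at least $\frac12$ times the probability that both survive, and the probability that both survive is at least the product of the individual guarantees by a further averaging step (the event "copy-$1$ target survives" and "copy-$2$ target survives" are functions of disjoint parts of the input, and the algorithm's randomness can be conditioned on without breaking the per-copy inductive bound, since that bound holds for \emph{every} algorithm including every conditioned one). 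I would also double-check the boundary case $k=1$ by hand — two empty copies plus one round $\{e,e'\}$, any algorithm leaves $e$ unselected with probability $\ge \frac12 = 2^{-2^1+1}$ — and verify that the "no other element appears outside $I_k$" invariant is maintained so the black-box composition is legitimate. The remaining details are routine and are deferred to Appendix~\ref{app:semi-ocs-hardness}.
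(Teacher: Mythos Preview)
Your construction and argument are essentially the paper's: a knockout tournament on $2^k$ elements in which, at each stage, the adversary's (random) target is whichever element of the pair the algorithm did \emph{not} pick. The paper unrolls your recursion and tracks a single invariant---after stage $i$, \emph{all} $2^{k-i}$ surviving targets are simultaneously unselected with probability at least $2^{-2^k+2^{k-i}}$---which sidesteps the cross-copy correlation you worry about (once the target in each pair is randomized, surviving that pair has probability exactly $\tfrac12$ regardless of the algorithm's choice); your conditioning fix (condition on the full history of copy~1, then apply the inductive hypothesis to the induced algorithm on copy~2) is correct and equivalent, while the extra random permutation of the two copies is unnecessary.
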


The special case of $k = 2$ further implies a hardness for the original definition of $\gamma$-semi-OCS.

\begin{corollary}
    \label{cor:semi-ocs-hardness}
    There is no $\gamma$-semi-OCS for $\gamma > \frac{1}{2}$.
\end{corollary}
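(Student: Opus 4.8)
The plan is to specialize the hardness bound of Theorem~\ref{thm:semi-ocs-hardness} to the smallest nontrivial case $k = 2$, which turns out to be the binding one, and to contrast its lower bound with the upper bound that the definition of $\gamma$-semi-OCS would impose there. First I would invoke Theorem~\ref{thm:semi-ocs-hardness} with $k = 2$: it yields an instance and an element appearing in exactly two rounds that any algorithm leaves unselected with probability at least $2^{-2^2 + 1} = 2^{-3} = \frac{1}{8}$. On the other hand, if the algorithm were a $\gamma$-semi-OCS, then by definition this same element would have to be unselected with probability at most $2^{-2}(1 - \gamma)^{2 - 1} = \frac{1 - \gamma}{4}$.

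Combining the two bounds forces $\frac{1 - \gamma}{4} \ge \frac{1}{8}$, i.e.\ $\gamma \le \frac{1}{2}$, so no algorithm can be a $\gamma$-semi-OCS once $\gamma > \frac{1}{2}$, which is exactly the claim. I would also remark why $k = 2$ is the right case to pick: for $k = 1$ both the hardness bound $2^{-2^1+1}$ and the $\gamma$-semi-OCS bound $2^{-1}(1-\gamma)^0$ equal $\frac12$ and impose no constraint on $\gamma$, whereas for $k \ge 3$ the comparison $2^{-2^k+1} \le 2^{-k}(1-\gamma)^{k-1}$ only yields $\gamma \le 1 - 2^{(k+1-2^k)/(k-1)}$, a quantity that increases toward $1$ as $k$ grows; so $k = 2$ gives the tightest restriction.

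There is essentially no obstacle beyond already having Theorem~\ref{thm:semi-ocs-hardness} available. If a self-contained argument were preferred instead, one could unpack the $k = 2$ instance of that theorem explicitly and verify the $\frac{1}{8}$ lower bound by a direct case analysis, but routing through Theorem~\ref{thm:semi-ocs-hardness} is the cleanest option and keeps the corollary to a couple of lines.
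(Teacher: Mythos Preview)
Your proposal is correct and matches the paper's approach exactly: the paper states the corollary immediately after Theorem~\ref{thm:semi-ocs-hardness} with the one-line justification that ``the special case of $k = 2$ further implies a hardness for the original definition of $\gamma$-semi-OCS,'' which is precisely your comparison of $\frac{1}{8}$ against $\frac{1-\gamma}{4}$. Your additional remark explaining why $k=2$ is the binding case is a nice elaboration not spelled out in the paper.
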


\section{Multi-way Semi-OCS}
\label{sec:multi-way}

\subsection{Definitions}

The \emph{multi-way online selection problem} considers a set of elements $\eset$ and a selection process that proceeds in $T$ rounds as follows.
Each round $1 \le t \le T$ is associated with a non-negative vector $\vec{x}^t = (x^t_e)_{e \in \eset}$ such that $\sum_{e \in \eset} x^t_e = 1$.
We shall refer to $x^t_e$ as the \emph{mass} of element $e$ in round $t$.
The vectors are unknown at the beginning and are revealed to an \emph{multi-way online selection algorithm} at the corresponding rounds.
Let $\eset^t = \{ e : x^t_e > 0 \}$ be the set of elements with positive masses in round $t$, i.e., those that may be selected in the round.
Upon observing the mass vector $\vec{x}^t$ for round $t$, the algorithm selects an element from $\eset^t$.

We may interpret $x^t_e$ as the probability of selecting element $e$ in the round \emph{if none of the elements have appeared in previous rounds}, although in general the correlation introduced by the multi-way online selection algorithms will complicate the selection probabilities.
For any $0 \le t \le T$, let $y_e^t = \sum_{t' \le t} x_e^{t'}$ be the cumulative mass of element $e$ in the first $t$ rounds.
Let $y_e = y_e^T$ be its total mass in the instance for brevity.

\begin{definition}[$p$-Multi-way Semi-OCS]
    \label{def:mocs}
    A multi-way online selection algorithm is a $p$-multi-way semi-OCS for a non-increasing function $p : [0, +\infty) \to [0, 1]$ if for any multi-way online selection instance and any element $e$, $e$ is unselected with probability at most $p(y_e)$.
\end{definition}

%

\subsection{Algorithm: Weighted Sampling without Replacement}


We consider weighted sampling without replacement, which is parameterized by a weight function $w : [0, +\infty) \to [1, +\infty)$ with $w(0) = 1$.
In each round $t$, the sampling weight of an element $e \in \eset^t$ equals $0$ if the element has already been selected in the previous rounds, and equals $x_e^t w(y_e^{t-1})$ otherwise.
See Algorithm~\ref{alg:multiway}. 

\begin{algorithm}[t]
    \caption{Multi-way Semi-OCS: Weighted Sampling without Replacement}
    \label{alg:multiway}
    \begin{algorithmic}
        \medskip
        \State \textbf{Parameters:}\\
        \smallskip
        \quad Non-decreasing weight function $w : [0, +\infty) \to [1, +\infty)$ such that $w(0) = 1$.\\
        \quad Our result lets $w(y) = \exp \Big( y + \frac{y^2}{2} + c y^3 \Big)$ where $c= \frac{4-2\sqrt{3}}{3}$.
        \medskip
        \State \textbf{State variables:} (for each element $e$)
        \begin{itemize}
            \item Cumulative mass $y_e^t$ of element $e$ up to any round $t$.
            \item Whether element $e$ has been selected in any previous rounds.
        \end{itemize}
        \smallskip
        \State \textbf{For each round $t$:}
        \begin{enumerate}
            \item If all elements in $\eset^t$ have been selected, select arbitrarily, e.g., uniformly at random.
            \item Otherwise, select an unselected $e \in \eset^t$ with probability proportional to $x^t_e \cdot w(y^{t-1}_e)$.
        \end{enumerate}
    \end{algorithmic}
\end{algorithm}

We remark that the optimal ($2$-way) semi-OCS in Section~\ref{sec:semi-ocs} can be interpreted as the limit case when $w(y) = W^{y}$ and $W$ tends to infinity.

\subsection{Analysis}

\begin{theorem}
    \label{thm:multiway-ocs}
    Weighted Sampling without Replacement (Algorithm~\ref{alg:multiway}) with weight function:
    \begin{equation}
        \label{eqn:multiway-weight}
        w(y) = \exp \Big( y + \frac{y^2}{2} + c y^3 \Big)
    \end{equation}
    where $c = \frac{4-2\sqrt{3}}{3} \approx 0.179$ is a $p$-multi-way semi-OCS for:
    \[
        p(y) = \frac{1}{w(y)} = \exp \Big( - y - \frac{y^2}{2} - c y^3 \Big)
        ~.
    \]
\end{theorem}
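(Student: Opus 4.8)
The plan is to prove, by induction on the number of rounds $T$, a statement strictly stronger than the one asked for, because the natural recursion for the unselected probability of one element needs control over the \emph{conditional} unselected probability of others. Writing $p(y)=1/w(y)$ throughout, I would establish: for every $T$, every element $e$, and every finite $\eset'\subseteq\eset\setminus\{e\}$,
\[
\Pr\big[(\eset'\cup\{e\})\subseteq\uset^T\big]\ \le\ p(y_e)\cdot\Pr\big[\eset'\subseteq\uset^T\big],
\]
i.e. $\Pr[e\text{ unselected}\mid \eset'\text{ unselected}]\le p(y_e)$; the theorem is the special case $\eset'=\emptyset$. The base case $T=0$ is trivial since $\uset^0=\eset$ and $p(0)=w(0)^{-1}=1$. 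Two facts drive the induction: (a) $w(y)p(y)=1$, which converts the conditional bound at round $T-1$ into ``the expected total sampling weight of the still-unselected elements of $\eset^T$ is at most their total mass''; and (b) the analytic inequality
\[
p(y+\delta)\ \ge\ p(y)\Big(1-\tfrac{\delta w(y)}{\delta w(y)+1-\delta}\Big)
\qquad\Longleftrightarrow\qquad
w(y+\delta)\ \le\ w(y)+\tfrac{\delta}{1-\delta}\,w(y)^2 ,
\]
for all $y\ge 0$ and $0<\delta<1$.

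For the inductive step I would assume $e\in\eset^T$ (otherwise nothing changes and the hypothesis at $T-1$ applies directly), write $y=y_e^{T-1}$ and $\delta=x_e^T$, and condition on $s^1,\dots,s^{T-1}$. On the event $e\in\uset^{T-1}$, the conditional probability that $e$ survives round $T$ is $Z/(\delta w(y)+Z)$, where $Z=\sum_{e'\in\eset^T\setminus\{e\}}\mathbf{1}[e'\in\uset^{T-1}]\,x^T_{e'}w(y^{T-1}_{e'})$ is the random total sampling weight of the other unselected elements of $\eset^T$. Since $x\mapsto 1/x$ is convex, Jensen's inequality replaces $Z$ by its expectation conditioned on $e\in\uset^{T-1}$; that expectation is $\sum_{e'\ne e}x^T_{e'}w(y^{T-1}_{e'})\Pr[e'\in\uset^{T-1}\mid e\in\uset^{T-1}]\le\sum_{e'\ne e}x^T_{e'}=1-\delta$ by the conditional hypothesis at $T-1$ (with conditioning set $\{e\}$) and fact (a). Monotonicity of $Z\mapsto Z/(\delta w(y)+Z)$, the hypothesis $\Pr[e\in\uset^{T-1}]\le p(y)$, and finally inequality (b) then give $\Pr[e\in\uset^T]\le p(y+\delta)=p(y_e)$. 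The conditional version follows the same template, but one must now track how $\eset^T$ meets $\eset'\cup\{e\}$ — round $T$ may touch $e$ only, $\eset'$ only, both, or neither — and in each case reduce both sides to round $T-1$ with (possibly) enlarged conditioning sets before applying Jensen.

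The main obstacle is precisely this conditional statement. The exact pairwise negative-correlation property $\Pr[e\text{ unsel}\mid e'\text{ unsel}]\le\Pr[e\text{ unsel}]$ fails for multi-way sampling without replacement (Alexander, 1989), so the conditional bound can hold only thanks to the slack between the true unselected probability and the looser target $p(\cdot)$; closing the case analysis while keeping every conditional probability below $p(\cdot)$ is the delicate part, and it is exactly what forces $p$ to be no smaller than the solution of (b). The remaining work is the calculus verification of (b) for $w(y)=\exp(y+\tfrac{y^2}{2}+cy^3)$: rearranging to $\exp\!\big((y+\delta)+\tfrac{(y+\delta)^2}{2}+c(y+\delta)^3\big)\le\exp\!\big(y+\tfrac{y^2}{2}+cy^3\big)+\tfrac{\delta}{1-\delta}\exp\!\big(2y+y^2+2cy^3\big)$ and checking it over all admissible $(y,\delta)$, with $c=\tfrac{4-2\sqrt3}{3}=\tfrac{(\sqrt3-1)^2}{3}$ being the largest value for which it holds (the bottleneck occurring at an interior configuration that reduces to a quadratic equation). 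I would relegate this computation to an appendix, since it is routine but unenlightening, and remark that the same scheme with $w$ solved numerically from $w(y+\delta)\le w(y)+\tfrac{\delta}{1-\delta}w(y)^2$ yields a slightly better, non-closed-form bound. Together, the strengthened induction and (b) show that Algorithm~\ref{alg:multiway} is a $p$-multi-way semi-OCS in the sense of Definition~\ref{def:mocs}.
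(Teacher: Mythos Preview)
Your overall strategy --- strengthen the inductive hypothesis so that the Jensen step goes through, and drive everything with the analytic inequality (b) --- matches the paper, and your (b) is exactly the paper's key technical lemma (Lemma~\ref{lemma:multiway-cubic}). Where your plan diverges is the \emph{choice} of strengthened hypothesis. You propose the conditional bound $\Pr[e\in\uset^T\mid \eset'\subseteq\uset^T]\le p(y_e)$; the paper instead proves the product bound $\Pr[\eset'\subseteq\uset^T]\le\prod_{e\in\eset'}p(y_e)$ for all finite $\eset'$ (Theorem~\ref{thm:multiway-ocs-strong}). Your hypothesis is strictly stronger (peeling yields the product bound), and that is precisely why your case ``$\eset^T$ meets $\eset'$'' is left vague: you are trying to control a \emph{ratio} of two probabilities that both change in round $T$, and neither Jensen nor enlarging the conditioning set obviously tames that ratio when the survival weight $g(Z)$ and the indicator $\mathbf{1}[e\in\uset^{T-1}]$ can be positively correlated. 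You flag this as ``the delicate part,'' but the sketch does not indicate how to close it, and it is not clear that it closes.

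The paper's product hypothesis sidesteps this entirely. After the Jensen step one observes that the resulting expression is jointly nondecreasing in $\Pr[\eset'\subseteq\uset^{t-1}]$ \emph{and} in each $\Pr[(\eset'\cup\{e\})\subseteq\uset^{t-1}]$, so one may replace both by their product upper bounds simultaneously; the apparent ratio $\Pr[(\eset'\cup\{e\})\subseteq\uset^{t-1}]/\Pr[\eset'\subseteq\uset^{t-1}]$ becomes $1/w(y_e^{t-1})$ only \emph{after} this substitution, with no need to prove that conditional bound directly. This handles all intersections of $\eset^T$ with $\eset'$ uniformly --- there is no case split. One further ingredient you would then need (and do not state) is a multi-element extension of (b): when several $e\in\eset'$ have positive mass in round $T$, one needs
\[
\frac{1-\sum_i x_i}{\sum_i x_i w(y_i)+1-\sum_i x_i}\ \le\ \prod_i \frac{w(y_i)}{w(y_i+x_i)},
\]
which the paper derives (Lemma~\ref{lemma:multiway-condition}) from your single-element (b) together with the log-convexity of $w$. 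With the product hypothesis, the monotonicity observation, and this lemma, the induction closes in one line; I would recommend reworking the proof along these lines rather than pursuing the conditional formulation.
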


Consider an overly idealized algorithm which selects each element $e$ in round $t$ with probability exactly $x_e^t$ and never selects any element more than once.
It would be a $p^*$-multiway semi-OCS for $p^*(y) = \max\{1-y, 0\}$.
By the Taylor series of $\log(1-y)$ for $0 \le y < 1$, it can be written as:
\[
    p^*(y) = \exp \Big( - \sum_{i=1}^\infty \frac{y^i}{i} \Big)
    ~.
\]

The guarantee of Theorem~\ref{thm:multiway-ocs} matches the overly idealized bound up to the quadratic term and has a smaller coefficient for the cubic term.

First, we prove some properties about the weight function $w$ in Eqn.~\eqref{eqn:multiway-weight}.

\begin{lemma}\label{lemma:multiway-cubic}
    For any $0 \le x < 1$ and any $y \ge 0$:
    \[
        \frac{w(y+x)}{w(y)}\le  \frac{x }{1-x}w(y)+1
        ~.
    \]
\end{lemma}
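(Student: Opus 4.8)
The plan is to reduce the inequality to a statement purely about the exponent $g(y) = y + \frac{y^2}{2} + cy^3$, for which $w(y) = e^{g(y)}$, and then control it via a first-order Taylor estimate in the increment $x$. Writing the claim as $w(y+x) \le \big(\tfrac{x}{1-x} w(y) + 1\big) w(y)$, or equivalently $e^{g(y+x)-g(y)} \le \tfrac{x}{1-x} e^{g(y)} + 1$, the right-hand side is awkward because it mixes an $e^{g(y)}$ term (which is large) with a constant. A cleaner route is to bound the left side by the $1/(1-x)$-type factor that arises from sampling without replacement: I would first show the \emph{stronger} inequality
\[
    \frac{w(y+x)}{w(y)} \le \frac{1}{1-x}
    ~,
\]
equivalently $g(y+x) - g(y) \le -\log(1-x) = \sum_{i\ge 1} \frac{x^i}{i}$, and then observe that $\frac{1}{1-x} = 1 + \frac{x}{1-x} \le 1 + \frac{x}{1-x} w(y)$ since $w(y) \ge 1$ by construction ($g$ is non-negative on $[0,\infty)$, as $c > 0$). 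Wait — this shows the claim would follow, but I should double-check whether $g(y+x)-g(y)\le -\log(1-x)$ is actually true for all $y\ge 0$; for large $y$ the cubic term in $g(y+x)-g(y)$ grows like $3cy^2 x$, which eventually dominates $x + \frac{x^2}{2} + \dots$, so this stronger inequality is \emph{false}. Hence I must keep the $w(y)$ factor on the right and argue more carefully.

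So the correct approach: fix $x \in [0,1)$ and define $\phi(y) = \frac{x}{1-x} w(y) + 1 - \frac{w(y+x)}{w(y)}$; I want $\phi(y) \ge 0$ for all $y \ge 0$. First I would check the base case $y = 0$: there $w(0) = 1$, so $\phi(0) = \frac{x}{1-x} + 1 - w(x) = \frac{1}{1-x} - e^{g(x)}$, and since $g(x) = x + \frac{x^2}{2} + cx^3 \le \sum_{i\ge 1}\frac{x^i}{i} = -\log(1-x)$ holds precisely because $c \le \frac13$ (the coefficient of $x^3$ in the full series is $\frac13$), we get $w(x) \le \frac{1}{1-x}$ and thus $\phi(0) \ge 0$. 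The choice $c = \frac{4-2\sqrt3}{3} < \frac13$ is what makes this work. Then I would handle general $y$ by differentiating: showing $\phi'(y) \ge 0$, or more precisely that the ratio $\frac{w(y+x)}{w(y)} = e^{g(y+x)-g(y)}$ grows in $y$ at most as fast as $\frac{x}{1-x} w(y)$ does, i.e. bounding $\frac{d}{dy}\big(g(y+x)-g(y)\big) \cdot e^{g(y+x)-g(y)}$ against $\frac{x}{1-x} w'(y) = \frac{x}{1-x} g'(y) w(y)$.

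The main obstacle will be this monotonicity/derivative step: after differentiating we must show $\big(g'(y+x) - g'(y)\big) e^{g(y+x)-g(y)} \le \frac{x}{1-x} g'(y)\, w(y)$, where $g'(t) = 1 + t + 3ct^2$, so $g'(y+x) - g'(y) = x + 3c(2xy + x^2) = x\big(1 + 3c(2y+x)\big)$. The left side then has a factor $x$ that cancels against the $x$ on the right, reducing the inequality to $\big(1 + 3c(2y+x)\big) e^{g(y+x)-g(y)} \le \frac{1}{1-x}(1 + y + 3cy^2) e^{g(y)}$, and since $e^{g(y)} \ge 1$ and (for the exponent) $g(y+x) - g(y) \le x \cdot g'(y+x)$ by convexity of $g$, this becomes a comparison between explicit polynomials times $e^{x g'(y+x)}$ and $\frac{1}{1-x}$ times a polynomial — an elementary but genuinely fiddly inequality in the two variables $x \in [0,1)$, $y \ge 0$. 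I expect this to be the bulk of the work; the specific value $c = \frac{4-2\sqrt3}{3}$ presumably appears as the threshold that keeps the relevant discriminant non-positive (the $\sqrt 3$ strongly suggests a quadratic-in-$x$ or quadratic-in-$y$ condition whose discriminant vanishes at $c = \frac{4-2\sqrt3}{3}$), so pinning down exactly which quadratic that is will be the key calculation. An alternative, possibly cleaner, packaging is to set $u = w(y)$, treat everything as a function of $u \ge 1$, and verify the differential inequality $w(y+x) \le $ RHS by checking it at $y=0$ and showing the derivative inequality; I would try both and keep whichever yields the shortest verification, relegating the remaining routine algebra to an appendix as the authors do elsewhere.
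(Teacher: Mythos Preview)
Your monotonicity step $\phi'(y)\ge 0$ is false, so the plan as stated cannot close. At $y=0$ one has $\phi'(0)=\tfrac{x}{1-x}-w(x)\,x(1+3cx)$, and a second-order expansion gives $\phi'(0)=-3cx^2+O(x^3)<0$. More tellingly, set $s=y/x$ and expand $\phi(sx)$ in $x$: the leading term is $\big[(1-3c)s^2-3cs+(\tfrac13-c)\big]\,x^3$. It is the discriminant of \emph{this} quadratic in $s$ that vanishes at $c=\tfrac{4-2\sqrt3}{3}$, with a double root at $s=1/\sqrt3$. So at the critical $c$, the function $\phi$ dips from $\phi(0)\approx(\tfrac13-c)x^3>0$ essentially to zero near $y=x/\sqrt3$ before climbing again. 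Your instinct about a vanishing discriminant is right, but the relevant quadratic lives in the ratio $y/x$, not in any derivative-in-$y$ inequality; any correct argument must confront the near-equality at $y\approx x/\sqrt3$ rather than bypass it via monotonicity from $y=0$.

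The paper instead fixes the ratio $t=y/x$ and differentiates in $x$: both the expression and (after normalization) its $x$-derivative vanish at $x=0$ for every fixed $t$, yielding a double-integral representation with a positive weight times a factor $C(x,t)$. Nonnegativity then reduces to the polynomial condition $C(x,t)\ge 0$, handled by writing $\tfrac{(1+x)^6}{x}\,C\big(\tfrac{x}{1+x},t\big)=\sum_{i=0}^{8}x^i P_i(t)$ and verifying each $P_i\ge 0$ on $t\ge 0$ (for $i\ge 1$ by computer). The bottom coefficient is exactly $P_0(t)=(4\sqrt3-6)(1-\sqrt3\,t)^2$, a scalar multiple of the quadratic above. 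The decisive move you are missing is to differentiate along lines $y/x=\text{const}$ rather than in $y$ alone.
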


The proof of Lemma~\ref{lemma:multiway-cubic} involves tedious calculations and computer-aided numerical verifications that are not insightful.
Hence, we defer it to Appendix~\ref{app:multiway-cubic};
see also Appendix~\ref{app:multiway-cubic-weak} for a proof that does not use computer-aided numerical verifications for a weaker version of the lemma.
We further introduce a generalized version of Lemma~\ref{lemma:multiway-cubic} whose proof is also deferred to Appendix~\ref{app:multiway-condition}.

\begin{lemma}\label{lemma:multiway-condition}
    For any $k \ge 1$, any $x_i, y_i \ge 0$ for $1 \le i \le k$ such that $\sum_{i=1}^k x_i \in [0, 1]$:
    \[
        \frac{1-\sum_{i=1}^k x_i}{\sum_{i=1}^k x_i w(y_i) +1- \sum_{i=1}^k x_i}\le \prod_{i=1}^k \frac{w(y_i)}{w(y_i+x_i)}
        ~.
    \]
\end{lemma}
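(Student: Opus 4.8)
The plan is to reduce the multi-element statement to the two-element statement (Lemma~\ref{lemma:multiway-cubic}) by induction on $k$, peeling off one element at a time. Observe first that Lemma~\ref{lemma:multiway-cubic} is precisely the case $k=1$: rearranging $\frac{w(y+x)}{w(y)} \le \frac{x}{1-x} w(y) + 1$ gives $\frac{w(y)}{w(y+x)} \ge \frac{1}{\frac{x}{1-x} w(y) + 1} = \frac{1-x}{x w(y) + 1 - x}$, which is exactly the claimed inequality when $k=1$. So the base case is done.

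For the inductive step, suppose $k \ge 2$ and the statement holds for $k-1$. Given $x_1,\dots,x_k$ with $\sigma := \sum_{i=1}^k x_i \le 1$, I would isolate the $k$-th term. Write $\sigma' := \sum_{i=1}^{k-1} x_i$, so $\sigma = \sigma' + x_k$. The idea is to first absorb $x_k$ against $w(y_k)$ using the $k=1$ bound, then apply the inductive hypothesis to the remaining $k-1$ elements. Concretely, the denominator $\sum_{i=1}^k x_i w(y_i) + 1 - \sigma$ is at least $x_k w(y_k) + \big(\sum_{i=1}^{k-1} x_i w(y_i) + 1 - \sigma\big)$, and one wants to show
\[
    \frac{1-\sigma}{x_k w(y_k) + \sum_{i=1}^{k-1} x_i w(y_i) + 1 - \sigma} \le \frac{w(y_k)}{w(y_k+x_k)} \cdot \prod_{i=1}^{k-1} \frac{w(y_i)}{w(y_i+x_i)} .
\]
By the inductive hypothesis applied to $x_1,\dots,x_{k-1}$ (whose sum $\sigma' \le 1$), we have $\prod_{i=1}^{k-1} \frac{w(y_i)}{w(y_i+x_i)} \ge \frac{1-\sigma'}{\sum_{i=1}^{k-1} x_i w(y_i) + 1 - \sigma'}$. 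So it suffices to prove the purely two-variable inequality: letting $S := \sum_{i=1}^{k-1} x_i w(y_i) \ge \sigma'$ (since $w \ge 1$) and treating $S$, $\sigma'$, $x_k$, $y_k$ as the free quantities,
\[
    \frac{1-\sigma'-x_k}{x_k w(y_k) + S + 1 - \sigma' - x_k} \le \frac{w(y_k)}{w(y_k+x_k)} \cdot \frac{1-\sigma'}{S + 1 - \sigma'} .
\]
Using Lemma~\ref{lemma:multiway-cubic} in the form $\frac{w(y_k)}{w(y_k+x_k)} \ge \frac{1-x_k}{x_k w(y_k) + 1 - x_k}$, it then remains to check an inequality among $S$, $\sigma'$, $x_k$, $w(y_k)$ that no longer involves $w$ except through the single value $u := w(y_k) \ge 1$, which should follow from elementary manipulation (cross-multiplying and checking the sign of a polynomial in $x_k$, $\sigma'$, $S$, $u$, using $S \ge \sigma'$, $u \ge 1$, $\sigma' + x_k \le 1$).

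The main obstacle I anticipate is the last elementary step: after the reduction, one is left with an algebraic inequality in four nonnegative parameters subject to a couple of constraints, and getting the signs and the direction of the bound right — in particular making sure $S \ge \sigma'$ (from $w \ge 1$) is used in the correct place and that the inequality does not require any further hypothesis on $w$ — will take some care. An alternative, possibly cleaner, route is to induct differently: merge elements $k-1$ and $k$ into a single ``virtual'' element, but that breaks the structure $x_i w(y_i)$ since $w$ is not linear, so the peeling-off approach above seems more robust. A further sanity check worth doing is the boundary case $\sum_i x_i = 1$, where the left-hand side is $0$ and the inequality is trivial, and the case where some $x_i = 0$, where the corresponding factor is $1$ and the statement reduces to fewer elements — these confirm the induction is set up consistently.
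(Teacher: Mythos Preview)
The final ``elementary'' inequality you hope to verify is false, so the peeling-off induction cannot close. After applying the inductive hypothesis and Lemma~\ref{lemma:multiway-cubic} as you describe, you need
\[
\frac{1-\sigma'-x_k}{x_k u + S + 1-\sigma'-x_k}
\;\le\;
\frac{1-x_k}{x_k u + 1 - x_k}\cdot\frac{1-\sigma'}{S+1-\sigma'}
\]
for all $S\ge\sigma'$, $u\ge 1$, $\sigma'+x_k\le 1$. Take $k=2$, $\sigma'=x_k=0.1$, $S=x_k u=10$ (i.e.\ $w(y_1)=w(y_2)=100$, which corresponds to $y_i\approx 1.83$). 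The left side is $0.8/20.8\approx 0.0385$, while the right side is $(0.9/10.9)^2\approx 0.0068$. The inequality fails by a factor of more than $5$. The underlying reason is that the one-step bound $\tfrac{w(y)}{w(y+x)}\ge \tfrac{1-x}{xw(y)+1-x}$ from Lemma~\ref{lemma:multiway-cubic} becomes extremely slack once $w(y)$ is large, so the product of two such slack bounds is far below the lemma's left-hand side. No amount of sign-chasing will fix this; the decomposition loses too much.

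The paper's argument avoids exactly this loss: instead of peeling off one coordinate, it uses the log-convexity of $w$ (Lemma~\ref{lem:multiway-weight-log-convex}) to show that the log-gap
\[
f(\vec{x})=\log\frac{1-\sum_i x_i}{\sum_i x_i w(y_i)+1-\sum_i x_i}-\sum_i\log\frac{w(y_i)}{w(y_i+x_i)}
\]
is \emph{convex} on each simplex $\{\sum_i x_i=\sigma\}$. Then $f(\vec{x})$ is bounded by its value at the extreme points $\sigma\vec e_i$, where all mass sits on a single coordinate equal to the full sum $\sigma$. At each such corner Lemma~\ref{lemma:multiway-cubic} is applied once with $x=\sigma$ (not $x=x_i$), which is precisely what keeps the bound tight. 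Your dismissed ``merging'' idea is actually the germ of the correct argument; the missing ingredient is that log-convexity of $w$ is what makes the merge legal.
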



With these two lemmas, we bound the unselected probability for any subset of elements, which implies Theorem~\ref{thm:multiway-ocs} as a special case.

\begin{theorem}
    \label{thm:multiway-ocs-strong}
    Weighted Sampling without Replacement (Algorithm~\ref{alg:multiway}) with weight function $$w(y)=\exp\left(y+ \frac{y^2}{2}+c y^3\right)$$ with $c=\frac{4-2\sqrt{3}}{3}$ ensures that any subset of elements $\eset' \subseteq \eset$ are unselected with probability at most:
    \[
        \prod_{e \in \eset'} p(y_e)
        ~,
    \]
    where $p(y) = \frac{1}{w(y)}$.
\end{theorem}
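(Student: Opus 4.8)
The plan is to prove Theorem~\ref{thm:multiway-ocs-strong} by induction on the number of rounds $T$, strengthening the induction to handle the joint event that a whole subset $\eset' \subseteq \eset$ is unselected. The base case $T = 0$ is trivial since all $y_e = 0$ and $p(0) = 1$. For the inductive step, let $e_1, \dots, e_k$ be the elements of $\eset'$ that have positive mass in the last round $T$ (if $k = 0$ the step is immediate from the inductive hypothesis applied to the first $T-1$ rounds). Write $x_i = x_{e_i}^T$ and $y_i = y_{e_i}^{T-1}$. Following the pattern of Lemma~\ref{lem:2way-sampling-negative-correlation} and the (commented-out) inductive lemma, I would condition on the selections $s^1, \dots, s^{T-1}$ in the first $T-1$ rounds. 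For $\eset'$ to be unselected after round $T$, all of $\eset'$ must be unselected after round $T-1$, and then in round $T$ the algorithm must select one of the elements \emph{not} in $\eset^T$ — i.e., it selects from the "rest" of the mass $1 - \sum_i x_i$. Conditioned on a fixed history with $\eset' \subseteq \uset^{T-1}$, this conditional probability is
\[
    \frac{(1-\sum_i x_i) \cdot (\text{something}) }{\sum_{i} x_i w(y_i) + \sum_{e' \notin \eset^T, e' \in \uset^{T-1}} x_{e'}^T w(y_{e'}^{T-1})}
    \le \frac{1 - \sum_i x_i}{\sum_i x_i w(y_i) + 1 - \sum_i x_i}
    ~,
\]
where the inequality uses $w \ge 1$ on the unselected "rest" elements and drops the already-selected ones from the denominator; more carefully, one upper-bounds the probability of \emph{not} selecting any of $e_1,\dots,e_k$ by the ratio above, valid for every history in the summation.

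**Assembling the bound.** Having bounded the per-history conditional probability by the deterministic quantity $\frac{1-\sum x_i}{\sum x_i w(y_i)+1-\sum x_i}$, I can pull it out of the sum over histories, leaving $\Pr[\eset' \subseteq \uset^{T-1}]$ times that factor. Now apply the inductive hypothesis to the instance restricted to the first $T-1$ rounds, with the \emph{same} subset $\eset'$: $\Pr[\eset' \subseteq \uset^{T-1}] \le \prod_{e \in \eset'} p(y_e^{T-1})$. For elements of $\eset'$ not in $\eset^T$, $y_e^{T-1} = y_e$ so those factors are already what we want. For $e_1, \dots, e_k$ we have $p(y_i) = 1/w(y_i)$ and we need to produce $p(y_i + x_i) = 1/w(y_i + x_i)$, so the shortfall is the product $\prod_i \frac{w(y_i)}{w(y_i+x_i)}$. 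The required inequality is then exactly
\[
    \frac{1-\sum_{i=1}^k x_i}{\sum_{i=1}^k x_i w(y_i) + 1 - \sum_{i=1}^k x_i} \le \prod_{i=1}^k \frac{w(y_i)}{w(y_i + x_i)}
    ~,
\]
which is precisely Lemma~\ref{lemma:multiway-condition}. This closes the induction, and Theorem~\ref{thm:multiway-ocs} follows by taking $\eset' = \{e\}$ a singleton.

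**Main obstacle.** The conceptual steps above are routine once one believes the invariant "expected sampling weight of any unselected element is at most its mass," which is what lets us replace each $\Pr[e' \in \uset^{T-1} \mid \eset' \subseteq \uset^{T-1}] \, w(y_{e'}^{T-1})$ by something $\le 1$; making this rigorous with the \emph{joint} conditioning on $\eset'$ (rather than a single element) requires a little care — essentially one needs that conditioning on more elements being unselected only decreases the conditional survival probabilities of the others, which is itself a negative-correlation statement that must be folded into the same induction. The genuine analytic content, however, is entirely quarantined in Lemmas~\ref{lemma:multiway-cubic} and~\ref{lemma:multiway-condition}: verifying that the specific cubic-exponential weight $w(y) = \exp(y + y^2/2 + cy^3)$ with $c = \frac{4-2\sqrt3}{3}$ satisfies $\frac{w(y+x)}{w(y)} \le \frac{x}{1-x}w(y) + 1$ for all $y \ge 0$, $x \in [0,1)$, and then boosting it to the multi-element form. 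I expect the single-variable inequality (Lemma~\ref{lemma:multiway-cubic}) to be the hard part — it is where the value of $c$ is pinned down and where the messy case analysis / computer-aided verification lives — while the multi-element version should reduce to it by a convexity or smoothing argument (e.g., showing the worst case has all $y_i$ equal, or induction on $k$).
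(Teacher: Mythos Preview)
Your overall plan matches the paper's: induction on the number of rounds, reduce the inductive step to Lemma~\ref{lemma:multiway-condition}, and correctly identify Lemmas~\ref{lemma:multiway-cubic} and~\ref{lemma:multiway-condition} as the analytic core. However, there is a genuine gap in the round-$T$ step. Your claim that, for \emph{every} fixed history $s^1,\dots,s^{T-1}$ with $\eset'\subseteq\uset^{T-1}$, the conditional probability of not selecting any $e_i$ in round $T$ is at most $\frac{1-\sum_i x_i}{\sum_i x_i w(y_i)+1-\sum_i x_i}$ is false. That conditional probability equals $\frac{R}{\sum_i x_i w(y_i)+R}$ where $R=\sum_{e'\notin\eset',\,e'\in\uset^{T-1}} x_{e'}^T w(y_{e'}^{T-1})$, and since $w\ge 1$ the inequality goes the wrong way: $R$ can be arbitrarily large on histories where some heavily-weighted $e'\notin\eset'$ happens to survive, pushing the ratio as close to $1$ as you like. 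The invariant ``expected sampling weight of any unselected element is at most its mass'' is an \emph{expectation} statement, not a pointwise one, so you cannot pull a deterministic bound out of the sum over histories.

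The paper resolves this exactly where you flag it as needing care. Instead of bounding per-history, it applies Jensen's inequality (convexity of $x\mapsto 1/x$) to the sum over histories, obtaining an expression in the conditional probabilities $\Pr[e'\in\uset^{T-1}\mid\eset'\subseteq\uset^{T-1}]$. It then rewrites these via Bayes' rule as $\Pr[\eset'\cup\{e'\}\subseteq\uset^{T-1}]/\Pr[\eset'\subseteq\uset^{T-1}]$, checks that the resulting bound is monotone non-decreasing in both $\Pr[\eset'\subseteq\uset^{T-1}]$ and each $\Pr[\eset'\cup\{e'\}\subseteq\uset^{T-1}]$, and replaces both by their inductive-hypothesis upper bounds \emph{for the enlarged subsets} $\eset'\cup\{e'\}$. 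This is precisely the ``folding the negative correlation into the same induction'' that you anticipate but do not carry out: after the substitution, each factor $w(y_{e'}^{T-1})\cdot\Pr[e'\in\uset^{T-1}\mid\eset'\subseteq\uset^{T-1}]$ collapses to at most $1$, yielding the deterministic fraction you wrote, and the remainder of your argument (invoking Lemma~\ref{lemma:multiway-condition}) goes through verbatim.
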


\begin{proof}
    Recall that $\uset^t$ denotes the set of unselected elements after round $t$.
    Hence, $\eset' \subseteq \uset^t$ is the event that the elements in $\eset'$ are unselected in the first $t$ rounds.
    We shall prove by induction on $0 \le t \le T$ that:
    \[
        \Pr \big[ \eset' \subseteq \uset^t \big] \le \prod_{e \in \eset'} p(y_e^t)
        ~,
    \]
    which implies Theorem~\ref{thm:multiway-ocs-strong} as a special case when $t = T$.

    The base case when $t = 0$ holds vacuously because both sides of the inequality equal $1$.

    Next suppose that it holds for $t-1$ rounds for some $t > 0$, and consider the case of $t$ rounds.
    Let $\bar{X}_e^t$ be the indicator of whether element $e$ is unselected after round $t$, and define $\bar{X}_{\eset'}^t = \prod_{e\in \eset'} \bar{X}_e^t$ for any $\eset' \subseteq \eset$.
    Finally, we write $\bar{X}^t$ for $( \bar{X}_e^t )_{e \in \eset}$.
    \begin{align*}
        \Pr \big[ \eset'\subseteq \uset^t \big]
        &
        = \E\,\bar{X}_{\eset'}^t \\
        &
        = \E_{\bar{X}^{t-1}} \left[ \bar{X}_{\eset'}^{t-1} \left( 1 - \frac{ \sum_{e\in \eset'} w(y_e^{t-1}) x_e^t \bar{X}_e^{t-1} }{ \sum_{e\in \eset} w(y_e^{t-1}) x_e^t \bar{X}_e^{t-1} } \right)  \right]
        ~.
    \end{align*}
    Here we artificually define $\frac{0}{0} = 0$ for ease of presentation.
    Readers may verify that our argument stays true with this caveat.

    Next, multiply $\bar{X}_{\eset'}^{t-1}$ with both the numerator and denominator in the above fraction.
    Using that $Y^2 = Y$ for $Y \in \{0, 1\}$, we have:
    \begin{align*}
        \Pr \big[ \eset'\subseteq \uset^t \big]
        &
        = \E_{\bar{X}^{t-1}} \left[ \bar{X}_{\eset'}^{t-1} \left( 1 - \frac{ \sum_{e\in \eset'} w(y_e^{t-1}) x_e^t \bar{X}_{\eset'}^{t-1} }{ \sum_{e\in \eset'} w(y_e^{t-1}) x_e^t \bar{X}_{\eset'}^{t-1} + \sum_{e\not\in \eset'} w(y_e^{t-1}) x_e^t \bar{X}_{\eset'\cup\{e\}}^{t-1} } \right) \right] \\
        &
        = \E_{X^{t-1}} \left[  \frac{ \bar{X}_{\eset'}^{t-1} \sum_{e\not\in \eset'} w(y_e^{t-1}) x_e^t \bar{X}_{\eset'\cup\{e\}}^{t-1} }{ \sum_{e\in \eset'} w(y_e^{t-1}) x_e^t \bar{X}_{\eset'}^{t-1} + \sum_{e\not\in \eset'} w(y_e^{t-1}) x_e^t \bar{X}_{\eset'\cup\{e\}}^{t-1} } \right]
        ~.
    \end{align*}
    
    By the concavity of $f(x,y) = \frac{xy}{x+y}$, it follows from Jensen's inequality that:
    \[
        \Pr \big[ \eset'\subseteq \uset^t \big]
        \le \frac{ \E\bar{X}_{\eset'}^{t-1} \sum_{e\not\in \eset'} w(y_e^{t-1}) x_e^t\, \E\bar{X}_{\eset'\cup\{e\}}^{t-1}}{\sum_{e\in \eset'} w(y_e^{t-1}) x_e^t\,\E\bar{X}_{\eset'}^{t-1} + \sum_{e\not\in \eset'} w(y_e^{t-1}) x_e^t\,\E\bar{X}_{\eset'\cup\{e\}}^{t-1} }
        ~.
    \]

    By the inductive hypothesis and the monotonicity of $f(x,y) = \frac{xy}{x+y}$, we further get that:
    \begin{align*}
        \Pr \big[ \eset'\subseteq \uset^t \big]
        &
        \le \frac{ \prod_{e \in \eset'} p(y_e^{t-1})  \sum_{e\not\in \eset'} w(y_e^{t-1}) x_e^t \prod_{e' \in \eset'\cup \{e\}} p(y_{e'}^{t-1})
        }{ \sum_{e\in \eset'} w(y_e^{t-1}) x_e^t \prod_{e \in \eset'} p(y_e^{t-1}) + \sum_{e\not\in \eset'} w(y_e^{t-1}) x_e^t \prod_{e' \in \eset'\cup \{e\}} p(y_{e'}^{t-1}) }
        \\
        &
        = \prod_{e \in \eset'} p(y_e^{t-1}) \frac{  \sum_{e\not\in \eset'} w(y_e^{t-1}) x_e^t p(y_e^{t-1})
        }{ \sum_{e\in \eset'} w(y_e^{t-1}) x_e^t + \sum_{e\not\in \eset'} w(y_e^{t-1}) x_e^t p(y_e^{t-1}) } \\
        &
        = \prod_{e \in \eset'} p(y_e^{t-1}) \frac{  \sum_{e\not\in \eset'} x_e^t
        }{ \sum_{e\in \eset'} w(y_e^{t-1}) x_e^t + \sum_{e\not\in \eset'} x_e^t  }
        ~.
    \end{align*}
    
    %

    Next combine the above with Lemmas \ref{lemma:multiway-cubic} and \ref{lemma:multiway-condition}:
    \begin{align*}
        \Pr \big[ \eset' \subseteq \uset^t \big]
        &
        \le
        \frac{1-\sum_{e \in \eset'} x^t_e}{\sum_{e\in \eset'}x^t_e w(y^{t-1}_e) +1- \sum_{e \in \eset'} x_e^t} \prod_{e \in \eset'} p(y_e^{t-1})
        ~
        \tag{$\sum_{e \in \eset} x_e^t = 1$}\\
        &
        \le \prod_{e\in \eset'} \frac{w(y_{e}^{t-1})}{w(y_{e}^{t-1}+x_e^t)}\prod_{e \in \eset'} p(y_e^{t-1})\tag{Lemmas \ref{lemma:multiway-cubic} and \ref{lemma:multiway-condition}}
        \\[1ex]
        &
        = \prod_{e\in \eset'} p(y_e^{t-1}+x_e^t).
        \tag{$p(y) = \frac{1}{w(y)}$}
    \end{align*}
\end{proof}

\section{Improved Algorithms and Hardness for OCS}
\label{sec:ocs}

\subsection{Definitions}

Recall that an online selection algorithm is a $\gamma$-OCS, if for any ($2$-way) online selection instance, any element $e$, and any disjoint consecutive subsequences of the rounds involving $e$ with lengths $k_1, k_2, \cdots, k_m$ respectively, the probability that $e$ is unselected in these rounds is at most:
\[
    \prod_{\ell=1}^m 2^{-k_{\ell}}(1-\gamma)^{k_{\ell}-1}
    ~.
\]

\begin{definition}[Ex-ante Dependence Graph, c.f., \citet{FahrbachHTZ:FOCS:2020}]
    The \emph{ex-ante dependence graph} $G^\exante = (V, E^\exante)$ is a directed graph defined with respect to an online selection instance.
    We shall refer to its vertices and edges as nodes and arcs to make a distinction with those in online matching problems. 
    The nodes correspond to rounds:
    \[
        V = \big\{ 1, 2, \dots, T \big\}
        ~.
    \]
    The arcs correspond to neighboring appearances of an element (indicated by the subscript%
    \footnote{
        There could be parallel arcs in the ex-ante dependence graph, e.g., when rounds $t$ and $t' = t+1$ have the same two elements.
        The subscript helps distinguish such parallel arcs.
    }%
    ):
    \[
        E^\exante = \Big\{ \big( t, t' \big)_e ~:~ t < t';~ e \in \eset^t;~ e \in \eset^{t'};~ \forall t < t'' < t', e \notin \eset^{t''} \Big\}
        ~.
    \]
\end{definition}

\begin{figure}
    \tikzset{%
        every neuron/.style={
            circle,
            draw,
            minimum size=0.5cm,
            very thick
        },
        neuron/.style={
            circle,
            minimum size=0.5cm
        },
    }
    \centering
    \begin{tikzpicture}[x=0.8cm, y=0.8cm, >=latex]
        \node [every neuron/.try, neuron 1/.try] (1) at (0,0) {1};
        \draw (0,-0.8) node {$\eset^1=\{a,c\}$};
        \node [every neuron/.try, neuron 1/.try] (2) at (3,0) {2};
        \draw (3,-0.8) node {$\eset^2=\{b,d\}$};
        \node [every neuron/.try, neuron 1/.try] (3) at (6,0) {3};
        \draw (6,-0.8) node {$\eset^3=\{a,b\}$};
        \node [every neuron/.try, neuron 1/.try] (4) at (9,0) {4};
        \draw (9,-0.8) node {$\eset^4=\{a,c\}$};
        \node [every neuron/.try, neuron 1/.try] (5) at (12,0) {5};
        \draw (12,-0.8) node {$\eset^5=\{b,c\}$};
        \node [every neuron/.try, neuron 1/.try] (6) at (15,0) {6};
        \draw (15,-0.8) node {$\eset^6=\{b,c\}$};
        \path[->, very thick, dashed] (2) edge (3);
        \path[->, very thick, dashed] (3) edge (4);
        \path[->, very thick, dashed] (4) edge (5);
        \path[->, very thick, dashed] (5) edge (6);
        \path[->, very thick, dashed] (1) edge[bend left] (4);
        \path[->, very thick, dashed] (1) edge[bend left] (3);
        \path[->, very thick, dashed] (3) edge[bend left] (5);
        \path[->, very thick, dashed] (5) edge[bend left] (6);
    \end{tikzpicture}
    \caption{Example of ex-ante dependence graph}
    \label{fig:exante-dependence-graph}
\end{figure}
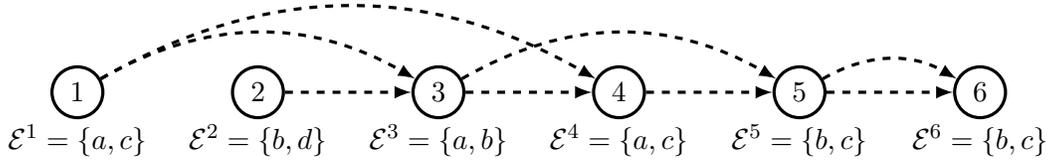

See Figure~\ref{fig:exante-dependence-graph} for an illustrative example of the ex-ante dependence graph. 

\subsection{Roadmap}

\subsubsection{Matching-based Approach versus Automata-based Approach on a Path}
\label{sec:ocs-path}

This subsection reviews the matching-based approach of \citet{FahrbachHTZ:FOCS:2020} and its limitation, and explains the automata-based approach in this paper.
As a running example, consider an instance with the same two elements \emph{head} ($\head$) and \emph{tail} ($\tail$) in every round, and thus the ex-ante dependence graph is a directed path (more precisely, two identical parallel directed paths).

\paragraph{Matching-based Approach.}
\citet{FahrbachHTZ:FOCS:2020} propose to select a matching from the ex-ante graph, and then to select elements in each pair of matched nodes and in each isolated node with independent random bits;
each pair of matched nodes shall select the opposite elements.
We shall select the matching such that 1) the selections of different arcs are negatively dependent (including independent), and 2) the probability of selecting each arc is as high as possible.
If we could select the arcs each with probability at least $\beta$ with negative dependence, we would obtain a $\beta$-OCS because of the following argument.
For any disjoint consecutive subsequences of lengths $k_1, k_2, \dots, k_m$, they contain $\sum_{i=1}^m (k_i-1)$ arcs that could have been selected into the matching.
If we select at least one of them into the matching, the opposite selections in its two nodes ensure selecting both elements.
By the aforementioned properties, the matching has none of these arcs with probability at most $(1-\beta)^{\sum_{i=1}^m (k_i-1)}$.
Even in that case, we still have $\sum_{i=1}^m k_i$ independent selections in these rounds;
the probability of not selecting a given element in them is at most $2^{-\sum_{i=1}^m k_i}$.

For example, \citet{FahrbachHTZ:FOCS:2020} let each node independently pick an incident arc, and then select an arc into the matching if both nodes pick it.
This selects each arc with probability $\frac{1}{4}$ in the special case when the ex-ante graph is a directed path.
It is possible to improve in the special case.
For instance, we could let each arc independently sample a number uniformly from $[0, 1]$ and select an arc if its number is bigger than its neighbors'.
This selects each arc with probability $\frac{1}{3}$.
To our best effort, however, we cannot find any matching-based algorithm that selects each arc with probability more than $\frac{\sqrt{5}-1}{2} \approx 0.382$.
Further, some of these ideas that improve the $\frac{1}{4}$ bound by the algorithm of \citet{FahrbachHTZ:FOCS:2020} fail to generalize beyond the special case.

\paragraph{Automata-based Approach.}
This paper introduces a different approach that selects elements using a probabilistic automaton.
We shall refer to both this automaton and its transition function as $\sigma^*$.
It has five states $q_\origin$, $q_\head$, $q_{\head^2}$, $q_\tail$, and $q_{\tail^2}$.
The original state $q_\origin$ is both the initial state of the automaton and the state it resets to after selecting the same element in two consecutive rounds.
State $q_\head$ (resp., $q_\tail$) means that the automaton selects $\head$ (resp., $\tail$) in the previous round but not twice in a roll;
from this state the automaton selects $\tail$ (resp., $\head$) with a higher chance, and the margin $\beta \in [0, 1]$ will be optimized to be $\beta = \sqrt{2}-1$ in our analysis.
State $q_{\head^2}$ (resp., $q_{\tail^2}$) means that the automaton selects $\head$ (resp, $\tail$) in the last two rounds;
from this state the automaton will select $\tail$ (resp., $\head$) with certainty and resets to the original state $\origin$.
Below is the transition function $\sigma^*$ that takes a state as input and returns a state and an element from $\big\{ \head, \tail \big \}$ (see also Figure~\ref{fig:tree-ocs-automaton}):
\begin{align*}
    \sigma^*(q_\origin) =
    \begin{cases}
        \big( q_\head, \head \big) & \text{w.p.\ $\frac{1}{2}$} \\
        \big( q_\tail, \tail \big) & \text{w.p.\ $\frac{1}{2}$}
    \end{cases}
    ~,
    &
    \quad
    \sigma^*(q_\head) =
    \begin{cases}
        \big( q_{\head^2}, \head \big) & \text{w.p.\ $\frac{1-\beta}{2}$} \\
        \big( q_\tail, \tail \big) & \text{w.p.\ $\frac{1+\beta}{2}$}
    \end{cases}
    ~,
    \quad
    \sigma^*(q_\tail) =
    \begin{cases}
        \big( q_\head, \head \big) & \text{w.p.\ $\frac{1+\beta}{2}$} \\
        \big( q_{\tail^2}, \tail \big) & \text{w.p.\ $\frac{1-\beta}{2}$}
    \end{cases}
    ~,
    \\[2ex]
    &
    \sigma^*(q_{\head^2}) = \big(q_\origin, \tail \big)
    ~,
    \qquad
    \sigma^*(q_{\tail^2}) = \big(q_\origin, \head \big)
    ~.
\end{align*}

\begin{figure}[t]
\begin{center}
\tikzset{%
  every neuron/.style={
    circle,
    draw,
    minimum size=1cm
    ,very thick
  },
  neuron/.style={
    circle,
    minimum size=1cm
  },
}

\begin{tikzpicture}[x=1.5cm, y=1.5cm, >=stealth]

\node [every neuron/.try, neuron 1/.try] (0) at (0,0) {$q_\origin$};
\node [every neuron/.try, neuron 2/.try] (1) at (2,2) {$q_\tail$};
\node [every neuron/.try, neuron 3/.try] (2) at (4,0) {$q_{\tail^2}$};
\node [every neuron/.try, neuron 4/.try] (-1) at (-2,2) {$q_\head$};

\node [every neuron/.try, neuron 5/.try] (-2) at (-4,0) {$q_{\head^2}$};

\draw[->,very thick] (0) -- (1)  ;
\draw[->,very thick] (0) -- (-1) ;
\draw[->,very thick] (1) -- (2) ;
\draw[->,very thick] (1) ..controls (1,3) and (-1,3).. (-1) ;
\draw[->,very thick] (-1) -- (-2) ;
\draw[->,very thick] (-1) -- (1) ;
\draw[->,very thick] (2) -- (0) ;
\draw[->,very thick] (-2) -- (0) ;

\node [rotate=45] (lab1) at (1.17,0.87) {$\tail,\frac{1}{2}$};
\node [rotate=-45] (lab1) at (-1.17,0.87) {$\head,\frac{1}{2}$};
\node [rotate=-45] (lab1) at (3.15,1.15) {$\tail,\frac{1-\beta}{2}$};
\node [rotate=45] (lab1) at (-3.15,1.15) {$\head,\frac{1-\beta}{2}$};
\node [] (lab1) at (2.1,-0.2) {$\head,1$};
\node [] (lab1) at (-2.1,-0.2) {$\tail,1$};
\node [] (lab1) at (0,2.2) {$\tail,\frac{1+\beta}{2}$};
\node [] (lab1) at (0,3.08) {$\head,\frac{1+\beta}{2}$};
\end{tikzpicture}
\end{center}
\caption{The probabilistic automaton $\sigma^*$ that selects an element in each round in the special case whose ex-ante dependence graph is a directed path, and in our $\beta$-tree OCS. The transitions are labeled by the selections from $\big\{ \head, \tail \big\}$, and by the probabilities of transitions.}
\label{fig:tree-ocs-automaton}
\end{figure}
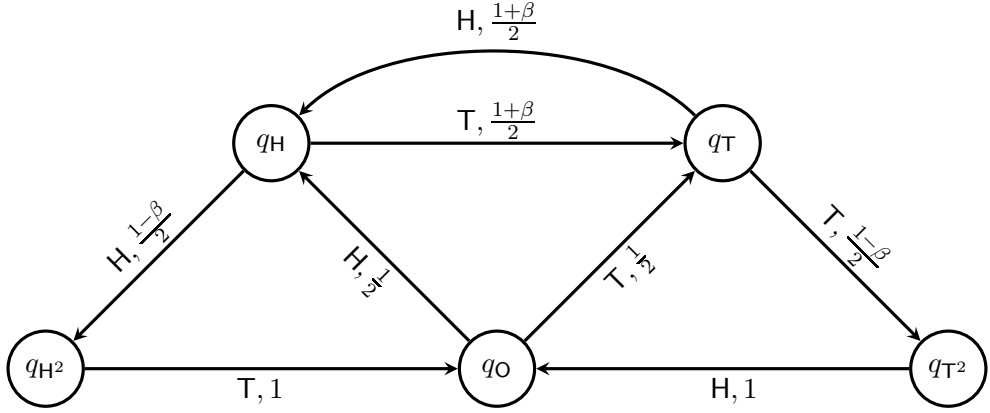

We find that using this automaton to select elements in different rounds is a $(\sqrt{2}-1)$-OCS in the special case.
Readers will find the proof of a stronger claim in Subsection~\ref{sec:forest-ocs}.
This is strictly better than our best effort using the matching-based approach.
More importantly, it generalizes to arbitrary online selection instances using the techniques in the rest of the section.

\subsubsection{Automata-based Approach}

This subsection outlines how to generalize the automata-based approach to general online selection instances and obtain an improvement over the $0.109$-OCS of \citet{FahrbachHTZ:FOCS:2020}.

\begin{theorem}
    \label{thm:ocs}
    There is a polynomial-time $0.167$-OCS for the $2$-way online selection problem.
\end{theorem}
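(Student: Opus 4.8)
The plan is to lift the path automaton $\sigma^*$ from Subsection~\ref{sec:ocs-path} to arbitrary ex-ante dependence graphs. The core difficulty is that a general node has in-degree and out-degree that may exceed one, so we cannot simply run one automaton per connected path: each round contains two elements, and each element carries its own history of appearances. My first step would therefore be to decompose the ex-ante dependence graph into a structured union of paths. Concretely, for each element $e$, the rounds containing $e$ form a directed path $P_e$ (following the arcs labeled with subscript $e$); every round $t$ with $\eset^t=\{e,e'\}$ lies on exactly two such paths, $P_e$ and $P_{e'}$, so the graph is the edge-disjoint union of the $P_e$'s, glued at the nodes. The automaton should be run \emph{along each path} $P_e$, so that consecutive appearances of $e$ are correlated exactly as in the path case; the subtlety is that the two automata meeting at a node $t$ must agree on which single element $s^t$ gets selected.

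The second step is to resolve this conflict. The idea (matching the informal description in the introduction) is that at each round we designate one of the two elements as the one whose automaton is ``probed'': we look at the state of $P_e$ at node $t$, and if that state forces a selection (states $q_{\head^2}$-like ``selected $e$ twice'' or ``$e$ was just selected'' forcing the opposite, or ``$e$ not selected in its last two appearances'' forcing $e$), we obey it; only in the genuinely-random state do we flip a fresh coin. The key structural fact to establish is that one can choose, consistently and online, which path to probe at each node so that (a) no element's path ever both forces ``select $e$'' while another forces ``don't select $e$'' incompatibly, and (b) along every path the induced marginal transition law is exactly $\sigma^*$ with parameter $\beta$. This is where I expect the real work to be: the naive ``probe independently in each round'' fails (as the paper warns), so one needs a careful tie-breaking / orientation rule on the ex-ante graph — plausibly processing rounds in time order and maintaining, for each element, a bit recording whether its automaton is currently ``ahead'' or ``behind,'' and using that bit to decide priority. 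One must show this rule is well-defined (never deadlocks) and preserves the per-path marginal.

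The third step is the probabilistic analysis, which should parallel Theorem~\ref{thm:optimal-semi-ocs} and the path-case claim referenced for Subsection~\ref{sec:forest-ocs}. Fix an element $e$ and disjoint consecutive subsequences of its appearances of lengths $k_1,\dots,k_m$. Restricted to $P_e$, the automaton $\sigma^*$ governs the selections in exactly these rounds, and by the path analysis (proved later in the paper for the forest case, which I may invoke) the probability that $e$ is unselected throughout a single consecutive block of length $k_i$ is at most $2^{-k_i}(1-\gamma)^{k_i-1}$ for $\gamma$ depending on $\beta$; across the $m$ blocks the bound multiplies because distinct blocks are separated by appearances of $e$ at which the automaton's state is ``refreshed'' (an appearance of $e$ not in $T'$ still consumes a transition and, conditioned on any history, leaves a product bound). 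Optimizing $\beta=\sqrt2-1$ in the path analysis yields $\gamma=\sqrt2-1\approx 0.414$ on a path, but the conflict-resolution step costs a constant factor in the achievable margin on general graphs, and after that loss the guaranteed $\gamma$ drops to the claimed $0.167$. Finally, since all bookkeeping (the path decomposition, the priority bits, sampling from $\sigma^*$) is per-round and polynomial, the algorithm runs in polynomial time, giving Theorem~\ref{thm:ocs}.

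The main obstacle, to restate, is step two: designing and proving correctness of an online rule that decides which element's automaton dictates each round's selection, such that the per-path marginal law remains exactly $\sigma^*$ while the selections are globally consistent; everything after that is a block-wise induction modeled on the proofs already given for semi-OCS.
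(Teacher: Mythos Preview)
Your proposal correctly identifies the central difficulty but does not resolve it, and the resolution you sketch is in fact impossible as stated. Running a copy of $\sigma^*$ along each element-path $P_e$ and demanding that at every round $t$ the two automata meeting at $t$ agree on the selected element \emph{while each retains exactly the $\sigma^*$ marginal law} cannot work: one can construct instances in which both automata arrive at $t$ in a deterministic state (e.g.\ the automaton on $P_e$ in a state forcing selection of the other element $e'$, and simultaneously the automaton on $P_{e'}$ in a state forcing selection of $e$), so any tie-breaking rule must violate the $\sigma^*$ transition on at least one path. Your condition~(b) is therefore not a ``structural fact to establish'' but a false desideratum, and no priority-bit scheme can rescue it. Consequently the per-block bound in your step three has nothing to stand on, and the drop from $\sqrt{2}-1$ to $0.167$ is left unexplained.

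The paper's route is structurally different. Rather than running one automaton per element-path and reconciling them, it first runs a separate randomized \emph{forest constructor} (Subsection~\ref{sec:forest-constructor}) that online selects, for each node, at most one in-arc to keep, producing a random good forest $G^\forest$; this step uses its own pair of automata $\sigma^+,\sigma^-$ acting on the pseudo-paths of the ex-ante graph and is shown to be a $0.404$-forest constructor (Lemma~\ref{lem:forest-constructor}). Only then does the forest OCS (Subsection~\ref{sec:forest-ocs}) run $\sigma^*$ along the tree-paths of the chosen forest; because each node now has a unique parent, there is no conflict to resolve, and the $(\sqrt{2}-1)$-forest-OCS guarantee (Lemma~\ref{lem:forest-ocs}) holds cleanly. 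The two guarantees compose multiplicatively to give a $0.404\cdot(\sqrt{2}-1)\approx 0.167$-OCS. The $0.404$ is not a vague ``constant factor lost to conflict resolution'' but the output of a substantial separate algorithm and analysis that your proposal does not contain.
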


We next explain the ingredients and how to combine them to prove Theorem~\ref{thm:ocs}.
The sequel subsections will substantiate them, with the proofs of some lemmas deferred to Appendix~\ref{app:ocs}.

A main challenge in generalizing the automata-based approach to general instances is deciding from which in-neighbor each node shall inherit the state of the automata.
In other words, we need to select an in-arc for each node to form a directed binary forest.%
\footnote{That is, each node has at most one in-arc from its parent, and at most two out-arcs to its children. The latter is true for the ex-ante dependence graph itself, and therefore also for all its subgraphs.}
We find that the naïve approach of independently and randomly selecting an in-arc for each node does not work unless the instance satisfies additional properties (see Subsection~\ref{sec:forest-constructor-good-instances}), because we need the directed binary forest to satisfy another property defined below.

\begin{definition}[Good Forest]
    A \emph{good forest} $G^\forest = (V, E^\forest)$ with respect to an online selection instance is a subgraph of the ex-ante dependence graph $G^\exante = (V, E^\exante)$ such that:
    \begin{enumerate}
        \item $G^\forest = (V, E^\forest)$ is a directed binary forest;
        \item For any node $p$ with two children $c$ and $c'$ in $G^\forest$, the corresponding rounds have no common element, i.e., $\eset^p \cap \eset^c \cap \eset^{c'} = \emptyset$.
    \end{enumerate}
\end{definition}

In the following definitions, for any subset of nodes $U \subseteq V$, let $E_U^\forest$ denote the subset of arcs induced by $U$ in the forest $G^\forest$.
Further for any element $e$ and any subset of nodes $U \subseteq V$ involving element $e$, let $E_{U, e}^\exante$ denote the subset of arcs induced by $U$ and with subscript $e$:
\[
    E_{U, e}^\exante \defeq \Big\{ (t, t')_e \in E^\exante ~:~ t \in U ;~ t' \in U \Big\}
    ~.
\]

\begin{definition}[Forest Constructor]
    A \emph{forest constructor} takes an online selection instance as input and returns good forest $G^\forest = (V, E^\forest)$.
    On receiving the elements $\eset^t$ of round $t$, it immediately decides whether each in-arc of $t$ belongs to $E^\forest$.
    It is an $\alpha$-forest constructor if for any element $e$, any subset of nodes $U \subseteq V$ involving $e$, and any $\beta \in [0, 1]$:
    \begin{equation}
        \label{eqn:alpha-forest-constructor}
        \E \big(1-\beta\big)^{|E^\forest_U|} \le \big(1-\alpha\beta\big)^{|E^\exante_{U, e}|}
        ~.
    \end{equation}
    The expectation is over the randomness of the forest constructor.
\end{definition}

The next lemma is our main result regarding forest constructors.
Subsection~\ref{sec:forest-constructor} presents the algorithm that proves this lemma.

\begin{lemma}
    \label{lem:forest-constructor}
    There is a polynomial-time $0.404$-forest constructor.
\end{lemma}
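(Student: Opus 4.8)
The plan is to build the forest constructor by processing rounds one at a time and, for each node $t$ with two in-arcs, choosing which (if any) in-arc to keep so that two competing goals are met: every node keeps at most one in-arc (so the result is a binary forest, since each node already has at most two out-arcs in $G^\exante$); and when two children $c,c'$ are assigned to the same parent $p$, we guarantee $\eset^p\cap\eset^c\cap\eset^{c'}=\emptyset$ (so the forest is \emph{good}). The first condition is automatic once we keep at most one in-arc per node. For the second, note that a node $p$ has at most two out-arcs, one per element of $\eset^p$; if both out-arcs of $p$ are kept by the respective children, the two children's rounds share only the element through which they connect to $p$, which are \emph{different} elements of $\eset^p$ — so a potential common element would have to be a third element, and this is where a careful tie-breaking / deletion rule is needed. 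I expect the construction to be: when a new node $t$ arrives with candidate in-arcs $(t_1,t)_{e}$ and $(t_2,t)_{e'}$ (with $e\neq e'$ or $e=e'$ via parallel arcs), include each candidate in-arc independently with some probability $\rho$, then if both are included discard one (say uniformly), and additionally discard an arc whenever keeping it would violate goodness at the parent. The parameter $\rho$ and the discard rules must be tuned so the net retention probability of each arc, conditioned on the history, is controlled from below.

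The core of the argument is verifying inequality~\eqref{eqn:alpha-forest-constructor} with $\alpha = 0.404$: for every element $e$ and every node subset $U$ involving $e$, and every $\beta\in[0,1]$, $\E(1-\beta)^{|E^\forest_U|}\le (1-\alpha\beta)^{|E^\exante_{U,e}|}$. The natural route is induction on $|E^\exante_{U,e}|$, exposing one arc of $E^\exante_{U,e}$ at a time in round order. Fix the last such arc $a=(t',t)_e$ in $U$; condition on all randomness before round $t$. The arc $a$ is retained in $E^\forest$ with some conditional probability $r \ge \alpha$ (this lower bound being the key quantitative claim about the constructor), and retaining/not retaining $a$ changes $|E^\forest_U|$ by exactly $1$ (arcs of $E^\forest_U$ other than $a$ are determined by the exposed randomness, since $a$ is the unique $e$-arc into $t$ within $U$ and $t$ has at most one other in-arc which is not in $E^\exante_{U,e}$). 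So the conditional expectation factorizes as
\[
\E\big[(1-\beta)^{|E^\forest_U|}\mid \mathcal H\big]\le \big(r(1-\beta)+(1-r)\big)\,(1-\beta)^{|E^\forest_{U}\setminus\{a\}|}= (1-r\beta)\,(1-\beta)^{|E^\forest_{U\setminus\{t\}}|},
\]
and since $1-r\beta\le 1-\alpha\beta$ (as $r\ge\alpha$ and $\beta\ge 0$), taking expectations and applying the inductive hypothesis to $U\setminus\{t\}$ closes the induction. The subtlety is that retaining $a$ may force the deletion of some \emph{other} arc in $E^\forest_U$ (to maintain goodness at a parent), which would break the clean ``$\pm 1$'' accounting; the construction must be arranged so that the only arc ever deleted on account of $a$ is $a$ itself or an arc outside $U$'s relevant structure — equivalently, goodness violations are resolved by deleting arcs at the \emph{child} side in a way that never increases $|E^\forest_U|$ when $a$ is dropped.

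The main obstacle, and where the bulk of the work lies, is establishing the uniform conditional lower bound $r\ge 0.404$ on the retention probability of each ex-ante arc, simultaneously with maintaining goodness and the binary-forest property. One has to analyze the worst-case local configuration around a node — how many sibling arcs compete, how often the goodness constraint at a shared parent forces a deletion — and optimize the inclusion probability $\rho$ (and any randomized tie-break) against it; I expect $0.404$ to emerge as the solution of a small optimization balancing ``arc lost to a competing in-arc at the same node'' against ``arc lost to a goodness conflict at the parent.'' A secondary obstacle is making the whole scheme \emph{online}: the decision for each in-arc of $t$ must be made when $\eset^t$ is revealed, using only $\eset^1,\dots,\eset^t$, so the goodness check at a parent $p<t$ must be performable with information available by round $t$ — which it is, since $\eset^p$ and the other child's round are both in the past. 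Packaging these pieces, together with the routine verification that the algorithm runs in polynomial time, yields Lemma~\ref{lem:forest-constructor}.
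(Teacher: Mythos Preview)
Your proposal has a genuine gap, and the paper's construction is quite different from what you sketch.

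\textbf{The factorization is incorrect.} You write that, conditioned on the history $\mathcal H$ before round $t$, ``arcs of $E^\forest_U$ other than $a$ are determined by the exposed randomness.'' This is false. Node $t$'s other in-arc $b=(t'',t)_{e'}$ (with $e'\neq e$) is indeed not in $E^\exante_{U,e}$, but it \emph{can} be in $E^\forest_U$: it suffices that $t''\in U$ (which just means $t''$ contains $e$) and that $b$ is the arc kept at round $t$. Whether $b$ is kept is decided at round $t$, not by $\mathcal H$. So your ``$\pm 1$'' accounting collapses: keeping $a$ versus keeping $b$ may leave $|E^\forest_U|$ unchanged, and then the induction no longer peels off a factor $(1-\alpha\beta)$. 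Worse, out-arcs of $t$ to later nodes in $U$ are decided in still later rounds, so even $|E^\forest_{U\setminus\{t\}}|$ need not be $\mathcal H$-measurable unless $t$ is the \emph{last} node in $U$, which your choice of $t$ does not guarantee.

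\textbf{The uniform conditional bound $r\ge 0.404$ does not hold for your construction.} With the rule ``discard an arc whenever keeping it would violate goodness at the parent,'' there are histories in which the $e$-arc $a=(t',t)_e$ has conditional retention probability $0$: namely when $t'$ already has a forest child $c$ with $\eset^{t'}\cap\eset^c\cap\eset^t\neq\emptyset$, and $t$'s other in-arc is absent. So the bound you need cannot be uniform over all histories; one must condition on coarser information and exploit structure.

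\textbf{What the paper actually does.} The key structural insight is that the constraints ``at most one in-arc per node'' and ``goodness at each parent'' together are equivalent to partitioning $E^\exante$ into \emph{pseudo-paths} (maximal sequences of arcs in which consecutive arcs either share a head or share a tail with a common-element conflict) and choosing a \emph{pseudo-matching} (no two consecutive arcs) in each. Selecting a good forest is exactly selecting one pseudo-matching per pseudo-path, and different pseudo-paths are handled independently. Within a pseudo-path, the paper runs a three-state probabilistic automaton (states $\matched,\unmatched,\ready$; parameter $p=0.6616$) that walks along the path and marks arcs; the automaton is designed so that (i) it never marks two consecutive arcs, (ii) conditioned only on the past \emph{marks} (not the full state), the next $e$-arc is marked with probability $\ge 0.404$. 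The analysis then reduces, via a structural lemma, to the case where the $e$-arcs in $U$ are singletons on the pseudo-path spaced at least four apart; the number $0.404$ arises as $\min\{\tfrac{1}{3-p},\,p^3+(1-p)^2,\,\tfrac{3p(1-p)}{1+p}\}$ optimized over $p$. This is not a ``small local optimization'' of an inclusion probability $\rho$; it is a global Markov-chain design, and the relevant conditioning is on selection outcomes at distant positions, not on the entire history.
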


\begin{definition}[Forest OCS]
    A \emph{forest OCS} takes both an online selection instance and a good forest $G^\forest = (V, E^\forest)$ as input.
    At each round $t$, it observes the elements $\eset^t$ in the round and whether each in-arc of $t$ is in $E^\forest$, and then selects an element from $\eset^t$.
    It is a $\beta$-forest OCS if for any element $e$ and any subset of nodes $U \subseteq V$ involving $e$, the probability that $e$ is never selected in the corresponding rounds is at most:
    \[
        2^{-|U|} \big(1-\beta\big)^{|E^\forest_U|}
        ~.
    \]
\end{definition}

Our main result regarding forest OCS is the next lemma, whose proof is in Subsection~\ref{sec:forest-ocs}.

\begin{lemma}
    \label{lem:forest-ocs}
    There is a polynomial-time $(\sqrt{2}-1)$-forest OCS.
\end{lemma}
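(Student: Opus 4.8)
The plan is to use the probabilistic automaton $\sigma^*$ from Subsection~\ref{sec:ocs-path} as the engine of the forest OCS, with $\beta=\sqrt 2-1$, and to run one independent copy of it along each maximal directed path inside the good forest $G^\forest$. Concretely, process the rounds in their natural order; for each node $t$, if its unique in-arc (to its forest-parent $p$) lies in $E^\forest$, then $t$ inherits the current automaton state coming out of $p$ and applies $\sigma^*$ to it, producing both the selected element of round $t$ and the state passed on to $t$'s forest-children. If $t$ has no forest in-arc (it is a forest root, or its in-arc was dropped), it starts a fresh automaton in state $q_\origin$, equivalently it selects an element of $\eset^t$ uniformly at random with independent fresh randomness. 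The crucial point that makes this well defined is property~2 of a good forest: when $p$ has two children $c,c'$, the rounds $p,c,c'$ share no common element, so the labels $\{\head,\tail\}$ emitted by $\sigma^*$ can be consistently identified with the (at most two) elements present along each branch; there is never a conflict forcing the same ``$\head$'' to name two different elements on the two branches.

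The core of the proof is the unselected-probability bound: fix an element $e$ and a node set $U$ involving $e$; I must show $e$ is unselected in all rounds of $U$ with probability at most $2^{-|U|}(1-\beta)^{|E^\forest_U|}$. First I would reduce to a single directed path: the forest arcs induced by $U$, $E^\forest_U$, decompose $U$ into vertex-disjoint directed paths (plus isolated nodes), and the automaton runs independently on different maximal forest-paths and restarts independently at forest roots; moreover selections on rounds outside $U$ only inject extra randomness, so conditioning on everything external can only help. Hence it suffices to prove, for a single chain of $U$-nodes $t_1\to\cdots\to t_\ell$ connected by $\ell-1$ forest arcs, that the probability $e\notin\{s^{t_1},\dots,s^{t_\ell}\}$ is at most $2^{-\ell}(1-\beta)^{\ell-1}$; multiplying over chains gives the claim since $|U|=\sum\ell_j$ and $|E^\forest_U|=\sum(\ell_j-1)$. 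Along such a chain I would set up a one-step recursion: let $u_\head, u_\tail, u_{\head^2},\dots$ be the probabilities, starting from each automaton state, that $e$ is avoided for the remaining suffix of the chain, where at each node I must account for (i) the transition of $\sigma^*$, (ii) whether the intermediate rounds between two consecutive $U$-nodes (which also lie on the forest-path but not in $U$) are present, and (iii) the possibility that $e$ is not even one of the two ``path elements'' at some node, in which case that round is harmless. The recursion is a linear map on the vector of state-indexed avoidance probabilities, and I would show its relevant operator norm (restricted to the reachable cone) is at most $\tfrac12(1-\beta)$, which with $\beta=\sqrt2-1$ is exactly the target contraction factor; the algebra here is where the specific value $\sqrt 2-1$ is forced, since it is the $\beta$ for which the worst-case state (roughly, the state $q_\head$ or $q_\tail$ that is ``about to'' possibly select $e$ again) still contracts at rate $\tfrac12(1-\beta)$ after two steps — this is the same tight calculation alluded to at the end of Subsection~\ref{sec:ocs-path}.

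The main obstacle I anticipate is handling the ``intermediate'' rounds: on a forest-path, consecutive nodes of $U$ need not be adjacent, so between $t_i$ and $t_{i+1}$ the automaton may take several steps corresponding to rounds not in $U$ and possibly not even containing $e$. I would argue that each such extra step can only help (it is an additional selection opportunity that, under $\sigma^*$, never increases the avoidance probability, because $\sigma^*$ from any state selects each of the two elements with probability at most $\tfrac{1+\beta}{2}<1$ and from $q_{\head^2},q_{\tail^2}$ with the favorable certainty), so one may contract the whole intermediate segment back to an effective single transition whose avoidance factor is $\le\tfrac12$ — never worse than an isolated fair coin flip — and then feed that into the chain recursion. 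Making this monotonicity argument clean, i.e.\ defining the right partial order on automaton states under which $\sigma^*$-steps are monotone and verifying it state by state, is the delicate part; everything else is the bookkeeping of the linear recursion and the choice $\beta=\sqrt2-1$. Finally, polynomial running time is immediate: the automaton has five states, each round does $O(1)$ work plus reading its in-arc status, so the whole forest OCS runs in $O(T)$ time given the forest.
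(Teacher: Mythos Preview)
Your algorithm is essentially the paper's Algorithm~\ref{alg:forest-ocs}, so the construction is fine; the gap is in the analysis, specifically in the independence claim.

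You write that ``the automaton runs independently on different maximal forest-paths and restarts independently at forest roots,'' and then reduce to a single chain. This is false. A good forest is a collection of \emph{binary trees}, not paths: a node $p$ may have two forest-children $c,c'$, and both inherit the \emph{same} outgoing state of $\sigma^*$ at $p$. Hence two tree-paths of $E^\forest_U$ that live in the same binary tree are correlated through their common prefix, and you cannot simply multiply their bounds. The clearest failure is when the top nodes $t_j,t_m$ of two tree-paths are \emph{siblings}: they start from the same (random, non-$q_\origin$) state, with $e$ carrying opposite labels on the two branches. The paper handles this via a dedicated ``fork'' lemma (Lemma~\ref{lem:forest-ocs-fork}) bounding the joint avoidance probability of two copies of $\sigma^*$ started from an identical arbitrary state with complementary target labels; your product-over-chains argument has nothing corresponding to this case and would in fact be unsound there.

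Relatedly, your treatment of a single tree-path whose top node is \emph{not} a root is too loose. You propose a monotonicity/partial-order argument on automaton states to absorb the ``intermediate'' rounds above the path; but the issue is the \emph{initial} state at the top of the path, which is an arbitrary (correlated) state rather than $q_\origin$. The paper does not use any monotonicity; it proves directly (Lemma~\ref{lem:forest-ocs-from-other}) that from \emph{any} starting state, after at least one buffer step the $k$-step avoidance probability is still at most $2^{-k}(1-\beta)^{k-1}$, and this is exactly where the choice $\beta=\sqrt 2-1$ is tight. The global argument is then an induction on the number of tree-paths that peels off the one whose top node has greatest height, splitting into the three cases (root; non-root with non-$U$ sibling; non-root with sibling that is itself a tree-path top) and invoking Lemmas~\ref{lem:forest-ocs-from-origin}, \ref{lem:forest-ocs-from-other}, \ref{lem:forest-ocs-fork} respectively. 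Your proposal is missing both the fork case and a correct replacement for the ``arbitrary starting state'' lemma.
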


The next lemma combines the two ingredients to get an OCS, and implies Theorem~\ref{thm:ocs} as a corollary using Lemmas~\ref{lem:forest-constructor} and \ref{lem:forest-ocs}.

\begin{lemma}
    Suppose that there is a polynomial-time $\alpha$-forest constructor and a polynomial-time $\beta$-forest OCS.
    Together they form a polynomial-time $\alpha \beta$-OCS.
\end{lemma}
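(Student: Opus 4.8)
The goal is to chain together an $\alpha$-forest constructor and a $\beta$-forest OCS so that, on any online selection instance, first run the forest constructor to produce a good forest $G^\forest$, then feed that forest together with the instance into the forest OCS, and argue the composed algorithm is an $\alpha\beta$-OCS. The composition is clearly online and polynomial-time since both components are. So the content is the probability bound: for any element $e$ and any union of $m$ consecutive subsequences of the rounds containing $e$, with lengths $k_1,\dots,k_m$, the probability that $e$ is never selected in those rounds is at most $\prod_{\ell=1}^m 2^{-k_\ell}(1-\alpha\beta)^{k_\ell-1}$.

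First I would set $U$ to be the set of nodes (rounds) in that union of consecutive subsequences, so $|U| = \sum_{\ell=1}^m k_\ell$, and observe that the arcs of the ex-ante dependence graph with subscript $e$ induced by $U$ are exactly the ``within-block'' consecutive-appearance arcs, of which there are $k_\ell - 1$ in the $\ell$-th block; hence $|E^\exante_{U,e}| = \sum_{\ell=1}^m (k_\ell - 1)$. Condition on the good forest $G^\forest$ output by the constructor. By the $\beta$-forest OCS guarantee applied to this $U$,
\[
  \Pr\big[\text{$e$ never selected in $U$} \mid G^\forest\big] \le 2^{-|U|}(1-\beta)^{|E^\forest_U|}.
\]
Now take the expectation over the randomness of the forest constructor. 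The factor $2^{-|U|}$ is deterministic, so
\[
  \Pr\big[\text{$e$ never selected in $U$}\big] \le 2^{-|U|}\,\E\big[(1-\beta)^{|E^\forest_U|}\big].
\]
Here I must be slightly careful: the forest OCS uses its own internal randomness, independent of the forest constructor's randomness, so conditioning on $G^\forest$ and then averaging is legitimate — the bound $2^{-|U|}(1-\beta)^{|E^\forest_U|}$ holds pointwise in $G^\forest$, and $|E^\forest_U|$ is a function of the constructor's randomness alone.

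Next I would invoke the $\alpha$-forest constructor guarantee, Equation~\eqref{eqn:alpha-forest-constructor}, with this same $e$, this same $U$, and with the parameter of that inequality set equal to the $\beta$ of the forest OCS: $\E[(1-\beta)^{|E^\forest_U|}] \le (1-\alpha\beta)^{|E^\exante_{U,e}|}$. Combining,
\[
  \Pr\big[\text{$e$ never selected in $U$}\big] \le 2^{-|U|}(1-\alpha\beta)^{|E^\exante_{U,e}|}
  = 2^{-\sum_\ell k_\ell}(1-\alpha\beta)^{\sum_\ell (k_\ell-1)}
  = \prod_{\ell=1}^m 2^{-k_\ell}(1-\alpha\beta)^{k_\ell-1},
\]
which is exactly the $\alpha\beta$-OCS bound. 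Theorem~\ref{thm:ocs} then follows by plugging in the $0.404$-forest constructor of Lemma~\ref{lem:forest-constructor} and the $(\sqrt2-1)$-forest OCS of Lemma~\ref{lem:forest-ocs}, since $0.404\cdot(\sqrt2-1) > 0.167$.

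**Main obstacle.** The routine arithmetic is trivial; the one place requiring care is the independence structure in the conditioning step — making sure the forest OCS's guarantee is stated as a bound valid for every fixed good forest (which it is, by the definition of forest OCS taking the forest as part of its input), so that taking expectation over the constructor's randomness is valid and does not require any joint-distribution argument. A secondary point worth stating explicitly is that the constructor's definition only promises to output a \emph{good} forest, which is precisely the input the forest OCS is designed to accept, so the composition is well-defined; and that the $\beta$ in the forest constructor's defining inequality is a free parameter $\beta \in [0,1]$, which we are entitled to instantiate at the forest OCS's $\beta = \sqrt2 - 1$.
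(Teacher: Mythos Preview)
Your proposal is correct and follows essentially the same route as the paper's own proof: define $U$ as the set of rounds in the given consecutive subsequences, condition on the forest $G^\forest$ produced by the constructor, apply the $\beta$-forest OCS bound pointwise, then take the expectation and invoke the $\alpha$-forest constructor inequality with the parameter instantiated at the forest OCS's $\beta$, yielding $2^{-|U|}(1-\alpha\beta)^{|E^\exante_{U,e}|}$; finish by $|U|=\sum_\ell k_\ell$ and $|E^\exante_{U,e}|=\sum_\ell(k_\ell-1)$. Your explicit discussion of why the conditioning step is legitimate (the forest OCS bound holds for every fixed good forest, with independent internal randomness) is a welcome addition that the paper leaves implicit.
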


\begin{proof}
    The OCS combines the $\alpha$-forest constructor and the $\beta$-forest OCS as follows.
    On receiving the elements $\eset^t$ of a round $t$, it calls the forest constructor to determine whether each in-arc of $t$ is in $E^\forest$.
    Then, it puts this information together with the elements $\eset^t$ and calls the forest OCS to select an element from $\eset^t$.

    For any element $e$, and any disjoint consecutive subsequences of the rounds involving $e$, let $k_1, k_2, \dots, k_m$ be the lengths of these subsequences, and let $U$ be the subset of nodes that correspond to these rounds.
    By the guarantee of the $\alpha$-forest constructor and the $\beta$-forest OCS, the probability that element $e$ is never selected in these rounds is at most:
    \[
        \E \Big[ 2^{-|U|} \big( 1-\beta \big)^{|E^\forest_U|} \Big] \le 2^{-|U|} \big( 1-\alpha\beta \big)^{|E^\exante_{U,e}|}
        ~.
    \]

    The lemma then follows by $|U| = \sum_{\ell=1}^m k_\ell$ and $|E^\exante_{U,e}| = \sum_{\ell=1}^m \big( k_\ell-1 \big)$.
\end{proof}

\subsection{Forest Constructor}
\label{sec:forest-constructor}

\subsubsection{Warm-up: Good Online Selection Instances}
\label{sec:forest-constructor-good-instances}

We say that an online selection instance is good if its ex-ante graph satisfies the second requirement of good forests. That is, for any node $p$ with two out-neighbors $c$ and $c'$ in $G^\exante$, the corresponding rounds have no common element, i.e., $\eset^p \cap \eset^c \cap \eset^{c'} = \emptyset$.
Such good instances admit a simple forest constructor that for each node keeps one of its in-arcs independently and uniformly at random (Algorithm~\ref{alg:forest-constructor-good-instance}).
The simple forest constructor and its analysis are instructive, and motivate the forest constructor for general instances, so we include them as a warm-up.

\begin{algorithm}[t]
    \caption{$\frac{1}{2}$-Forest constructor for good online selection instances}
    \label{alg:forest-constructor-good-instance}
    \begin{algorithmic}
        \smallskip
        \State \textbf{For each round $t$:} (suppose that $\eset^t=\{e_1,e_2\}$)
        \begin{enumerate}
            \item For $i \in \{1, 2\}$, let $t_i$ be the most recent round that involves $e_i$ (if exists).
            \item Draw $j \in \{1, 2\}$ uniformly at random, and include arc $(t_j, t)_{e_j}$ into $E^\forest$ (if $t_j$ is defined).
        \end{enumerate}
    \end{algorithmic}
\end{algorithm}

\begin{lemma}
    \label{lem:forest-constructor-warmup}
    Algorithm~\ref{alg:forest-constructor-good-instance} is a $\frac{1}{2}$-forest constructor for good online selection instances.
\end{lemma}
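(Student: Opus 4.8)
\textbf{Proof plan for Lemma~\ref{lem:forest-constructor-warmup}.}

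The plan is to verify the defining inequality of an $\alpha$-forest constructor with $\alpha = \frac{1}{2}$ directly, by an induction that tracks, for a fixed element $e$ and a fixed subset of nodes $U$ involving $e$, how the random variable $(1-\beta)^{|E^\forest_U|}$ grows as we add rounds one at a time. First I would observe that the algorithm is indeed well-defined and produces a good forest on a good instance: each node keeps at most one in-arc, so $G^\forest$ is a binary in-forest, and the second good-forest requirement is inherited from the goodness of the instance (which asks precisely that no node $p$ with two out-neighbors in $G^\exante$ has a common element in all three rounds, and $E^\forest \subseteq E^\exante$). So the only real content is Eqn.~\eqref{eqn:alpha-forest-constructor}.

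Next I would set up the induction on the number of rounds $T$. Fix $e$, fix $U$, and let $U$ restricted to the first $t$ rounds be $U^t$. The key quantity is $Z^t \defeq (1-\beta)^{|E^\forest_{U^t}|}$, and I want $\E[Z^T] \le (1-\alpha\beta)^{|E^\exante_{U,e}|}$ with $\alpha = \frac12$. The arcs of $E^\exante_{U,e}$ are exactly the ``consecutive pairs'' of $U$ with respect to $e$; each such arc $(t_{i}, t_{i+1})_e$ is a candidate to be added to $E^\forest_U$ precisely at the round $t_{i+1}$, and it is added iff, in step 2 at round $t_{i+1}$, the coin $j$ picks the coordinate corresponding to $e$. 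Crucially, at a round $t' = t_{i+1} \in U$, the arc $(t_i, t_{i+1})_e$ lies inside $U$ (both endpoints in $U$), so adding it multiplies $Z$ by $(1-\beta)$ with probability $\frac12$ and leaves $Z$ unchanged with probability $\frac12$ — giving a per-arc expected multiplicative factor of $\frac{1+(1-\beta)}{2} = 1 - \frac{\beta}{2} = 1-\alpha\beta$. Arcs of $E^\forest_U$ whose other endpoint corresponds to the \emph{other} element, or rounds of $U$ whose kept in-arc leaves $U$, can only fail to increase $|E^\forest_U|$, i.e.\ they multiply $Z$ by $1$; the point is that they never hurt. I would make this precise by conditioning on the history up to round $t'-1$ and peeling off one round at a time: $\E[Z^{t'} \mid \mathcal F_{t'-1}] \le (1-\alpha\beta)^{[\text{$t' \in U$ and its $e$-in-arc is a $U$-internal candidate}]} \cdot Z^{t'-1}$, and then multiply these bounds across all rounds, noting that the number of rounds $t'$ at which a fresh $U$-internal $e$-arc becomes available is exactly $|E^\exante_{U,e}|$.

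The step I expect to require the most care is bookkeeping the independence and the ``never hurts'' claim simultaneously: the coin drawn at round $t'$ is independent of all previous coins, but whether a given candidate arc lies in $U$ and whether it is the $e$-labelled one must be argued to be determined by the instance and by $U$ alone (not by the coins), so that the conditional expectation factorizes cleanly. One subtlety is that a single round $t'$ may present \emph{two} in-arcs both of whose tails lie in $U$ — but only one of them can be an $E^\exante_{U,e}$ arc (the one labelled $e$), and keeping the other one contributes a factor $(1-\beta) \le 1$ that we simply bound by $1$; conversely keeping the $e$-arc gives the factor $(1-\beta)$ with probability exactly $\frac12$ regardless of what the other coordinate is. Once this is laid out, multiplying the per-round factors yields $\E[Z^T] \le (1-\frac{\beta}{2})^{|E^\exante_{U,e}|}$, which is exactly Eqn.~\eqref{eqn:alpha-forest-constructor} with $\alpha = \frac12$, completing the proof.
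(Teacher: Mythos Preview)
Your proposal is correct and follows essentially the same approach as the paper. The paper's presentation is slightly more direct: rather than an induction over rounds with a filtration, it introduces indicator variables $X_a$ for each arc $a \in E^\exante_{U,e}$, observes that these arcs have distinct destinations (so the $X_a$'s are independent Bernoulli($\tfrac12$) by the algorithm's fresh coin per round), bounds $(1-\beta)^{|E^\forest_U|} \le (1-\beta)^{\sum_a X_a}$, and takes the expectation as a product. Your round-by-round conditioning unrolls the same computation and handles the same ``extra arcs only help'' observation; the content is identical.
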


\begin{proof}
    For any arc $a \in E^\exante_{U,e}$, let $X_a \in \{0, 1\}$ be the indicator of whether arc $a$ is included into $E^\forest$.
    Since every arc $a \in E^\exante_{U,e}$ with $X_a = 1$ belongs to $E^\forest_U$, we have:
    \[
        \big(1-\beta\big)^{|E^\forest_U|} \le \big(1-\beta\big)^{\sum_{a \in E^\exante_{U,e}} X_a}
        ~.
    \]

    Further, since $E^\exante_{U,e}$ by definition is a subset of arcs connecting neighboring appearances of element $e$, they have distinct in- and out-nodes.
    Hence, by the definition of Algorithm~\ref{alg:forest-constructor-good-instance}, $X_a$'s are independently and uniformly distributed over $\{0, 1\}$ for all arcs $a \in E^\exante_{U,e}$.
    We get that
    \[
        \E \big(1-\beta\big)^{|E^\forest_U|}
        \le \prod_{a \in E^\exante_{U,e}} \E \big(1-\beta\big)^{X_a}
        = \prod_{a \in E^\exante_{U,e}} \Big(1-\frac{\beta}{2} \Big)
        = \Big(1-\frac{\beta}{2} \Big)^{|E^\exante_{U,e}|}
        ~.
    \]
\end{proof}

For good instances, the second property of good forests always holds regardless of which arcs the forest constructor selects.
To satisfy the first property, i.e., to form a directed binary forest, consider a partition of arcs according to their destinations, into groups with one or two arcs each.
Then, selecting a directed binary forest is equivalent to selecting at most one arc from each group.
The above forest constructor indeed independently and randomly selects an arc from each group.
Finally, the independent selections of arcs in $E^\exante_{U,e}$ help show that it is a $\frac{1}{2}$-forest constructor.

\subsubsection{General Online Selection Instances, Pseudo-paths, and Pseudo-matchings}

\begin{figure}
    \centering
    \tikzset{%
        every neuron/.style={
            circle,
            draw,
            minimum size=0.4cm,
            very thick
        },
        neuron/.style={
            circle,
            minimum size=0.4cm
        },
    }
    \begin{subfigure}[b]{0.9\textwidth}
        \centering
        \begin{tikzpicture}[x=0.8cm, y=0.8cm, >=latex]
            \node [every neuron/.try, neuron 1/.try] (11) at (0,1.5) {$3$};
            \node [every neuron/.try, neuron 1/.try] (12) at (2,0.5) {$4$};
            \node [every neuron/.try, neuron 1/.try] (13) at (4,1.5) {$1$};
            \node [every neuron/.try, neuron 1/.try] (14) at (6,0.5) {$3$};
            \node [every neuron/.try, neuron 1/.try] (15) at (8,1.5) {$2$};
            \node [every neuron/.try, neuron 1/.try] (21) at (10,1.5) {$3$};
            \node [every neuron/.try, neuron 1/.try] (22) at (12,0.5) {$5$};
            \node [every neuron/.try, neuron 1/.try] (23) at (14,1.5) {$4$};
            \node [every neuron/.try, neuron 1/.try] (31) at (16,1.9) {$5$};
            \node [every neuron/.try, neuron 1/.try] (32) at (16,0.1) {$6$};
            \path[->, line width=2.2pt] (11) edge (12);
            \path[->, very thick, dashed] (13) edge (12);
            \path[->, very thick, dashed] (13) edge (14);
            \path[->, line width=2.2pt] (15) edge (14);
            \path[->, line width=2.2pt] (21) edge (22);
            \path[->, very thick, dashed] (23) edge (22);
            \path[->, line width=2.2pt] (31) edge[bend left] (32);
            \path[->, very thick, dashed] (31) edge[bend right] (32);
        \end{tikzpicture}
        \caption{Partition of arcs into pseudo-paths, and bolded selected arcs as a pseudo-matching}
        \label{fig:ocs-forest-constructor-pseudo-path}
    \end{subfigure}
    \begin{subfigure}[b]{0.9\textwidth}
        \centering
        \begin{tikzpicture}[x=0.8cm, y=0.8cm, >=latex]
            \node [every neuron/.try, neuron 1/.try] (1) at (0,0) {1};
            \draw (0,-0.8) node {$\eset^1=\{a,c\}$};
            \node [every neuron/.try, neuron 1/.try] (2) at (3,0) {2};
            \draw (3,-0.8) node {$\eset^2=\{b,d\}$};
            \node [every neuron/.try, neuron 1/.try] (3) at (6,0) {3};
            \draw (6,-0.8) node {$\eset^3=\{a,b\}$};
            \node [every neuron/.try, neuron 1/.try] (4) at (9,0) {4};
            \draw (9,-0.8) node {$\eset^4=\{a,c\}$};
            \node [every neuron/.try, neuron 1/.try] (5) at (12,0) {5};
            \draw (12,-0.8) node {$\eset^5=\{b,c\}$};
            \node [every neuron/.try, neuron 1/.try] (6) at (15,0) {6};
            \draw (15,-0.8) node {$\eset^6=\{b,c\}$};
            \path[->, line width=2.2pt] (2) edge (3);
            \path[->, line width=2.2pt] (3) edge (4);
            \path[->, very thick, dashed] (4) edge (5);
            \path[->, line width=2.2pt] (5) edge (6);
            \path[->, very thick, dashed] (1) edge[bend left] (4);
            \path[->, very thick, dashed] (1) edge[bend left] (3);
            \path[->, line width=2.2pt] (3) edge[bend left] (5);
            \path[->, very thick, dashed] (5) edge[bend left] (6);
        \end{tikzpicture}
        \caption{Selected arcs (bolded) form a good forest in ex-ante dependence graph}
        \label{ocs:possible-restricted-dependence-forest}
    \end{subfigure}
    \caption{An illustrative example of the forest constructor for general instances}
    \label{ocs:arc-picking-dependence-graph}
\end{figure}
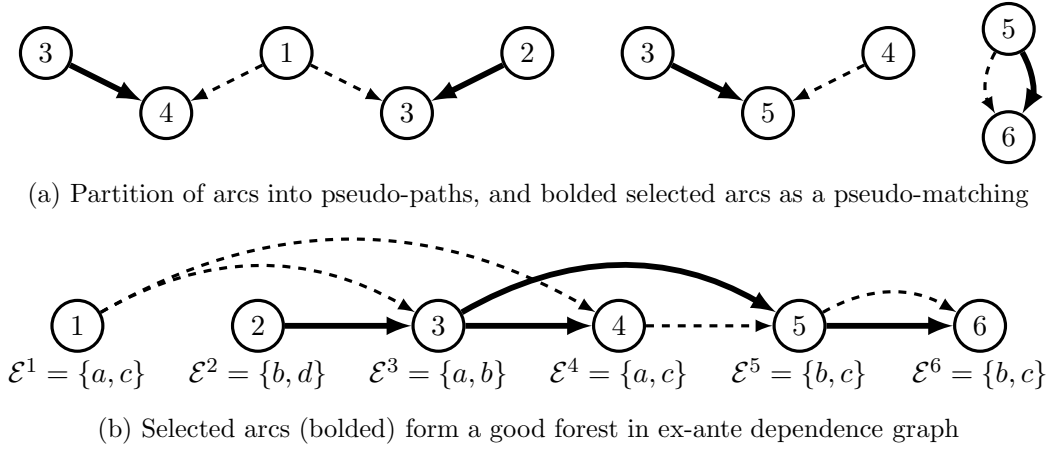

For general instances, a forest constructor needs to also ensure the second property of good forests.
In other words, for any pairs of arcs $a$ and $a'$ with the same origin such that the rounds corresponding to their incident nodes share a common element, a forest constructor must not select both $a$ and $a'$ into the forest.
For example, consider an instance with the ex-ante dependence graph in Figure~\ref{fig:exante-dependence-graph}.
The directed forest property requires that, e.g., arcs $(1,3)_a$ and $(2,3)_b$ cannot be both selected, arcs $(1,4)_c$ and $(3,4)_a$ cannot be both selected, etc., due to having the same destinations.
The second property of good forests further requires that, e.g., arcs $(1,3)_a$ and $(1,4)_c$ cannot be both selected.
Dropping their directions, the above arcs $(3,4)_a, (1,4)_c, (1,3)_a, (2,3)_b$ form an undirected path $3-4-1-3-2$.
More importantly, we can succinctly describe the aforementioned requirements of good forests as not selecting neighboring arcs with respect to the path.
Driven by this observation, we define pseudo-paths and the pseudo-matchings below.

\begin{definition}[Pseudo-path]
    Given any online selection instance and its ex-ante dependence graph, a \emph{pseudo-path} is a maximal ordered subset of arcs $P = \big( (t_i, t_i')_{e_i} \big)_{1 \le i \le \ell}$ such that for any $1 \le i < \ell$:
    \begin{itemize}
        \item Either $t_i' = t_{i+1}'$, i.e., the $i$-th and $(i+1)$-th arcs in $P$ have the same destination;
        \item Or $t_i = t_{i+1}$, i.e., the $i$-th and $(i+1)$-th arcs in $P$ have the same origin, and rounds $t_i = t_{i+1}$, $t_i'$, and $t_{i+1}'$ have a common element.
    \end{itemize}
\end{definition}


\begin{lemma}
    \label{lem:pseudo-path}
    The pseudo-paths partition the arcs of the ex-ante graph.%
    \footnote{We consider two pseudo-paths with the same subset of arcs but in opposite orders as the same pseudo-path.}
\end{lemma}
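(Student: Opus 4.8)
The plan is to argue that the pseudo-path relation induces an equivalence-like partition on the arc set $E^\exante$, by showing that the "neighboring arc" relation defining a pseudo-path gives every arc at most one predecessor and at most one successor, so that the arcs decompose into disjoint maximal chains. First I would fix an arbitrary arc $a = (t, t')_e \in E^\exante$ and analyze how many arcs can be adjacent to it under the two bullet conditions in the definition. Adjacency of the "same destination" type: an arc $a'$ with destination $t'$ is of the form $(s, t')_{e''}$ where $e'' \in \eset^{t'}$ and $s$ is the most recent round before $t'$ containing $e''$. Since $|\eset^{t'}| = 2$, there are at most two arcs into $t'$, hence at most one arc other than $a$ with destination $t'$. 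Adjacency of the "same origin with common element" type: an arc $a'$ with origin $t$ sharing a common element across rounds $t$, $t'$, and the destination of $a'$; again since $|\eset^t| = 2$ and the shared element must be the unique element $e$ common to $\eset^t$ and $\eset^{t'}$ (the label of $a$), the partner arc $a'$ must carry the same label $e$ and go from $t$ to the next occurrence of $e$ after $t$ — but that is exactly $a$ itself unless we look in the other direction. The point I want to extract: at the destination end $t'$, arc $a$ has at most one neighbor, and at the origin end $t$, arc $a$ has at most one neighbor (of either bullet type), once we are careful that the two bullet types at a given endpoint cannot both produce a new distinct arc.

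The key step is therefore a careful case analysis showing each arc has degree at most $2$ in the "pseudo-path adjacency graph" $H$ whose vertices are the arcs of $G^\exante$ and whose edges are the adjacency relations from the two bullets. Concretely, for arc $a = (t,t')_e$: the destination-side neighbor, if any, is the unique other in-arc of $t'$; the origin-side neighbor, if any, is determined by the second bullet, which forces it to be an arc from $t$ whose destination, together with $t$ and $t'$, shares the common element — and I must check this leaves at most one choice and that it cannot coincide with the destination-side neighbor. Once $H$ has maximum degree $2$ and (crucially) contains no cycles — which I would argue from the fact that all adjacencies are "locally directed" in time: following adjacencies either keeps moving to later destinations or is pinned by the acyclic time order of rounds, so no closed walk is possible — the connected components of $H$ are simple paths. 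Maximal pseudo-paths are by definition exactly these maximal chains, and since the components partition $V(H) = E^\exante$, the pseudo-paths partition the arcs.

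The main obstacle I anticipate is the acyclicity argument and the bookkeeping in the degree bound, in particular disentangling the two bullet types at the \emph{origin} endpoint. The subtlety is that an arc $(t, t')_e$ might simultaneously be followed (in the pseudo-path order) by a "same destination" arc on one side and a "same origin, common element" arc on the other, and I need to confirm these are genuinely the two ends of the chain through $a$ and do not secretly coincide or multiply. I would handle this by noting that the label $e$ is the unique common element of $\eset^t$ and $\eset^{t'}$ (since each round has exactly two elements and $a$ being an arc forces $e \in \eset^t \cap \eset^{t'}$), so the second-bullet partner at $t$ must involve the \emph{other} element of $\eset^t$ when moving to a third round, pinning it down uniquely. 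For acyclicity, I would track, along any walk in $H$, a monotone quantity — e.g., the maximum round index touched — and show adjacency steps cannot return to a previously used arc; alternatively, one can invoke that the ex-ante graph's arcs are indexed by (element, consecutive-occurrence) pairs and a cycle in $H$ would force a repeated such pair. With degree $\le 2$ and no cycles established, the decomposition into maximal paths is immediate, completing the proof.
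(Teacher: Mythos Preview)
Your approach is essentially the paper's: both reduce to showing that in the ``adjacency graph'' $H$ on arcs, every arc $(t,t')_e$ has at most one destination-side neighbor (the other in-arc of $t'$, since $|\eset^{t'}|=2$) and at most one origin-side neighbor (the other out-arc of $t$, provided the common-element condition holds), so $H$ has maximum degree $2$. The paper's write-up is even terser than yours: it simply observes that each arc's neighbors in any pseudo-path are determined by the ex-ante graph alone, hence the maximal pseudo-path through a given arc is unique.

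Two small remarks. First, your initial claim that the origin-side partner ``must carry the same label $e$'' is backwards: there is only one out-arc of $t$ with subscript $e$, namely $a$ itself, so the partner necessarily carries the \emph{other} element of $\eset^t$---which you do state correctly later in the proposal. This does not affect the degree count. Second, the paper does not prove acyclicity of $H$ in this lemma at all, and your proposed monotone invariant (``maximum round index touched'') does not obviously work: when two arcs share an origin via the second bullet, their destinations need not be ordered in time relative to the previous step. If you want acyclicity explicitly, the clean route is the arrival-order argument (the paper's Lemma~\ref{lem:pseudo-path-arrival}), which shows a pseudo-path is built by appending one arc at a time to a growing sub-pseudo-path and hence is linear.
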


Figure~\ref{fig:ocs-forest-constructor-pseudo-path} shows the partition of arcs into pseudo-paths in the aforementioned example whose ex-ante dependence graph is Figure~\ref{fig:exante-dependence-graph}.
We defer other structural properties about pseudo-paths to sequel subsections where we use them to design and analyze the forest constructor.

\begin{definition}[Pseudo-matching]
    For any pseudo-path $P = \big( (t_i, t_i')_{e_i} \big)_{1 \le i \le \ell}$, a subset of its arcs $M$ is a \emph{pseudo-matching} if it has no adjacent arcs with respect to $P$, i.e., for any $1 \le i < \ell$, either $(t_i, t_i')_{e_i} \notin M$ or $(t_{i+1}, t_{i+1}')_{e_{i+1}} \notin M$.
\end{definition}

We remark that a pseudo-matching may not be a matching of the ex-ante dependence graph.
For example, arcs $(3,4)_a$ and $(2,3)_b$ form a pseudo-matching of the left-most pseudo-path in Figure~\ref{fig:ocs-forest-constructor-pseudo-path} even though they share node $3$.

\begin{lemma}
    \label{lem:pseudo-matching}
    A subgraph of the ex-ante dependence graph is a good forest if and only if it is a union of pseudo-matchings, one for each pseudo-path.
\end{lemma}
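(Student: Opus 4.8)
The plan is to prove the two directions separately, using Lemma~\ref{lem:pseudo-path} (that pseudo-paths partition $E^\exante$) to reduce the problem to a pseudo-path-by-pseudo-path analysis. The key translation step is to observe that both defining properties of a good forest are \emph{local constraints on pairs of arcs}, and that every such pair of ``forbidden'' arcs lies within a single pseudo-path as consecutive elements. Concretely, the first property of a good forest (being a directed binary forest) forbids exactly the pairs of distinct arcs sharing a destination; the second property forbids exactly the pairs of arcs with a common origin whose three incident rounds share a common element. By the definition of a pseudo-path, each of these two types of forbidden pairs occurs precisely as a pair of \emph{adjacent} arcs within some pseudo-path, and conversely every pair of adjacent arcs in a pseudo-path is of one of these two types. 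Hence avoiding all forbidden pairs globally is equivalent to avoiding adjacent pairs within each pseudo-path, which is exactly the pseudo-matching condition.

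For the ``only if'' direction, I would start from a good forest $G^\forest = (V, E^\forest)$, fix a pseudo-path $P$, and let $M = E^\forest \cap P$. If two consecutive arcs of $P$ were both in $M$, then by the case analysis in the pseudo-path definition they either share a destination (contradicting the in-degree-$\le 1$ part of being a directed binary forest) or share an origin with a common element across the three incident rounds (contradicting the second good-forest property, since that common origin node would then have two children whose rounds share that element with it). Either way we get a contradiction, so $M$ is a pseudo-matching; and since the pseudo-paths partition $E^\exante \supseteq E^\forest$, taking $M$ for each pseudo-path expresses $E^\forest$ as the disjoint union of these pseudo-matchings.

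For the ``if'' direction, suppose $E^\forest$ is the union of pseudo-matchings, one per pseudo-path. I must check both good-forest properties. For the binary-forest property: each node has at most two out-arcs already in $G^\exante$ (one per element in $\eset^t$), so I only need in-degree $\le 1$; but any two arcs sharing a destination are adjacent in a common pseudo-path (first bullet of the definition), hence cannot both be selected. For the no-common-element property: two out-arcs of a node $p$ going to children whose rounds share a common element with $\eset^p$ are, by the second bullet, adjacent in a common pseudo-path, hence again cannot both be selected. Finally, since $E^\forest \subseteq E^\exante$, it is automatically a subgraph of the ex-ante dependence graph, so $G^\forest$ is a good forest.

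The main obstacle I anticipate is not the logical skeleton above but the bookkeeping in matching up the ``forbidden pairs'' with ``adjacent pairs in a pseudo-path'': one must argue that the pseudo-path definition captures \emph{all} forbidden pairs and introduces \emph{no spurious} constraints — in particular that when two arcs share an origin but their incident rounds do \emph{not} all share a common element, they are \emph{not} forced to be consecutive in a pseudo-path, so the pseudo-matching relaxation does not over-restrict. This hinges on Lemma~\ref{lem:pseudo-path} and on the precise wording of the second bullet (the common-element clause), so I would be careful to invoke the maximality in the definition of pseudo-path and the partition statement of Lemma~\ref{lem:pseudo-path} at exactly the right points rather than re-deriving them.
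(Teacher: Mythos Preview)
Your proposal is correct and follows essentially the same approach as the paper's proof. The paper's argument is extremely terse---it simply records the intermediate characterization ``each node has at most one in-arc, and no node $p$ has two out-arcs $(p,c),(p,c')$ with $\eset^p\cap\eset^c\cap\eset^{c'}\ne\emptyset$'' and declares the two equivalences to follow from the definitions---whereas you have spelled out the forbidden-pair/adjacent-pair correspondence and the two directions explicitly, which is exactly the content behind that declaration.
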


Therefore, a forest constructor needs to pick a pseudo-matching from each pseudo-path.
Further, it must do so in an online fashion.
On observing the elements $\eset^t$ of round $t$, it either appends $t$'s in-arcs to an existing pseudo-path and lets them start a new pseudo-path on their own, according to the definition of pseudo-paths.
It also immediately decides if to include each in-arc into the pseudo-matching.
To make it an $\alpha$-forest constructor for the largest possible $\alpha$, we want to select as many arcs into the pseudo-matchings as possible, and at the same time to keep the selections sufficiently independent so that an analysis similar to Lemma~\ref{lem:forest-constructor-warmup} applies.
The latter refutes selecting either all odd arcs or all even arcs from each pseudo-path with equal probability.

\subsubsection{Forest Constructor for General Instances}

\begin{figure}
\centering
\tikzset{%
  every neuron/.style={
    circle,
    draw,
    minimum size=1cm
    ,very thick
  },
  neuron/.style={
    circle,
    minimum size=1cm
  },
}
\begin{subfigure}[b]{.49\textwidth}
\centering
\begin{tikzpicture}[x=1.5cm, y=1.5cm, >=stealth]
\node [every neuron/.try, neuron 1/.try] (u) at (-1,0) {$\unmatched$};
\node [every neuron/.try, neuron 2/.try] (r) at (0,1.732) {$\ready$};
\node [every neuron/.try, neuron 3/.try] (m) at (1,0) {$\matched$};
\path[->, very thick] (m) [bend left] edge (u);
\path[->, very thick] (r) edge (m);
\path[->, very thick] (u) edge (m);
\path[->, very thick] (u) edge (r);
\node [] (lab1) at (0,-0.55) {$\no,1$};
\node [rotate=-60] (lab1) at (0.7,1) {$\yes,1$};
\node [] (lab1) at (0,0.2) {$\yes,p$};
\node [rotate=60] (lab1) at (-0.7,1) {$\no,1\!-\!p$};
\end{tikzpicture}
\caption{Automaton $\sigma^+$ for the positive end}
\end{subfigure}
\begin{subfigure}[b]{.49\textwidth}
\centering
\begin{tikzpicture}[x=1.5cm, y=1.5cm, >=stealth]
\node [every neuron/.try, neuron 1/.try] (u) at (-1,0) {$\unmatched$};
\node [every neuron/.try, neuron 2/.try] (r) at (0,1.732) {$\ready$};
\node [every neuron/.try, neuron 3/.try] (m) at (1,0) {$\matched$};
\path[->, very thick] (m) edge (u);
\path[->, very thick] (m) edge (r);
\path[->, very thick] (u) [bend right] edge (m);
\path[->, very thick] (r) edge (u);
\node [] (lab1) at (0,-0.55) {\small$\no,1$};
\node [rotate=-60] (lab1) at (0.7,1) {\small$\yes,1\!-\!p$};
\node [] (lab1) at (0,0.2) {\small$\yes,p$};
\node [rotate=60] (lab1) at (-0.7,1) {\small$\no,1$};
\end{tikzpicture}
\caption{Automaton $\sigma^-$ for the negative end}
\end{subfigure}
    \caption{
        The probabilistic automata in our forest constructor (Algorithm~\ref{alg:forest-constructor}).
        The transitions are labeled by the binary decisions and the probabilities of the transitions.
    }
    \label{fig:ocs-forest-constructor-automata}
\end{figure}
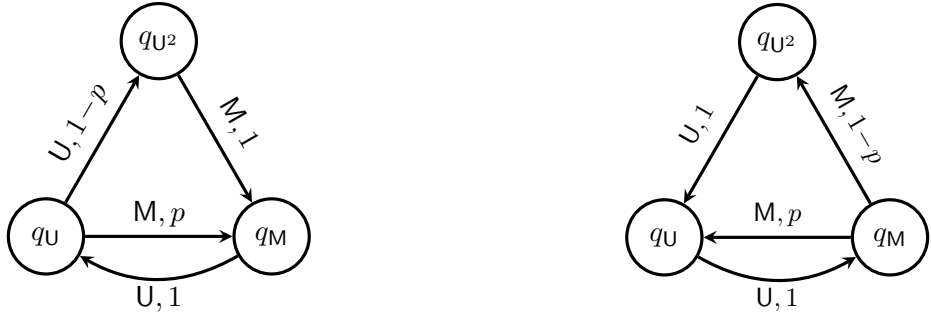

To explain our forest constructor for general instances, we need a structural lemma about the arrival order of arcs in any pseudo-path.
We notice that arcs usually arrive in pairs since the in-arcs of $t$ both arrive in round $t$.
We will artificially break ties to be consistent with the lemma.

\begin{lemma}
    \label{lem:pseudo-path-arrival}
    For any pseudo-path $P = \big( (t_i, t_i')_{e_i} \big)_{1 \le i \le \ell}$ and any $0 \le t \le T$, the subset of arcs that arrive in the first $t$ rounds is a sub-pseudo-path, i.e., either it is an empty set, or there exists $1 \le i_{\min} \le i_{\max} \le \ell$ such that the arrived arcs are $\big( (t_i, t_i')_{e_i} \big)_{i_{\min} \le i \le i_{\max}}$.
\end{lemma}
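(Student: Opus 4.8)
The plan is to prove the statement by induction on $t$, tracking the key invariant that at every time step the set of arrived arcs of a fixed pseudo-path $P = \big( (t_i, t_i')_{e_i} \big)_{1 \le i \le \ell}$ forms a contiguous block of indices. The base case $t=0$ is immediate since no arc has arrived. For the inductive step, I would examine what new arcs of $P$ can appear in round $t$: an arc $(t_i, t_i')_{e_i}$ arrives in round $t_i'$ (its destination round), so the arcs of $P$ arriving in round $t$ are exactly those with destination $t_i' = t$. The main point is to show that the set of indices $i$ with $t_i' = t$ is a contiguous interval that is moreover adjacent to (immediately follows or precedes) the block of already-arrived indices.

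First I would recall the structure of a pseudo-path: consecutive arcs either share a destination ($t_i' = t_{i+1}'$) or share an origin ($t_i = t_{i+1}$ with a common element among the three rounds $t_i, t_i', t_{i+1}'$). I would argue that for a fixed round $r$, the set of indices $i$ with destination $t_i' = r$ is contiguous within $P$: if two non-consecutive arcs both have destination $r$, then (because at most two arcs of the ex-ante graph can point into the same node $r$, one per element of $\eset^r$) the pseudo-path would have to leave $r$ as a destination and come back, which the maximality and the local transition rules of a pseudo-path forbid — one needs to verify using the two-alternatives definition that once the pseudo-path moves from arcs with destination $r$ to arcs with a different destination, it cannot return. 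Here the tie-breaking convention mentioned before the lemma (``We will artificially break ties to be consistent with the lemma.'') is used: when the two in-arcs of round $t$ both lie on $P$, we order them so that the one continuing the already-arrived block comes first. Then, since every arc's arrival time equals its destination round and destination rounds are weakly increasing along a pseudo-path in the sense just established, the arrived set after round $t$ extends the arrived set after round $t-1$ by a contiguous block of indices at one end, preserving the invariant.

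I would also need to handle the edge case where $P$ itself starts partway through the instance — i.e., where $i_{\min}$ may be greater than $1$ and the first arcs of $P$ to arrive are not the indexing-first arcs. This is automatically covered by the induction as long as the invariant is phrased as ``contiguous block of indices'' rather than ``prefix''; I would just note that the very first round in which any arc of $P$ arrives fixes both $i_{\min}$ and $i_{\max}$ to that initial block, and subsequent rounds only grow it.

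\textbf{Main obstacle.} The delicate step is proving that the arcs of $P$ arriving in a single round $t$ (equivalently, sharing destination $t$) occupy a contiguous index range \emph{and} sit adjacent to the previously-arrived block — this requires carefully unwinding the pseudo-path transition rules to rule out a pseudo-path that visits destination $t$, departs, and returns, and it is exactly the place where the artificial tie-breaking between the two simultaneously-arriving in-arcs of round $t$ must be invoked. Everything else (the induction bookkeeping, the base case, the start-partway case) is routine once this local structural claim is in hand.
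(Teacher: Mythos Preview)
Your inductive framework is sound and is essentially the contrapositive of the paper's contradiction argument (the paper supposes the arrived arcs fail to be a sub-pseudo-path at some time and then locates a round where two sub-pseudo-paths would have to merge). However, there is a genuine gap in your key step.

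You claim that ``destination rounds are weakly increasing along a pseudo-path in the sense just established,'' deriving this from the observation that a pseudo-path cannot return to a destination it has left, plus maximality and the local D/O transition rules. The no-return observation is correct and easy (a node has at most two in-arcs, and if both lie on $P$ they are adjacent), but it does \emph{not} yield the monotonicity you need. A destination sequence such as $3,3,7,7,5,5$ never revisits a destination, yet at time $5$ the arrived arcs would be indices $\{1,2,5,6\}$, which is not contiguous. The local pseudo-path rules alone do not exclude this pattern; what excludes it is a structural fact about the ex-ante graph that you have not stated.

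The missing ingredient is the lemma the paper proves first: if $(p,c)$ and $(p,c')$ are arcs with $p<c<c'$ and the three rounds share a common element, then $\eset^p=\eset^{c'}$. Applied to the O-links on either side of a ``late'' destination $d$ (here $d=7$), this forces the two origins of the in-arcs of $d$ to have identical element sets, which is impossible unless they coincide; and if they coincide, the parallel-arc case collapses the pseudo-path to length two. This is precisely what rules out the merging configuration. Your appeal to the tie-breaking convention is a red herring: tie-breaking only orders two simultaneously arriving in-arcs of the same round and cannot repair a non-contiguous index set.
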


That is, after the first arc of a pseudo-path arrives, the arrival of any arc of the pseudo-path appends to the existing sub-pseudo-path; we never need to merge two pseudo-paths.
Our argument lets $i_0$ denote the index of the earliest arc.
We say that the arcs with indices $i \ge i_0$ are on the positive end, and those with indices $i < i_0$ are on the negative end.%
\footnote{
    The algorithm does not need to know the index $i_0$ upfront.
    Instead, it could let the earliest arc have index $0$;
    arcs on the positive ends have indices $0, 1, 2$, etc., and those on the negative end have indices $-1, -2, -3$, etc.
    Nonetheless, the choice of indices in the main text admits cleaner notations in the analysis.
}
Finally, changing the roles of the positive and negative ends will not affect our conclusion, so we will without loss of generality let the second earliest arc of any pseudo-path be on the positive end.




The forest constructor uses a probabilistic automaton $\sigma^+$ and its inverse $\sigma^-$.
The automata have states $\unmatched$, $\ready$, and $\matched$.
Intuitively, state $\unmatched$ means that automaton $\sigma^+$ leaves the last arc unmatched, but matches the arc before that;
state $\ready$ means that automaton $\sigma^+$ leaves the last two arcs unmatched;
and state $\matched$ means that automaton $\sigma^+$ matches the last arc.
The transition functions, which we denote also as $\sigma^+$ and $\sigma^-$ abusing notations, take a state as input and returns the next state and also a binary decision.
They are parameterized by $p \in [0, 1]$, the transition probability from $\unmatched$ to $\matched$ in automaton $\sigma^+$.
We will let $p = 0.6616$ in the analysis to optimize the result.
Formally, the transition functions are (see also Figure~\ref{fig:ocs-forest-constructor-automata}):
\begin{align*}
    \sigma^+ \big( \unmatched \big)
    &
    =
    \begin{cases}
        (\unmatched, \yes) & \text{w.p.\ $p$} \\
        (\ready, \no) & \text{w.p.\ $1-p$}
    \end{cases}
    ~,
    &
    \sigma^+ \big( \ready \big)
    &
    = (\matched, \yes)
    ~,
    &
    \sigma^+ \big( \matched \big)
    &
    = (\unmatched, \no)
    ~;
    &
    \\
    \sigma^- \big( \unmatched \big)
    &
    = (\matched, \no)
    ~,
    &
    \sigma^- \big( \ready \big)
    &
    = (\unmatched, \no)
    ~,
    &
    \sigma^- \big( \matched \big)
    &
    =
    \begin{cases}
        (\unmatched, \yes) & \text{w.p.\ $p$} \\
        (\ready, \yes) & \text{w.p.\ $1-p$}
    \end{cases}
    ~.
\end{align*}

For each pseudo-path, our forest constructor draws an initial state from the common stationary distribution of the automata.
Then, when an arc arrives on the positive end, it calls $\sigma^+$ to update the state and to decides whether to include the arc into the pseudo-matching;
similarly, when an arc arrives on the negative end, it calls $\sigma^-$.
See Algorithm~\ref{alg:forest-constructor}.



\begin{algorithm}[t]
    \caption{$0.404$-Forest constructor for general instances (when $p = 0.6616$)}
    \label{alg:forest-constructor}
    \begin{algorithmic}
        %
        \smallskip
        \State \textbf{State variables:} (for each pseudo-path $P$)    
        \begin{itemize}
            \item $\state^+, \state^- \in \{ \unmatched, \ready, \matched \}$ of automata $\sigma^+, \sigma^-$ respectively.
            \item Initialize (when the first arc in the pseudo-path arrives):
                \[
                    \state^+ = \state^-= \begin{cases}
                        \unmatched & \text{w.p.\ $\frac{1}{3-p}$;} \\[1ex]
                        \ready & \text{w.p.\ $\frac{1-p}{3-p}$;} \\[1ex]
                        \matched & \text{w.p.\ $\frac{1}{3-p}$.}
                    \end{cases}
                \]
        \end{itemize}
        \State \textbf{For each arc:} (of pseudo-path $P$)
            \begin{enumerate}
                \item Let $\tau = +$ if the arc is on the positive end, and $-$ otherwise.
                \item Let $(\state^\tau, \choice) = \sigma^\tau(state^\tau)$.
                \item Include the arc into the pseudo-matching and thus the forest if $\choice = \yes$.
            \end{enumerate}
    \end{algorithmic}
\end{algorithm}


\subsubsection{Properties of the Automata}

\begin{lemma}
    \label{lem:forest-constructor-automata-stationary}
    The stationary distribution of the states of $\sigma^+$ and $\sigma^-$ is:
    \[
        \vec{\pi} = \big( \pi_\no, \pi_{\no^2}, \pi_\yes \big) = \Big( \frac{1}{3-p}, \frac{1-p}{3-p}, \frac{1}{3-p} \Big)
        ~.
    \]
\end{lemma}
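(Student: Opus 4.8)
The statement is a routine Markov-chain computation, so the plan is simply to carry it out carefully. First I would extract from the probabilistic automaton $\sigma^+$ the underlying Markov chain on the three states $\{\unmatched, \ready, \matched\}$ by discarding the output decisions and retaining only the state transitions (as depicted in Figure~\ref{fig:ocs-forest-constructor-automata}): from $\unmatched$ the chain moves to $\matched$ with probability $p$ and to $\ready$ with probability $1-p$; from $\ready$ it moves to $\matched$ deterministically; and from $\matched$ it moves to $\unmatched$ deterministically. Writing $P$ for the resulting $3\times 3$ transition matrix and using the correspondence $\unmatched \leftrightarrow q_\no$, $\ready \leftrightarrow q_{\no^2}$, $\matched \leftrightarrow q_\yes$, it suffices to verify that the proposed vector $\vec{\pi} = \big(\pi_\no,\pi_{\no^2},\pi_\yes\big) = \big(\tfrac{1}{3-p},\tfrac{1-p}{3-p},\tfrac{1}{3-p}\big)$ satisfies the balance equations $\vec{\pi} P = \vec{\pi}$ together with $\pi_\no + \pi_{\no^2} + \pi_\yes = 1$.

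Concretely this reduces to three flow identities: the inflow into $\unmatched$ comes only from $\matched$ and equals $\pi_\yes$, which must equal $\pi_\no$; the inflow into $\ready$ comes only from $\unmatched$ and equals $(1-p)\pi_\no$, which must equal $\pi_{\no^2}$; and the inflow into $\matched$ is $p\,\pi_\no + \pi_{\no^2}$, which must equal $\pi_\yes$. Substituting the proposed values turns each identity into an elementary arithmetic statement over the common denominator $3-p$ (e.g.\ $p\cdot\tfrac{1}{3-p} + \tfrac{1-p}{3-p} = \tfrac{1}{3-p}$), one of the three being redundant as usual; normalization is immediate since $1 + (1-p) + 1 = 3-p$. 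For good measure I would note that the chain is irreducible and aperiodic — the cycles $\unmatched\to\matched\to\unmatched$ and $\unmatched\to\ready\to\matched\to\unmatched$ have coprime lengths $2$ and $3$ — so $\vec{\pi}$ is in fact the unique stationary distribution.

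It then remains to treat $\sigma^-$. Rather than redoing the computation, I would observe that $\sigma^-$ is exactly the time-reversal of $\sigma^+$ with respect to $\vec{\pi}$: one checks the identity $\pi_i\widehat{P}_{ij} = \pi_j P_{ji}$ (with $\widehat{P}$ the transition matrix of $\sigma^-$) on the four relevant transitions, namely $\unmatched\to\matched$, $\matched\to\unmatched$, $\matched\to\ready$, and $\ready\to\unmatched$ of $\sigma^-$, each reducing to a one-line check against the corresponding $\sigma^+$ transition. Since the time-reversal of a Markov chain always shares the stationary distribution of the original chain, $\vec{\pi}$ is stationary for $\sigma^-$ as well; alternatively one may just write $\widehat{P}$ down and repeat the three-identity verification directly. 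There is essentially no obstacle here — the only thing requiring care is the bookkeeping between the two notations for the states and keeping straight which automaton's transitions are in play.
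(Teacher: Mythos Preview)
Your proposal is correct and takes essentially the same approach as the paper: the paper writes down the transition matrices $P^+$ and $P^-$ explicitly and verifies $\vec{\pi} P^+ = \vec{\pi}$ and $\vec{\pi} P^- = \vec{\pi}$ by direct arithmetic, exactly your balance-equation check. Your time-reversal observation for $\sigma^-$ is a nice touch (and indeed the paper uses the identity $\text{diag}(\vec{\pi})P^+ = (\text{diag}(\vec{\pi})P^-)^T$ in the proof of Lemma~\ref{lem:forest-constructor-automata-inverse}), and your irreducibility remark supplies the uniqueness that the paper's proof leaves implicit.
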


The next lemma formalizes the claim that $\sigma^-$ is the inverse of $\sigma^+$.


\begin{lemma}
    \label{lem:forest-constructor-automata-inverse}
    Consider two sequences of random variables:
    \begin{enumerate}
        \item Sample $\state^0$ from the stationary distribution $\vec{\pi}$.
            Then recursively let:
            \[
                \hspace{8pt}
                \big( \state^i, \choice^i \big) = \sigma^+ \big( \state^{i-1} \big) ~,
                \hspace{92pt}
                1 \le i \le \ell
                ~.
            \]
        \item Sample $\hat{\state}^{i_0-1}$ from the stationary distribution $\vec{\pi}$.
            Then recursively let:
            \begin{align*}
                \big( \hat{\state}^i, \hat{\choice}^i \big) & = \sigma^+ \big( \hat{\state}^{i-1} \big)
                ~,
                &&
                i_0 \le i \le \ell
                ~;
                \\
                \big( \hat{\state}^{i-1}, \hat{\choice}^i \big) & = \sigma^- \big( \hat{\state}^i \big)
                ~,
                &&
                1 \le i < i_0
                ~.
            \end{align*}
    \end{enumerate}
    The two sequences are identically distributed.
    %
    %
\end{lemma}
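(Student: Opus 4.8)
The plan is to prove the claim by a ``peeling from the positive end'' argument, establishing a stronger invariant about the joint distribution of a prefix of the state/decision sequence together with the state $\hat{\state}^{i_0-1}$ at the split point. Concretely, for each $j$ with $i_0-1 \le j \le \ell$ I would show that in the second process the tuple $(\hat{\state}^{i_0-1}, \hat{\state}^{i_0}, \hat{\choice}^{i_0}, \dots, \hat{\state}^{j}, \hat{\choice}^{j})$ has the same distribution as the corresponding tuple $(\state^{i_0-1}, \state^{i_0}, \choice^{i_0}, \dots, \state^{j}, \choice^{j})$ obtained from the first process. For $j \ge i_0$ the two processes apply the same transition rule $\sigma^+$ forward, so the only content here is the base case $j = i_0-1$: I need that $\hat{\state}^{i_0-1}$, which is drawn directly from $\vec\pi$, has the same law as $\state^{i_0-1}$, which is the $(i_0-1)$-fold image of a $\vec\pi$-sample under $\sigma^+$. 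That is exactly the statement that $\vec\pi$ is stationary for $\sigma^+$, which is Lemma~\ref{lem:forest-constructor-automata-stationary}. So the forward half of both sequences matches immediately.

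The substantive step is the backward half: indices $1 \le i < i_0$, where the first process runs $\sigma^+$ forward while the second process runs $\sigma^-$ backward starting from $\hat{\state}^{i_0-1} \sim \vec\pi$. Here I would prove that $\sigma^-$ is a genuine time-reversal of $\sigma^+$ \emph{with respect to $\vec\pi$}, in the following precise sense: if $S$ is drawn from $\vec\pi$ and $(S', D) = \sigma^+(S)$, then the reversed pair $(S', S)$ together with the decision $D$ has exactly the same joint law as the pair $(\tilde S', \tilde S)$ with decision $\tilde D$ obtained by drawing $\tilde S' \sim \vec\pi$ and setting $(\tilde S, \tilde D) = \sigma^-(\tilde S')$ --- where I must also check that the decision bit attached to the reversed step is consistent with the indexing in the statement (the decision $\hat\choice^{i}$ produced when stepping from $\hat\state^{i}$ down to $\hat\state^{i-1}$). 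This is a finite check: the state space has three elements and the transitions are given explicitly, so it amounts to verifying, for each ordered pair of states $(a,b)$ and each decision $d$, that $\pi_a \Pr[\sigma^+(a) = (b,d)] = \pi_b \Pr[\sigma^-(b) = (a,d)]$. Plugging in $\vec\pi = \big(\tfrac{1}{3-p}, \tfrac{1-p}{3-p}, \tfrac{1}{3-p}\big)$ and the six nonzero transition entries of $\sigma^+$ and $\sigma^-$ makes every such equation a one-line identity (e.g.\ the $\unmatched \to \matched$ edge of $\sigma^+$ with weight $\pi_\no \cdot p = \tfrac{p}{3-p}$ matches the $\matched \to \unmatched$ edge of $\sigma^-$ with weight $\pi_\yes \cdot p = \tfrac{p}{3-p}$, and similarly for the $\unmatched \to \ready$, $\ready \to \matched$, and $\matched \to \unmatched$ edges).

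Given this detailed-balance-type identity, the backward recursion transports the stationary law correctly at every step: by induction on $i_0 - 1 - i$, each $\hat\state^i$ is marginally $\vec\pi$-distributed and the appended decision $\hat\choice^{i}$ has the correct conditional law given $\hat\state^{i}$, matching the forward process. Combining the forward half (pure stationarity) and the backward half (reversibility identity) and invoking the Markov property along the chain, the two length-$\ell$ sequences of $(\state^i, \choice^i)$ pairs agree in distribution, which is the lemma. The main --- really the only --- obstacle is getting the bookkeeping of \emph{which} decision bit is attached to \emph{which} directed step right, since $\sigma^-$ is defined so that the call $\sigma^-(\hat\state^i)$ returns both $\hat\state^{i-1}$ and $\hat\choice^i$ (the decision indexed by the higher endpoint), and one must confirm this matches the convention under which $\sigma^+(\state^{i-1})$ returns $(\state^i, \choice^i)$; once the indexing is pinned down, everything reduces to the finite verification above.
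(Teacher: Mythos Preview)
Your proposal is correct and follows essentially the same route as the paper: both arguments reduce the claim to the detailed-balance/time-reversal identity $\pi_a\,\Pr[\sigma^+(a)=(b,d)] = \pi_b\,\Pr[\sigma^-(b)=(a,d)]$ (the paper writes it compactly as $\mathrm{diag}(\vec\pi)\,P^+ = (\mathrm{diag}(\vec\pi)\,P^-)^\top$), then use stationarity for the forward half and the Markov property to glue the two halves. The only cosmetic difference is that the paper first observes that the decision bit is a deterministic function of the current state ($\choice^i=\yes$ iff $\state^i=\matched$, for both automata), so it verifies the identity on states alone rather than on state--decision pairs as you do; this shortcut dispenses with the bookkeeping issue you flag about which endpoint the decision is attached to.
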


The second sequence corresponds to the states and decisions of automata $\sigma^+, \sigma^-$ in Algorithm~\ref{alg:forest-constructor}:
$\hat{q}^{i_0-1}$ is the common initial state of $\sigma^+$ and $\sigma^-$;
$\hat{q}^i$ and $d^i$ for $i \ge i_0$ are the states and decisions on the positive end;
$\hat{q}^i$ and $d^i$ for $i < i_0$ are the states and decisions on the negative end.
Lemma~\ref{lem:forest-constructor-automata-inverse} allows us to analyze each pseudo-path as if the arcs' arrival order is from one end to the other.

Next we adopt the viewpoint of selecting arcs with only automaton $\sigma^+$, and develop several properties of the corresponding sequence $\big( \state^0, \choice^1, \state^1, \dots, \choice^\ell, \state^\ell \big)$.
We start with three lemmas that follow by the definition of $\sigma^+$.

\begin{lemma}
    \label{lem:forest-constructor-automata-matched-state}
    For any $1 \le i \le \ell$, $\state^i = \matched$ if and only if $\choice^i = \yes$.
\end{lemma}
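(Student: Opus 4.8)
The plan is a direct inspection of the transition function $\sigma^+$, requiring no induction. Since the sequence is generated by $(\state^i, \choice^i) = \sigma^+(\state^{i-1})$, the pair $(\state^i, \choice^i)$ is determined by a single application of $\sigma^+$, so it suffices to prove the local statement that for every input state $\state \in \{\unmatched, \ready, \matched\}$ and every outcome $(\state', \choice)$ of $\sigma^+(\state)$ we have $\choice = \yes$ if and only if $\state' = \matched$. Specializing to $\state = \state^{i-1}$ then gives the lemma for each $1 \le i \le \ell$, because $\state^{i-1}$ always lies in $\{\unmatched, \ready, \matched\}$.

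To carry this out I would enumerate the three input states. If $\state^{i-1} = \unmatched$, then $\sigma^+(\unmatched)$ outputs $(\matched, \yes)$ with probability $p$ and $(\ready, \no)$ with probability $1-p$; in the first outcome both $\choice^i = \yes$ and $\state^i = \matched$ hold, and in the second neither holds. If $\state^{i-1} = \ready$, then $\sigma^+(\ready) = (\matched, \yes)$ deterministically, so both sides hold. If $\state^{i-1} = \matched$, then $\sigma^+(\matched) = (\unmatched, \no)$ deterministically, so neither side holds. These three cases are exhaustive, and in each of them the events $\{\choice^i = \yes\}$ and $\{\state^i = \matched\}$ coincide, which is exactly the claim.

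An equivalent way to see it, which I would use as a sanity check, is to read the conclusion off the transition diagram of $\sigma^+$ in Figure~\ref{fig:ocs-forest-constructor-automata}: the only arcs labeled $\yes$ are the two that enter $\matched$ (from $\unmatched$ with probability $p$, and from $\ready$ with probability $1$), while the remaining arcs --- the arc leaving $\matched$ and the arc from $\unmatched$ to $\ready$ --- are labeled $\no$. I do not expect any real obstacle: the statement is local to a single step of the automaton, and its content is precisely that $\sigma^+$ is designed to couple ``output $\yes$'' with ``enter state $\matched$'', matching the intended semantics of $\matched$ as ``the last arc was matched.'' The only minor care needed is to use the transition function of $\sigma^+$ as drawn in Figure~\ref{fig:ocs-forest-constructor-automata}, consistent with the stationary distribution in Lemma~\ref{lem:forest-constructor-automata-stationary}.
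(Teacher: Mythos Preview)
Your proposal is correct and is essentially the paper's own argument: the paper simply states that this lemma ``follows by the definition of $\sigma^+$,'' and your case analysis is exactly that verification. You were also right to work from the figure and the transition matrix in the proof of Lemma~\ref{lem:forest-constructor-automata-stationary}, since the inline text of $\sigma^+(\unmatched)$ contains a typo (it should read $(\matched,\yes)$ rather than $(\unmatched,\yes)$, as the figure, the transition matrix $P^+$, and the lemma itself all make clear).
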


\begin{lemma}
    \label{lem:forest-constructor-automata-marginal}
    For any $1 \le i \le \ell$:
    \[
        \Pr \big[ \choice^i = \yes \big] = \frac{1}{3-p}
        ~.
    \]
\end{lemma}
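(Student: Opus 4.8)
The plan is to reduce the statement to the stationarity of the automaton $\sigma^+$ together with the characterization of $\yes$-decisions from Lemma~\ref{lem:forest-constructor-automata-matched-state}. First I would recall that by construction the sequence $\big( \state^0, \state^1, \dots, \state^\ell \big)$ is a Markov chain driven by the transition function $\sigma^+$, with $\state^0$ sampled from the distribution $\vec{\pi}$ in Lemma~\ref{lem:forest-constructor-automata-stationary}. Since $\vec{\pi}$ is the stationary distribution of $\sigma^+$, an immediate induction on $i$ shows that $\state^i$ is distributed according to $\vec{\pi}$ for every $0 \le i \le \ell$; the base case is the choice of $\state^0$, and the inductive step is exactly the defining property of a stationary distribution.

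Next I would invoke Lemma~\ref{lem:forest-constructor-automata-matched-state}, which states that $\state^i = \matched$ if and only if $\choice^i = \yes$. Hence
\[
    \Pr \big[ \choice^i = \yes \big] = \Pr \big[ \state^i = \matched \big] = \pi_\yes = \frac{1}{3-p}
    ~,
\]
where the middle equality uses that $\state^i \sim \vec{\pi}$ and the last uses the explicit form of $\vec{\pi}$ from Lemma~\ref{lem:forest-constructor-automata-stationary}. This completes the argument.

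I do not anticipate a real obstacle here: the only thing to be careful about is that the reduction to $\Pr[\state^i = \matched]$ requires the "if and only if" of Lemma~\ref{lem:forest-constructor-automata-matched-state} (a one-directional implication would only give an inequality), and that the induction propagating the stationary distribution uses $\sigma^+$ uniformly, which is indeed the case since every index in this sequence is updated by $\sigma^+$ (the two-ended picture with $\sigma^-$ is handled separately via Lemma~\ref{lem:forest-constructor-automata-inverse} and is not needed for this lemma).
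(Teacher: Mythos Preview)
Your proposal is correct and matches the paper's own reasoning. The paper states that Lemma~\ref{lem:forest-constructor-automata-marginal} follows from combining Lemma~\ref{lem:forest-constructor-automata-matched-state} with Lemma~\ref{lem:forest-constructor-automata-stationary}, which is precisely your argument: propagate the stationary distribution $\vec{\pi}$ to each $\state^i$, then use the equivalence $\choice^i = \yes \iff \state^i = \matched$ to read off $\Pr[\choice^i = \yes] = \pi_\yes = \tfrac{1}{3-p}$.
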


\begin{lemma}
    \label{lem:forest-constructor-automata-matched-in-three}
    For any $1 \le i \le \ell-2$, at least one of $\choice^i$, $\choice^{i+1}$, and $\choice^{i+2}$ equals $\yes$.
\end{lemma}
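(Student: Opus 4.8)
The plan is a short argument by contradiction that exploits the rigidity of the transition function $\sigma^+$: from state $\ready$ the automaton emits $\yes$ deterministically, so a run of $\no$'s cannot survive a visit to $\ready$, and every $\no$-emitting transition either lands in $\ready$ or lands in $\unmatched$, from which a further $\no$ is forced into $\ready$.

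Concretely, I would first record two elementary facts read off directly from the definition of $\sigma^+$ (equivalently, from Lemma~\ref{lem:forest-constructor-automata-matched-state}). First, the only transitions of $\sigma^+$ that emit $\no$ are $\unmatched \to \ready$ and $\matched \to \unmatched$; hence for every $j$ with $\choice^j = \no$ we have $\state^j \in \{\ready, \unmatched\}$, and moreover if $\state^{j-1} = \unmatched$ and $\choice^j = \no$ then $\state^j = \ready$. Second, $\sigma^*$... (rather, $\sigma^+$) satisfies $\sigma^+(\ready) = (\matched, \yes)$, so whenever $\state^j = \ready$ and $j < \ell$ we have $\choice^{j+1} = \yes$.

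Now suppose, toward a contradiction, that $1 \le i \le \ell - 2$ and $\choice^i = \choice^{i+1} = \choice^{i+2} = \no$; note $i, i+1, i+2 \in \{1,\dots,\ell\}$, so all transitions invoked below exist. From $\choice^i = \no$ and the first fact, $\state^i \in \{\ready, \unmatched\}$. If $\state^i = \ready$, the second fact gives $\choice^{i+1} = \yes$, contradiction. Otherwise $\state^i = \unmatched$; since $\choice^{i+1} = \no$, the first fact forces $\state^{i+1} = \ready$, and then the second fact gives $\choice^{i+2} = \yes$, again a contradiction. Hence at least one of $\choice^i, \choice^{i+1}, \choice^{i+2}$ equals $\yes$.

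I do not expect a genuine obstacle here; the only points needing care are (i) using the reading of $\sigma^+$ consistent with Lemma~\ref{lem:forest-constructor-automata-matched-state} (the $\no$-branch of $\sigma^+(\unmatched)$ going to $\ready$, which is what the argument relies on), and (ii) checking that the case split on $\state^i$ is exhaustive and respects the index range, both of which are immediate.
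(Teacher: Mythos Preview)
Your proof is correct and matches the paper's approach: the paper simply asserts that this lemma (together with Lemmas~\ref{lem:forest-constructor-automata-matched-state} and \ref{lem:forest-constructor-automata-marginal}) ``follow[s] by the definition of $\sigma^+$,'' and your argument is precisely the explicit unwinding of that definition. Your note about using the reading of $\sigma^+(\unmatched)$ consistent with Lemma~\ref{lem:forest-constructor-automata-matched-state} is apt, since the transition matrix $P^+$ in the appendix confirms the $\yes$-branch from $\unmatched$ lands in $\matched$, not $\unmatched$.
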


Lemma~\ref{lem:forest-constructor-automata-matched-state} asserts that every time $\sigma^+$ selects an arc into the pseudo-matching (and thus the forest), it resets to state $\matched$.
Combining with Lemma~\ref{lem:forest-constructor-automata-stationary}, we get the marginal selection probability in Lemma~\ref{lem:forest-constructor-automata-marginal}.
Lemma~\ref{lem:forest-constructor-automata-matched-in-three} further claims that it resets to state $\matched$ at least once every three rounds. 
Hence, we focus on how the probabilistic automaton transitions starting from state $\matched$, and in particular how likely the $i$-th arc after that would be selected,
which in turns characterizes the transition from the other two states.
These probabilities are characterized by a recurrence:
\begin{equation}
    \label{eqn:forest-constructor-automata-selection-prob}
    f_i =
    \begin{cases}
        1 & i=0 ~; \\
        0 & i=1 ~; \\
        p & i=2 ~; \\
        p f_{i-2} + (1-p) f_{i-3} & i \ge 3 ~.
    \end{cases}
\end{equation}

\begin{lemma}
    \label{lem:forest-constructor-automata-selection-prob}
    For any $i \le j$:
    \[
        \Pr \big[ \choice^j = \yes \mid \state^i = \matched \big]
        =
        \Pr \big[ \choice^j = \yes \mid \choice^{i} = \yes \big]
        =
        f_{j-i}
        ~.
    \]
    Further, for any $i \le j-1$:
    \begin{align*}
        \Pr \big[ \choice^j = \yes \mid \state^i = \ready \big]
        &
        =
        f_{j-i-1}
        ~,
        \\
        \Pr \big[ \choice^j = \yes \mid \state^i = \unmatched \big]
        &
        =
        f_{j-i+1}
        ~.
    \end{align*}
\end{lemma}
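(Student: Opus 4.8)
The plan is to reduce the whole statement to a single quantity. Write $g_m(s) \defeq \Pr\big[\choice^{i+m} = \yes \mid \state^i = s\big]$ for $s \in \{\unmatched, \ready, \matched\}$; since the update rule $\sigma^+$ does not depend on the round index, the sequence $(\state^0, \choice^1, \state^1, \dots)$ is a time-homogeneous Markov chain, so $g_m(s)$ is well defined independently of $i$, and the lemma amounts to computing $g_m$ on the three states. The first asserted equality, $\Pr[\choice^j = \yes \mid \state^i = \matched] = \Pr[\choice^j = \yes \mid \choice^i = \yes]$, needs no work: by Lemma~\ref{lem:forest-constructor-automata-matched-state} the conditioning events $\{\state^i = \matched\}$ and $\{\choice^i = \yes\}$ coincide, so I would dispose of it in one line.

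The core step is to prove $g_m(\matched) = f_m$ by induction on $m$, matching the recurrence~\eqref{eqn:forest-constructor-automata-selection-prob}. For the base cases: $g_0(\matched) = 1 = f_0$, again by Lemma~\ref{lem:forest-constructor-automata-matched-state}; $g_1(\matched) = 0 = f_1$, since $\sigma^+(\matched) = (\unmatched, \no)$ forces $\choice^{i+1} = \no$; and $g_2(\matched) = p = f_2$, because from $\matched$ the chain moves deterministically to $\unmatched$ and then emits $\yes$ with probability exactly $p$. For $m \ge 3$ I would unroll the first transitions out of $\matched$: the chain goes $\matched \to \unmatched$ emitting $\no$; then with probability $p$ it moves to $\matched$ emitting $\yes$ at step $i+2$, contributing $p\,g_{m-2}(\matched)$ by the Markov property and time-homogeneity; and with probability $1-p$ it moves to $\ready$ and then deterministically back to $\matched$ at step $i+3$, contributing $(1-p)\,g_{m-3}(\matched)$. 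The inductive hypothesis turns this into $p f_{m-2} + (1-p) f_{m-3} = f_m$. Note that $m \ge 3$ is exactly the range in which $f_{m-3}$ is defined, which is why $m = 2$ must be a separate base case.

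The remaining two identities follow from the one-step determinism of $\sigma^+$ with no further induction. Conditioned on $\state^i = \ready$, the next transition is forced, with $\state^{i+1} = \matched$, so $\Pr[\choice^j = \yes \mid \state^i = \ready] = g_{j-i-1}(\matched) = f_{j-i-1}$ whenever $j \ge i+1$. Conditioned on $\state^i = \matched$, the next transition is forced, with $\state^{i+1} = \unmatched$, which gives $g_{m+1}(\matched) = g_m(\unmatched)$ and hence $\Pr[\choice^j = \yes \mid \state^i = \unmatched] = g_{j-i}(\unmatched) = g_{j-i+1}(\matched) = f_{j-i+1}$. The only real obstacle is bookkeeping: applying the Markov property at the correct later time indices so that the two- and three-step unrollings genuinely reduce to $g_{m-2}(\matched)$ and $g_{m-3}(\matched)$, and tracking the index ranges (e.g.\ why the $\ready$ and $\unmatched$ identities are stated only for $j \ge i+1$, namely that $f_{-1}$ is not defined). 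Nothing here needs an idea beyond Lemma~\ref{lem:forest-constructor-automata-matched-state} and the definition of $\sigma^+$.
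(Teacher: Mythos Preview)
Your proposal is correct and matches the paper's proof essentially line for line: both reduce the first equality to Lemma~\ref{lem:forest-constructor-automata-matched-state}, verify $g_m(\matched)=f_m$ by checking $m=0,1,2$ and then unrolling two or three steps to recover the recurrence $pf_{m-2}+(1-p)f_{m-3}$, and derive the $\ready$ and $\unmatched$ cases from the deterministic transitions $\ready\to\matched$ and $\matched\to\unmatched$. The only cosmetic difference is that for the $\unmatched$ case the paper argues backward ($\state^i=\unmatched\Leftrightarrow\state^{i-1}=\matched$) while you argue forward ($g_{m+1}(\matched)=g_m(\unmatched)$); your version is slightly cleaner since it avoids any implicit $i\ge 1$ assumption.
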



\begin{lemma}
    \label{lem:forest-constructor-automata-selection-prob-bound}
    Suppose that $\frac{\sqrt{5}-1}{2}\leq p\leq \frac{2}{3}$.
    Then:
    \begin{align*}
        f_i & \ge 1-p ~,
        && \forall i \ge 2 ~; \\
        f_i & \ge p^3+\big(1-p\big)^2
        && \forall i \ge 4 ~.
    \end{align*}
    %
\end{lemma}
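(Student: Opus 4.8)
The plan is to prove both inequalities by computing the first few terms of the recurrence~\eqref{eqn:forest-constructor-automata-selection-prob} in closed form, checking the desired bound on a window of three consecutive indices, and then propagating it forward by induction. The engine of the propagation is the trivial identity: if $f_{i-2} \ge L$ and $f_{i-3} \ge L$ for some constant $L$, then $f_i = p f_{i-2} + (1-p) f_{i-3} \ge \bigl( p + (1-p) \bigr) L = L$. Since $f_i$ for $i \ge 3$ depends only on $f_{i-2}$ and $f_{i-3}$, once three consecutive values $f_k, f_{k+1}, f_{k+2}$ all satisfy $f_j \ge L$, a straightforward induction shows $f_j \ge L$ for every $j \ge k$. (For intuition, note that $x = 1$ is a root of the characteristic polynomial $x^3 - px - (1-p)$, so $f_i$ converges to a constant; but only the finite verification below is needed.)

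For the first inequality I would compute $f_2 = p$, $f_3 = 1-p$, and $f_4 = p^2$. The bound $f_2 \ge 1-p$ holds since $p \ge \tfrac{\sqrt 5 - 1}{2} > \tfrac12$; the bound $f_3 \ge 1-p$ is an equality; and $f_4 = p^2 \ge 1-p$ is exactly $p^2 + p - 1 \ge 0$, which is equivalent to $p \ge \tfrac{\sqrt 5 - 1}{2}$. Thus $f_2, f_3, f_4 \ge 1-p$, and the propagation step gives $f_i \ge 1-p$ for all $i \ge 2$.

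For the second inequality, write $c = p^3 + (1-p)^2$ and compute $f_4 = p^2$, $f_5 = 2p(1-p)$, and $f_6 = p^3 + (1-p)^2 = c$. The bound $f_4 \ge c$ rewrites as $p^2 - p^3 - (1-p)^2 = (1-p)\bigl(p^2 + p - 1\bigr) \ge 0$, which holds because $p \le \tfrac23 < 1$ and $p \ge \tfrac{\sqrt 5 - 1}{2}$; and $f_6 \ge c$ is an equality. The remaining base case $f_5 = 2p(1-p) \ge c$ is the only place where the upper bound $p \le \tfrac23$ is used: expanding, it is equivalent to $g(p) \ge 0$ where $g(p) = -p^3 - 3p^2 + 4p - 1$. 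I would check that $g'(p) = -3p^2 - 6p + 4$ has its positive root at $\tfrac{\sqrt{21} - 3}{3} \approx 0.528 < \tfrac{\sqrt 5 - 1}{2}$, so $g' < 0$ throughout $[\tfrac{\sqrt 5 - 1}{2}, \tfrac23]$, whence $g$ is decreasing on that interval and $g(p) \ge g\bigl(\tfrac23\bigr) = \tfrac{1}{27} > 0$. With $f_4, f_5, f_6 \ge c$ in hand, the propagation step yields $f_i \ge c$ for all $i \ge 4$.

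The only mildly delicate point, and the main obstacle, is the base case $f_5 \ge p^3 + (1-p)^2$: this is a genuine cubic inequality rather than a one-line rearrangement, and it is the sole reason the hypothesis of the lemma requires $p \le \tfrac23$. Everything else reduces to $p^2 + p - 1 \ge 0$ or to $p \ge \tfrac12$, and the inductive propagation carries no subtlety since the coefficients $p$ and $1-p$ sum to one.
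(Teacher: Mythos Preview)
Your proposal is correct and follows essentially the same approach as the paper: compute $f_0,\dots,f_6$ explicitly, verify the desired lower bound on a window of three consecutive indices, and propagate by induction using that $f_i$ is a convex combination of $f_{i-2}$ and $f_{i-3}$. The only cosmetic difference is that the paper dispatches the base case $f_5 \ge p^3+(1-p)^2$ more slickly via the chain $f_5 \ge f_4 \ge f_6$ (the first inequality being $2p(1-p)\ge p^2\iff p\le \tfrac23$, the second following from $f_4\ge f_3$), rather than analyzing the cubic $-p^3-3p^2+4p-1$ directly.
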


\subsubsection{Analysis of Forest Constructor for General Instances: Proof of Lemma~\ref{lem:forest-constructor}}

This subsection proves Lemma~\ref{lem:forest-constructor} by showing that Algorithm~\ref{alg:forest-constructor} with $p = 0.6616$ is a $0.404$-forest constructor.
Below summarizes some properties that either follow by the definition of Algorithm~\ref{alg:forest-constructor}, or have been established in the previous subsections:
\begin{itemize}
    \item Constructing a good forest is the same as selecting a pseudo-matching from each pseudo-path.
        \hspace*{\fill}
        (Lemma~\ref{lem:pseudo-matching})
    \item The selections of arcs in different pseudo-paths are independent.
        \hspace*{\fill}
        (Definition of Algorithm~\ref{alg:forest-constructor})
    \item The selections of arcs on a pseudo-path is equivalent to sampling a state from the stationary distribution $\vec{\pi}$, and applying $\sigma^+$ to decide for each arc from one end to the other.
        \hspace*{\fill}
        (Lemma~\ref{lem:forest-constructor-automata-inverse})
    \item Automaton $\sigma^+$ selects an arc and resets to state $\matched$ at least once every three rounds.\\
        \hspace*{\fill}
        (Lemmas~\ref{lem:forest-constructor-automata-matched-state} and \ref{lem:forest-constructor-automata-matched-in-three})
    \item The probability of selecting the $i$-th arc after any state is characterized by $f_i$, which can be lower bounded.
        \hspace*{\fill}
        (Lemmas~\ref{lem:forest-constructor-automata-selection-prob} and \ref{lem:forest-constructor-automata-selection-prob-bound})
\end{itemize}

\paragraph{Proving that Algorithm~\ref{alg:forest-constructor} Constructs a Good Forest.}
Consider an arbitrary pseudo-path.
By the definition of $\sigma^+$, it resets its state to $\matched$ when the decision is $\yes$ (Lemma~\ref{lem:pseudo-matching}), i.e., when an arc is selected, after which the next decision will be $\no$.
Therefore, the arcs selected from each pseudo-path are a pseudo-matching.
By Lemma~\ref{lem:pseudo-matching} this is a good forest.

\paragraph{Proof of Equation~\eqref{eqn:alpha-forest-constructor}.}
To show the guarantee of a $\alpha$-forest constructor, which we restate below:
\[
    \forall 0 \le \beta \le 1 ~: \qquad \E \big(1-\beta\big)^{|E^\forest_U|} \le \big(1-\alpha\beta\big)^{|E^\exante_{U, e}|}
    ~,
    \tag{Eqn.~\eqref{eqn:alpha-forest-constructor} restated}
\]
it suffices to consider each pseudo-path $P$ separately and to show that:
\begin{equation}
    \label{eqn:alpha-forest-constructor-by-path}
    \forall 0 \le \beta \le 1 ~: \qquad \E \big(1-\beta\big)^{|E^\forest_U \cap P|} \le \big(1-\alpha\beta\big)^{|E^\exante_{U, e} \cap P|}
    ~,
    \hspace{65pt}
\end{equation}
after which Eqn.~\eqref{eqn:alpha-forest-constructor} follows by taking the product of Eqn.~\eqref{eqn:alpha-forest-constructor-by-path} over all pseudo-paths, and by the independence of arc selections in different pseudo-paths.

The rest of the argument considers an arbitrary pseudo-path $P$ and proves Eqn.~\eqref{eqn:alpha-forest-constructor-by-path}.
We start by establishing the last structural lemma about pseudo-paths, characterizing the subset of arcs that could contribute to the inequality by being counted in $E^\forest_U$.


\begin{lemma}
    \label{lem:pseudo-path-induced}
    For any element $e$, and any subset of nodes $U \subseteq V$ involving $e$, there is a subset of arcs in $P$ with both nodes inside $U$ such that:
    \begin{enumerate}
        \item It is a superset of $E^{\text{ex-ante}}_{U,e}\cap P$;
        \item It is the union of odd-length sub-pseudo-paths;
        \item Any two of these sub-pseudo-paths are at least $3$ arcs apart; and
        \item Each sub-pseudo-path alternates between arcs with subscript $e$, i.e., arcs that also contribute to the right-hand-side of Eqn.~\eqref{eqn:alpha-forest-constructor-by-path}, and arcs with other subscripts, i.e., arcs that only contribute to the left-hand-side.
            The arcs on the two ends have subscript $e$.
    \end{enumerate}
\end{lemma}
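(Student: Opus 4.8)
The plan is to construct the desired superset of $E^{\text{ex-ante}}_{U,e}\cap P$ explicitly by walking along the pseudo-path $P$ and analyzing the local structure around each arc with subscript $e$. First I would recall that, by the structural properties of pseudo-paths established earlier (Lemmas~\ref{lem:pseudo-path} and \ref{lem:pseudo-matching}), consecutive arcs in $P$ share either a common destination node or a common origin node with a common element. The arcs with subscript $e$ are exactly those connecting neighboring appearances of $e$; because each appearance of $e$ is a single round, these arcs have pairwise distinct endpoints, so within $P$ two arcs with subscript $e$ can never be adjacent — there must be at least one arc with a different subscript between any two of them. This already hints at the alternation structure in item~4.

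The core of the argument is a case analysis of how an arc $a = (t,t')_e \in E^{\text{ex-ante}}_{U,e}\cap P$ sits inside $P$ relative to its neighbors. I would argue that whenever $a$ has subscript $e$ and both its endpoints lie in $U$, the neighboring arc(s) in $P$ — which have a different subscript $e' \ne e$ — also have both endpoints in $U$: one endpoint is shared with $a$ (hence in $U$), and the other endpoint is a round that, by the pseudo-path adjacency condition (either shared destination, or shared origin with a common element among the three rounds), must also involve $e$ and lie "between" consecutive $U$-rounds containing $e$; here I would use that $U$ is taken to involve $e$ and exploit the definition of $E^{\text{ex-ante}}_{U,e}$ as arcs induced by $U$. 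The key point is that adjacency in a pseudo-path forces the neighboring non-$e$ arc to connect two rounds that both contain $e$ (in the "shared origin" case, the common element forces this; in the "shared destination" case, both arcs point into the same $U$-round and the origin of the $e$-arc is a $U$-round containing $e$, so the origin of the neighboring arc is sandwiched). Thus each such $a$ can be extended by its immediate neighbors in $P$ into a short sub-pseudo-path that alternates $e$/non-$e$ and has $e$-arcs at both ends; I would show these extensions have odd length (an alternating sequence starting and ending with an $e$-arc has an odd number of arcs).

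To finish, I would take the union of these odd-length sub-pseudo-paths over all arcs of $E^{\text{ex-ante}}_{U,e}\cap P$, merging any that overlap or touch — and here I claim that after merging, any two distinct resulting sub-pseudo-paths are separated by at least $3$ arcs. This separation comes from the fact that two $e$-arcs that are "close" in $P$ (within $2$ arcs of each other) would, by the adjacency analysis above, already be connected through the intermediate non-$e$ arc(s), hence merged; so if two merged components remain distinct, the nearest $e$-arcs in them are $\ge 3$ arcs apart in $P$, and the one-arc padding on each side (the non-$e$ end-arcs) is absorbed, leaving a gap of at least $3$ arcs between the components. Items~1–4 then follow: item~1 by construction, item~2 and the odd-length claim from the alternation, item~3 from the merging/separation argument, and item~4 from the local alternation structure with $e$-arcs at the ends. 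The main obstacle I anticipate is the careful bookkeeping in the adjacency case analysis — correctly verifying in both the "shared destination" and "shared origin with common element" cases that the neighboring non-$e$ arc has both endpoints in $U$ and that it genuinely connects two rounds containing $e$, and getting the $\ge 3$ separation bound tight rather than merely $\ge 1$; this is where the precise definitions of pseudo-path adjacency and of $E^{\text{ex-ante}}_{U,e}$ must be used with care.
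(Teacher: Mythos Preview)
Your central claim --- that whenever $a=(t,t')_e \in E^{\text{ex-ante}}_{U,e}\cap P$, every pseudo-path neighbor of $a$ also has both endpoints in $U$ because its ``other endpoint must also involve $e$'' --- is false. Take $\eset^1=\{e,f\}$, $\eset^2=\{f,g\}$, $\eset^3=\{e,f\}$; the pseudo-path is $(1,2)_f,\,(1,3)_e,\,(2,3)_f$, and with $U=\{1,3\}$ the arc $(1,3)_e$ lies in $E^{\text{ex-ante}}_{U,e}\cap P$, yet both of its neighbors $(1,2)_f$ and $(2,3)_f$ have round $2$ as an endpoint, and round $2$ contains neither $e$ nor belongs to $U$. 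In the shared-origin case the common element of the three rounds need not be $e$ (here it is $f$); in the shared-destination case the other origin is just the previous appearance of $e'$, which has no reason to contain $e$. So the extension you describe would produce arcs with endpoints outside $U$, violating the lemma's hypothesis. Note too that even when the other endpoint does contain $e$, it need not lie in $U$, since $U$ is an arbitrary subset of the $e$-rounds, not all of them.

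The paper's construction avoids this by adding to $E^{\text{ex-ante}}_{U,e}\cap P$ only those arcs of $P$ that are adjacent to \emph{two distinct} arcs of $E^{\text{ex-ante}}_{U,e}\cap P$; such a bridge arc shares one endpoint with each flanking $e$-arc, so both its endpoints are automatically in $U$. Items~1,~2,~4 then follow from the fact that adjacent arcs in a pseudo-path carry different subscripts. The real work is item~3. Your merging sketch only shows that two $e$-arcs at distance~$2$ along $P$ get merged, which rules out a gap of~$1$; it does nothing to rule out two $e$-arcs at distance~$3$, which would leave a gap of exactly~$2$. The paper handles this with a structural lemma you do not have: it characterizes the subscripts along the entire pseudo-path in terms of the two elements $\{a,b\}$ of an ``initiative node'' and shows that, apart from one exceptional index dealt with separately, the arcs at positions $i$ and $i+3$ always carry different subscripts. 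Hence two $e$-arcs can never sit exactly three steps apart, and the $\ge 3$ gap follows.
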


Given Lemma~\ref{lem:pseudo-path-induced}, we may assume that the subset of arcs in $P$ with both nodes involving $e$ are sub-pseudo-paths starting from arc indices $i_1, i_2, \dots, i_m$ and lengths $2k_1+1, 2k_2+1, \dots, 2k_m+1$.
Let $\indexset_j$ denote the set of indices of the $j$-th sub-pseudo-path: 
\[
    \indexset_j = \big\{ i_j, i_j+1, \dots, i_j+2k_j \big\}
    ~.
\]

Let $\indexset = \cup_{j=1}^m \indexset_j$ denote the set of arc indices in these pseudo-paths.
For each arc $i \in \indexset$, consider the indicator of if the $i$-th arc on the pseudo-path is selected into the pseudo-matching:
\[
    X_i \defeq \mathbf{1} \big( \choice^i = \yes \big)
    ~.
\]

Since any arc in $\indexset$ have both nodes in $U$ by definition, we have:
\[
    |E^\forest_U \cap P| \ge \sum_{i \in \indexset} X_i
    ~.
\]

Further, there are $k_j+1$ arcs with subscripts $e$ in the $j$-th sub-pseudo-path (i.e., contributing to $E^\exante_{U,e} \cap P$) by Lemma~\ref{lem:pseudo-path-induced}.
Hence, to prove Eqn.~\eqref{eqn:alpha-forest-constructor-by-path} it is sufficient to show:
\[
    \E \big(1-\beta\big)^{\sum_{i \in \indexset} X_i} \le \big(1-\alpha\beta\big)^{\sum_{j=1}^m (k_j+1)}
    ~.
\]

We next argue that it suffices to consider the case when all sub-pseudo-paths have unit lengths,
because we can reduce the general case to it.
Suppose that the $j$-th pseudo-path has length $2k_j+1 > 1$.
When $2k_j+1 = 3$ or $2k_j+1 \ge 7$, by Lemma~\ref{lem:forest-constructor-automata-matched-in-three} we have:
\[
    \sum_{i \in \indexset_j} X_i \ge \Big\lfloor \frac{2k_j+1}{3} \Big\rfloor \ge \frac{k_j+1}{2}
    ~.
\]

Hence, regardless of the realization of randomness in the forest constructor (Algorithm~\ref{alg:forest-constructor}), for $\alpha = 0.404$ and for any $\beta \in [0, 1]$ we always have:
\[
    \big(1-\beta\big)^{\sum_{i \in \indexset_j} X_i} \le \big(1-\beta\big)^{\frac{k_j+1}{2}} \le \Big( 1 - \frac{\beta}{2} \Big)^{k_j+1} \le \big( 1 - \alpha \beta \big)^{k_j+1}
    ~.
\]

In other words, we can without loss of generality remove the $j$-th sub-pseudo-path and prove Eqn.~\eqref{eqn:alpha-forest-constructor-by-path} for the remaining instance.

When $2 k_j+1 = 5$, i.e., $k_j = 2$, also by Lemma~\ref{lem:forest-constructor-automata-matched-in-three} we have:
\[
    \sum_{i \in \indexset_j, i \ne i_j} X_i \ge 1
    ~.
\]

Hence, regardless of the realization of randomness in the forest constructor (Algorithm~\ref{alg:forest-constructor}), for $\alpha = 0.404$ and for any $\beta \in [0, 1]$ we always have:
\[
    \big(1-\beta\big)^{\sum_{i \in \indexset_j, i \ne i_j} X_i} \le \big(1-\beta\big) \le \Big( 1 - \frac{\beta}{2} \Big)^2 \le \big( 1 - \alpha \beta \big)^2
    ~.
\]

That is, we can without loss of generality remove the arcs \emph{other than $i_j$} from $j$-th sub-pseudo-path and prove Eqn.~\eqref{eqn:alpha-forest-constructor-by-path} for the remaining instance.

Finally, consider Eqn.~\eqref{eqn:alpha-forest-constructor-by-path} when all sub-pseudo-paths have unit lengths.
In other words, for a subset of indices $\indexset = \{ i_1, i_2, \dots, i_m \}$ such that any two indices differ by at least $4$ (Lemma~\ref{lem:pseudo-path-induced}), we shall prove that:
\[
    \forall 0 \le \beta \le 1 ~: \qquad \E \big(1-\beta\big)^{\sum_{i \in \indexset} X_i} \le \big(1-\alpha\beta\big)^m
    ~.
\]

Our proof is an induction on $m$.
The base case when $m = 1$ follows by:
\begin{align*}
    \E \big(1-\beta\big)^{X_{i_1}}
    &
    \le \frac{1}{3-p} \big(1-\beta\big) + \frac{2-p}{3-p}
    \tag{Lemma~\ref{lem:forest-constructor-automata-marginal}}
    \\
    &
    = 1 - \frac{1}{3-p} \beta
    \\
    &
    \le 1 - \alpha \beta
    ~.
    \tag{$\alpha=0.404$, $p=0.6616$}
\end{align*}

Suppose that the inequality holds for up to $m-1$ indices.
We next prove it for $m$ indices.
Suppose without loss of generality that $i_1 < i_2 < \dots < i_m$.
By the inductive hypothesis:
\[
    \E \big(1-\beta\big)^{\sum_{i \in \indexset, i \ne i_m} X_i} \le \big(1-\alpha\beta\big)^{m-1}
    ~.
\]

It suffices to prove that \emph{for any realized $X_{i_1}, X_{i_2}, \dots, X_{i_{m-1}}$}:
\[
    \E \big[ X_{i_m} \mid X_{i_1}, \dots, X_{i_{m-1}} \big] \ge \alpha
    ~,
\]
as it would imply that:
\[
    \E \Big[ \big(1-\beta\big)^{X_{i_m}} \mid X_{i_1}, \dots, X_{i_{m-1}} = 1 \Big]
    \le \big(1-\beta\big) \alpha + \big(1-\alpha\big)
    = 1-\alpha\beta
    ~.
\]

For any realization such that $X_{i_{m-1}} = 1$: 
\begin{align*}
    \E \big[X_{i_m} \mid X_{i_1}, \dots, X_{i_{m-1}} = 1 \big]
    &
    = f_{i_m-i_{m-1}}
    \tag{Lemmas~\ref{lem:forest-constructor-automata-matched-state}, \ref{lem:forest-constructor-automata-selection-prob}}
    \\[.5ex]
    &
    \ge p^3+(1-p)^2
    \tag{$i_m-i_{m-1} \ge 4$ and Lemma~\ref{lem:forest-constructor-automata-selection-prob-bound}}
    \\[1ex]
    &
    \ge \alpha
    ~.
    \tag{$\alpha = 0.404$, $p = 0.6616$}
\end{align*}

For a realization such that $X_{i_{m-1}} = 0$, the state after processing arc $i_{m-1}$ could be $\unmatched$ or $\ready$.
We first argue that it is the former at least a $\frac{1}{1+p}$ fraction of the time.

\begin{lemma}
    \label{lem:ocs-unmatched-ready-ratio}
    For any realization such that $X_{i_{m-1}} = 0$:
    \[
        \Pr \Big[ \state^{i_{m-1}} = \unmatched \mid X_{i_1}, \dots, X_{i_{m-1}} = 0 \Big] \ge \frac{1}{1+p}
        ~.
    \]
\end{lemma}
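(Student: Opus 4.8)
The plan is to prove the lemma by induction on $m$, with the base case $m=2$ (where we condition only on $X_{i_1}=0$). First, using Lemma~\ref{lem:forest-constructor-automata-inverse} I would work with a single run of $\sigma^+$ on the pseudo-path started from the stationary distribution $\vec{\pi}$, so that every state $\state^i$ is marginally distributed as $\vec{\pi}$ and Lemmas~\ref{lem:forest-constructor-automata-matched-state} and~\ref{lem:forest-constructor-automata-selection-prob} apply. Write $\mathcal{C}$ for the conditioning event on $X_{i_1},\dots,X_{i_{m-1}}$ (with $X_{i_{m-1}}=0$) and $\mathcal{C}'$ for its restriction to $X_{i_1},\dots,X_{i_{m-2}}$. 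Since $\choice^{i_{m-1}}=\no$ excludes $\state^{i_{m-1}}=\matched$, the event $\state^{i_{m-1}}=\unmatched$ occurs exactly when $\state^{i_{m-1}-1}=\matched$ (the unique in-state of $\unmatched$), whereas $\choice^{i_{m-1}}=\no$ also arises from $\state^{i_{m-1}-1}=\unmatched$ with probability $1-p$ (landing in $\ready$). Bayes' rule then makes the claim equivalent to $p\,\Pr[\state^{i_{m-1}-1}=\matched\mid\mathcal{C}']\ge(1-p)\,\Pr[\state^{i_{m-1}-1}=\unmatched\mid\mathcal{C}']$, and since $\state^{i_{m-1}-1}=\matched\iff\choice^{i_{m-1}-1}=\yes$ and $\state^{i_{m-1}-1}=\unmatched\iff\choice^{i_{m-1}-2}=\yes$, it suffices to show
\[
    p\,\Pr\big[\choice^{i_{m-1}-1}=\yes\mid\mathcal{C}'\big]\;\ge\;(1-p)\,\Pr\big[\choice^{i_{m-1}-2}=\yes\mid\mathcal{C}'\big]~.
\]

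The single analytic ingredient I would isolate is the inequality $p f_i\ge(1-p)f_{i-1}$ for all $i\ge2$, where $f$ is the sequence of Eqn.~\eqref{eqn:forest-constructor-automata-selection-prob}. The cases $i=2,3,4$ are checked directly ($i=2$ is trivial, $i=3$ is equality, $i=4$ reads $p^3\ge(1-p)^2$, which holds throughout the relevant range $p\in[\frac{\sqrt{5}-1}{2},\frac{2}{3}]$), and for $i\ge5$ the recurrence and the inductive hypothesis give $f_i=pf_{i-2}+(1-p)f_{i-3}\ge\frac{1-p}{p}\big(pf_{i-3}+(1-p)f_{i-4}\big)=\frac{1-p}{p}f_{i-1}$.

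For the base case $m=2$, $\mathcal{C}'$ carries no conditioning, so $\state^{i_1-1}\sim\vec{\pi}$ and the ratio from Step~1 equals $\frac{\pi_\yes}{\pi_\yes+(1-p)\pi_\no}=\frac{1}{2-p}\ge\frac{1}{1+p}$ since $p\ge\frac12$. For $m\ge3$, set $g\defeq i_{m-1}-1-i_{m-2}\ge3$ (by the spacing of the conditioned indices) and split on the value of $X_{i_{m-2}}$ in $\mathcal{C}'$. If $X_{i_{m-2}}=1$, then $\state^{i_{m-2}}=\matched$, so by the Markov property and Lemma~\ref{lem:forest-constructor-automata-selection-prob} the two probabilities above are $f_g$ and $f_{g-1}$, and the target is exactly $pf_g\ge(1-p)f_{g-1}$, the isolated inequality. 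If $X_{i_{m-2}}=0$, then $\state^{i_{m-2}}\in\{\unmatched,\ready\}$; letting $u\defeq\Pr[\state^{i_{m-2}}=\unmatched\mid\mathcal{C}']$, the inductive hypothesis applied to $X_{i_1},\dots,X_{i_{m-2}}$ (which has $X_{i_{m-2}}=0$) gives $u\ge\frac{1}{1+p}$, and Lemma~\ref{lem:forest-constructor-automata-selection-prob} gives $\Pr[\choice^{i_{m-1}-1}=\yes\mid\mathcal{C}']=uf_{g+1}+(1-u)f_{g-1}$ and $\Pr[\choice^{i_{m-1}-2}=\yes\mid\mathcal{C}']=uf_g+(1-u)f_{g-2}$; the target rearranges to $u\big(pf_{g+1}-(1-p)f_g\big)\ge(1-u)\big((1-p)f_{g-2}-pf_{g-1}\big)$, whose left bracket is $\ge0$ and whose right bracket is $\le0$, both by the isolated inequality (with indices $g+1\ge4$ and $g-1\ge2$ respectively). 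This closes the induction, and the lemma is the case $m=m$.

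The main obstacle I anticipate is not any single calculation but recognizing the recursive structure: the branch $X_{i_{m-2}}=0$ must re-invoke the lemma itself with one fewer conditioned index, so the statement has to be set up for induction on $m$, and one must justify — via the Markov property together with the fact that consecutive conditioned indices are at least $4$ apart — that $\state^{i_{m-2}}$ screens the earlier decisions $X_{i_1},\dots,X_{i_{m-3}}$ off from the later ones $\choice^{i_{m-1}-1},\choice^{i_{m-1}-2}$. Once that is set up, everything reduces to the one-line monotonicity estimate $pf_i\ge(1-p)f_{i-1}$ and routine bookkeeping with the recurrence for $f$.
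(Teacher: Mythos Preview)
Your argument is correct, but it takes a genuinely different route from the paper's. The paper observes that since $i_{m-2}\le i_{m-1}-4$, conditioning additionally on $\state^{i_{m-1}-4}$ screens off all of $X_{i_1},\dots,X_{i_{m-2}}$ at once; it then computes $(P^+)^4$ explicitly and verifies the entry-wise inequality $p\cdot(\text{first column})\ge(\text{second column})$, which yields $p\Pr[\state^{i_{m-1}}=\unmatched\mid\state^{i_{m-1}-4}]\ge\Pr[\state^{i_{m-1}}=\ready\mid\state^{i_{m-1}-4}]$ directly. No induction, no $f_i$'s---just a $3\times 3$ matrix power.

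Your approach instead conditions on $\state^{i_{m-2}}$, which sits at a variable distance $\ge 4$ from $i_{m-1}$, and therefore must work for every gap; this is why you need the uniform inequality $pf_i\ge(1-p)f_{i-1}$ for all $i\ge 2$. That inequality is correct (your induction on the recurrence is fine, and the boundary check $p^3\ge(1-p)^2$ holds on the relevant parameter range), and it is a nice reuse of the $f_i$ machinery already set up in the paper. One remark: the inductive hypothesis you invoke to get $u\ge\frac{1}{1+p}$ is never actually used---in your Case~2 rearrangement both brackets already have the right sign for any $u\in[0,1]$. So the induction on $m$ can be dropped entirely: a single conditioning on $\state^{i_{m-2}}$ together with $pf_i\ge(1-p)f_{i-1}$ suffices. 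Stripped of that redundancy, your argument and the paper's are really two instantiations of the same Markov-screening idea, with the paper choosing the fixed offset $4$ (a finite matrix check) and you choosing the variable offset $i_{m-1}-i_{m-2}$ (a one-parameter family handled via the $f$-recurrence).
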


Given Lemma~\ref{lem:ocs-unmatched-ready-ratio}, we can lower bound $\E \big[X_{i_m} \mid X_{i_1}, \dots, X_{i_{m-1}} = 0 \big]$ by:
\[
    \frac{1}{1+p} \E \big[X_{i_m} \mid \state^{i_{m-1}} = \unmatched \big] + \frac{p}{1+p} \min \Big\{ \E \big[X_{i_m} \mid \state^{i_{m-1}} = \unmatched \big], \E \big[X_{i_m} \mid \state^{i_{m-1}} = \ready \big] \Big\}
    ~.
\]

Suppose that $i_m - i_{m-1} = 4$.
By Lemma~\ref{lem:forest-constructor-automata-selection-prob-bound}:
\begin{align*}
    \E \big[X_{i_m} \mid \state^{i_{m-1}} = \unmatched \big]
    &
    = f_5
    = 2p(1-p)
    ~;
    \\[1ex]
    \E \big[X_{i_m} \mid \state^{i_{m-1}} = \ready \big]
    &
    = f_3
    = 1-p
    ~.
\end{align*}

Observe that $2p(1-p) > 1-p$ for $p = 0.6616$.
We get that:
\begin{align*}
    \E \big[X_{i_m} \mid X_{i_1}, \dots, X_{i_{m-1}} = 0 \big]
    &
    \ge \frac{1}{1+p} \cdot 2p(1-p) + \frac{p}{1+p} \cdot (1-p)
    \\[.5ex]
    &
    = \frac{3p(1-p)}{1+p}
    \ge \alpha
    ~.
    \tag{$\alpha = 0.404$, $p = 0.6616$}
\end{align*}

Otherwise, we have that $i_m - i_{m-1} \ge 5$.
By Lemma~\ref{lem:forest-constructor-automata-selection-prob-bound}:
\begin{align*}
    \E \big[X_{i_m} \mid \state^{i_{m-1}} = \unmatched \big]
    &
    = f_{i_m - i_{m-1} + 1}
    \ge p^3+\big(1-p\big)^2
    ~;
    \\[1ex]
    \E \big[X_{i_m} \mid \state^{i_{m-1}} = \ready \big]
    &
    = f_{i_m - i_{m-1} - 1}
    \ge p^3+\big(1-p\big)^2
    ~.
\end{align*}

Hence:
%
\[
    \E \big[X_{i_m} \mid X_{i_1}, \dots, X_{i_{m-1}} = 0 \big]
    \ge p^3+\big(1-p\big)^2
    \ge \alpha
    ~.
    \tag{$\alpha = 0.404$, $p = 0.6616$}
\]

\subsection{Forest OCS}
\label{sec:forest-ocs}

\subsubsection{Algorithm}

\begin{algorithm}[t]
    \caption{$(\sqrt{2}-1)$-Forest OCS}
    \label{alg:forest-ocs}
    \begin{algorithmic}
        \smallskip
        \State \textbf{Input:}
        \begin{itemize}
            \item An online selection instance represented by its ex-ante graph $G^\exante = (V, E^\exante)$.
            \item A good forest $G^\forest = (V, E^\forest)$.
        \end{itemize}
        \State \textbf{State variables:}
        \begin{itemize}
            %
            \item Label $\elabel_t(e) \in \big\{ \head, \tail \big\}$ for every round $1 \le t \le T$ and every element $e \in \eset^t$.
            \item State $q^t$ of automaton $\sigma^*$ for every node $1 \le t \le T$.
            %
        \end{itemize}
        \smallskip
        \State \textbf{For each round $t$:} (suppose that $\eset^t=\{e_1, e_2\}$)
        \begin{enumerate}
            \item If $t$ is the root of a directed binary tree in $G^\forest$, let the labels be $\ell_t(e_1) = \head$, $\ell_t(e_2) = \tail$.
            \label{step:forest-ocs-label-root}
            \item Otherwise, suppose without loss of generality that $(t', t)_{e_1}$ is the in-arc of node $t$ in $G^\forest$, let $\elabel_t(e_1) = \elabel_{t'}(e_1)$ and let $\elabel_t(e_2)$ be the other label from $\big\{ \head, \tail \big\}$.
            \label{step:forest-ocs-label-other}
        \item Let $(q^t, \ell) = \sigma^*(q^{t'})$ (artificially let $q^{t'} = q_\origin$ if $t$ is the root of a directed binary tree).
            \item Select the element with label $\ell$.
        \end{enumerate}
    \end{algorithmic}
\end{algorithm}

On observing the elements $\eset^t$ of each round $t$, the forest OCS labels the elements by head ($\head$) and tail ($\tail$).
Then, it calls automaton $\sigma^*$ from Subsection~\ref{sec:ocs-path} with the state of $t$'s parent in the good forest as input to select a label and to get the state of $t$.
Finally, it selects the element whose label is selected by automaton $\sigma^*$.
See Algorithm~\ref{alg:forest-ocs}.


\subsubsection{Analysis: Proof of Lemma~\ref{lem:forest-ocs}}

The lemmas in this subsection assume $\beta = \sqrt{2}-1$, which we shall not restate repeatedly.
We first establish a structural lemma about the subset of arcs with both nodes involving an element $e$ in any good forest, and about the labels in Algorithm~\ref{alg:forest-ocs}.
Recall that $E^\forest_U$ denotes the subset of arcs in forest $G^\forest$ with both nodes in $U$.


\begin{lemma}
    \label{lem:forest-ocs-structural}
    For any good forest $G^\forest = (V, E^\forest)$, any element $e$, and any subset of nodes $U \subseteq V$ involving $e$, $E^\forest_U$ consists of a collection of tree-paths such that:
    \begin{enumerate}
        \item There is no arc between any two nodes in distinct tree-paths; and
        \item Element $e$ has the same label in each tree-path.
    \end{enumerate}
\end{lemma}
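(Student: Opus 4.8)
The plan is to prove the two properties in order, with the structural claim driving the argument and the labeling claim following from a short case analysis.

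\emph{Structure of $E^\forest_U$.} First I would show that every node has out-degree at most $1$ in $E^\forest_U$. Since $G^\forest$ is a directed binary forest, a node $p \in U$ has at most two children $c, c'$ in $G^\forest$; if both $c$ and $c'$ involved $e$, then $e \in \eset^p \cap \eset^c \cap \eset^{c'}$, contradicting the second property of good forests. Hence at most one child of $p$ involves $e$, so at most one child of $p$ lies in $U$, i.e.\ $p$ has out-degree at most $1$ in $E^\forest_U$. Every node of a forest also has in-degree at most $1$, and $G^\exante$ (hence $G^\forest$ and $E^\forest_U$) is acyclic since all its arcs go strictly forward in time. A directed acyclic graph with all in-degrees and out-degrees at most $1$ is a disjoint union of directed paths; each such path, being connected, lies inside a single tree of $G^\forest$ and consists of parent-to-child arcs, i.e.\ it is a tree-path. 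These tree-paths are exactly the connected components of $E^\forest_U$, so a $G^\forest$-arc between two nodes of $U$ lying in distinct tree-paths would itself belong to $E^\forest_U$ and merge the two components---a contradiction. This gives the first property.

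\emph{Labels.} I would first note that by construction (Steps~\ref{step:forest-ocs-label-root}--\ref{step:forest-ocs-label-other} of Algorithm~\ref{alg:forest-ocs}) the two elements of every round always receive complementary labels. Now take any arc $(v, v')_{e'} \in E^\forest_U$, where $e'$ is the element shared by rounds $v$ and $v'$ and $v$ is the parent of $v'$; Step~\ref{step:forest-ocs-label-other} gives $\elabel_{v'}(e') = \elabel_v(e')$. If $e' = e$ we immediately get $\elabel_{v'}(e) = \elabel_v(e)$. If $e' \neq e$, then since $v, v' \in U$ both involve $e$ and each round has exactly two elements, $\eset^v = \eset^{v'} = \{e, e'\}$; in each of these two rounds $e$ and $e'$ carry complementary labels, so from $\elabel_{v'}(e') = \elabel_v(e')$ we again obtain $\elabel_{v'}(e) = \elabel_v(e)$. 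Propagating this equality along each tree-path $v_1 \to v_2 \to \cdots \to v_k$ shows that $e$ carries a single label on the whole tree-path, which is the second property.

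\emph{Main obstacle.} The conceptual heart is the out-degree bound, which is a one-line consequence of the good-forest property; after it, the path decomposition is routine. The more error-prone part is the label bookkeeping in the $e' \neq e$ case, where one must use that both endpoints of the arc see exactly the pair $\{e, e'\}$ so that ``the other label'' at the child is forced to be $e$'s, together with the round-wise complementarity of labels; no genuinely difficult step is anticipated.
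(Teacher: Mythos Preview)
Your proposal is correct and follows essentially the same approach as the paper: both use the good-forest property to show that any node of $U$ has at most one child in $U$, deduce the path decomposition from this out-degree bound together with the forest's in-degree bound, and handle the label invariant by the same two-case analysis on whether the arc's subscript equals $e$. Your write-up is slightly more explicit about the ``in/out-degree $\le 1$ in a DAG $\Rightarrow$ union of directed paths'' step, but the content is identical.
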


We next prove Lemma~\ref{lem:forest-ocs} by an induction on the number of tree-paths in $E^\forest_U$.
The base case with zero tree-path holds vacuously.

Next for some $m \ge 1$ suppose that the lemma holds with at most $m-1$ tree-paths.
Consider an arbitrary instance for which $E^\forest_U$ consists of $m$ tree-paths satisfying the properties of Lemma~\ref{lem:forest-ocs-structural}.
Let $t_1, t_2, \dots, t_m$ denote the first node of these tree-paths.
Let $k_1, k_2, \dots, k_m$ be the lengths.
We assume without loss of generality that $t_m$'s height in its directed binary tree is greater than or equal to the height of any other $t_i$ from the same tree.

\paragraph{Case 1:}
Suppose that $t_m$ is the root of a directed binary tree in $G^\forest$.
First by the inductive hypothesis on the sub-instance that removes the tree rooted at $t_m$, the probability of not selecting element $e$ in these tree-paths is at most $2^{-\sum_{i=1}^{m-1} k_i} (1-\beta)^{\sum_{i=1}^{m-1} (k_i-1)}$.
Then, \emph{conditioned on any realized randomness on the other tree-paths},
the probability of never selecting $e$ on the tree-path starting with $t_m$ is at most $2^{-k_m} (1-\beta)^{k_m-1}$, because of the second part of Lemma~\ref{lem:forest-ocs-structural} and a special case of the next lemma when $i = 1$.

\begin{lemma}
    \label{lem:forest-ocs-from-origin}
    For any $k$ consecutive positive integers $i, i+1, \dots, i+k-1$, and any label $\ell \in \{ \head,\tail \}$, the probability that automaton $\sigma^*$, starting from the original state $q_\origin$, does not select label $\ell$ in its $i$-th to $(i+k-1)$-th selections is at most $2^{-k}(1-\beta)^{k-1}$.
\end{lemma}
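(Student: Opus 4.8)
The plan is to analyze the automaton $\sigma^*$ as a Markov chain on its five states $q_\origin, q_\head, q_{\head^2}, q_\tail, q_{\tail^2}$, and to track, for a \emph{fixed} target label $\ell$ (say $\ell = \head$ by symmetry), the probability that $\head$ is never selected in selections $i, i+1, \dots, i+k-1$. First I would observe that the automaton structure forces a selection of $\head$ at least once every three steps: from $q_{\tail}$ and from $q_{\tail^2}$ the chance of $\head$ is at least $\tfrac{1+\beta}{2}$ and $1$ respectively, and $q_{\tail^2}$ always leads to $\head$ next; more to the point, the run of consecutive $\tail$'s is bounded (after $\tail,\tail$ the state is $q_{\tail^2}$, which outputs $\head$ with certainty). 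So the event ``no $\head$ in $k$ consecutive selections'' already needs $k$ to be controlled by these forced selections. The cleanest route is to set up a recurrence: let $g_k$ be the maximum, over all states $q$ reachable as $q^{i-1}$, of $\Pr[\text{no }\head\text{ in selections }i,\dots,i+k-1 \mid q^{i-1}=q]$. One then shows by case analysis on the state that $g_k \le 2^{-k}(1-\beta)^{k-1}$, using $\beta = \sqrt 2 - 1$, i.e. $(1-\beta)=2-\sqrt2$ and $\tfrac12(1-\beta) = 1 - \tfrac{\sqrt2}{2} = \tfrac{2-\sqrt2}{2}$.

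Concretely I would condition on the state $q^{i-1}$ just before the first relevant selection. Starting from $q_\origin$ (which is the hypothesis, but the induction needs all states as intermediate cases), selection $i$ picks $\head$ w.p.\ $\tfrac12$ and $\tail$ w.p.\ $\tfrac12$ moving to $q_\tail$; from $q_\tail$ the next selection avoids $\head$ only w.p.\ $\tfrac{1-\beta}{2}$, moving to $q_{\tail^2}$, from which $\head$ is forced. From $q_\head$ the next selection avoids $\head$ w.p.\ $\tfrac{1+\beta}{2}$ (going to $q_\tail$) and picks $\head$ w.p.\ $\tfrac{1-\beta}{2}$; from $q_{\head^2}$, $\head$ is forced immediately; from $q_{\tail^2}$, $\head$ is forced immediately. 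Writing these transitions out and unrolling, the worst starting state for a length-$k$ stretch turns out to be one where we get $\tfrac12$ on the first step and then $\tfrac{1-\beta}{2}$ per step for a while — which is exactly the shape $2^{-k}(1-\beta)^{k-1}$ — but one must check that the ``reset to $q_\origin$'' after two consecutive $\tail$'s never helps the adversary get a better-than-$\tfrac{1-\beta}{2}$ factor, and in fact the forced $\head$ at $q_{\tail^2}$ kills the run. Because the index $i$ is arbitrary, I would phrase the bound uniformly over the initial state and note that $q_\origin$ is a special (in fact the worst or tied-worst) case.

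The main obstacle I anticipate is handling the arbitrary offset $i$: the statement is not ``the first $k$ selections'' but ``the $i$-th through $(i+k-1)$-th'', so the automaton may be in any of its recurrent states at step $i-1$, and one genuinely needs a bound that holds for every such state, not just $q_\origin$. This is really a statement about the chain's worst-case $k$-step ``avoid $\head$'' probability from any state, and the right way to package it is to define $h_k(q) = \Pr[\text{no }\head\text{ in next }k\text{ selections}\mid \text{current state }q]$, prove the system of recurrences $h_k(q_\origin) = \tfrac12 h_{k-1}(q_\tail)$, $h_k(q_\head)=\tfrac{1+\beta}{2} h_{k-1}(q_\tail)$, $h_k(q_\tail) = \tfrac{1-\beta}{2} h_{k-1}(q_{\tail^2})$, $h_k(q_{\tail^2}) = 0$ for $k \ge 1$, $h_k(q_{\head^2}) = 0$ for $k\ge1$, with $h_0 \equiv 1$, and then verify $\max_q h_k(q) \le 2^{-k}(1-\beta)^{k-1}$ by induction — the base cases $k=0,1,2$ checked directly, and the choice $\beta = \sqrt2-1$ making the inductive step $\tfrac{1+\beta}{2}\cdot\tfrac12 \cdot (1-\beta) \le \big(\tfrac{1-\beta}{2}\big)^2$ tight (indeed $(1+\beta)(1-\beta) = 1-\beta^2 = 2-2\sqrt2+1 = \dots$, giving the defining quadratic $\beta^2+2\beta-1=0$). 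Everything else is routine bookkeeping with these recurrences.
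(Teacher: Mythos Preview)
Your approach has a genuine gap. The recurrence $h_k(q_{\head^2}) = 0$ is incorrect: from $q_{\head^2}$ the automaton outputs $\tail$ (not $\head$) with certainty and transitions to $q_\origin$, so in fact $h_k(q_{\head^2}) = h_{k-1}(q_\origin)$. With this fixed, your proposed uniform bound $\max_q h_k(q) \le 2^{-k}(1-\beta)^{k-1}$ fails already at $k=1$, where $h_1(q_{\head^2}) = 1 > \tfrac12$, and at $k=2$, where $h_2(q_{\head^2}) = h_1(q_\origin) = \tfrac12 > \tfrac{1-\beta}{4}$. Thus $q_{\head^2}$ is a strictly \emph{worse} starting state than $q_\origin$, contrary to your claim, and the uniform-over-states strategy cannot succeed. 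Relatedly, your proposed inductive step inequality $\tfrac{1+\beta}{2}\cdot\tfrac12\cdot(1-\beta) \le \big(\tfrac{1-\beta}{2}\big)^2$ simplifies to $1+\beta \le 1-\beta$, which is false for $\beta>0$.

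The paper resolves the arbitrary offset $i$ differently: it exploits the hypothesis $q^0 = q_\origin$ to control the \emph{distribution} of $q^{i-1}$. Two properties are key: the symmetry $\Pr[q^{i-1}=q_\head]=\Pr[q^{i-1}=q_\tail]$ and $\Pr[q^{i-1}=q_{\head^2}]=\Pr[q^{i-1}=q_{\tail^2}]$, and the ratio bound $\Pr[q^{i-1}=q_\head]/\Pr[q^{i-1}=q_{\head^2}] \ge (1+\beta)/(1-\beta)$. Together these cap the mass the chain can place on the dangerous state $q_{\head^2}$, after which a direct computation for $k=1,2,3$ (for $k\ge4$ the event is impossible, since $\sigma^*$ never outputs the same label four times consecutively) bounds the weighted sum $\sum_q \Pr[q^{i-1}=q]\,h_k(q)$. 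The essential ingredient your plan omits is precisely this distributional control at time $i-1$; no bound uniform over the starting state is available here (that weaker, state-uniform statement is the content of the separate Lemma~\ref{lem:forest-ocs-from-other}, which requires the extra step $i\ge2$ of burn-in).
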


\paragraph{Case 2:}
Suppose that $t_m$ is not the root of any directed binary tree in $G^\forest$, but its sibling in $G^\forest$ (if any) is not one of $t_1, t_2, \dots, t_{m-1}$.
By the latter assumption, the fact that $t_m$'s parent is not on the other tree-paths (Lemma~\ref{lem:forest-ocs-structural}, first part), and the assumption that $t_m$ has the largest height compared to other $t_i$ from the same directed binary tree, we get that the nodes rooted from $t_m$'s parent are not on the other tree-paths.
Then, by the inductive hypothesis on the sub-instance that removes nodes rooted from \emph{$t_m$'s parent}, the probability of not selecting element $e$ in these tree-paths is at most $2^{-\sum_{i=1}^{m-1} k_i} (1-\beta)^{\sum_{i=1}^{m-1} (k_i-1)}$.
Then, \emph{conditioned on any realized randomness on the other tree-paths},
the probability of never selecting $e$ on the tree-path starting with $t_m$ is at most $2^{-k_m} (1-\beta)^{k_m-1}$, because of the second part of Lemma~\ref{lem:forest-ocs-structural} and the next lemma.

\begin{lemma}
    \label{lem:forest-ocs-from-other}
    For any $k$ consecutive positive integers $i, i+1, \dots, i+k-1$ starting from $i \ge 2$, and any label $\ell \in \{ \head,\tail \}$, the probability that automaton $\sigma^*$, starting from an arbitrary state, does not select label $\ell$ in its $i$-th to $(i+k-1)$-th selections is at most $2^{-k}(1-\beta)^{k-1}$.
\end{lemma}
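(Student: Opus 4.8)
The plan is to reduce the statement to a finite check on the five–state transition system of $\sigma^*$. By the $\head\leftrightarrow\tail$ symmetry of $\sigma^*$ we may assume $\ell=\head$. For a state $q\in\{q_\origin,q_\head,q_{\head^2},q_\tail,q_{\tail^2}\}$ and $k\ge 0$, let $\phi_k(q)$ denote the probability that $\sigma^*$, started in state $q$, outputs $\tail$ in each of its first $k$ selections. Reading off the transition function, $\phi_0\equiv 1$ and, for $k\ge 1$,
\[
\phi_k(q_\origin)=\tfrac12\,\phi_{k-1}(q_\tail),\quad
\phi_k(q_\head)=\tfrac{1+\beta}{2}\,\phi_{k-1}(q_\tail),\quad
\phi_k(q_\tail)=\tfrac{1-\beta}{2}\,\phi_{k-1}(q_{\tail^2}),\quad
\phi_k(q_{\head^2})=\phi_{k-1}(q_\origin),\quad
\phi_k(q_{\tail^2})=0,
\]
the last because the only transition out of $q_{\tail^2}$ outputs $\head$. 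These recursions collapse quickly: $\phi_k(q_{\tail^2})=0$ for $k\ge1$ forces $\phi_k(q_\tail)=0$ for $k\ge2$, hence $\phi_k(q_\origin)=\phi_k(q_\head)=0$ for $k\ge3$, hence $\phi_k(q_{\head^2})=0$ for $k\ge4$; so only finitely many $\phi_k(q)$ are nonzero, each with a closed form in $\beta$.

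Next I would reduce to the case $i=2$. Writing $g^{(j)}(q)$ for the probability that $\sigma^*$ started in $q$ outputs $\tail$ in selections $j+1,\dots,j+k$, conditioning on the first transition gives $g^{(0)}(q)=\phi_k(q)$ and $g^{(j)}(q)=\E_{q'\sim q}\,g^{(j-1)}(q')$ for $j\ge1$, where $q'$ ranges over the successor states of $q$. The probability in the lemma, for a worst-case starting state $q^0$, is exactly $g^{(i-1)}(q^0)$ with $i-1\ge1$. Hence it suffices to bound $\max_q g^{(1)}(q)=\max_q\E_{q'\sim q}\phi_k(q')$: once $g^{(1)}(q)\le 2^{-k}(1-\beta)^{k-1}$ for all $q$, a one-line induction on $j$ — each step just averages the inductive bound over successor states — gives $g^{(j)}(q)\le 2^{-k}(1-\beta)^{k-1}$ for every $j\ge1$, i.e. for every $i\ge2$.

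It then remains to verify $\Psi_k(q):=\E_{q'\sim q}\phi_k(q')\le 2^{-k}(1-\beta)^{k-1}$ for each of the five states $q$; each $\Psi_k(q)$ is an explicit average of two of the $\phi_k$ values above (e.g. $\Psi_k(q_\head)=\tfrac{1-\beta}{2}\phi_k(q_{\head^2})+\tfrac{1+\beta}{2}\phi_k(q_\tail)$), and by the collapse it is nontrivial only for $k\in\{1,2,3\}$. Plugging in $\beta=\sqrt2-1$ turns each case into a numeric inequality, several of which are equalities: $\Psi_1(q_\head)=\tfrac{3-2\beta-\beta^2}{4}$ and $\Psi_1(q_\tail)=\tfrac{(1+\beta)^2}{4}$ both equal $\tfrac12$ precisely when $\beta=\sqrt2-1$ (these are the constraints that pin down the value of $\beta$), and $\Psi_3(q_\head)=\tfrac{(1-\beta)^2}{8}=2^{-3}(1-\beta)^2$ is tight as well. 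I expect the only real subtlety to be the reduction to $i=2$: the bound genuinely fails for $i=1$ from a bad state — from $q_{\head^2}$ the automaton emits $\tail,\tail,\tail$ with probability $\tfrac{1-\beta}{4}>2^{-3}(1-\beta)^2$ — so the extra initial selection guaranteed by $i\ge2$ is essential, and the smoothing it produces is exactly what the averaging argument uses; the remainder is bookkeeping on a five-state automaton. (As a byproduct, combining this with the direct evaluation of $\phi_k(q_\origin)$ for the $i=1$ case recovers Lemma~\ref{lem:forest-ocs-from-origin}.)
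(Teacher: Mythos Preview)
Your proposal is correct and follows essentially the same approach as the paper: reduce to $i=2$ by conditioning on the state immediately before the window, then verify the bound by a finite case check on the five states for $k\in\{1,2,3\}$ (with $k\ge 4$ vacuous). The only organizational difference is that the paper delegates the $q_\origin$, $q_{\head^2}$, $q_{\tail^2}$ cases to Lemma~\ref{lem:forest-ocs-from-origin}, whereas you compute all five $\Psi_k(q)$ directly --- but the underlying computations are identical.
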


\paragraph{Case 3:}
Suppose that $t_m$ is not the root of any directed binary tree in $G^\forest$, and further its sibling in $G^\forest$ is $t_j$, the starting node of another tree-path in Lemma~\ref{lem:forest-ocs-structural}.
Since $t_m$'s parent is not on the other tree-paths (Lemma~\ref{lem:forest-ocs-structural}, first part), and further by the assumption that $t_m$ has the largest height compared to other $t_i$ from the same directed binary tree, the nodes rooted from $t_m$ and $t_j$'s parent are not on the other tree-paths.
By the inductive hypothesis on the sub-instance that removes the nodes rooted from \emph{$t_m$'s parent}, the probability of not selecting element $e$ in these tree-paths is at most $2^{-\sum_{i \ne j, m} k_i} (1-\beta)^{\sum_{i \ne j, m} (k_i-1)}$.
Then, \emph{conditioned on any realized randomness on the other tree-paths},
the probability of never selecting $e$ on the tree-paths starting with $t_j$ and $t_m$ is at most $2^{-k_j-k_m} (1-\beta)^{(k_j-1)+(k_m-1)}$, because of the second part of Lemma~\ref{lem:forest-ocs-structural}, the observation that $e$ must get different labels on these two tree-paths,%
\footnote{
    The parent $p$ of $t_j, t_m$ must not contain $e$ by the second requirement of good forest.
    Suppose that $p$'s common elements with $t_j, t_m$ are $e_j, e_m \ne e$ respectively.
    Then, $e$'s labels in $t_j, t_m$ are the labels of $e_m, e_j$ in $p$ respectively.
}
and the next lemma.

\begin{lemma}
    \label{lem:forest-ocs-fork}
    Consider two independent copies of automaton $\sigma^*$ from an arbitrary but identical initial state.
    Then, for any $k, \hat{k} \ge 1$, the probability that the first copy never selects $\tail$ in the first $k$ rounds, and the second copy never selects $\head$ in the first $\hat{k}$ rounds is at most $2^{-k-\hat{k}} (1-\beta)^{(k-1)+(\hat{k}-1)}$.
\end{lemma}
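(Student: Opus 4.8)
The plan is to condition on the common initial state, use independence to factor the joint event, collapse one factor onto the other through a symmetry of $\sigma^*$, and then bound each factor by a short closed-form computation that exploits the fact that $\sigma^*$ becomes deterministic as soon as it repeats a label. Fix the common initial state $q \in \{q_\origin, q_\head, q_{\head^2}, q_\tail, q_{\tail^2}\}$ (if it is a distribution, the bound below is state-independent, so averaging finishes). Conditioned on $q$, the two copies use independent randomness, so the probability in the lemma equals $a_k(q)\, b_{\hat k}(q)$, where $a_j(q)$ (resp.\ $b_j(q)$) is the probability that one run of $\sigma^*$ from $q$ outputs $\head$ (resp.\ $\tail$) in each of its first $j$ rounds. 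The map that swaps $\head \leftrightarrow \tail$, $q_\head \leftrightarrow q_\tail$, $q_{\head^2} \leftrightarrow q_{\tail^2}$ and fixes $q_\origin$ is readily checked to commute with $\sigma^*$, so $b_{\hat k}(q) = a_{\hat k}(\bar q)$ for the mirror state $\bar q$. Writing $A_j \defeq 2^{-j}(1-\beta)^{j-1}$, it therefore suffices to prove $a_k(q)\, a_{\hat k}(\bar q) \le A_k A_{\hat k}$ for every state $q$ and all $k, \hat k \ge 1$.

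The structural point is that $\sigma^*$ is forced after two identical outputs: from $q_{\head^2}$ it outputs $\tail$ surely, and from $q_\head$ it outputs $\head$ with probability $\tfrac{1-\beta}{2}$ and then sits in $q_{\head^2}$. Hence $a_j$ has a short closed form: $a_j(q_{\head^2}) = 0$ for $j \ge 1$; $a_1(q_\head) = \tfrac{1-\beta}{2}$ and $a_j(q_\head) = 0$ for $j \ge 2$; $a_1(q_\origin) = \tfrac12$, $a_2(q_\origin) = \tfrac{1-\beta}{4}$, $a_j(q_\origin) = 0$ for $j \ge 3$; $a_1(q_\tail) = \tfrac{1+\beta}{2}$, $a_2(q_\tail) = \tfrac{(1+\beta)(1-\beta)}{4}$, $a_j(q_\tail) = 0$ for $j \ge 3$; and the $q_{\tail^2}$ values are those of $q_\origin$ shifted by one round. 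Now split on $q$. If $q \in \{q_{\head^2}, q_{\tail^2}\}$ then one of $a_k(q)$, $a_{\hat k}(\bar q)$ is $0$ for $k, \hat k \ge 1$, so the bound is trivial. If $q = q_\origin$ then $\bar q = q_\origin$ and $a_j(q_\origin) \le A_j$ for all $j$ (this also follows from Lemma~\ref{lem:forest-ocs-from-origin} with $i = 1$), and multiplying gives the claim.

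The only remaining case is $q = q_\head$ (and $q = q_\tail$, which is identical after swapping the two copies). Since $a_k(q_\head) = 0$ unless $k = 1$, and $a_1(q_\head) = (1-\beta)A_1$, it remains to check $(1-\beta)\, a_{\hat k}(q_\tail) \le A_{\hat k}$: for $\hat k \ge 3$ the left side is $0$; for $\hat k = 1$ this reads $(1-\beta)\tfrac{1+\beta}{2} \le \tfrac12$; and for $\hat k = 2$ it reads $(1-\beta)\tfrac{(1+\beta)(1-\beta)}{4} \le \tfrac{1-\beta}{4}$. Both reduce to $1 - \beta^2 \le 1$ and hence hold, which finishes the proof.

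I do not expect a genuine obstacle here; the content is entirely in the bookkeeping. The two things that warrant care are verifying that the mirror map really is a symmetry of $\sigma^*$, and checking that every reachable initial state is covered by the case analysis. It is also worth noting that the state-by-state form of the bound is exactly what is needed in Case~3 of Lemma~\ref{lem:forest-ocs}: there the shared initial state $q^p$ of the common parent is itself random and the two children invoke $\sigma^*$ from it independently, so a bound holding uniformly over initial states (rather than for one prescribed initial distribution) is required.
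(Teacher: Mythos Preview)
Your proof is correct and follows essentially the same approach as the paper: condition on the common initial state, factor by independence, dispose of $q_{\head^2}$ and $q_{\tail^2}$ because one factor vanishes, handle $q_\origin$ via the bound $a_j(q_\origin)\le A_j$, and for $q_\head$ (equivalently $q_\tail$) reduce to the two nontrivial small cases $k=1$, $\hat k\in\{1,2\}$ which both amount to $1-\beta^2\le 1$. Your explicit introduction of the mirror symmetry $b_j(q)=a_j(\bar q)$ and the tabulated values $a_j(\cdot)$ is a clean way to organize the same computation the paper does case by case.
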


\subsection{Hardness}

\citet{FahrbachHTZ:FOCS:2020} rule out the possibility of $1$-OCS.
This subsection improves the upper bound to $\frac{1}{4}$.
This holds even for algorithms with unlimited computational power, and even if the algorithms know the instance beforehand.

\begin{theorem}
    \label{thm:ocs-hardness}
    There is no $(\frac{1}{4}+\epsilon)$-OCS for any constant $\epsilon > 0$.
\end{theorem}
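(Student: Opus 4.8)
To rule out a $(\frac14+\epsilon)$-OCS it suffices to exhibit, for every constant $\epsilon>0$, an online selection instance together with an element $e$ and a choice of disjoint consecutive blocks of $e$'s appearances, with block lengths $k_1,\dots,k_m$, on which \emph{every} online selection algorithm leaves $e$ unselected in the chosen rounds with probability strictly larger than $\prod_{i=1}^m 2^{-k_i}(1-\frac14-\epsilon)^{k_i-1}$. I would use blocks of length $2$ only, so that the target is to show the unselected probability on the $m$ blocks cannot be pushed below $\big(\frac{3}{16}\big)^m$ as $m\to\infty$; since $m$ is unbounded this rules out every $\gamma>\frac14$, and (together with Theorem~\ref{thm:ocs}) localizes the optimal level of negative correlation to $[0.167,\frac14]$. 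I would also note that a single block of length $k$ with target $2^{-k}(\frac34)^{k-1}$ is an equivalent phrasing, but the multi-block version is cleaner because the obstruction is exactly that the algorithm cannot optimize all blocks simultaneously.

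The first ingredient is to pin all marginals. I would design the instance so that in every round both of its two elements appear in at least two rounds; then applying the $k=1$ case of the $\gamma$-OCS guarantee to each of the two elements forces each to be selected with probability exactly $\frac12$ in that round. This is used repeatedly through the following elementary implication: if $\Pr[B_s=1]=\frac12$ for a round-$s$ indicator and a two-round joint event has $\Pr[B_s=1,B_{s'}=0]\le\frac{1-\gamma}{4}$, then the complementary joint event satisfies $\Pr[B_s=1,B_{s'}=1]\ge\frac{1+\gamma}{4}$. Writing for each round $t$ the indicator $B_t$ that (say) the left element is selected, the $\gamma$-OCS guarantee becomes a finite collection of linear inequalities among the pairwise joints $\Pr[B_s,B_{s'}]$ over consecutive-appearance pairs $(s,s')$, while online-realizability imposes that the joint law of $(B_1,\dots,B_T)$ factor causally.

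The instance itself is the delicate part. I would place $e$ in rounds forming $m$ length-$2$ blocks and glue consecutive blocks through their partners: the partner of $e$ in the second round of block $i$ and the partner of $e$ in the first round of block $i+1$ should themselves form a length-$2$ block (via a round between the two blocks containing exactly those two partners), and recursively the partners of those partners, so that no element carries any slack. With the marginals pinned and the instance glued this way, I would assume a $(\frac14+\epsilon)$-OCS and chain the "$\le\frac{1-\gamma}{4}$ forces $\ge\frac{1+\gamma}{4}$" implication along the recursive web --- equivalently, exhibit a dual certificate for the induced LP --- to show that the forced masses accumulate to contradict a pinned marginal (or the union bound on a block), with the accumulation becoming fatal exactly at $\gamma=\frac14$; quantitatively, the per-block loss the algorithm is forced to incur is a factor bounded away from $\frac{1-\gamma}{4}$, and multiplied over the $m$ blocks it exceeds the OCS product bound once $\gamma>\frac14$. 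Finally, the construction should make transparent that $\gamma=\frac14$ itself is not refuted, matching the upper bound.

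\textbf{Main obstacle.} The crux is the instance of the previous paragraph and the accompanying LP argument. All the obvious small configurations are not hard: a path or cycle built from the same two elements, a triangle, and even the three-round gadget with rounds $\{e,a_1\},\{a_1,a_2\},\{e,a_2\}$ all admit an online algorithm that anti-correlates $e$'s appearances perfectly (``alternate'') and drives the unselected probability to $0$ --- so single blocks, paths, and simple cycles can never give a lower bound. The difficulty is to couple the partners so that \emph{simultaneous} zeroing of all $m$ blocks is provably impossible, while keeping the resulting linear program transparent enough that the threshold $\frac14$ emerges from an explicit dual solution (or an explicit potential that telescopes along the web) rather than from a numerical search; this is also why the bound should be stated with $\epsilon$ and proved in a limit $m\to\infty$ rather than on a single finite gadget.
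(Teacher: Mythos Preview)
Your instinct to work with length-$2$ blocks and pin marginals is correct, but the proposal has a genuine gap: you have not found the instance, and your recursive gluing scheme is headed in the wrong direction. The paper's instance is dramatically simpler than what you are building toward. It uses only \emph{three} elements $\{0,1,2\}$, with element $0$ in \emph{every} round and partners alternating: odd rounds are $\{0,1\}$, even rounds are $\{0,2\}$. There is no recursion of partners---elements $1$ and $2$ never meet each other---and there are no rounds inserted ``between blocks.''

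The argument you are missing is not a chain of implications along a recursive web but a global counting argument. Write down \emph{all} the length-$2$ OCS constraints simultaneously: for element $0$ these are the events ``$0$ unselected in rounds $t$ and $t'$'' over pairs of opposite parity; for elements $1$ and $2$ these become the events ``$0$ selected in both rounds $t$ and $t'$'' over pairs of the same parity. Add in the $k=1$ constraints on the even rounds. Sum all the resulting probability upper bounds; this gives $\frac{2i^2+3i}{4}-\frac{4i-1}{4}\gamma$. Now observe that for any single realization $(s^1,\dots,s^T)$, if $j_o$ and $j_e$ count the odd and even rounds selecting $0$, the number of these events satisfied equals $j_e+(i+1-j_o)(i-j_e)+\binom{j_o}{2}+\binom{j_e}{2}$, which is a quadratic in $(j_o,j_e)$ minimized at $\frac{i(i+1)}{2}$. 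Comparing the two sides forces $\gamma\le\frac{i}{4i-1}\to\frac14$.

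Your observation that simple configurations admit perfect alternation is exactly why a single element's blocks cannot work; the paper's trick is that element $0$'s constraints and elements $1,2$'s constraints pull in \emph{opposite} directions---selecting $0$ twice in a row violates a constraint for $1$ or $2$, while not selecting $0$ in two consecutive rounds violates a constraint for $0$---and the counting argument quantifies this tension globally without ever tracking correlations along a chain or solving an LP.
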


\begin{proof}
    Consider any $\gamma$-OCS.
    Consider an online selection instance with three elements $\{0, 1, 2\}$ and $T = 2i+1$ rounds.
    The elements in odd rounds are $\{0, 1\}$;
    the elements in even rounds are $\{0, 2\}$.
    We shall prove that $\gamma \le \frac{i}{4i-1}$ even if we omit the properties of $\gamma$-OCS concerning three or more rounds.
    Theorem~\ref{thm:ocs-hardness} then follows by choosing a sufficiently large $i$.

    For any even $1 \le t \le T$ (so that the elements are $\{0, 2\}$), let $A^t$ be the event that the algorithm selects element $0$ in rounds $t$.
    We have:
    \[
        \forall 1 \le t \le T,~t \equiv 0 \bmod{2} ~: \qquad \Pr \big[ A^t \big] = \frac{1}{2}
        ~.
    \]

    For any $1 \le t < t' \le T$ with distinct parities (so that $0$ is the only common element), let $B^{t,t'}$ be the event that the algorithms does not select element $0$ in both rounds $t$ and $t'$.
    We have:
    \[
        \forall 1 \le t < t' \le T,~t + t' \equiv 1 \bmod{2} ~: \qquad
        \Pr \big[ B^{t, t'} \big] \le
        \begin{cases}
            \frac{1}{4} & t' \ge t+3 ~;\\[1ex]
            \frac{1-\gamma}{4} & t' = t+1 ~.
        \end{cases}
    \]

    For any $1 \le t < t' \le T$ with the same parity (so that they have the same two elements), let $C^{t,t'}$ be the event that the algorithms selects element $0$ in both rounds $t$ and $t'$ (i.e., it does not select the other element in both rounds).
    We have:
    \[
        \forall 1 \le t < t' \le T,~t + t' \equiv 0 \bmod{2} ~: \qquad
        \Pr \big[ C^{t, t'} \big] \le
        \begin{cases}
            \frac{1}{4} & t' \ge t+4 ~;\\[1ex]
            \frac{1-\gamma}{4} & t' = t+2 ~.
        \end{cases}
    \]

    Summing together:
    \begin{align}
        \sum_{t \equiv 0 \bmod{2}} \Pr \big[ A^t \big] & + \sum_{t < t' : t+t' \equiv 1 \bmod{2}} \Pr \big[ B^{t,t'} \big] + \sum_{t < t' : t+t' \equiv 1 \bmod{2}} \Pr \big[ C^{t,t'} \big]
        \notag
        \\[1ex]
        &
        \le \underbrace{\frac{1}{2} \cdot i}_\text{events $A^t$} + \underbrace{\frac{1}{4} \cdot i(i-1) + \frac{1-\gamma}{4} \cdot 2i}_\text{events $B^{t, t'}$} + \underbrace{\frac{1}{4} \cdot (i-1)^2 + \frac{1-\gamma}{4} \cdot (2i-1)}_\text{events $C^{t, t'}$}
        \notag
        \\[1ex]
        &
        = \frac{2i^2+3i}{4} - \frac{4i-1}{4} \gamma
        ~.
        \label{eqn:ocs-hardness-1}
    \end{align}

    Next consider any selections $(s^1, s^2, \dots, s^T)$ by the algorithm. 
    Let $j_o$ and $j_e$ be the numbers of odd and even rounds that select $s^t = 0$ respectively.
    The number of events that it satisfies equals:
    \begin{align*}
        \underbrace{\vphantom{\bigg|}j_e}_\text{events $A^t$}
        +
        \underbrace{\vphantom{\bigg|}(i+1-j_o)(i-j_e)}_\text{events $B^{t,t'}$}
        +
        \underbrace{\vphantom{\bigg|}\binom{j_o}{2} + \binom{j_e}{2}}_\text{events $C^{t,t'}$}
        &
        =
        \frac{1}{2} \big(j_o+j_e-i-\frac{1}{2}\big)^2+\frac{i(i+1)}{2}-\frac{1}{8} \\
        &
        \ge \frac{i(i+1)}{2}
        ~.
        \tag{$i, j_o, j_e$ are integers}
    \end{align*}

    As a result we get that:
    \begin{equation}
        \label{eqn:ocs-hardness-2}
        \sum_{t \equiv 0 \bmod{2}} \Pr \big[ A^t \big] + \sum_{t < t' : t+t' \equiv 1 \bmod{2}} \Pr \big[ B^{t,t'} \big] + \sum_{t < t' : t+t' \equiv 1 \bmod{2}} \Pr \big[ C^{t,t'} \big] \ge \frac{i(i+1)}{2}
        ~.
    \end{equation}

    Combining Equations~\eqref{eqn:ocs-hardness-1} and \eqref{eqn:ocs-hardness-2} gives $\gamma \le \frac{i}{4i-1}$ as desired.
\end{proof}

\section{Applications in Online Bipartite Matching}
\label{sec:matching}

\subsection{Online Bipartite Matching Preliminaries}

Consider an undirected bipartite graph $G = (L, R, E)$, where $L$ and $R$ are the sets of left-hand-side and right-hand-side vertices respectively, and $E$ is the set of edges.
Each edge $(u, v) \in E$ has a positive edge-weight $w_{uv} > 0$.
The problem is \emph{unweighted} if $w_{uv} = 1$ for all $(u, v) \in E$, is \emph{vertex-weighted} if $w_{uv} = w_u$ for some positive vertex-weights $(w_u)_{u \in L}$ of the left-hand-side vertices, and is \emph{edge-weighted} if the edge-weights could be arbitrary.

In online bipartite matching problems, we refer to the left-hand-side and right-hand-side vertices as \emph{offline} and \emph{online} vertices respectively.
Initially, the algorithm only knows the offline vertices, and the vertex-weights in the vertex-weighted case.
Then, the online vertices arrive one at a time.
When an online vertex $v \in R$ arrives, the algorithm sees its incident edges, and the edge-weights in the edge-weighted case.
The algorithm then immediately and irrevocably matches $v$ to an offline neighbor $u \in L$.

The objective is to maximize the sum of the maximal edge-weight matched to each offline vertex.
In the unweighted and vertex-weighted problems, matching an offline vertex more than once does not further increase the objective.
Therefore, we may assume without loss of generality that the algorithm matches each offline vertex at most once and the matched edges indeed form matching.
The objectives in these two cases are equivalent to maximizing the cardinality of the matching, and maximizing the sum of the vertex-weights of matched offline vertices, respectively.

In edge-weighted online bipartite matching, we may alternatively view the above objective as allowing disposals of previously matched edges so that a matched offline vertex could be rematched to a new edge with a larger edge-weight.
In other words, we may think of the matching as being comprised of the heaviest edge matched to each offline vertex, and seek to maximize the total edge-weight of the matching.
Further in online advertising, it corresponds to displaying an advertiser's ad multiple times but only charges for the most valuable one.
\citet{FeldmanKMMP:WINE:2009} introduce this \emph{free disposal} model which has then become the standard model of edge-weighted online bipartite matching under worst-case competitive analysis.

We compare the expected objective of the matching by the algorithm, and the optimal matching that maximizes the objective in hindsight given full information of the bipartite graph $G = (L, R, E)$ and the edge-weights $(w_{uv})_{(u, v) \in E}$.
The competitive ratio of an online algorithm is the infimum of this ratio over all possible instances.

\subsection{Semi-OCS and Unweighted and Vertex-weighted Online Bipartite Matching}
\label{sec:semi-ocs-unweighted-vertex-weighted-matching}

\citet{FahrbachHTZ:FOCS:2020} give a two-choice greedy algorithm for unweighted online bipartite matching, using a semi-OCS as a sub-routine.
Their original theorem is only for the unweighted problem and only for the guarantee of $\gamma$-semi OCS, i.e., $p(k) = 2^{-k}(1-\gamma)^{k-1}$.
Nonetheless, the algorithm and analysis generalize to the vertex-weighted case and for general $p(k)$ by standard techniques in the online matching literature.
We state the more general theorem below.

\begin{theorem}[c.f., \citet{FahrbachHTZ:FOCS:2020}]
    \label{thm:semi-ocs-unweighted-vertex-weighted-matching}
    Given a semi-OCS such that the probability of never selecting an element $e$ that appears $k$ times is at most $p(k)$, there is a\, $\Gamma$-competitive two-choice greedy algorithm for unweighted and vertex-weighted online bipartite matching, where the competitive ratio $\Gamma$ is the optimal value of the following linear program (LP):
    \begin{align}
        \text{\rm maximize} \quad
        &
        \Gamma
        \tag{\matchinglp}
        \\[2ex]
        \text{\rm subject to} \quad
        &
        a(k)+b(k)\leq p(k)-p(k+1) & \forall k\geq 0
        \label{eqn:two-way-lp-gain-split}
        \\
        &
        \sum_{i=0}^{k-1} a(i)+2 b(k)\geq \Gamma & \forall k\geq 0
        \label{eqn:approximate-dual-feasible}
        \\
        & b(k+1)\leq b(k) & \forall k\geq 0\label{eqn:two-way-lp-monotone}
        \\[2ex]
        & a(k),b(k)\geq 0 & \forall k \ge 0
        \notag
    \end{align}
\end{theorem}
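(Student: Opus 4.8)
The plan is to analyze the two-choice greedy algorithm of \citet{FahrbachHTZ:FOCS:2020} --- generalized to vertex weights and to an arbitrary bound $p$ --- via the online primal-dual method. Recall the algorithm: for each offline vertex $u$ keep a level $k_u$, the number of previously processed online vertices that have chosen $u$ as one of their two candidates (initially $0$); when $v$ arrives, it picks the two neighbours $u_1, u_2$ of highest priority, where the priority is a non-increasing function of $k_u$ (for the unweighted case simply the two neighbours of smallest level, and in general a weight-scaled variant), passes the pair $\{u_1,u_2\}$ to the given semi-OCS, increments $k_{u_1}$ and $k_{u_2}$, and matches $v$ to the element the semi-OCS returns. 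I would fix an optimal solution $(\Gamma,a,b)$ of \matchinglp and maintain dual variables $\alpha_u$ (offline) and $\beta_v$ (online), all initially $0$: when $v$ picks $u_1,u_2$ at levels $k_1,k_2$, increase $\alpha_{u_i}$ by $w_{u_i}a(k_i)$ and set $\beta_v$ from the matching $b$-terms of $u_1$ and $u_2$. Then after the whole instance, writing $k_u$ for the final level of $u$, we have $\alpha_u = w_u\sum_{i=0}^{k_u-1}a(i)$ and $\sum_v \beta_v = \sum_u w_u \sum_{i=0}^{k_u-1} b(i)$, while the levels (hence these quantities) are deterministic since priorities depend only on levels and weights.

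\textbf{Step 1: objective bound.} First I would show $\E[\alg] \ge \sum_u \alpha_u + \sum_v \beta_v$. An offline vertex $u$ contributes $w_u$ to $\alg$ exactly when the semi-OCS selects it at least once, which by the semi-OCS guarantee happens with probability at least $1-p(k_u)$; hence $\E[\alg] \ge \sum_u w_u\big(1-p(k_u)\big)$. On the other hand, constraint~\eqref{eqn:two-way-lp-gain-split} gives $a(i)+b(i)\le p(i)-p(i+1)$ for every $i$, so summing over $i=0,\dots,k_u-1$ and telescoping with $p(0)=1$ yields $\sum_{i=0}^{k_u-1}(a(i)+b(i))\le 1-p(k_u)$; combining with the closed forms for $\alpha_u$ and $\sum_v\beta_v$ finishes this step.

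\textbf{Step 2: approximate dual feasibility.} Next I would check $\alpha_u+\beta_v \ge \Gamma\, w_{uv}$ for every edge $(u,v)$, so that $(\alpha/\Gamma,\beta/\Gamma)$ is feasible for the dual of the matching LP. Fix $(u,v)$ and let $k$ be the level of $u$ when $v$ is processed; since levels only increase and $a\ge 0$, at the end $\alpha_u \ge w_u\sum_{i=0}^{k-1}a(i)$, with one extra $a$-term if $u$ is chosen by $v$. If $u$ is \emph{not} chosen by $v$, both of $v$'s candidates have at least as high a priority as $u$, which forces $\beta_v \ge 2\,w_u\,b(k) \ge 2\,w_u\,b(k_u)$ by the choice of priorities and monotonicity of $b$ (constraint~\eqref{eqn:two-way-lp-monotone}); then constraint~\eqref{eqn:approximate-dual-feasible} at index $k_u$ gives $\alpha_u+\beta_v \ge w_u\big(\sum_{i=0}^{k_u-1}a(i)+2b(k_u)\big)\ge \Gamma\, w_u$. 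If $u$ \emph{is} chosen by $v$, I would combine $\alpha_u \ge w_u\sum_{i=0}^{k}a(i)$ with a lower bound on $\beta_v$ from $u$'s own $b$-term and apply constraint~\eqref{eqn:approximate-dual-feasible} at index $k+1$, again using monotonicity of $b$. Finally, weak LP duality for the matching LP gives $\sum_u\alpha_u+\sum_v\beta_v \ge \Gamma\cdot\mathrm{OPT}$, and together with Step~1 this yields $\E[\alg]\ge\Gamma\cdot\mathrm{OPT}$; since $\Gamma$ is the LP optimum, this is the claimed ratio (only this direction is needed).

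\textbf{Main obstacle.} I expect the delicate point to be Step~2 in the subcase where the offline endpoint $u$ is the \emph{more attractive} of $v$'s two candidates: there $\beta_v$ collects only $u$'s own $b$-term plus the possibly much smaller $b$-term of the runner-up, so certifying $\alpha_u+\beta_v\ge\Gamma w_u$ requires using the precise greedy rule together with constraint~\eqref{eqn:two-way-lp-monotone} --- and, where that is not enough, the fact that a heavily preferred offline vertex is offered often and hence already accumulates a large $\alpha_u$. This is exactly the place where the monotonicity constraint earns its keep, and it is the part for which ``standard techniques in the online matching literature'' (as invoked in the statement of Theorem~\ref{thm:semi-ocs-unweighted-vertex-weighted-matching}) are needed; everything else is bookkeeping.
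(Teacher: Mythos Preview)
Your overall scheme---online primal-dual with offline charges $\alpha_u=w_u\sum_{i<k_u}a(i)$, online charges $\beta_v$ built from the $b$-terms, constraint~\eqref{eqn:two-way-lp-gain-split} for the objective bound, and constraint~\eqref{eqn:approximate-dual-feasible} for approximate dual feasibility---is exactly the paper's approach; indeed the paper derives Theorem~\ref{thm:semi-ocs-unweighted-vertex-weighted-matching} as the vertex-weighted specialization of the edge-weighted Theorem~\ref{thm:two-way-online-primal-dual-analysis}, whose proof follows your outline. Step~1 is correct as written.

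The gap is precisely your flagged obstacle, and your fallback does not close it. Take a one-round instance where $v$'s only neighbours are $u$ of weight $W\to\infty$ and $u'$ of weight $1$, both at level $0$: simultaneous picking gives $\alpha_u+\beta_v = W\bigl(a(0)+b(0)\bigr)+b(0)$, so $(\alpha_u+\beta_v)/w_u\to a(0)+b(0)\le 1-p(1)$ by~\eqref{eqn:two-way-lp-gain-split}, which is $<\Gamma$ whenever $\Gamma>1-p(1)$, i.e.\ in the entire interesting regime; there are no further rounds for $u$ to ``accumulate a large $\alpha_u$.'' The paper's fix is not a sharper inequality but a refinement of the greedy rule: choose $u_1$ and $u_2$ \emph{sequentially}. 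With priority $w_u\,b(k_u)$, pick $u_1=\arg\max_u w_u b(k_u)$, \emph{increment $k_{u_1}$}, then pick $u_2=\arg\max_u w_u b(k_u)$ using the updated levels, allowing $u_2=u_1$ (a deterministic round in which $u_1$ is matched outright and its level increases by~$2$). After this change, for \emph{every} neighbour $u$ of $v$---including $u_1$ and $u_2$---one has $\beta_v\ge 2\,w_u\,b(k_u^{\mathrm{final}})$: each of the two summands in $\beta_v$ was the maximum of $w_{u'}b(k_{u'})$ at the moment it was chosen, and $u$'s level at that moment is at most its final level (in particular the second summand dominates $w_{u_1}b(k_{u_1}^{\mathrm{old}}+1)$, which handles the case $u=u_1$). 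Combining with $\alpha_u=w_u\sum_{i<k_u^{\mathrm{final}}}a(i)$ and constraint~\eqref{eqn:approximate-dual-feasible} at the final level gives $\alpha_u+\beta_v\ge\Gamma w_u$ uniformly, with no case distinction.
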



We will not present the generalized algorithm and the proof of Theorem~\ref{thm:semi-ocs-unweighted-vertex-weighted-matching} because they will be subsumed by the algorithm and theorem in the next subsection.
Instead, the main result of this subsection is an explicit optimal solution to the LP.
By contrast, \citet{FahrbachHTZ:FOCS:2020} rely on solving a finite approximation of the LP numerically using LP solvers.


\begin{theorem}
    \label{thm:lp-solution}
    Suppose that $p(0) = 1$ and $p(k+1)\leq \frac{2}{3}p(k)$ for any $k \ge 0$. 
    Then, the \matchinglp admits an optimal solution as follows: 
    \begin{align*}
        \Gamma
        &
        = 1-\frac{1}{3} \sum_{i=0}^{\infty} \Big(\frac{2}{3}\Big)^i p(i)
        ~;
        \\
        b(k)
        &
        =
        \frac{1}{3} \sum_{i=k}^{\infty} \Big(\frac{2}{3}\Big)^{i-k} \big( p(i)-p(i+1) \big)
        &
        \forall k\ge 0
        ~;
        \\[1ex]
        a(k)
        &
        = p(k)-p(k+1)-b(k)
        &
        \forall k \ge 0
        ~.
    \end{align*}
\end{theorem}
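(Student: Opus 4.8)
The plan is to prove the theorem in two steps: first show the stated triple $(\Gamma, a, b)$ is feasible for the \matchinglp, and then show no feasible solution can beat it, by taking a suitable nonnegative combination of the constraints (equivalently, exhibiting an explicit optimal dual solution). It is convenient to abbreviate $S(k) = \sum_{i \ge k} (2/3)^{i-k} p(i)$, so that $S(k) = p(k) + \tfrac23 S(k+1)$ and $\Gamma = 1 - \tfrac13 S(0)$; iterating $p(k+1) \le \tfrac23 p(k)$ from $p(0) = 1$ gives $p(k) \le (2/3)^k$, so $S(0) < \infty$ and every series in sight converges absolutely. In particular the supremum of the \matchinglp will be attained once both bounds are in place.

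For feasibility, I would first record the recursion that the closed form for $b$ satisfies, namely $b(k) = \tfrac13\big(p(k) - p(k+1)\big) + \tfrac23\, b(k+1)$, and extract from it and the definition $a(k) = p(k)-p(k+1)-b(k)$ the identity $a(k) = 2\big(b(k) - b(k+1)\big)$. Then \eqref{eqn:two-way-lp-gain-split} holds with equality by definition of $a$; \eqref{eqn:approximate-dual-feasible} I would prove to hold with equality for all $k$ by induction, the base case being the telescoping identity $2 b(0) = \Gamma$ (here $p(0)=1$ enters) and the inductive step using $a(k) = 2(b(k)-b(k+1))$; nonnegativity of $b$ is immediate from its closed form since $p$ is non-increasing (itself a consequence of $p(k+1)\le\tfrac23 p(k)\le p(k)$). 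The remaining assertions, $a(k)\ge 0$ and the monotonicity \eqref{eqn:two-way-lp-monotone}, are equivalent through $a(k)=2(b(k)-b(k+1))$, and I would obtain $b(k)-b(k+1)\ge 0$ by writing $b(k)-b(k+1) = \tfrac13 p(k) - \tfrac12 p(k+1) + \tfrac{1}{18}S(k+1)$, using $S(k+1)\ge p(k+1)$ to reduce to $\tfrac13 p(k) - \tfrac49 p(k+1)$, and finally invoking $p(k+1)\le\tfrac23 p(k)$ to conclude $b(k)-b(k+1)\ge\tfrac{1}{27}p(k)\ge 0$.

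For the matching upper bound, let $(\Gamma', a', b')$ be any feasible solution. I would multiply constraint \eqref{eqn:approximate-dual-feasible} at level $k$ by $\mu_k = \tfrac13(2/3)^k$ (so that $\sum_k \mu_k = 1$) and sum over $k\ge 0$. The right-hand side becomes $\Gamma'$. On the left, $a'(i)$ accumulates total weight $\sum_{k>i}\mu_k = (2/3)^{i+1}$ and $b'(i)$ accumulates weight $2\mu_i = (2/3)^{i+1}$; the crucial point is that these coincide, so the left-hand side equals $\sum_i (2/3)^{i+1}\big(a'(i)+b'(i)\big)$, which by \eqref{eqn:two-way-lp-gain-split} is at most $\sum_i (2/3)^{i+1}\big(p(i)-p(i+1)\big)$. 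A short telescoping computation identifies this last quantity as $1 - \tfrac13 S(0) = \Gamma$, giving $\Gamma'\le\Gamma$. (Phrased dually, $\lambda_k = (2/3)^{k+1}$, $\mu_k = \tfrac13(2/3)^k$, and zero multipliers on \eqref{eqn:two-way-lp-monotone} form a feasible solution of the dual LP with value $\Gamma$; the monotonicity constraints are not even needed for the bound, and complementary slackness against our primal solution is what suggested these multipliers.)

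The step I expect to be the main obstacle is the feasibility direction, specifically $a(k)\ge 0$ / the monotonicity \eqref{eqn:two-way-lp-monotone}: this is the only place the structural hypothesis $p(k+1)\le\tfrac23 p(k)$ is essential, and a clean argument requires rewriting $b(k)-b(k+1)$ in the form above so that the hypothesis can be applied directly rather than bounding the geometric tail too crudely (a naive bound $p(k+j)-p(k+j+1)\le(2/3)^j p(k)$ loses too much). Everything else — the recursion and telescoping identities, the induction establishing \eqref{eqn:approximate-dual-feasible}, and the weak-duality computation — is routine bookkeeping, the only care being absolute convergence of the infinite sums, which is supplied by the decay $p(k)\le(2/3)^k$.
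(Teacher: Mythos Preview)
Your proposal is correct and follows essentially the same route as the paper: feasibility via the recursion $b(k)=b(k-1)-\tfrac12 a(k-1)$ (equivalently your $a(k)=2(b(k)-b(k+1))$) to get \eqref{eqn:approximate-dual-feasible} with equality, and optimality via summing \eqref{eqn:approximate-dual-feasible} against weights $(2/3)^k$ and invoking \eqref{eqn:two-way-lp-gain-split}. The only cosmetic difference is in verifying $a(k)\ge 0$: the paper bounds $b(k)\le\tfrac13 p(k)\le p(k)-p(k+1)$ directly, which is a shade quicker than your computation of $b(k)-b(k+1)$, but both arguments use the hypothesis $p(k+1)\le\tfrac23 p(k)$ in the same way.
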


The assumption of $p(k+1)\leq \frac{2}{3}p(k)$ is essentially without loss of generality since any natural online selection algorithm shall at least halve the unselected probability after each round involving the element.
Indeed, even the trivial independent sampling satisfies the stronger $p(k+1)\leq \frac{1}{2}p(k)$. The proof of this theorem is deferred to Appendix~\ref{app:lp-solution}.

For $\gamma$-semi-OCS, it recovers a result by \citet{HuangZZ:FOCS:2020} as a corollary.

\begin{corollary}[c.f., \citet{HuangZZ:FOCS:2020}]
    Suppose that $p(k) = 2^{-k} (1-\gamma)^{k-1}$.
    Then the optimal value of the \matchinglp is:
    \[
        \frac{3+2\gamma}{6+3\gamma}
        ~.
    \]
\end{corollary}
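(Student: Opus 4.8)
The plan is to invoke Theorem~\ref{thm:lp-solution} directly and then evaluate its closed-form expression for $\Gamma$ at the sequence $p(k) = 2^{-k}(1-\gamma)^{k-1}$. One point to make explicit up front: the natural convention for semi-OCS is $p(0) = 1$, since an element appearing in zero rounds is certainly unselected, while the formula $2^{-k}(1-\gamma)^{k-1}$ is intended as the bound for $k \ge 1$. So I would first fix $p(0) = 1$ and $p(k) = 2^{-k}(1-\gamma)^{k-1}$ for $k \ge 1$.

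First I would verify the two hypotheses of Theorem~\ref{thm:lp-solution}. The condition $p(0) = 1$ holds by the convention above. For the ratio condition $p(k+1) \le \tfrac{2}{3} p(k)$: at $k = 0$ we have $p(1) = \tfrac{1}{2} \le \tfrac{2}{3} = \tfrac{2}{3}\,p(0)$, and for $k \ge 1$ the ratio is $p(k+1)/p(k) = \tfrac{1-\gamma}{2} \le \tfrac{1}{2} \le \tfrac{2}{3}$ because $\gamma \in [0,1]$. Hence Theorem~\ref{thm:lp-solution} applies, and the optimal value of the \matchinglp equals $\Gamma = 1 - \tfrac{1}{3}\sum_{i=0}^{\infty} \big(\tfrac{2}{3}\big)^i p(i)$.

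Then I would compute the geometric sum. Splitting off the $i=0$ term and using $\big(\tfrac{2}{3}\big)^i 2^{-i} = \big(\tfrac{1}{3}\big)^i$, one gets $\sum_{i=0}^{\infty} \big(\tfrac{2}{3}\big)^i p(i) = 1 + \sum_{i=1}^{\infty} \big(\tfrac{1}{3}\big)^i (1-\gamma)^{i-1} = 1 + \tfrac{1}{3}\sum_{j=0}^{\infty}\big(\tfrac{1-\gamma}{3}\big)^j = 1 + \tfrac{1}{2+\gamma} = \tfrac{3+\gamma}{2+\gamma}$, where the series converges since $\tfrac{1-\gamma}{3} < 1$. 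Substituting back, $\Gamma = 1 - \tfrac{1}{3}\cdot\tfrac{3+\gamma}{2+\gamma} = \tfrac{3(2+\gamma) - (3+\gamma)}{3(2+\gamma)} = \tfrac{3+2\gamma}{6+3\gamma}$, as claimed. There is essentially no obstacle: the corollary is a routine specialization of Theorem~\ref{thm:lp-solution}, and the only point requiring a moment's care is the boundary case, namely using $p(0) = 1$ and checking that the ratio hypothesis still holds at $k=0$ (where the ratio is $\tfrac{1}{2}$ rather than $\tfrac{1-\gamma}{2}$); the rest is a one-line geometric-series evaluation.
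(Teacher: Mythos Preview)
Your proposal is correct and is exactly the intended approach: the paper presents this corollary as an immediate specialization of Theorem~\ref{thm:lp-solution} without a written-out proof, and your computation (including the care taken with the $p(0)=1$ boundary convention and the verification of the ratio hypothesis at $k=0$) is precisely what filling in that specialization requires.
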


Since the optimal semi-OCS in Section~\ref{sec:semi-ocs} gives $p(k) = 2^{-2^k+1}$, we have the next corollary through a numerical calculation.

\begin{corollary}
    \label{cor:semi-ocs-unweighted-vertex-weighted-matching}
    The two-choice greedy algorithm using the optimal semi-OCS as a sub-routine is at least $0.536$-competitive for unweighted and vertex-weighted online bipartite matching.
\end{corollary}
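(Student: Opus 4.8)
The plan is to chain together three results already established in the excerpt and finish with a short numerical estimate. By Theorem~\ref{thm:optimal-semi-ocs}, Algorithm~\ref{alg:optimal-semi-ocs} is a semi-OCS whose probability of never selecting an element that appears $k$ times is at most $p(k) = 2^{-2^k+1}$. Theorem~\ref{thm:semi-ocs-unweighted-vertex-weighted-matching} then says the induced two-choice greedy algorithm is $\Gamma$-competitive, where $\Gamma$ is the optimal value of the \matchinglp with this $p$. So it suffices to compute $\Gamma$, and here Theorem~\ref{thm:lp-solution} does the work provided its hypotheses hold.

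The first step I would take is to verify those hypotheses for $p(k) = 2^{-2^k+1}$. We have $p(0) = 2^{-2^0+1} = 2^0 = 1$, and for any $k \ge 0$,
\[
    \frac{p(k+1)}{p(k)} = 2^{-2^{k+1} + 2^k} = 2^{-2^k} \le 2^{-1} = \frac12 \le \frac23 ~,
\]
so $p(k+1) \le \frac23 p(k)$ as required. Theorem~\ref{thm:lp-solution} then gives
\[
    \Gamma = 1 - \frac13 \sum_{i=0}^\infty \Big( \frac23 \Big)^i 2^{-2^i+1} ~,
\]
and by Theorem~\ref{thm:semi-ocs-unweighted-vertex-weighted-matching} this $\Gamma$ is exactly the competitive ratio of the algorithm.

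The final step is to bound $\Gamma$ from below by $0.536$. The series converges doubly exponentially: its $i$-th term equals $1, \tfrac13, \tfrac1{18}, \tfrac1{432}, \dots$ for $i = 0,1,2,3,\dots$, and the tail $\sum_{i \ge 4} (\tfrac23)^i 2^{-2^i+1}$ is dominated by $\sum_{i \ge 4} 2^{-2^i+1} < 2^{-14} < 10^{-4}$. Hence $\sum_{i=0}^\infty (\tfrac23)^i 2^{-2^i+1} < 1 + \tfrac13 + \tfrac1{18} + \tfrac1{432} + 10^{-4} < 1.3913$, which yields $\Gamma > 1 - \tfrac{1.3913}{3} > 0.536$.

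The main obstacle here is bookkeeping rather than any conceptual difficulty: one must confirm that the concrete $p(k)$ coming from the optimal semi-OCS satisfies the $p(k+1) \le \tfrac23 p(k)$ precondition of Theorem~\ref{thm:lp-solution}, and then carry out the numerical evaluation of the rapidly converging series carefully enough to certify the third decimal. A minor point worth double-checking is that Theorem~\ref{thm:semi-ocs-unweighted-vertex-weighted-matching} is stated for a general bound $p(k)$ (not just the $\gamma$-semi-OCS form $2^{-k}(1-\gamma)^{k-1}$), so that it legitimately applies to $p(k) = 2^{-2^k+1}$; the excerpt already states it in the required generality, so nothing new needs to be proved.
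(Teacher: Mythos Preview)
Your proposal is correct and follows essentially the same approach as the paper: the paper derives the corollary directly from Theorems~\ref{thm:optimal-semi-ocs}, \ref{thm:semi-ocs-unweighted-vertex-weighted-matching}, and \ref{thm:lp-solution} ``through a numerical calculation,'' and you have faithfully filled in that calculation, including the verification of the hypothesis $p(k+1)\le\tfrac23 p(k)$ and a clean tail bound on the series.
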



\subsection{OCS and Edge-weighted Online Bipartite Matching}
\label{sec:ocs-edge-weighted-matching}

\subsubsection{Online Primal-Dual Algorithm}

This subsection gives a variant of the online primal-dual algorithm of \citet{FahrbachHTZ:FOCS:2020} for edge-weighted online bipartite matching, using an OCS as a sub-routine.
This variant simplifies the analysis in the next subsection.
To simplify exposition, we assume that for every online vertex $v$ there is a unique offline dummy vertex such that the edge between them has weight $0$.
Then, every online vertex will be matched, although being matched to the dummy vertex is the same as being left unmatched.

For each online vertex $v$, the algorithm shortlists two candidates $u_1, u_2$ from $v$'s neighbors.
If the shortlisted candidates are the same, the algorithm matches $v$ to it.
Otherwise, the algorithm lets the OCS selects one of them and matches $v$ to the selected one.
To explain how the algorithm makes the shortlists, let $k_u(w)$ be the number of times that $u$ is shortlisted thus far due to online vertices with edge-weight $w_{uv} \ge w$.
In a round in which $u_1 = u_2 = u$, the corresponding $k_u(w)$'s increase by $2$.
We remark that $k_u(w) = 0$ for any dummy offline vertex $u$ and for any $w > 0$.
The algorithm is parameterized by the optimal solution to the $\matchinglp$ in Theorem~\ref{thm:lp-solution}.
Given the optimal solution, define the ``value'' of matching an online vertex $v$ to an offline vertex $u$ as:
\begin{align}
    \label{eqn:matching-dual-update-beta}
    \Delta_u \beta_v
    \defeq
    \int_0^{w_{uv}} b\big(k_u(w)\big)dw -\frac{1}{2} \int_{w_{uv}}^{\infty} \sum_{i=0}^{k_u(w)-1} a(i) dw
    ~.
\end{align}

For each online vertex $v$, the algorithm first finds $u_1$ with the maximum $\Delta_u \beta_v$, and then finds $u_2$ with the maximum \emph{updated} $\Delta_u \beta_v$.
If $u_1 = u_2$, match $v$ to it.
Otherwise, match $v$ to the one that the OCS selects.
Following the terminology of \citet{FahrbachHTZ:FOCS:2020}, we call the former a deterministic round, and the latter a randomized round.
Their algorithm computes the ``values'' of deterministic and randomized rounds using different equations.
By contrast, our variant computes the ``values'' using the same Eqn.~\eqref{eqn:matching-dual-update-beta}.
See Algorithm~\ref{alg:ocs-matching}.

\begin{algorithm}[t]
    \caption{Online primal-dual edge-weighted bipartite matching algorithm}
    \label{alg:ocs-matching}
    \begin{algorithmic}
        \smallskip
        \State \textbf{State variables:} (for each offline vertex $u$)
        \begin{itemize}
            \item $k_u(w)$: the number of times $u_1 = u$ or $u_2 = u$ and further its edge weight is at least $w$.
        \end{itemize}
        \State \textbf{On the arrival of an online vertex $v \in R$:}
        \begin{enumerate}
            \item For $\ell \in \{1,2\}$:
            \begin{enumerate}
                \item Find $u_\ell$ with maximum $\Delta_u \beta_v$ given by Eqn.~\eqref{eqn:matching-dual-update-beta}.
                \item Increase $k_{u_\ell}(w)$ by 1 for $0 \le w \leq w_{u_\ell v}$.
            \end{enumerate}
            \item If $u_1 \neq u_2$, let the OCS select one of them, and match $v$ to it. \hfill \textbf{(Randomized round)}
            \item Otherwise, match $v$ to $u_1 = u_2$. \hfill \textbf{(Deterministic round)}
        \end{enumerate}
    \end{algorithmic}
\end{algorithm}

\subsubsection{Improved Online Primal-Dual Analysis}

This subsection improves the analysis of \citet{FahrbachHTZ:FOCS:2020} in twofold.
First, our edge-weighted result uses the LP in Theorem~\ref{thm:semi-ocs-unweighted-vertex-weighted-matching} and its optimal solution in Theorem~\ref{thm:lp-solution}, same as the unweighted and vertex-weighted cases.
By contrast, the analysis of \citet{FahrbachHTZ:FOCS:2020} for edge-weighted online bipartite matching needs to consider a LP with additional constraints.
Second, our analysis indicates that the online selection algorithm only needs to guarantee a condition strictly weaker than the property of $\gamma$-OCS.
It enables us to further explore a variant of OCS in the next subsection to further improve the competitive ratio in edge-weighted online bipartite matching.

\begin{theorem}
    \label{thm:two-way-online-primal-dual-analysis}
    Suppose that $( p(k) )_{k \ge 0}$ is non-increasing and satisfies $p(0) = 1$, and $\Gamma$, $( a(k) )_{k \ge 0}$, and $( b(k) )_{k \ge 0}$ form a solution to the \matchinglp.
    Algorithm~\ref{alg:ocs-matching} is $\Gamma$-competitive for edge-weighted online bipartite matching if the OCS ensures that for any online selection instance, any element $e$, and any consecutive subsequences of the rounds involving the element with lengths $k_1, k_2, \dots, k_m$, element $e$ is unselected in these rounds with probability at most:
    %
    \begin{equation}
        \label{eqn:relaxed-ocs}
        p \Big( \sum_{i=1}^m k_i \Big) +\frac{1}{2}\sum_{i=2}^m \sum_{j=0}^{k_1+\dots+k_{i-1}-1} a(j)
        ~.
    \end{equation}
\end{theorem}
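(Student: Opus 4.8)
The plan is to run the standard online primal-dual analysis, maintaining offline dual variables $\alpha_u$ and online dual variables $\beta_v$, and to show (i) \emph{approximate dual feasibility}: for every edge $(u,v)$, $\alpha_u + \beta_v \ge \Gamma w_{uv}$ in expectation (or deterministically, after taking expectations over the OCS randomness), and (ii) \emph{reverse weak duality / objective tracking}: the expected primal objective equals the expected sum of dual variables. Combining these with LP duality against the optimal offline matching yields the $\Gamma$-competitive bound. First I would fix the dual assignment: when $v$ arrives and shortlists $u_1, u_2$, set $\beta_v \defeq \Delta_{u_1}\beta_v + \Delta_{u_2}\beta_v$ (the two "values" from Eqn.~\eqref{eqn:matching-dual-update-beta} computed at the time each candidate is shortlisted, so that the $k_u(w)$'s used for $u_2$ already reflect the increment from $u_1$ if $u_1=u_2$); and for each offline $u$, let $\alpha_u$ accumulate, over the rounds in which $u$ is shortlisted with edge-weight $w_{uv}$, the increments $\int_0^{w_{uv}} a\big(k_u(w)\big)\,dw$ together with the gain from actually being matched. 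The bookkeeping should be arranged so that the total dual equals $\Gamma$ times the expected matched weight; this is where the constraint~\eqref{eqn:two-way-lp-gain-split}, $a(k)+b(k)\le p(k)-p(k+1)$, is used to "pay" for the probability mass of getting matched at the $(k{+}1)$-th appearance, exactly as in the unweighted/vertex-weighted argument underlying Theorem~\ref{thm:semi-ocs-unweighted-vertex-weighted-matching}.

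The heart of the proof is verifying approximate dual feasibility for a fixed edge $(u,v)$, and this is where the relaxed OCS guarantee~\eqref{eqn:relaxed-ocs} enters. Consider the "water level" decomposition over $w \in [0, w_{uv}]$: at each level $w$, $u$ is shortlisted some number of times with edge-weight $\ge w$, and these appearances of $u$ in the online selection instance split into $m = m(w)$ consecutive subsequences of lengths $k_1(w),\dots,k_m(w)$ (consecutive with respect to the rounds in which $u$ is shortlisted at all — interruptions occur when some intermediate round shortlists $u$ only with a smaller edge-weight). The expected contribution of $\alpha_u$ at level $w$ is $\sum_{i} a(i)$ summed over the relevant indices, plus the expected gain from $u$ being matched to an edge of weight $\ge w$, which by the OCS guarantee~\eqref{eqn:relaxed-ocs} is at least $1$ minus the unselected probability $p\big(\sum_i k_i(w)\big) + \tfrac12 \sum_{i\ge 2}\sum_{j=0}^{k_1+\dots+k_{i-1}-1} a(j)$. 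Meanwhile $\beta_v$ at level $w$ contributes $b\big(k_u(w)\big)$ from the first integral in Eqn.~\eqref{eqn:matching-dual-update-beta} minus $\tfrac12\sum_{j=0}^{k_u(w)-1} a(j)$ from the second; crucially, when $v$ actually arrives, the appearance of $u$ caused by $v$ \emph{starts a new subsequence} precisely when the preceding shortlists of $u$ at level $w$ were interrupted, and in that case the negative term $-\tfrac12\sum_j a(j)$ in $\beta_v$ is exactly the per-break penalty appearing in~\eqref{eqn:relaxed-ocs}. I would then show, level by level, that
\[
    \underbrace{\sum_{\text{relevant }i} a(i) + \Big(1 - p\big(\textstyle\sum_i k_i\big) - \tfrac12\textstyle\sum_{i\ge2}\textstyle\sum_{j} a(j)\Big)}_{\text{from }\alpha_u} \;+\; \underbrace{b(k_u(w)) - \tfrac12\textstyle\sum_{j=0}^{k_u(w)-1} a(j)}_{\text{from }\beta_v} \;\ge\; \Gamma,
\]
where the penalty terms from the two sides cancel, leaving exactly $\sum_{i=0}^{k_u(w)-1} a(i) + 2b(k_u(w)) \ge \Gamma$ in the non-interrupted case, which is constraint~\eqref{eqn:approximate-dual-feasible}. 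Integrating over $w \in [0, w_{uv}]$ and adding the contribution of $\beta_v$ from the other shortlisted candidate $u_2$ (which is nonnegative by the greedy choice of $u_1, u_2$, since $\Delta_{u_2}\beta_v \ge \Delta_u\beta_v$ for the true $u$, or is at least $0$ after a short argument using monotonicity~\eqref{eqn:two-way-lp-monotone} of $b$) gives $\alpha_u + \beta_v \ge \Gamma w_{uv}$.

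The main obstacle I anticipate is the careful matching of the "breaks" in the two accountings — showing that the extra $\tfrac12\sum a(j)$ penalty terms in the relaxed OCS bound~\eqref{eqn:relaxed-ocs} are paid for exactly by the negative $-\tfrac12\sum_{j=0}^{k_u(w)-1}a(j)$ term built into Eqn.~\eqref{eqn:matching-dual-update-beta}. This requires correctly identifying, at each water level $w$, when the arrival of $v$ continues an existing consecutive subsequence of $u$'s appearances versus when it opens a new one, and tracking that the number of new subsequences opened across all levels up to $w_{uv}$ is charged consistently on both sides. A secondary technical point is the use of monotonicity~\eqref{eqn:two-way-lp-monotone} to handle the deterministic rounds ($u_1 = u_2$), where $k_u(w)$ jumps by $2$ and one must check that splitting the gain still works; and one must also verify the free-disposal accounting, i.e.\ that $\alpha_u$ correctly represents the \emph{maximum} matched edge-weight at $u$ rather than a sum, which is the standard reason for the integral-over-$w$ (complementary CDF) formulation and follows the treatment of \citet{DevanurHKMY:TEAC:2016}. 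Once these bookkeeping identities are nailed down, feasibility reduces cleanly to the three LP constraints~\eqref{eqn:two-way-lp-gain-split}, \eqref{eqn:approximate-dual-feasible}, \eqref{eqn:two-way-lp-monotone}, and the theorem follows.
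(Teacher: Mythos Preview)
Your high-level plan---primal-dual with a water-level decomposition, approximate dual feasibility via the \matchinglp\ constraints---matches the paper's. The gap is in how the break penalties in~\eqref{eqn:relaxed-ocs} get paid. You claim that at a level $w\le w_{uv}$ the dual $\beta_v$ contributes ``$b(k_u(w))-\tfrac12\sum_j a(j)$,'' taking the negative piece from the second integral in Eqn.~\eqref{eqn:matching-dual-update-beta}. But that integral is over $[w_{uv},\infty)$ and contributes nothing at levels $w\le w_{uv}$; the negative term in $\Delta_u\beta_v$ for the \emph{current} $v$ cannot pay break penalties there. Consequently your displayed level-$w$ inequality carries only one $b(k_u(w))$ and does not simplify to $\sum_i a(i)+2b(k_u(w))\ge\Gamma$ as you assert.

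You are right that the second integral in Eqn.~\eqref{eqn:matching-dual-update-beta} is the mechanism, but it acts through a \emph{different} round. The paper's charging has a second case you omit: when $u$ is shortlisted in some round $v'$ with weight $w_{uv'}$, at every higher level $w>w_{uv'}$ one \emph{adds} $\tfrac12\sum_{i=0}^{k_u(w)-1}a(i)$ to $\alpha_u(w)$ and subtracts the same from $\beta_{v'}$---that subtraction is precisely the second integral. Now fix a level $w$ and the consecutive subsequences of lengths $k_1,\dots,k_m$ at that level; between each pair of neighboring subsequences there is at least one intermediate round $v'$ with $w_{uv'}<w$, whose deposit into $\alpha_u(w)$ is $\tfrac12\sum_{j=0}^{k_1+\dots+k_{i-1}-1}a(j)$. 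Summing over the $m-1$ gaps exactly cancels the penalty in~\eqref{eqn:relaxed-ocs}, yielding the offline invariant $\alpha_u(w)\ge\sum_{i=0}^{k_u(w)-1}a(i)$ at every level. Dual feasibility is then a second, separate cancellation: the greedy shortlist gives $\beta_v\ge 2\,\Delta_u\beta_v$ (with the final $k_u(w)$'s, via~\eqref{eqn:two-way-lp-monotone}), the $[w_{uv},\infty)$ portions of $\alpha_u$ and $2\,\Delta_u\beta_v$ cancel, and~\eqref{eqn:approximate-dual-feasible} finishes. Your single inequality tries to collapse these two distinct cancellations into one and misplaces both.
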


We make three remarks before presenting the proof of the theorem.
First, for unweighted and vertex-weighted online bipartite matching, the online selection algorithm only needs to ensure the above property for the subset of all $k$ rounds involving an element.
Then, it degenerates to the guarantee of semi-OCS because $m = 1$ and thus the second term involving the $a(j)$'s disappears.
The proof below shall make this explicit.

Further, the guarantee in Eqn.~\eqref{eqn:relaxed-ocs} holds almost trivially for natural online selection algorithms when $m \ge 3$.
On the one hand, any natural algorithm would at least halve the unselected probability for every round involving the element.
Hence, after $\sum_{i=1}^m k_i \ge 3$ rounds, the unselected probability is at most $\frac{1}{8}$.
On the other hand, the optimal LP solution from Theorem~\ref{thm:lp-solution} satisfies that $a(0) \ge \frac{2}{9}$ for all online selection algorithms in the literature and in this paper, and even for the overly idealized algorithm that ensure selecting an element when it appears more than once.
Hence, the $a(0)$'s in the second term of Eqn.~\eqref{eqn:relaxed-ocs} sum to at least $\frac{2}{9} > \frac{1}{8}$.
A similar argument shows that the guarantee holds almost trivially for $m = 2$ if $k_1 + k_2 \ge 3$.
Hence, it suffices to slightly enhance the semi-OCS guarantee to further handle either a single consecutive subsequence (but not necessarily starting from the earliest round involving the element as in semi-OCS), or two very short consucutive subsequences.
This motivates the variant of OCS in the next subsection.

Finally, Theorem~\ref{thm:two-way-online-primal-dual-analysis} subsumes the analysis of \citet{FahrbachHTZ:FOCS:2020} because the original guarantee of $\gamma$-OCS satisfies Eqn.~\eqref{eqn:relaxed-ocs}, as we will prove in the next lemma.

\begin{lemma}
    Suppose that $\gamma \in [0, \frac{1}{4}]$,%
    \footnote{Theorem~\ref{thm:ocs-hardness} shows that there is no $\gamma$-OCS for $\gamma > \frac{1}{4}$.}
    and $p(k) = 2^{-k}(1-\gamma)^{k-1}$ for $k \ge 0$.
    Let $(a(k))_{k \ge 0}$ take values as in the optimal LP solution in Theorem~\ref{thm:lp-solution}.
    Then, for any positive integers $k_1, k_2, \dots, k_m$:
    \[
        \prod_{i=1}^m 2^{-k_i}(1-\gamma)^{k_i-1} \le 2^{-\sum_{i=1}^m k_i}(1-\gamma)^{\sum_{i=1}^m k_i-1} +\frac{1}{2}\sum_{i=2}^m \sum_{j=0}^{k_1+\dots+k_{i-1}-1} a(j)
        ~.
    \]
\end{lemma}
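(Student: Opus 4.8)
The plan is to turn the inequality into a statement involving only a single sum of the $a(j)$'s, reduce it to the extreme configuration $k_1=\dots=k_m=1$ by two monotonicity arguments, and then close it with elementary estimates. Write $K=\sum_{i=1}^m k_i$, let $K_\ell=k_1+\dots+k_\ell$ for $1\le\ell\le m$, and set $A(n)=\sum_{j=0}^{n-1}a(j)$, which is non-decreasing because $a(j)\ge0$. Since $\prod_{i=1}^m 2^{-k_i}(1-\gamma)^{k_i-1}=2^{-K}(1-\gamma)^{K-m}$ and $p(K)=2^{-K}(1-\gamma)^{K-1}$ (with $p(0)=1$ as in Theorem~\ref{thm:lp-solution}), the claim is equivalent to
\[
    2^{-K}(1-\gamma)^{K-m}-p(K) \le \tfrac12\sum_{i=2}^m A\big(K_{i-1}\big)
    ~.
\]

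First I would bound the two sides monotonically in the $k_i$'s. On the left, writing $\mu=\frac{1-\gamma}{2}\le\frac12$ and $K=m+r$ with $r\ge0$, the left side equals $2^{-m}\mu^{r}\big(1-(1-\gamma)^{m-1}\big)$, which is at most $2^{-m}-p(m)$ since $\mu^{r}\le1$ and $1-(1-\gamma)^{m-1}\ge0$. On the right, $K_{i-1}\ge i-1$ and $A$ is non-decreasing, so $\sum_{i=2}^m A(K_{i-1})\ge\sum_{\ell=1}^{m-1}A(\ell)$. The case $m=1$ is an equality, so it suffices to show, for $m\ge2$, that $2^{-m}-p(m)\le\tfrac12\sum_{\ell=1}^{m-1}A(\ell)$.

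Next I would lower-bound the right side using only $a(0)$. Since $A$ is non-decreasing, $A(\ell)\ge A(1)=a(0)=p(0)-p(1)-b(0)=\tfrac12-b(0)$. To bound $b(0)$, split its defining series at $i=0$: the $i=0$ term is $\tfrac13\big(p(0)-p(1)\big)=\tfrac16$, and for $i\ge1$ the factor $(\tfrac23)^i\le\tfrac23$ together with the telescoping bound $\sum_{i\ge1}\big(p(i)-p(i+1)\big)=p(1)=\tfrac12$ (valid because $p(n)\to0$) gives $b(0)\le\tfrac16+\tfrac13\cdot\tfrac23\cdot\tfrac12=\tfrac5{18}$. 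Hence $a(0)\ge\tfrac29$, so $\tfrac12\sum_{\ell=1}^{m-1}A(\ell)\ge\tfrac{m-1}{9}$. On the left side, Bernoulli's inequality $(1-\gamma)^{m-1}\ge1-(m-1)\gamma$ and $\gamma\le\tfrac14$ give $2^{-m}-p(m)=2^{-m}\big(1-(1-\gamma)^{m-1}\big)\le(m-1)\gamma\,2^{-m}\le\tfrac{(m-1)2^{-m}}{4}$, and since $2^{-m}/4\le\tfrac19$ for $m\ge2$ this is $\le\tfrac{m-1}{9}$, which completes the chain.

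The only slightly non-obvious step is the opening reduction to the all-ones configuration; after that the argument is entirely elementary, and in particular the closed form of $a(j)$ for $j\ge1$ is never needed --- only $a(j)\ge0$ and the single estimate $a(0)\ge\tfrac29$. If one prefers, the reduced inequality can instead be checked by substituting $p(k)=2^{-k}(1-\gamma)^{k-1}$ into Theorem~\ref{thm:lp-solution}, which yields $a(j)=\tfrac{(1+\gamma)^2}{2+\gamma}2^{-j-1}(1-\gamma)^{j-1}$ for $j\ge1$ and $A(n)=\Gamma-\tfrac{1+\gamma}{2(2+\gamma)}\mu^{n-1}$ with $\Gamma=\tfrac{3+2\gamma}{3(2+\gamma)}$; this is a valid but messier route.
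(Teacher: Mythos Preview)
Your proof is correct and follows essentially the same strategy as the paper: both arguments drop all $a(j)$ for $j\ge 1$, use Bernoulli's inequality to bound $1-(1-\gamma)^{m-1}\le (m-1)\gamma$, and finish by lower-bounding $a(0)$. The only differences are cosmetic: the paper bounds the left-hand gap directly by $\frac{(m-1)\gamma}{4}$ (using $\prod_i 2^{-k_i}(1-\gamma)^{k_i-1}\le\frac14$ when $\sum_i k_i\ge 2$) and then invokes the closed form $a(0)=\frac{3+\gamma}{12+6\gamma}\ge\frac{\gamma}{2}$, whereas you insert an explicit reduction to the all-ones configuration and derive the $\gamma$-free bound $a(0)\ge\frac{2}{9}$ by crudely estimating $b(0)$ from its defining series.
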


\begin{proof}
    In fact we will prove it even dropping all $a(j)$'s for $j \ge 1$.
    If $m = 1$ the left-hand-side equals the first term on the right-hand-side.
    If $m \ge 2$, the difference between the left-hand-side and the first term on the right-hand-side is:
    \begin{align*}
        \big( 1 - (1-\gamma)^{m-1} \big) \prod_{i=1}^m 2^{-k_i}(1-\gamma)^{k_i-1}
        &
        \le (m-1) \gamma \prod_{i=1}^m 2^{-k_i}(1-\gamma)^{k_i-1}
        \\
        &
        \le \frac{(m-1) \gamma}{4}
        ~.
        \tag{$\sum_{i=1}^k k_i \ge 2$}
    \end{align*}

    On the other hand, Theorem~\ref{thm:lp-solution} indicates that for $p(k) = 2^{-k} (1-\gamma)^{k-1}$:
    \[
        a(0) = \frac{3+\gamma}{12+6\gamma} \ge \frac{\gamma}{2}
        ~,
    \]
    for any $\gamma \le \frac{1}{4}$.%
    \footnote{In fact, this holds for any $0 \le \gamma \le \frac{\sqrt{61}-5}{6} \approx 0.468$}
    Hence, the $a(0)$'s in the second term on the right sum to at least $\frac{(m-1)\gamma}{4}$.
\end{proof}

\begin{proof}[Proof of Theorem~\ref{thm:two-way-online-primal-dual-analysis}]
    For any offline vertex $u$, consider the subset of rounds in which $u$ is shortlisted as $u_1$ or $u_2$ by Algorithm~\ref{alg:ocs-matching}.
    Further for any weight level $w > 0$, suppose that the subset of rounds in which $u$ is shortlisted by an online vertex $v$ with edge-weight $w_{uv} \ge w$ form consecutive subsequences of lengths $k_1, k_2, \dots, k_m$.
    We remark that if $u_1 = u_2 = u$ in the round of some online vertex $v$ with $w_{uv} \ge w$, this deterministic round contributes $2$ to the corresponding $k_i$.
    In such cases the sequel probability bounds hold trivially because $u$ is matched to an edge weight weight at least $w$ with certainty.
    The binding case of our analysis is when there are only randomized rounds.
    The OCS guarantee in the theorem statement ensures that the probability of matching $u$ to one of them is at least:
    \begin{equation}
        \label{eqn:matching-objective-by-vertex}
        y_u(w) \defeq 1 - p \Big( \sum_{i=1}^m k_i \Big) - \frac{1}{2}\sum_{i=2}^m \sum_{j=0}^{k_1+\dots+k_{i-1}-1} a(j)
        ~.
    \end{equation}
    

    Therefore, the expected maximum edge-weight matched to vertex $u$ is at least $\int_0^\infty y_u(w) dw$.
    The expected total weight of the matching by Algorithm~\ref{alg:ocs-matching} is at least:
    %
    \[
        \alg \defeq \sum_{u \in L} \int_0^\infty y_u(w) dw
        ~.
    \]

    The competitive analysis is a charging argument.
    For every online vertex $v \in R$, we split the changes of $\alg$ among the shortlisted offline verties $u_1, u_2$ and the online vertex $v$.
    Formally, let $\alpha_u = \int_0^\infty \alpha_u(w) dw$ be the gain of each offline vertex $u \in L$, where $\alpha_u(w)$ is the contribution from weight-level $w$.
    Let $\beta_v$ denote the gain of each online vertex $v$.
    Both are initially zero.
    Then, as an online vertex $v$ arrives and when $u \in \{ u_1, u_2 \}$ is shortlisted, suppose that $y_u(w)$ changes by $\Delta y_u(w)$ for any $0 \le w \le w_{uv}$:
    \begin{itemize}
        \item Increase $\beta_v$ by $\Delta_u \beta_v$ according to Eqn.~\eqref{eqn:matching-dual-update-beta}, which we restate below:
        \[
            \Delta_u \beta_v
            \defeq
            \int_0^{w_{uv}} b\big(k_u(w)\big)dw -\frac{1}{2} \int_{w_{uv}}^\infty \sum_{i=0}^{k_u(w)-1} a(i) dw
            ~.
        \]
        \item Increase $\alpha_u(w)$ by:
    \end{itemize}
    \begin{align}
        \label{eqn:matching-dual-update-alpha}
        \Delta \alpha_u(w)
        \defeq
        \begin{cases}
            \Delta y_u(w) - b\big(k_u(w)\big) & \text{if $w_{uv} \ge w$}\\[1ex]
            \frac{1}{2}\sum_{i=0}^{k_u(w)-1} a(i) & \text{if $0 \le w_{uv} < w$}
        \end{cases}
        ~.
    \end{align}
    
    We remark that the values of $k_u(w)$'s in the above charging rules are at the moment when $u$ is shortlisted by the algorithm for online vertex $v$.

%
%
    \paragraph{Feasibility of the Charging Rule.}
    We first verify that the total change in $\alpha_u$ and $\beta_v$ equals the change of $\alg$ due to online vertex $v$.
    By Equations~\eqref{eqn:matching-dual-update-beta} and \eqref{eqn:matching-dual-update-alpha}, the total change in the vertices' gains equals:
    \begin{align*}
        \underbrace{\int_0^{w_{uv}} \big(\Delta y_u(w) - b(k_u(w)) \big) dw + \frac{1}{2} \int_{w_{uv}}^\infty \sum_{i=0}^{k_u(w)-1} a(i) dw}_\text{change of $\alpha_u$}
        &
        \\
        +
        \underbrace{\int_0^{w_{uv}} b\big(k_u(w)\big)dw - \frac{1}{2} \int_{w_{uv}}^\infty \sum_{i=0}^{k_u(w)-1} a(i)dw}_\text{change of $\beta_v$}
        &
        = \int_0^{w_{uv}} \Delta y_u(w) dw
        ~.
    \end{align*}

    \paragraph{Invariant of Offline Gain.}
    Next we show that for any offline vertex $u$, and any positive weight-level $w > 0$:
    \begin{equation}
        \label{eqn:two-way-matching-offline-invariant}
        \alpha_u(w) \geq \sum_{i = 0}^{k_u(w)-1} a(i)
        ~.
    \end{equation}

    Consider the rounds in which $u$ is shortlisted and $u_1$ or $u_2$ (or both) and the edge-weight is at least $w$.
    Partition them into consecutive subsequences of the rounds that shortlist $u$, regardless the edge-weights.
    Let $k_1, k_2, \dots, k_m$ be the lengths of the consecutive subsequences.
    By considering the changes to $\alpha_i(w)$ due to the rounds in the subsuequences, and any $m-1$ rounds involving $u$ between the subsequences, one for each pair of neighboring subsequences, we get that:
    \begin{align*}
        \alpha_u(w)
        &
        \ge
        \underbrace{y_u(w) - \sum_{i=0}^{k_u(w)-1} b(i)}_\text{rounds in subsequences, 1st case of Eqn.~\eqref{eqn:matching-dual-update-alpha}}
        +
        \underbrace{\frac{1}{2} \sum_{i=2}^m \sum_{j = 0}^{k_1+\dots+k_{i-1}-1} a(i)}_\text{rounds in between, 2nd case of Eqn.~\eqref{eqn:matching-dual-update-alpha}}
        \\
        &
        = 1 - p\big(k_u(w)\big) - \sum_{i=0}^{k_u(w)-1} b(\ell)
        \tag{Eqn.~\eqref{eqn:matching-objective-by-vertex}, and $k_u(w) = k_1 + \dots + k_m$}
        \\
        &
        = \sum_{i=0}^{k_u(w)-1} \big( p(i) - p(i+1) - b(i) \big)
        \tag{$p(0) = 1$}
        \\
        &
        \ge \sum_{i=0}^{k_u(w)-1} a(i)
        ~.
        \tag{Eqn.~\eqref{eqn:two-way-lp-gain-split}}
    \end{align*}

    This is the only place in our argument that uses Eqn.~\eqref{eqn:relaxed-ocs} about the online selection algorithm, indirectly through Eqn.~\eqref{eqn:matching-objective-by-vertex}.
    We remark that in unweighted and vertex-weighted online bipartite matching, there is only one weight level $w = 1$ or $w = w_u$ of concern for any offline vertex $u$.
    Hence, there is only a single subsequence with all rounds that shortlist $u$ in the above argument.
    It suffices to replace Eqn.~\eqref{eqn:relaxed-ocs} by the weaker property of semi-OCS.

    \paragraph{Non-negativity of Gains.}
    The non-negativity of offline gains follows from the above invariant.
    The non-negativity of online gains follows by that $\Delta_u \beta_v = 0$ for the dummy vertex $u$.
    Hence, the offline neighbors $u_1, u_2$ shortlisted by Algorithm~\ref{alg:ocs-matching} have non-negative $\Delta_{u_1} \beta_v, \Delta_{u_2} \beta_v$.

    \paragraph{$\Gamma$-Approximate Equilibrium.}
    The gains cumulated by the online and offline vertices satisfy an approximate equilibrium condition in the sense that for any edge $(u, v) \in E$ the total gain of $u$ and $v$ is at least $\Gamma$ times the edge weight $w_{uv}$.
    By the definition of Algorithm~\ref{alg:ocs-matching}, and by that $\Delta_u \beta_v$ in Eqn.~\eqref{eqn:matching-dual-update-beta} is non-increasing in $k_u(w)$'s, we have $\beta_v \ge 2 \Delta_u \beta_v$ even when we compute $\Delta_u \beta_v$ \emph{using the final values of $k_u(w)$'s}.
    Hence:
    \begin{align*}
        \alpha_u + \beta_v
        &
        \ge \int_0^\infty \sum_{i=0}^{k_u(w)-1} a(i) dw + 2 \Delta_u \beta_v
        \tag{Eqn.~\eqref{eqn:two-way-matching-offline-invariant}}
        \\
        &
        = \int_0^{w_{uv}} \sum_{i=0}^{k_u(w)-1} a(i) dw + 2 \int_0^{w_{uv}} b\big(k_u(w)\big) dw 
        \tag{Eqn.~\eqref{eqn:matching-dual-update-beta}}
        \\[2ex]
        &
        \geq \Gamma w_{uv}
        ~.
        \tag{Eqn.~\eqref{eqn:approximate-dual-feasible}}
    \end{align*}

    Then, consider an optimal matching $M \subseteq E$.
    Algorithm~\ref{alg:ocs-matching} is $\Gamma$-competitive because:
    \begin{align*}
        \alg
        &
        = \sum_{u \in L} \alpha_u + \sum_{v \in R} \beta_v
        \\
        &
        \ge \sum_{(u, v) \in M} \big( \alpha_u + \beta_v \big)
        &
        \\
        & \ge \Gamma \sum_{(u,v) \in M} w_{uv}
        ~.
    \end{align*}

    Finally, we remark that the above analysis is mathematically equivalent an online primal dual analysis under the framework of \citet{DevanurJK:SODA:2013} for online bipartite matching, and also \citet{DevanurHKMY:TEAC:2016} and \citet{FahrbachHTZ:FOCS:2020} for the edge-weighted case.
    We choose the above exposition to avoid having to introduce the more general framework.
\end{proof}

Combining Algorithm~\ref{alg:ocs-matching} with the improved $0.167$-OCS from Theorem~\ref{thm:ocs} in Section~\ref{sec:ocs} surpasses the state-of-the-art $0.508$-competitive algorithm for edge-weighted online bipartite matching by \citet{FahrbachHTZ:FOCS:2020}.

\begin{corollary}
    There is an two-choice greedy algorithm for edge-weighted online bipartite matching that is at least $0.512$-competitive.
\end{corollary}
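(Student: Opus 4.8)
The plan is to combine the online primal-dual algorithm (Algorithm~\ref{alg:ocs-matching}) with the $0.167$-OCS from Theorem~\ref{thm:ocs}, and to invoke Theorem~\ref{thm:two-way-online-primal-dual-analysis} together with the explicit LP solution of Theorem~\ref{thm:lp-solution}. Since the $0.167$-OCS is in particular a $\gamma$-OCS with $\gamma = 0.167$, the probability that an element in consecutive subsequences of lengths $k_1, \dots, k_m$ is unselected is at most $\prod_{i=1}^m 2^{-k_i}(1-\gamma)^{k_i-1}$. Accordingly I would set $p(k) = 2^{-k}(1-\gamma)^{k-1}$ for $k \ge 1$ and $p(0) = 1$, and then verify the hypotheses of Theorem~\ref{thm:lp-solution}: this $p$ is non-increasing, has $p(0) = 1$, and satisfies $p(k+1)\leq \frac{2}{3}p(k)$ because $\frac{1}{2}(1-\gamma) < \frac{2}{3}$. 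Hence Theorem~\ref{thm:lp-solution} furnishes an explicit optimal solution $\Gamma, (a(k))_{k\ge 0}, (b(k))_{k\ge 0}$ to the \matchinglp.

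Next I would confirm that this OCS meets the relaxed requirement in Eqn.~\eqref{eqn:relaxed-ocs} of Theorem~\ref{thm:two-way-online-primal-dual-analysis}. This is exactly the content of the lemma stated just before the proof of that theorem, which shows that for every $\gamma \in [0, \frac{1}{4}]$ the $\gamma$-OCS bound $\prod_{i=1}^m 2^{-k_i}(1-\gamma)^{k_i-1}$ is dominated by $p\big(\sum_i k_i\big) + \frac{1}{2}\sum_{i=2}^m\sum_{j=0}^{k_1+\dots+k_{i-1}-1} a(j)$; since $0.167 \le \frac{1}{4}$, the lemma applies. Theorem~\ref{thm:two-way-online-primal-dual-analysis} then certifies that the two-choice greedy Algorithm~\ref{alg:ocs-matching}, run with this $p$, this LP solution, and the $0.167$-OCS as its sub-routine, is $\Gamma$-competitive for edge-weighted online bipartite matching.

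It remains to evaluate $\Gamma$. For $p(k) = 2^{-k}(1-\gamma)^{k-1}$, the corollary following Theorem~\ref{thm:lp-solution} gives
\[
    \Gamma = \frac{3+2\gamma}{6+3\gamma} = \frac{2}{3} - \frac{1}{3(2+\gamma)}
    ~,
\]
equivalently by summing the geometric series $\Gamma = 1-\frac{1}{3}\sum_{i\ge 0}\big(\frac{2}{3}\big)^i p(i)$. Substituting $\gamma = 0.167$ yields $\Gamma = \frac{3.334}{6.501} > 0.512$, which proves the corollary.

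I expect no genuine obstacle here: the hard work—constructing the $0.167$-OCS, reducing edge-weighted matching to OCS, and solving the \matchinglp in closed form—has already been carried out in the earlier sections, so the argument is a bookkeeping composition followed by the final arithmetic. The only point requiring care is that the sub-routine must satisfy the \emph{relaxed} guarantee Eqn.~\eqref{eqn:relaxed-ocs} rather than merely the nominal $\gamma$-OCS definition; but as $\gamma = 0.167 \le \frac{1}{4}$ this is handled by the cited lemma, so the composition goes through verbatim.
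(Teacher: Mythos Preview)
Your proposal is correct and follows exactly the approach the paper intends: combine the $0.167$-OCS of Theorem~\ref{thm:ocs} with Algorithm~\ref{alg:ocs-matching}, invoke the lemma preceding Theorem~\ref{thm:two-way-online-primal-dual-analysis} to pass from the $\gamma$-OCS guarantee to Eqn.~\eqref{eqn:relaxed-ocs}, and read off $\Gamma = \frac{3+2\gamma}{6+3\gamma}$ from the corollary to Theorem~\ref{thm:lp-solution}. Your write-up simply makes explicit the bookkeeping that the paper leaves implicit in its one-line justification.
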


\subsection{A Variant of OCS and Edge-weighted Online Bipartite Matching}

This subsection considers another online selection algorithm tailored for the relaxed condition in Eqn.~\eqref{eqn:relaxed-ocs}.
Each element is associated with a flag $1$ or $0$, initialized uniformly at random.
In each round $t$, the algorithm samples an element $e$ from $\eset^t$ uniformly at random to probe its flag.
If its flag is $1$, the algorithm selects $e$ and sets its flag to $0$.
Otherwise, the algorithm selects the other element and sets $e$'s flag to $1$.
In other words, the algorithm randomly samples an element, lets its flag decides the selection, and flips the flag.
See Algorithm~\ref{alg:relaxed-ocs}.



\begin{algorithm}[t]
    \caption{A variant of OCS designed for edge-weighted online bipartite matching}
    \label{alg:relaxed-ocs}
    \begin{algorithmic}
        \smallskip
        \State \textbf{State variables:} (for each element $e$)
        \begin{itemize}
            \item $\tau_{e} \in \big\{0, 1\big\}$; its initial value $\tau_e^0$ is independently and uniformly at random.
        \end{itemize}
        \State \textbf{For each round $t$: }
        \begin{enumerate}
            \item Draw $e^t \in \eset^t$ uniformly at random.
            \item If $\tau_{e^t} = 1$, select $e^t$ and let $\tau_{e^t} = 0$.
            \item Otherwise, select the other element in $\eset^t$ and set $\tau_{e^t} = 1$.
        \end{enumerate}
    \end{algorithmic}
\end{algorithm}

\begin{theorem}
    \label{thm:relaxed-ocs}
    Algorithm~\ref{alg:relaxed-ocs} ensures the selection probability in Eqn.~\eqref{eqn:relaxed-ocs} for:
    \[
        p(k) = 2^{-k-\min\{k,\lceil\frac{k+2}{2}\rceil\}}+ k\cdot 2^{-k-\min\{k,\lceil\frac{k+3}{2}\rceil\}}
        ~.
    \]
\end{theorem}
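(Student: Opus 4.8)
The plan is to track one element $e$ and exploit the rigidity of the flags $\tau_g$ in Algorithm~\ref{alg:relaxed-ocs}. The starting observation is that, immediately before any round, $\tau_g$ equals $\tau_g^0$ XOR the number of earlier rounds in which $g$ was the probed element $e^t$; so every flag is marginally uniform and flips exactly when its element is probed. Hence in a round with $\eset^t=\{e,e'\}$ whose probed element is $g\in\{e,e'\}$, element $e$ is left unselected precisely when $\tau_g$ equals a bit that depends only on whether $g=e$ (namely $0$ if $g=e$, else $1$) and on the parity of the number of prior probes of $g$. Therefore, once we condition on the whole sequence of probe coins $(e^t)_t$, the event ``$e$ is unselected in every round of a set $S$'' becomes a conjunction of affine constraints on the independent bits $(\tau_g^0)_g$, one per element $g$ probed by some round of $S$: it is satisfiable iff for each such $g$ the rounds of $S$ probing $g$ do so at probe-indices of a single parity, and when satisfiable it has conditional probability $2^{-|G_S|}$, where $G_S$ is the set of elements probed by rounds of $S$. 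Thus $\Pr[e\text{ unselected in }S]=\E\big[2^{-|G_S|}\,\mathbf{1}\{\text{all element-wise parity constraints are consistent}\}\big]$, and in particular $\Pr[e\text{ unselected in }S]\le 2^{-|S|}$ for every such $S$ (reveal the rounds in time order; each round involving $e$ contributes a conditional factor of at most $\tfrac12$).

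This already settles everything except the single-subsequence case. If $m\ge 3$, or $m=2$ with $k_1+k_2\ge 3$, or $m=2$ with $k_1=k_2=1$, then the left-hand side of Eqn.~\eqref{eqn:relaxed-ocs} is at most $2^{-\sum_i k_i}$, and one checks directly that this is dominated by $p(\sum_i k_i)+\tfrac12\sum_{i=2}^m\sum_{j=0}^{k_1+\dots+k_{i-1}-1}a(j)$ using the explicit optimal LP solution of Theorem~\ref{thm:lp-solution} for this $p$: since $a(0)\ge\tfrac29$ and $a(0)+a(1)\ge\tfrac14$ while $p\ge 0$, the ``slack'' term alone already reaches $\tfrac18\ge 2^{-\sum_i k_i}$ in each of these regimes, with the borderline subcase $m=2,\ k_1=1,\ k_2=2$ additionally using $p(3)=\tfrac1{16}$.

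It remains to show $\Pr[e\text{ unselected in }T']\le p(k)$ for a single consecutive subsequence $T'$ of length $k$. Since $T'$ is consecutive, no round involving $e$ lies strictly between two of its rounds, so the rounds of $T'$ that probe $e$ occupy consecutive positions in $e$'s probe sequence, whence $e$'s parity constraint is consistent iff at most one round of $T'$ probes $e$. Split on that number: if no round of $T'$ probes $e$, its contribution to $\Pr[e\text{ unselected in }T']$ is at most $2^{-2k}$; if exactly one does, the total contribution over the choice of that round is at most $k\cdot 2^{-2k}$ (with an extra $\tfrac12$ from that probe); if two or more do, the contribution is $0$. The bound in the first two cases comes from the accounting that, inside $\E\big[2^{-|G_{T'}|}\mathbf{1}\{\cdots\}\big]$, assigning several rounds of $T'$ to one common partner element shrinks $|G_{T'}|$ but forces, between each consecutive such pair, an odd number of interleaving (non-$T'$) probes of that element --- an event of probability exactly $\tfrac12$ per gap regardless of how many rounds the adversary inserts there --- so the gain from a smaller $G_{T'}$ is exactly cancelled. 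Adding the two cases yields $\Pr[e\text{ unselected in }T']\le(k+1)2^{-2k}$, which is at most $p(k)$ because $\min\{k,\lceil\tfrac{k+2}{2}\rceil\}\le k$ and $\min\{k,\lceil\tfrac{k+3}{2}\rceil\}\le k$ (equality holds for $k\le 3$, where the bound is tight). I expect this last accounting to be the hardest step: one must rule out every arrangement of partner elements and interleaving rounds that might beat the ``distinct partners'' configuration, carefully tracking which inserted round can repair which element's parity and the joint law of the probe coins, and also dispatch the harmless but fiddly case in which $T'$ does not begin at $e$'s first round.
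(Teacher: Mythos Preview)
Your overall decomposition --- conditioning on the probe coins $(e^t)_t$, reducing to affine constraints on the $\tau_g^0$'s, and splitting on how many rounds of $T'$ probe $e$ --- matches the paper's approach. The gap is in your independence accounting for the ``partner'' parity constraints.

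Concretely, your single-subsequence bound $\Pr[e\text{ unselected in }T']\le (k+1)2^{-2k}$ is false. Take $k=4$ with rounds
\[
t_1=\{e,a\},\quad t_2=\{e,b\},\quad r_1=\{a,b\},\quad r_2=\{a,b\},\quad t_3=\{e,a\},\quad t_4=\{e,b\}
\]
in this order. Conditioning on the probe pattern $e^{t_1}=a,\ e^{t_2}=b,\ e^{t_3}=a,\ e^{t_4}=b$ (probability $2^{-4}$), the two type-A constraints fix $\tau_a^0=\tau_b^0=1$, while the two ``gap'' constraints both reduce to the \emph{same} bit $\mathbf{1}[e^{r_1}=a]\oplus\mathbf{1}[e^{r_2}=a]=1$, since $e^{r_j}\in\{a,b\}$ makes the $a$-parity and $b$-parity in $(t_2,t_3)$ identical. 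Thus the conditional probability is $2^{-3}$, not $2^{-4}$. Summing this case and the four ``exactly one round probes $e$'' cases gives total probability $2^{-7}+4\cdot 2^{-8}=3/128=p(4)$, which strictly exceeds your $(k+1)2^{-2k}=5/256$.

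The error is exactly where you flagged it might be hard: you assert that each gap contributes an independent factor $\tfrac12$, but a single intermediate round $\{a,b\}$ contributes the \emph{same} coin $e^r$ to the parity constraint for $a$ and to the one for $b$. So the type-B constraints are XOR clauses in which each intermediate-probe variable can appear in up to \emph{two} clauses, and the product bound $2^{-(k-d)}$ does not hold. The paper handles precisely this by a matching argument on the clause graph (its Lemma on XOR clauses with each variable in at most two clauses), obtaining only $2^{-\lceil (k-d)/2\rceil}$ for the type-B part and hence $2^{-\lceil (k+d)/2\rceil}$ overall; minimizing over $d$ gives $2^{-\min\{k,\lceil (k+2)/2\rceil\}}$, which is what produces the stated $p(k)$.

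The same issue undermines your blanket claim $\Pr[e\text{ unselected in }S]\le 2^{-|S|}$: the ``reveal rounds in time order, each halves the probability'' argument breaks once two earlier gap constraints have already consumed the randomness of a shared intermediate coin. For the $m\ge 2$ cases this is not fatal --- the paper's weaker bound $2^{-\min\{|S|,\lceil(|S|+2)/2\rceil\}}$ (i.e.\ $\le 1/4$ for two rounds and $\le 1/8$ for three or more) is enough to push everything below $p(\sum k_i)+\tfrac12\sum a(\cdot)$ --- but for $m=1$ you genuinely need the sharper clause-counting argument, not the independence heuristic.
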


Combining with Theorem~\ref{thm:two-way-online-primal-dual-analysis} further improves the competitive ratio of edge-weighted online bipartite matching.

\begin{corollary}
    \label{cor:edge-weighted-519}
    Algorithm~\ref{alg:ocs-matching}, using Algorithm~\ref{alg:relaxed-ocs} for online selections, is at least $0.519$-competitive for edge-weighted online bipartite matching.
\end{corollary}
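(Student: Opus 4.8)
The plan is to obtain Corollary~\ref{cor:edge-weighted-519} by chaining the three ingredients already in place: the guarantee of Algorithm~\ref{alg:relaxed-ocs} in Theorem~\ref{thm:relaxed-ocs}, the primal--dual competitiveness statement in Theorem~\ref{thm:two-way-online-primal-dual-analysis}, and the closed-form optimal \matchinglp solution in Theorem~\ref{thm:lp-solution}; the final step is a numerical evaluation of the resulting competitive ratio.

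First I would check that
\[
    p(k) = 2^{-k-\min\{k,\lceil\frac{k+2}{2}\rceil\}}+ k\cdot 2^{-k-\min\{k,\lceil\frac{k+3}{2}\rceil\}}
\]
satisfies the hypotheses of Theorem~\ref{thm:lp-solution}: $p(0)=1$ and $p(k+1)\le \tfrac23 p(k)$ for every $k\ge 0$ (the latter also giving that $p$ is non-increasing, as required by Theorem~\ref{thm:two-way-online-primal-dual-analysis}). For all but a few small values of $k$ the minima are attained by the ceiling terms, so $p(k)$ collapses, according to the parity of $k$, into the clean forms $p(k)=(k+2)\,2^{-3k/2-2}$ for even $k\ge 4$ and $p(k)=(k+1)\,2^{-(3k+3)/2}$ for odd $k\ge 3$. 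From these one reads off $p(k+1)/p(k)=\tfrac12$ when $k$ is even and $p(k+1)/p(k)=\tfrac{k+3}{4(k+1)}\le\tfrac38$ when $k$ is odd, and the remaining cases $k\in\{0,1,2\}$ are immediate ($p(0)=1$, $p(1)=\tfrac12$, $p(2)=\tfrac{3}{16}$, with ratios $\tfrac12$ and $\tfrac38$). So all ratios are at most $\tfrac23$, as needed.

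With these hypotheses verified, Theorem~\ref{thm:lp-solution} furnishes an optimal \matchinglp solution $(\Gamma,a(\cdot),b(\cdot))$ with
\[
    \Gamma = 1-\tfrac13\sum_{i=0}^{\infty}\big(\tfrac23\big)^i p(i)
    ~,
\]
which is in particular a feasible solution, and it is precisely this $a(\cdot)$ that appears in Eqn.~\eqref{eqn:relaxed-ocs}. Theorem~\ref{thm:relaxed-ocs} states that Algorithm~\ref{alg:relaxed-ocs} ensures the selection probability in Eqn.~\eqref{eqn:relaxed-ocs} for this same $p$, so Theorem~\ref{thm:two-way-online-primal-dual-analysis} applies to Algorithm~\ref{alg:ocs-matching} run with Algorithm~\ref{alg:relaxed-ocs} as its online-selection subroutine and parameters $(\Gamma,a,b)$, yielding a $\Gamma$-competitive algorithm for edge-weighted online bipartite matching.

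It then remains to verify $\Gamma\ge 0.519$, i.e. $\sum_{i\ge 0}(\tfrac23)^i p(i)\le 3(1-0.519)=1.443$. Using the closed forms above, the $i$-th term decays like $(\text{linear in }i)\cdot(3\sqrt2)^{-i}$, so truncating after a dozen terms and crudely bounding the tail gives $\sum_{i\ge 0}(\tfrac23)^i p(i)= 1+\tfrac13+\tfrac1{12}+\tfrac1{54}+\tfrac1{216}+\tfrac1{648}+\tfrac1{2916}+\tfrac1{8748}+\cdots\approx 1.4418<1.443$, hence $\Gamma\approx 0.5194\ge 0.519$. There is no genuine obstacle here: everything substantive is already proved, and the only nonroutine work is the parity bookkeeping that puts $p(k)$ into closed form and the mechanical tail estimate for the series.
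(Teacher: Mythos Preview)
Your proposal is correct and follows exactly the route the paper intends: the corollary is stated in the paper immediately after Theorem~\ref{thm:relaxed-ocs} with the remark that it follows by combining with Theorem~\ref{thm:two-way-online-primal-dual-analysis}, and you have simply filled in the details of that combination together with the numerical evaluation via Theorem~\ref{thm:lp-solution}. One tiny omission: among the ``remaining cases $k\in\{0,1,2\}$'' you list ratios $\tfrac12$ and $\tfrac38$ but forget $p(3)/p(2)=\tfrac13$; this is of course still at most $\tfrac23$, so the argument is unaffected.
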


\paragraph{Preliminaries on Boolean Formula with Uniform Input.}
Algorithm~\ref{alg:relaxed-ocs} uses two kinds of random bits that are sampled independently and uniformly:
the initial flags $(\tau_e^0)_{e \in \eset}$, and the sampled elements $(e^t)_{1 \le t \le T}$.
Viewing these random bits as boolean variables, we will represent each selection event by an XOR clause, i.e., an XOR of a subset these boolean variables, their negates, and the constant $1$, such as $X_1 \oplus \neg X_2 \oplus X_3 \oplus 1$.
Next we introduce two properties related to XOR clauses with uniform input.



\begin{lemma}
    \label{lem:xor-1}
    For any uniform and independent boolean variables and $m$ XOR clauses such that each variable is in at most one clause, the probability of satisfying all clauses equals $2^{-m}$.
\end{lemma}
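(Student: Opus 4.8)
The plan is to use the hypothesis that each boolean variable occurs in at most one of the $m$ XOR clauses, which forces the clauses to depend on pairwise disjoint blocks of variables, after which independence and a one-clause computation finish the argument.

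First I would establish the single-clause fact: any XOR clause that contains at least one literal (a variable or its negation) is satisfied with probability exactly $\frac{1}{2}$ under a uniform and independent assignment. To see this, fix any variable $X$ appearing in the clause and condition on the values of all the other variables in that clause; the clause then reduces to $X \oplus b$ (or $\neg X \oplus b$) for some constant $b \in \{0,1\}$ determined by the conditioning. Since $X$ is uniform and independent of the conditioned variables, the reduced expression is uniform on $\{0,1\}$, so the clause is satisfied with probability $\frac{1}{2}$ for every outcome of the conditioning, hence with probability $\frac{1}{2}$ overall. I would remark that the clauses arising from selection events always contain at least one variable, so the degenerate case of a variable-free (constant) clause, where the claim would fail, does not arise.

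Next I would argue mutual independence across the $m$ clauses. Because each variable is in at most one clause, the sets of variables appearing in the $m$ clauses are pairwise disjoint; thus clause $j$ is a function of its own block of variables, and distinct blocks consist of independent uniform bits. Therefore the events $\{\text{clause } j \text{ is satisfied}\}$ for $1 \le j \le m$ are mutually independent. Combining this with the single-clause computation gives
\[
    \Pr\big[\text{all clauses satisfied}\big] = \prod_{j=1}^{m} \Pr\big[\text{clause } j \text{ satisfied}\big] = \prod_{j=1}^{m} \tfrac{1}{2} = 2^{-m}.
\]

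There is essentially no hard step here; the only points requiring care are the bookkeeping that ``each variable in at most one clause'' yields genuine block-independence of the clause values (not merely pairwise independence), and flagging the non-degeneracy assumption that every clause contains at least one literal.
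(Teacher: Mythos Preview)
Your proof is correct and follows the same approach as the paper, which gives only the one-line justification ``each clause independently holds with probability half.'' Your write-up simply unpacks that sentence, and your observation about the need for each clause to contain at least one literal is a valid caveat that the paper leaves implicit.
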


\begin{proof}
    This is because each clause independently holds with probability half.
\end{proof}

\begin{lemma}
    \label{lem:xor-2}
    For any uniform and independent boolean variables and $m$ XOR clauses such that each variable is in at most \emph{two} clauses, the probability of satisfying all clauses is at most $2^{-\lceil\frac{m}{2}\rceil}$.
\end{lemma}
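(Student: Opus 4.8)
The plan is to reinterpret the $m$ XOR clauses as a system of $m$ linear equations over $\mathbb{F}_2$: rewrite each negated literal $\neg X$ as $X\oplus 1$, collect the constants on the right-hand side, and let $A\in\mathbb{F}_2^{m\times n}$ be the resulting coefficient matrix (so row $i$ has a $1$ in column $c$ exactly when variable $X_c$ occurs in clause $C_i$). Then ``satisfying all clauses'' is exactly ``the uniformly random assignment solves $Ax=b$''. Over $\mathbb{F}_2$ the number of solutions is either $0$ (if the system is inconsistent) or $2^{\,n-r}$ where $r=\operatorname{rank}(A)$, so the probability of satisfying all clauses is $0$ or $2^{-r}$, and in both cases it is at most $2^{-r}$. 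Hence it suffices to prove the purely matrix-theoretic statement $r\ge\lceil m/2\rceil$, which no longer mentions the constants $b$ at all.

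First I would analyze the left kernel, i.e.\ the space $\mathcal{D}=\{\,S\subseteq[m]:\text{every variable lies in an even number of the clauses }\{C_i:i\in S\}\,\}$, whose dimension equals $m-r$ by rank–nullity. This is exactly where the hypothesis ``each variable is in at most two clauses'' is used: a variable occurring an even number of times among $\{C_i:i\in S\}$ occurs $0$ or $2$ times, and in the latter case both of its clauses belong to $S$. So build a multigraph $H$ whose vertices are the $m$ clauses, with one edge $\{i,j\}$ for every variable shared by $C_i$ and $C_j$, and call a connected component of $H$ \emph{clean} if none of its clauses contains a variable private to it. The observations above give: $S\in\mathcal{D}$ if and only if $S$ is a union of clean components of $H$. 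Therefore $\mathcal{D}$ is spanned by the (pairwise disjoint) indicator vectors of the clean components, so $\dim\mathcal{D}$ is exactly the number of clean components. Since every XOR clause involves at least one variable and that variable cannot be private in a clean component, every clean component has at least two vertices; hence there are at most $\lfloor m/2\rfloor$ of them, and $r=m-\dim\mathcal{D}\ge\lceil m/2\rceil$, finishing the proof. (As a sanity check, when each variable is in at most one clause $H$ has no edges, no component is clean, $\mathcal{D}=\{\emptyset\}$, and we recover Lemma~\ref{lem:xor-1} with $r=m$.)

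The hard part is the bound $\dim\mathcal{D}\le\lfloor m/2\rfloor$, and the reason I would not attempt a naive induction on $m$ or a distance argument is that a linear code over $\mathbb{F}_2$ of minimum weight $2$ can have dimension far larger than $m/2$ (e.g.\ the even-weight code), so the bound genuinely needs the structural fact that ``at most two clauses per variable'' forces $\mathcal{D}$ to admit a basis of \emph{disjoint} sets, each of size $\ge 2$; the graph $H$ is the device that exposes this. The only point requiring care is degeneracy: the argument uses that each clause is a non-constant XOR of distinct variables (no variable appears twice in a clause, no clause is empty), which is part of the definition of an XOR clause and should be stated explicitly, since a trivially true clause would otherwise contribute a satisfaction probability of $1$ and break the bound.
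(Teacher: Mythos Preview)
Your proof is correct and takes a genuinely different route from the paper. Both arguments build the same multigraph $H$ on the $m$ clauses with an edge for each shared variable, but diverge from there. The paper picks a maximal matching $M$ of $H$ of size $\ell$: the $m-2\ell$ unmatched clauses then share no variables among themselves and contain no matching variable, so Lemma~\ref{lem:xor-1} gives probability $2^{-(m-2\ell)}$ for them; conditioning on all non-matching variables, each matched pair is determined by its single shared matching variable and is satisfied with probability at most $\tfrac12$, independently across pairs, yielding $2^{-(m-\ell)}\le 2^{-\lceil m/2\rceil}$. Your approach instead passes to the rank $r$ of the $\mathbb{F}_2$ coefficient matrix, so that the satisfaction probability is exactly $0$ or $2^{-r}$, and then pins down the left kernel combinatorially as the span of the indicators of the clean components of $H$. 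The paper's argument is more elementary---it never invokes linear algebra and the conditioning step is a two-line calculation---while yours is sharper: it identifies the probability exactly as $2^{-r}$ with $r=m-(\text{number of clean components})$, and makes transparent that the bound is governed by component structure rather than matching size. Both rely on the same implicit non-degeneracy assumption that every clause contains at least one variable, which you correctly flag.
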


\begin{proof}
    Consider an undirected graph $G=(V,E)$ in which the vertices $V = \{1, 2, \dots m\}$ correspond to clauses, and the edges correspond the boolean variables that appears in two clauses:
    \[
        E = \big\{ (i,j)_k: \text{clauses $i$ and $j$ both involve variable $k$} \big\}
        ~.
    \]

    Next, consider any maximal matching $M$ of $G$.
    Let $\ell = |M|$ be the size of the matching.
    Let $U \subseteq V$ denote the set of $m-2\ell$ unmatched vertices (i.e., clauses).
    Since the matching is maximal, the unmatched clauses do not share any variables.
    Hence, over the randomness of the variables not in the matching $M$, the probability of satisfying all unmatched clauses equals $2^{-m+2\ell}$ by Lemma~\ref{lem:xor-1}.

    Further, \emph{conditioned on any realization of the variables not in the matching $M$} and over the randomness of the variables in $M$, each pair of matched clauses hold with probability at most $\frac{1}{2}$.
    
    Therefore, the probability of satisfying all clauses is at most $2^{-m+2\ell} \cdot 2^{-\ell} = 2^{-m+\ell}$.
    The lemma then follows by $\ell \le \lfloor \frac{m}{2} \rfloor$.
\end{proof}


\paragraph{Selection Probabilities.}
We next develop a lemma about probability of not selecting an element $e$ conditioned on the sampled elements $e^t$'s.
The proof of Theorem~\ref{thm:relaxed-ocs} will repeatedly use the lemma.

\begin{lemma}
    \label{lem:relaxed-ocs-conditional-probability-bound}
    For any element $e$ and any $k$ rounds $t_1 < t_2 < \cdots < t_k$ involving $e$, conditioned on any realization of $e^{t_1},\cdots, e^{t_k}$, the probability that $e$ is never selected in these $k$ rounds is at most:
    \[
        2^{-\min \left\{k,\lceil\frac{k+2}{2}\rceil \right\}}
        ~.
    \]
\end{lemma}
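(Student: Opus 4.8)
The plan is to condition on the sampled elements $e^{t_1}, \dots, e^{t_k}$ and reduce the event ``$e$ is never selected in $t_1, \dots, t_k$'' to a conjunction of $k$ XOR clauses over the remaining independent uniform bits --- the initial flags $\tau_g^0$ ($g \in \eset$) and the sample bits $e^t$ for rounds $t \notin \{t_1, \dots, t_k\}$ --- after which Lemmas~\ref{lem:xor-1} and~\ref{lem:xor-2} do the counting. Writing $g_i := e^{t_i}$ for the element probed in round $t_i$, a direct trace of Algorithm~\ref{alg:relaxed-ocs} shows that ``$e$ is not selected in round $t_i$'' is equivalent to ``$\tau_{g_i}$ has a prescribed value $c_i$ at the start of round $t_i$'', where $c_i = 0$ if $g_i = e$ and $c_i = 1$ otherwise (both determined by the conditioning). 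Since $\tau_{g_i}$ at that moment equals $\tau_{g_i}^0$ XORed with the parity of the earlier rounds that probed $g_i$, and since the contribution of the conditioned (list) rounds to that parity is a known constant while each non-list round contributes a literal $e^t \in \{e^t, \neg e^t\}$, clause $i$ reads $\tau_{g_i}^0 \oplus \bigoplus_{t \in S_i} \ell^t = (\text{const})$ with $S_i$ the non-list rounds before $t_i$ that involve $g_i$.

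The next step is to consolidate clauses that share a flag variable. Grouping the clauses by their probed element $g$, for an element probed by rounds $t_{j_1} < t_{j_2} < \cdots$ I keep the first clause (which still carries its private variable $\tau_g^0$) and replace each later clause $j_l$ by its XOR with clause $j_{l-1}$; because no list round strictly between $t_{j_{l-1}}$ and $t_{j_l}$ probes $g$, this ``gap clause'' only asserts that the parity of the non-list rounds in the open interval $(t_{j_{l-1}}, t_{j_l})$ probing $g$ equals a fixed bit, involving sample bits alone (an empty gap forces probability $0$, which is harmless). This yields $D$ ``flag clauses'' (one per distinct probed element, each with its own private $\tau_g^0$) and $k - D$ ``gap clauses'' over sample bits only; moreover, since for each element $g$ the region ``before the first probe of $g$'' and the successive gaps of $g$ partition the timeline of rounds involving $g$, every sample bit $e^t$ with $\eset^t = \{a, b\}$ lies in at most one $a$-clause and at most one $b$-clause, hence in at most two clauses overall.

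Finally I would combine these facts by cases on $D$. If $D \ge 2$, two of the flag clauses carry distinct private variables $\tau_{g_1}^0, \tau_{g_2}^0$ occurring in no other clause, so conditioning them out contributes a factor $\tfrac14$, while the remaining $k-2$ clauses still have every variable in at most two clauses, so Lemma~\ref{lem:xor-2} bounds their conditional probability by $2^{-\lceil (k-2)/2 \rceil}$; the product is $2^{-\lceil(k+2)/2\rceil}$. If $D = 1$, all list rounds probe a single element $g$: if $g = e$ the clauses demand contradictory values of $\tau_e$ whenever $k \ge 2$, so the probability is $0$; otherwise the $k-1$ gap clauses involve probes of $g$ in pairwise disjoint intervals, hence are variable-disjoint, so Lemma~\ref{lem:xor-1} gives $2^{-(k-1)}$ for them and the single flag clause gives $2^{-1}$, for a total of $2^{-k} \le 2^{-\min\{k, \lceil(k+2)/2\rceil\}}$. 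The small cases $k \le 2$, where the bound is just $2^{-k}$, are subsumed.

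I expect the main obstacle to be the bookkeeping in the consolidation step: verifying rigorously that after consolidation each sample bit occurs in at most two clauses while the private flag variables stay genuinely private, and in particular handling list rounds that contain $e$ but probe the other element, whose sample bits are constants and which never flip the flag of $e$ or of the other element's relevant group. One must lean on the facts that $e$ is itself never one of the ``other elements'', that the conditioned rounds contribute only constants to each parity, and that degenerate (empty) gaps can only decrease the probability.
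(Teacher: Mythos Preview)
Your approach is essentially the paper's: the same decomposition into ``flag'' (type-A) and ``gap'' (type-B) XOR clauses, followed by Lemmas~\ref{lem:xor-1} and~\ref{lem:xor-2}. The paper even proves the slightly stronger $2^{-\lceil (k+d)/2\rceil}$ for $d\ge 2$ by using all $d$ private $\tau^0$ variables rather than just two of them, but your weaker use of only two already yields the stated bound.

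There is, however, a genuine slip in your $D=1$, $g=e$ sub-case. You assert that the clauses ``demand contradictory values of $\tau_e$'' and hence the probability is $0$. That would be true if $t_1,\dots,t_k$ were \emph{consecutive} rounds involving $e$, but the lemma places no such restriction: there may be non-list rounds $t\in(t_i,t_{i+1})$ with $e\in\eset^t$, and if such a round happens to probe $e$ it flips $\tau_e$ back to $0$. So the gap clause is a genuine XOR over those non-list sample bits, not an unsatisfiable constant. The fix is immediate and is exactly what the paper does for $d=1$ without splitting on whether the lone probed element is $e$: the flag clause and the $k-1$ gap clauses here use sample bits from pairwise disjoint time intervals (and $\tau_e^0$ only in the flag clause), so every variable lies in at most one clause and Lemma~\ref{lem:xor-1} gives $2^{-k}$, just as in your $g\neq e$ branch.
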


\begin{proof}
    We will prove a stronger result.
    If there are $d$ distinct elements in the realized $e^{t_1}, \dots, e^{t_k}$, then $e$ is unselected in these rounds with probability at most:
    \[
        \begin{cases}
        2^{-k} & \text{if $d=1$;}\\
        2^{-\lceil\frac{k+d}{2} \rceil} & \text{if $d\geq 2$.}
        \end{cases}
    \]

    We next introduce an XOR clause for each $t_i$, $1 \le i \le k$, so that not selecting $e$ in these rounds is equivalent to satisfying all $k$ clauses.
    If round $t_i$ is the earliest among these $k$ rounds that samples element $e^{t_i}$, i.e., $e^{t_i} \ne e^{t_j}$ for any $j < i$, consider a clause that represents the value of flag $\tau_{e^{t_i}}$ at the beginning of round $t_i$:
    \[
        \begin{cases}
            \tau_{e^{t_i}}^0 \oplus \big( \bigoplus_{t < t_i : e^{t_i} \in \eset^t} \mathbf{1} (e^t = e^{t_i} ) \big)
            &
            e^{t_i} = e ~;
            \\[1ex]
            1 \oplus \tau_{e^{t_i}}^0 \oplus \big( \bigoplus_{t < t_i : e^{t_i} \in \eset^t} \mathbf{1} (e^t = e^{t_i} ) \big)
            &
            e^{t_i} \ne e ~.
        \end{cases}
    \]
    We shall refer to such clauses as type-A clauses.

    Otherwise, suppose that element $e^{t_i}$ was most recently sampled in $t_j$, i.e., $e^{t_i} = e^{t_j}$ and $e^{t_i} \ne e^{t_\ell}$ for $j < \ell < i$.
    Consider a clause that represents the parity of the number of times that flag $\tau_{e^{t_i}}$ flips between the two rounds, \emph{including the flip due to round $t_j$}, i.e.: 
    \[
        \textstyle
        1 \oplus \big( \bigoplus_{t_j < t < t_i : e^{t_i} \in \eset^t} \mathbf{1} (e^t = e^{t_i} ) \big)
        ~.
    \]
    We shall refer to such clauses as type-B clauses.
    It captures if the value of $\tau_e$ at the begining of $t_i$ is the same as that at the beginning of round $t_j$, and thus still leads to not selecting element $e$.
    


    If $d = 1$, there are $1$ type-A clause and $k-1$ type-B clauses.
    Further, each variable appears in at most one clause.
    It then follows by Lemma~\ref{lem:xor-1}.

    If $d \ge 2$, there are $d$ type-A clauses and $k-d$ type-B clauses.
    First consider the type-B clauses and the random variables corresponding to the sampled elements $(e^t)_{1 \le t \le T}$.
    Each of these variables appears in at most two clauses.
    By Lemma~\ref{lem:xor-2} the probability of satisfying all these clauses is at most $2^{-\lceil \frac{k-d}{2} \rceil}$.
    Further, each type-A clause has a unique variable $\tau_{e^{t_i}}^0$.
    Hence, over any realization of the sampled elements, the probability of satisfying all $d$ type-A clauses is $2^{-d}$.
    Combining the two bounds proves the lemma.
\end{proof}

\begin{proof}[Proof of Theorem~\ref{thm:relaxed-ocs}]
    First recall the requirement of Eqn.~\eqref{eqn:relaxed-ocs}.
    For any element and any consecutive subsequences of the rounds involving the element with lengths $k_1, k_2, \dots, k_m$, we will upper bound the probability that Algorithm~\ref{alg:relaxed-ocs} never selects $e$ in these rounds by:
    \[
        p \Big( \sum_{i=1}^m k_i \Big) +\frac{1}{2}\sum_{i=2}^m \sum_{j=0}^{k_1+\dots+k_{i-1}-1} a(j)
        ~,
    \]
    where:
    \[
        p(k) = 2^{-k - \min\{k, \lceil \frac{k+2}{2} \rceil \}} + k 2^{-k-\min\{k, \lceil \frac{k+3}{2} \rceil \}}
        ~,
    \]
    and $(a(j))_{j \ge 0}$ take values as in the optimal solution given by Theorem~\ref{thm:lp-solution}.
    Importantly:
    \[
        a(0) \approx 0.2403 > 0.24
        ~.
    \]

    Further recall a remark after Theorem~\ref{thm:two-way-online-primal-dual-analysis} that the guarantee in Eqn.~\eqref{eqn:relaxed-ocs} holds almost trivially for three or more consecutive subsequences and for two subsequences whose total lengths are more than three.
    Hence, the proof will first handle the remaining cases before substantiating the remark.

    \paragraph{One Subsequence.}
    In this case the second term in Eqn.~\eqref{eqn:relaxed-ocs} disappears, so we will upper bound the probability by $p(k)$ alone.
    Suppose that $t_1, t_2, \dots, t_k$ are the rounds in the consecutive subsequence.
    Consider the number of them that sample $e^{t_i}=e$.
    If there are at least two, the flag $\tau_e$ must be $1$ in at least one of them, and by definition Algorithm~\ref{alg:relaxed-ocs} selects $e$ there. 

    If exactly one out of the $k$ rounds samples $e^{t_i} = e$, which happens with probability $k 2^{-k}$ over the randomness of $e^{t_i}$'s, Lemma~\ref{lem:relaxed-ocs-conditional-probability-bound} indicates that the probability of never selecting $e$ in the $k-1$ rounds with $e^{t_i} \ne e$, \emph{conditioned on the realized $e^{t_i}$'s}, is at most $2^{-\min \{k-1, \lceil \frac{k+1}{2} \rceil\}}$.
    Further, over the randomness of the initial flag $\tau_e$, the round with $e^{t_i} = e$ selects $e$ with probability half, independent to the realization of the other $k-1$ rounds.
    In sum, this case happens with probability at most:
    \[
        k 2^{-k - \min \{ k, \lceil \frac{k+3}{2} \rceil \}}
        ~.
    \]

    Finally, if none of the $k$ rounds samples $e^{t_i} = e$, which happens with probability $2^{-k}$ over the randomness of $e^{t_i}$'s, Lemma \ref{lem:relaxed-ocs-conditional-probability-bound} indicates that the probability of never selecting $e$ in these round is at most $2^{-\min\{k,\lceil\frac{k+2}{2}\rceil\}}$.
    In sum, this case happens with probability at most:
    \[
        2^{-k - \min \{ k, \lceil \frac{k+2}{2} \rceil \}}
        ~.
    \]

    Summing the probability bounds in the last two cases gives exactly $p(k)$.

    \paragraph{Two Subsequences, Two Rounds.}
    By Lemma~\ref{lem:relaxed-ocs-conditional-probability-bound}, the probability of not selecting element $e$ in the two rounds is at most $\frac{1}{4}$
    It then follows by $p(2) = \frac{3}{16}$ and by $a(0) > 0.24$.

    \paragraph{Two Subsequences, Three or More Rounds.}
    At least one subsequence must have at least two rounds.
    If two neighboring rounds in the same subsequence both sample $e^t = e$, which happens with probability $\frac{1}{4}$, flag $\tau_e$ must be $1$ in one of them and Algorithm~\ref{alg:relaxed-ocs} selects $e$ in that round.
    Otherwise, Lemma~\ref{lem:relaxed-ocs-conditional-probability-bound} indicates that the probability of never selecting $e$ in these three or more rounds is at most $\frac{1}{8}$.
    In total, the probability is at most $(1-\frac{1}{4}) \frac{1}{8}=\frac{3}{32}$.
    It then follows by $a(0) > 0.24$.

    \paragraph{Three or More Subsequences.}
    Lemma~\ref{lem:relaxed-ocs-conditional-probability-bound} indicates that the probability of never selecting $e$ in these three or more rounds is at most $\frac{1}{8}$.
    It then follows by $a(0) > 0.24$, since the second term in Eqn.~\eqref{eqn:relaxed-ocs} sum to at least $a(0)$.
\end{proof}

\subsection{Multi-way Semi-OCS and Unweighted and Vertex-weighted Online Bipartite Matching}
\label{sec:multi-way-semi-ocs-matching}

This subsection introduces an algorithm \balanceocs that combines an unbounded variant of the \balance algorithm~\cite{KalyanasundaramP:TCS:2000, MehtaSVV:JACM:2007} and a multi-way semi-OCS.
The former assigns one unit of masses to the offline neighbors of each online vertex.
The latter then selects one of them to which the online vertex will match.
We shall analyze its competitive ratio in the unweighted and vertex-weighted online bipartite matching problems.

\balance is parameterized by a non-increasing discounting function $b : [0, +\infty) \to [0, 1]$.
For each online vertex $v$, it continuously assigns one unit of masses to $v$'s neighbors,%
\footnote{In the context of fractional online matching, \balance fractionally matches $v$ to its neighbors.}
prioritizing the ones with the largest discounted weight $w_u b(y_u)$ where $y_u$ denotes the total mass assigned to $u$ so far.
To describe it as an algorithm instead of a continuous process, for any offline vertex $u$ and any threshold marginal utility $\theta \ge 0$, define:
\begin{equation}
    \label{eqn:multi-way-matching-mass-update}
    y_u(\theta) = b^{-1} \Big( \frac{\theta}{w_u} \Big)
    ~.
\end{equation}

We will explain shortly how to interpret the algorithm if $b$ is not continuous or is not strictly increasing, i.e., when the inverse function is not well-defined.
In any case, the discount function $b$ used by our algorithm is continuous and strictly monotone.
%

Define $y_u(\theta) = 0$ if $w_u b(0) < \theta$, e.g., when $\theta > w_u$.
Let $z^+$ denote $\max \{ z , 0 \}$.
Then, we may equivalently interpret the \balance algorithm as choosing a threshold $\theta$ such that:
\begin{equation}
    \label{eqn:multi-way-matching-mass-total}
    \sum_{u : (u, v) \in E} \big( y_u(\theta) - y_u \big)^+ = 1
    ~,
\end{equation}
and then assigning mass $\big( y_u(\theta) - y_u \big)^+$ to each vertex $u$.

For a discount function $b$ whose inverse is not well-defined, let $y_u^-(\theta) = \sup \{ y \ge 0 : w_u b(y) < \theta \}$ and $y_u^+(\theta) = \inf \{ y \ge 0 : w_u b(y) > \theta \}$.
The \balance algorithm chooses an appropriate threshold $\theta$ and choose $y_u(\theta) \in [y_u^-(\theta), y_u^+(\theta)]$ to satisfy Eqn.~\eqref{eqn:multi-way-matching-mass-total}.

The original \balance algorithm cannot assign more than one unit of total mass to any offline vertex, which introduces boundary considerations that complicate the above description.
In our setting, however, the masses are merely input of the multi-way semi-OCS, and therefore the total mass of an offline vertex could be arbitrarily large.

\begin{algorithm}[t]
    \caption{\balanceocs}
    \label{alg:unbounded-balanced}
    \begin{algorithmic}
        \smallskip
        \State \textbf{State variables:} (for each offline vertex $u$)
        \begin{itemize}
            \item Total mass $y_u$ allocated to offline vertex $u$ far;
                initially, $y_u=0$.
        \end{itemize}
        \State \textbf{On the arrival of an online vertex $v\in R$:}
        \begin{enumerate}
            \item Find threshold $\theta \in [0, \infty)$ that satisfies Eqn.~\eqref{eqn:multi-way-matching-mass-total}.
            \item For each neighbor $u$, let $x_u^v = \big( y_u(\theta) - y_u \big)^+$ be its mass in this round.
            \item Match $v$ to the neighbor that the multi-way semi-OCS selects, with $\vec{x}^v$ as the mass vector in this round.
        \end{enumerate}
    \end{algorithmic}
\end{algorithm}

\begin{theorem}
    \label{thm:multi-way-semi-ocs-matching}
    Suppose that $p : [0,\infty) \to [0,1]$ is decreasing and differentiable, and $p(0) = 1$.
    Then, unbounded \balance with a $p$-multi-way semi-OCS (Algorithm~\ref{alg:unbounded-balanced}) is $\Gamma$-competitive for unweighted and vertex-weighted online bipartite matching, where the competitive ratio $\Gamma$ and the corresponding discount function $b$ are from an optimal solution of the following continuous LP
        :
    \begin{align}
        \text{\rm maximize} \quad
        &
        \Gamma
        \tag{\balancelp}
        \\[1ex]
        \text{\rm subject to} \quad
        &
        a(y) + b(y) \leq -p'(y) 
        && \forall y \geq 0
        \label{eqn:multi-way-lp-gain-split}
        \\
        &
        \int_0^y a(z) dz + b(y) \ge \Gamma
        && \forall y \geq 0
        \label{eqn:multi-way-approximate-dual-feasible}
        \\
        & b(y') \le b(y)
        && \forall y' \ge y \label{eqn:multi-way-lp-monotone}
        \\[1.5ex]
        & a(y), b(y) \ge 0
        && \forall y \ge 0
        \notag
    \end{align}
\end{theorem}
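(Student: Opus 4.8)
The plan is to run an online primal--dual (charging) argument in the spirit of the proof of Theorem~\ref{thm:two-way-online-primal-dual-analysis}, but adapted to the fractional, continuous setting of \balance, where it is in fact cleaner: there is a single ``weight level'' per offline vertex and no subsampling, so the extra term appearing in \eqref{eqn:relaxed-ocs} has no analogue here. Observe first that the masses $\vec{x}^v$ produced by Algorithm~\ref{alg:unbounded-balanced} depend only on the graph revealed so far (the $p$-multi-way semi-OCS only affects which neighbor $v$ is finally matched to), so the total mass $y_u$ of each offline vertex $u$ is deterministic. By Definition~\ref{def:mocs} the probability that $u$ is never selected is at most $p(y_u)$, and since in the unweighted and vertex-weighted problems the objective counts each matched offline vertex once with weight $w_u$, the algorithm's expected objective satisfies $\alg \ge \sum_{u \in L} w_u\,(1-p(y_u))$.

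Next I would set up the gain-sharing. Let $\alpha_u,\beta_v \ge 0$ be gains, all initially zero. When an online vertex $v$ arrives and Algorithm~\ref{alg:unbounded-balanced} chooses the water-filling threshold $\theta_v$ of \eqref{eqn:multi-way-matching-mass-total}, each neighbor $u$ with current mass $y_u < y_u(\theta_v)$ receives mass $x_u^v = y_u(\theta_v)-y_u$, and its ``matched probability'' $1-p$ grows by $w_u\int_{y_u}^{y_u(\theta_v)}(-p'(z))\,dz$. I split this increment by adding $w_u\int_{y_u}^{y_u(\theta_v)} b(z)\,dz$ to $\beta_v$ and $w_u\int_{y_u}^{y_u(\theta_v)}(-p'(z)-b(z))\,dz$ to $\alpha_u$. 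Constraint \eqref{eqn:multi-way-lp-gain-split} gives $-p'(z)-b(z)\ge a(z)\ge 0$, so both gains stay non-negative; and since for each $u$ the mass intervals assigned across rounds concatenate to $[0,y_u]$, telescoping together with $p(0)=1$ yields both $\sum_u\alpha_u+\sum_v\beta_v = \sum_u w_u(1-p(y_u)) \le \alg$ and the invariant $\alpha_u \ge w_u\int_0^{y_u} a(z)\,dz$.

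The heart of the proof is the approximate equilibrium $\alpha_u+\beta_v \ge \Gamma\,w_u$ for every edge $(u,v)\in E$. I would first show $\beta_v \ge \theta_v$: for each neighbor $u'$ that receives mass, since $b$ is non-increasing by \eqref{eqn:multi-way-lp-monotone} and $w_{u'}\,b(y_{u'}(\theta_v))=\theta_v$ by \eqref{eqn:multi-way-matching-mass-update}, the contribution $w_{u'}\int b(z)\,dz$ to $\beta_v$ is at least $\theta_v\,x_{u'}^v$; summing over $u'$ and using $\sum_{u'}x_{u'}^v=1$ from \eqref{eqn:multi-way-matching-mass-total} gives $\beta_v\ge\theta_v$. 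Now let $\bar y_u$ be $u$'s total mass just after $v$ is processed. If $u$ received mass from $v$, then $\bar y_u=y_u(\theta_v)$ and $w_u\,b(\bar y_u)=\theta_v$; otherwise $\bar y_u\ge y_u(\theta_v)$, and monotonicity of $b$ gives $w_u\,b(\bar y_u)\le\theta_v$. In either case $w_u\,b(\bar y_u)\le\theta_v\le\beta_v$, and since $\bar y_u\le y_u$ and $a\ge 0$ the invariant yields $\alpha_u\ge w_u\int_0^{\bar y_u}a(z)\,dz$; hence $\alpha_u+\beta_v\ge w_u\left(\int_0^{\bar y_u}a(z)\,dz+b(\bar y_u)\right)\ge \Gamma\,w_u$ by \eqref{eqn:multi-way-approximate-dual-feasible}. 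Finally, fixing a maximum-weight matching $M$ in hindsight, $\alg\ge\sum_u\alpha_u+\sum_v\beta_v\ge\sum_{(u,v)\in M}(\alpha_u+\beta_v)\ge\Gamma\sum_{(u,v)\in M}w_u=\Gamma\cdot\mathrm{OPT}$ (the middle step uses that $M$ is a matching), which is exactly the online primal--dual conclusion of \citet{DevanurJK:SODA:2013}.

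The main obstacle is making the interaction between unbounded \balance's water-filling and the gain-sharing precise --- in particular the steps $\beta_v\ge\theta_v$ and $w_u\,b(\bar y_u)\le\theta_v$, both of which crucially exploit that $b$ is non-increasing and that \balance equalizes the marginal utilities $w_u\,b(y_u)$ at the threshold $\theta_v$; this is the only point where all of \eqref{eqn:multi-way-lp-monotone}, \eqref{eqn:multi-way-matching-mass-update}, and \eqref{eqn:multi-way-matching-mass-total} are needed simultaneously. At a technical level one also has to justify that $\theta_v$ is well-defined, which is where we use that the discount function from the optimal \balancelp solution is continuous, strictly decreasing, and vanishing at infinity (so the inverse in \eqref{eqn:multi-way-matching-mass-update} exists and the total assignable mass ranges continuously from $0$ to $\infty$); for a general $b$ one would instead work with the one-sided inverses described just before Algorithm~\ref{alg:unbounded-balanced}. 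Everything else is bookkeeping already familiar from the proof of Theorem~\ref{thm:two-way-online-primal-dual-analysis}, and I expect the unweighted case to fall out verbatim by setting $w_u=1$.
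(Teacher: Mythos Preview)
Your proposal is correct and follows essentially the same online primal--dual charging argument as the paper: lower bound $\alg$ by $\sum_u w_u(1-p(y_u))$ via the semi-OCS guarantee, split the marginal gain $-p'(z)$ between $\alpha_u$ and $\beta_v$ using $b(z)$, use the water-filling property of \balance together with monotonicity of $b$ to get $\beta_v \ge w_u\,b(\cdot)$ for every neighbor, and conclude via \eqref{eqn:multi-way-approximate-dual-feasible} and an optimal matching. The only cosmetic differences are that you credit $-p'(z)-b(z)$ rather than $a(z)$ to $\alpha_u$ (making the charging an equality instead of an inequality) and you pass through the threshold $\theta_v$ explicitly; both lead to the same place.
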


\begin{proof}
    By the guarantee of $p$-multi-way semi-OCS, each offline vertex $u$ is matched by unbounded \balance with probability at least:
    \[
        1 - p\big( y_u \big)
        ~.
    \]

    Therefore, the expected total weight of the matched vertices is at least:
    \[
        \alg \defeq \sum_{u \in L} w_u \big( 1 - p (y_u) \big)
        ~.
    \]

    Similar to the competitive analysis of the two-choice algorithm (Algorithm~\ref{alg:ocs-matching}), for every online vertex $v$ we will distribute the increase of $\alg$ between vertex $v$ and its offline neighbors.
    Let $\alpha_u$ and $\beta_v$ be the distributed gain of any offline vertex $u$ and any online vertex $v$.
    They are initially zero.
    In the round of online vertex $v$, for each offline neighbor $u$, increase $\alpha_u$ by:
    \[
        w_u \int_{y_u}^{y_u + x_u^v} a(z) dz
        ~,
    \]
    where $y_u$ is the value right before $v$ arirves.

    Further, let $\beta_v$ be:
    \[
        \beta_v \defeq \sum_{u : (u, v) \in E} w_u \int_{y_u}^{y_u + x_u^v} b(z) dz
        ~.
    \]

    \paragraph{Feasibility of the Charging Rule.}
    The total gain distributed above is upper bounded by the increase in $\alg$ because:
    \begin{align*}
        \sum_{u : (u, v) \in E} w_u \int_{y_u}^{y_u + x_u^v} \big( a(z) + b(z) \big) dz
        &
        \le \sum_{u : (u, v) \in E} w_u \int_{y_u}^{y_u + x_u^v} -p'(z) dz
        \tag{Eqn.~\eqref{eqn:multi-way-lp-gain-split}}
        \\
        &
        = \sum_{u : (u, v) \in E} w_u \Big( \big( 1 - p(y_u + x_u^v) \big) - \big( 1 - p(y_u) \big) \Big)
        ~.
    \end{align*}

    \paragraph{Invariant of Offline Gain.}
    By definition, for any offline vertex $u$: 
    \[
        \alpha_u = w_u \int_0^{y_u} a(z) dz
        ~.
    \]

    \paragraph{Invariant of Online Gain.}
    For any online vertex $v$, since the algorithm prefers neighbors with larger $w_u b(y_u)$ and assigns one unit of mass, we get that for any $v$'s neighbor $u$:
    \[
        \beta_v \ge w_u b(y_u)
        ~.
    \]
    This holds for the final value of $y_u$ because $y_u$ only increase over time and the discount function $b$ is non-increasing.

    \paragraph{$\Gamma$-Approximate Equilibrium.}
    The gains satisfy an approximate equilibrium condition in the sense that for any edge $(u, v)$, the total gain of $u$ and $v$ is at least $\Gamma$ times the vertex-weight $w_u$:
    \begin{align*}
        \alpha_u + \beta_v
        &
        \ge w_u \int_0^{y_u} a(z) dz +  w_u b(y_u)
        \tag{Invariants}
        \\
        &
        \ge \Gamma w_u
        ~.
        \tag{Eqn.~\eqref{eqn:multi-way-approximate-dual-feasible}}
    \end{align*}

    Then, for any optimal matching $M \subseteq E$, unbounded \balance is $\Gamma$-competitive because:
    \begin{align*}
        \alg
        &
        \ge \sum_{u \in L} \alpha_u + \sum_{v \in R} \beta_v
        \\
        &
        \ge \sum_{(u, v) \in M} \big( \alpha_u + \beta_v \big)
        \\
        &
        \ge \Gamma \sum_{(u, v) \in M} w_u
        ~.
    \end{align*}

\end{proof}

The LP in Theorem~\ref{thm:multi-way-semi-ocs-matching} has continuously many variables and constraints.
Fortunately, we have an explicit optimal solution for most natural $p$ functions.

\begin{theorem}
    \label{thm:multi-way-lp-solution}
    Suppose that function $p : [0,\infty) \to [0,1]$ is decreasing, convex, and differentiable, and $p(0) = 1$.
    Then, an optimal solution to the \balancelp is:
    \begin{align*}
        \Gamma
        &
        = \int_{0}^\infty e^{-z} \big( 1 - p(z) \big) dz
        ~;
        \\
        b(y)
        &
        = -e^y\int_{y}^\infty p'(z)e^{-z}dz & \forall y\geq 0
        ~;
        \\[1ex]
        a(y)
        &
        = -p'(y)-b(y)&\forall y\geq 0
        ~.
    \end{align*}
\end{theorem}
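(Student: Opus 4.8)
The plan is to establish the two halves separately: first that the proposed triple $(\Gamma, a, b)$ is feasible for the \balancelp, and then that no feasible solution can do better. The key preliminary computation is a single integration by parts. Since $p$ maps into $[0,1]$ (hence is bounded) and $p(0)=1$, the boundary terms in $\int_0^\infty e^{-z}(1-p(z))\,dz = \big[-(1-p(z))e^{-z}\big]_0^\infty - \int_0^\infty e^{-z}p'(z)\,dz$ both vanish, so $\Gamma = -\int_0^\infty e^{-z}p'(z)\,dz = b(0)$. This identifies the claimed objective value with $b(0)$, which will be the pivot of both halves of the argument.

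Next I would verify the constraints one at a time. Constraint~\eqref{eqn:multi-way-lp-gain-split} holds with equality by the very definition $a(y) = -p'(y) - b(y)$, and $b(y)\ge 0$ is immediate from $p'\le 0$. For $a(y)\ge 0$ and for the monotonicity of $b$ I would use convexity of $p$: rewriting $b(y) = -\int_0^\infty p'(y+s)e^{-s}\,ds$, the fact that $p'$ is non-decreasing shows at once that $b$ is non-increasing in $y$, and also that $e^y\int_y^\infty p'(z)e^{-z}\,dz \ge p'(y)$, whence $a(y) = -p'(y) + e^y\int_y^\infty p'(z)e^{-z}\,dz \ge 0$. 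Finally, for constraint~\eqref{eqn:multi-way-approximate-dual-feasible} the cleanest route is to differentiate $F(y) := \int_0^y a(z)\,dz + b(y)$. From $b(y) = -e^y\int_y^\infty p'(z)e^{-z}\,dz$ and the fundamental theorem of calculus one gets the elementary identity $b'(y) = b(y) + p'(y)$, so $F'(y) = a(y) + b'(y) = \big(-p'(y)-b(y)\big) + \big(b(y)+p'(y)\big) = 0$; therefore $F \equiv F(0) = b(0) = \Gamma$, and \eqref{eqn:multi-way-approximate-dual-feasible} in fact holds with equality everywhere.

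For optimality I would run a weak-duality-style averaging argument. Given any feasible $(\Gamma, a, b)$, constraint~\eqref{eqn:multi-way-approximate-dual-feasible} gives $\Gamma \le \int_0^y a(z)\,dz + b(y)$ for all $y\ge 0$; multiplying by the density $e^{-y}$, integrating over $y\in[0,\infty)$, and swapping the order of integration in the double integral (so that $\int_0^\infty e^{-y}\int_0^y a(z)\,dz\,dy = \int_0^\infty a(z)e^{-z}\,dz$) yields $\Gamma \le \int_0^\infty e^{-z}\big(a(z)+b(z)\big)\,dz \le \int_0^\infty e^{-z}(-p'(z))\,dz$, the last step by \eqref{eqn:multi-way-lp-gain-split}. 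By the integration by parts above this final quantity is exactly the proposed value of $\Gamma$, closing the argument; note it uses neither the monotonicity constraint nor $a\ge 0$.

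I expect the only real subtlety to be the bookkeeping at $y = \infty$: one must check that the boundary terms in the two integrations by parts vanish and that all the integrals converge, which follows from $p$ being bounded and decreasing (so $p'$ is integrable against $e^{-z}$). The convexity hypothesis is needed only in the feasibility half, to obtain $a\ge 0$ and the monotonicity of $b$; the optimality bound holds for any decreasing, differentiable $p$ with $p(0)=1$.
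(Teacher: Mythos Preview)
Your feasibility argument matches the paper's essentially line for line: both verify Eqn.~\eqref{eqn:multi-way-lp-gain-split} by definition, both establish Eqn.~\eqref{eqn:multi-way-approximate-dual-feasible} by computing $b'(y)=b(y)+p'(y)$ and concluding that $\int_0^y a + b(y)$ is constant, and both derive $a\ge 0$ and the monotonicity of $b$ from convexity of $p$. One cosmetic difference: the paper writes $a(y)=e^y\int_y^\infty e^{-z}p''(z)\,dz$ via a second integration by parts, implicitly assuming twice differentiability, whereas your substitution $b(y)=-\int_0^\infty p'(y+s)e^{-s}\,ds$ needs only that $p'$ is non-decreasing, which is slightly cleaner under the stated hypotheses.

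Where you diverge is that the paper's appendix proof stops after feasibility and never addresses optimality, despite the theorem's wording. Your weak-duality argument---averaging Eqn.~\eqref{eqn:multi-way-approximate-dual-feasible} against the density $e^{-y}$ and applying Fubini and then Eqn.~\eqref{eqn:multi-way-lp-gain-split}---is exactly the continuous analogue of what the paper does in the discrete setting for Theorem~\ref{thm:lp-solution} (there multiplying by $(2/3)^k$ and summing), and it closes the gap correctly. One small caveat on your closing remark: the Fubini swap is justified by Tonelli because $a\ge 0$, so the optimality bound does quietly use that constraint, even though it does not use the monotonicity of $b$.
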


The proof of this theorem is deferred to Appendix~\ref{app:multi-way-lp-solution}.

\begin{corollary}
    \label{cor:multi-way-matching}
    Unbounded \balance with the multi-way semi-OCS from Theorem~\ref{thm:multiway-ocs} in Section~\ref{sec:multi-way} is at least $0.593$-competitive for unweighted and vertex-weighted online bipartite matching.
\end{corollary}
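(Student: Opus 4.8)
The plan is to instantiate the reduction of Theorem~\ref{thm:multi-way-semi-ocs-matching} with the multi-way semi-OCS of Theorem~\ref{thm:multiway-ocs}, whose unselected-probability bound is $p(y) = \frac{1}{w(y)} = \exp\bigl(-y - \tfrac{y^2}{2} - c y^3\bigr)$ for $c = \frac{4-2\sqrt{3}}{3}$, and then read off the competitive ratio from the closed-form optimal \balancelp solution of Theorem~\ref{thm:multi-way-lp-solution}. Concretely: first verify that this $p$ satisfies the hypotheses of both theorems; then plug $p$ into the formula for the optimal \balancelp value; and finally evaluate the resulting one-dimensional integral and check that it is at least $0.593$.

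For the first step, that $p$ is positive, at most $1$, decreasing, differentiable, and $p(0)=1$ is immediate, so Theorem~\ref{thm:multi-way-semi-ocs-matching} applies. The only hypothesis of Theorem~\ref{thm:multi-way-lp-solution} that needs work is convexity of $p$. Writing $g(y) = -\log p(y) = y + \tfrac{y^2}{2} + c y^3$ we have $p'' = \bigl(g'^2 - g''\bigr)\,p$, so convexity on $[0,\infty)$ amounts to $g'(y)^2 \ge g''(y)$, i.e.\ $(1 + y + 3c y^2)^2 \ge 1 + 6 c y$ for all $y \ge 0$. Expanding, the difference of the two sides equals
\[
    (2 - 6c)\,y + (1 + 6c)\,y^2 + 6c\,y^3 + 9 c^2 y^4 ,
\]
and since $2 - 6c = 4\sqrt{3} - 6 > 0$, $1 + 6c = 9 - 4\sqrt{3} > 0$, and $6c, 9c^2 > 0$, every coefficient is nonnegative; hence the difference is nonnegative on $[0,\infty)$ and $p$ is convex.

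For the remaining steps, Theorem~\ref{thm:multi-way-lp-solution} gives that the optimal value of the \balancelp is
\[
    \Gamma = \int_0^\infty e^{-z}\bigl(1 - p(z)\bigr)\,dz = 1 - \int_0^\infty \exp\!\Bigl(-2z - \tfrac{z^2}{2} - c z^3\Bigr)\,dz ,
\]
and Theorem~\ref{thm:multi-way-semi-ocs-matching} identifies this $\Gamma$ with the competitive ratio of Algorithm~\ref{alg:unbounded-balanced} run with the multi-way semi-OCS of Theorem~\ref{thm:multiway-ocs}. It thus suffices to show $\int_0^\infty \exp\bigl(-2z - z^2/2 - c z^3\bigr)\,dz < 0.407$. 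This is a routine but rigorous numerical estimate: the integrand is a fixed elementary positive function dominated by $e^{-2z}$, so one splits $[0,M] \cup [M,\infty)$, bounds the tail by $\int_M^\infty e^{-2z}\,dz = \tfrac12 e^{-2M}$, and evaluates the head by a quadrature rule with an explicit error bound (the integrand and its derivatives on $[0,M]$ are easily bounded). Carrying this out yields $\int_0^\infty \exp(-2z - z^2/2 - c z^3)\,dz \approx 0.4069$, hence $\Gamma \approx 0.5931 \ge 0.593$.

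The main obstacle is this last numerical verification: the margin of $\Gamma$ over $0.593$ is small, so certifying the third decimal digit requires a careful (though elementary) quadrature with controlled error. The convexity check in the first step is the only other point that demands argument, and it reduces to the polynomial-positivity observation above; everything else is a direct application of Theorems~\ref{thm:multi-way-semi-ocs-matching}, \ref{thm:multi-way-lp-solution}, and \ref{thm:multiway-ocs}.
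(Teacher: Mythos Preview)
Your proposal is correct and follows exactly the approach the paper intends: the corollary is stated without proof, to be read as an immediate consequence of Theorems~\ref{thm:multiway-ocs}, \ref{thm:multi-way-semi-ocs-matching}, and \ref{thm:multi-way-lp-solution} together with a numerical evaluation of $\Gamma = 1 - \int_0^\infty \exp(-2z - z^2/2 - cz^3)\,dz$. Your explicit verification that $p$ is convex (via the polynomial positivity of $(g')^2 - g''$) is a detail the paper leaves implicit but which is indeed needed to invoke Theorem~\ref{thm:multi-way-lp-solution}; the rest is the same route.
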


\section*{Acknowledgment}

We thank Zhihao Gavin Tang and Hu Fu for helpful discussions on online contention resolution schemes.

\bibliographystyle{plainnat}
\bibliography{ocs}

\begin{thebibliography}{32}
\providecommand{\natexlab}[1]{#1}
\providecommand{\url}[1]{\texttt{#1}}
\expandafter\ifx\csname urlstyle\endcsname\relax
  \providecommand{\doi}[1]{doi: #1}\else
  \providecommand{\doi}{doi: \begingroup \urlstyle{rm}\Url}\fi

\bibitem[Adamczyk and Włodarczyk(2018)]{AdamczykW:FOCS:2018}
Marek Adamczyk and Michał Włodarczyk.
\newblock Random order contention resolution schemes.
\newblock In \emph{59th Annual IEEE Symposium on Foundations of Computer
  Science}, pages 790--801. IEEE, 2018.

\bibitem[Aggarwal et~al.(2011)Aggarwal, Goel, Karande, and
  Mehta]{AggarwalGKM:SODA:2011}
Gagan Aggarwal, Gagan Goel, Chinmay Karande, and Aranyak Mehta.
\newblock Online vertex-weighted bipartite matching and single-bid budgeted
  allocations.
\newblock In \emph{22nd Annual ACM-SIAM Symposium on Discrete Algorithms},
  pages 1253--1264. SIAM, 2011.

\bibitem[Alexander(1989)]{Alexander:AnnStat:1989}
Kenneth~S Alexander.
\newblock A counterexample to a correlation inequality in finite sampling.
\newblock \emph{The Annals of Statistics}, pages 436--439, 1989.

\bibitem[Bansal et~al.(2015)Bansal, Buchbinder, Madry, and
  Naor]{BansalBMN:JACM:2015}
Nikhil Bansal, Niv Buchbinder, Aleksander Madry, and Joseph~(Seffi) Naor.
\newblock A polylogarithmic-competitive algorithm for the k-server problem.
\newblock \emph{Journal of the ACM}, 62\penalty0 (5):\penalty0 1--49, 2015.

\bibitem[Blanc and Charikar(2021)]{BlancC:FOCS:2021}
Guy Blanc and Moses Charikar.
\newblock Multiway online correlated selection.
\newblock In \emph{62nd Annual IEEE Symposium on Foundations of Computer
  Science}, 2021.

\bibitem[Buchbinder and Naor(2009)]{BuchbinderN:MOR:2009}
Niv Buchbinder and Joseph~(Seffi) Naor.
\newblock Online primal-dual algorithms for covering and packing.
\newblock \emph{Mathematics of Operations Research}, 34\penalty0 (2):\penalty0
  270--286, 2009.

\bibitem[Buchbinder et~al.(2021)Buchbinder, Joseph, Naor, and
  Wajc]{BuchbinderNW:Arxiv:2021}
Niv Buchbinder, Joseph, Naor, and David Wajc.
\newblock A randomness threshold for online bipartite matching, via lossless
  online rounding.
\newblock \emph{arXiv preprint arXiv:2106.04863}, 2021.

\bibitem[Chan et~al.(2018)Chan, Huang, Jiang, Kang, and
  Tang]{ChanHJKT:TALG:2018}
T-H~Hubert Chan, Zhiyi Huang, Shaofeng H-C Jiang, Ning Kang, and Zhihao~Gavin
  Tang.
\newblock Online submodular maximization with free disposal.
\newblock \emph{ACM Transactions on Algorithms}, 14\penalty0 (4):\penalty0
  1--29, 2018.

\bibitem[Cohen and Wajc(2018)]{CohenW:SODA:2018}
Ilan~Reuven Cohen and David Wajc.
\newblock Randomized online matching in regular graphs.
\newblock In \emph{29th Annual ACM-SIAM Symposium on Discrete Algorithms},
  pages 960--979. SIAM, 2018.

\bibitem[Cohen et~al.(2019)Cohen, Peng, and Wajc]{CohenPW:FOCS:2019}
Ilan~Reuven Cohen, Binghui Peng, and David Wajc.
\newblock Tight bounds for online edge coloring.
\newblock In \emph{60th Annual IEEE Symposium on Foundations of Computer
  Science}, pages 1--25. IEEE, 2019.

\bibitem[Devanur et~al.(2013)Devanur, Jain, and Kleinberg]{DevanurJK:SODA:2013}
Nikhil~R Devanur, Kamal Jain, and Robert~D Kleinberg.
\newblock Randomized primal-dual analysis of ranking for online bipartite
  matching.
\newblock In \emph{24th Annual ACM-SIAM Symposium on Discrete Algorithms},
  pages 101--107. SIAM, 2013.

\bibitem[Devanur et~al.(2016)Devanur, Huang, Korula, Mirrokni, and
  Yan]{DevanurHKMY:TEAC:2016}
Nikhil~R Devanur, Zhiyi Huang, Nitish Korula, Vahab~S Mirrokni, and Qiqi Yan.
\newblock Whole-page optimization and submodular welfare maximization with
  online bidders.
\newblock \emph{ACM Transactions on Economics and Computation}, 4\penalty0
  (3):\penalty0 1--20, 2016.

\bibitem[Dughmi(2011)]{Dughmi:EC:2011}
Shaddin Dughmi.
\newblock A truthful randomized mechanism for combinatorial public projects via
  convex optimization.
\newblock In \emph{12th ACM conference on Electronic Commerce}, pages 263--272,
  2011.

\bibitem[Dughmi(2020)]{Dughmi:ICALP:2020}
Shaddin Dughmi.
\newblock The outer limits of contention resolution on matroids and connections
  to the secretary problem.
\newblock In \emph{47th International Colloquium on Automata, Languages, and
  Programming}. Schloss Dagstuhl-Leibniz-Zentrum für Informatik, 2020.

\bibitem[Dughmi(2021)]{Dughmi:arXiv:2021}
Shaddin Dughmi.
\newblock Matroid secretary is equivalent to contention resolution.
\newblock \emph{arXiv preprint arXiv:2103.04205}, 2021.

\bibitem[Dughmi et~al.(2016)Dughmi, Roughgarden, and Yan]{DughmiRY:JACM:2016}
Shaddin Dughmi, Tim Roughgarden, and Qiqi Yan.
\newblock Optimal mechanisms for combinatorial auctions and combinatorial
  public projects via convex rounding.
\newblock \emph{Journal of the ACM}, 63\penalty0 (4):\penalty0 1--33, 2016.

\bibitem[Ezra et~al.(2020)Ezra, Feldman, Gravin, and Tang]{EzraFGT:EC:2020}
Tomer Ezra, Michal Feldman, Nick Gravin, and Zhihao~Gavin Tang.
\newblock Online stochastic max-weight matching: prophet inequality for vertex
  and edge arrival models.
\newblock In \emph{21st ACM Conference on Economics and Computation}, pages
  769--787, 2020.

\bibitem[Fahrbach et~al.(2020)Fahrbach, Huang, Tao, and
  Zadimoghaddam]{FahrbachHTZ:FOCS:2020}
Matthew Fahrbach, Zhiyi Huang, Runzhou Tao, and Morteza Zadimoghaddam.
\newblock Edge-weighted online bipartite matching.
\newblock In \emph{61st Annual IEEE Symposium on Foundations of Computer
  Science}, pages 412--423, 2020.

\bibitem[Feldman et~al.(2009)Feldman, Korula, Mirrokni, Muthukrishnan, and
  Pál]{FeldmanKMMP:WINE:2009}
Jon Feldman, Nitish Korula, Vahab Mirrokni, Shanmugavelayutham Muthukrishnan,
  and Martin Pál.
\newblock Online ad assignment with free disposal.
\newblock In \emph{5th International Workshop on Internet and Network
  Economics}, pages 374--385. Springer, 2009.

\bibitem[Feldman et~al.(2016)Feldman, Svensson, and
  Zenklusen]{FeldmanSZ:SODA:2016}
Moran Feldman, Ola Svensson, and Rico Zenklusen.
\newblock Online contention resolution schemes.
\newblock In \emph{27th Annual ACM-SIAM Symposium on Discrete Algorithms},
  pages 1014--1033. SIAM, 2016.

\bibitem[Fu et~al.(2021)Fu, Tang, Wu, Wu, and Zhang]{FuTWWZ:ICALP:2021}
Hu~Fu, Zhihao~Gavin Tang, Hongxun Wu, Jinzhao Wu, and Qianfan Zhang.
\newblock Random order vertex arrival contention resolution schemes for
  matching, with applications.
\newblock In \emph{48th International Colloquium on Automata, Languages, and
  Programming}. Schloss Dagstuhl-Leibniz-Zentrum für Informatik, 2021.

\bibitem[Gamlath et~al.(2019{\natexlab{a}})Gamlath, Kale, and
  Svensson]{GamlathKS:SODA:2019}
Buddhima Gamlath, Sagar Kale, and Ola Svensson.
\newblock Beating greedy for stochastic bipartite matching.
\newblock In \emph{30th Annual ACM-SIAM Symposium on Discrete Algorithms},
  pages 2841--2854. SIAM, 2019{\natexlab{a}}.

\bibitem[Gamlath et~al.(2019{\natexlab{b}})Gamlath, Kapralov, Maggiori,
  Svensson, and Wajc]{GamlathKMSW:FOCS:2019}
Buddhima Gamlath, Michael Kapralov, Andreas Maggiori, Ola Svensson, and David
  Wajc.
\newblock Online matching with general arrivals.
\newblock In \emph{60th Annual IEEE Symposium on Foundations of Computer
  Science}, pages 26--37. IEEE, 2019{\natexlab{b}}.

\bibitem[Huang et~al.(2020)Huang, Zhang, and Zhang]{HuangZZ:FOCS:2020}
Zhiyi Huang, Qiankun Zhang, and Yuhao Zhang.
\newblock Adwords in a panorama.
\newblock In \emph{61st Annual IEEE Symposium on Foundations of Computer
  Science}, pages 1416--1426, 2020.

\bibitem[Kalyanasundaram and Pruhs(2000)]{KalyanasundaramP:TCS:2000}
Bala Kalyanasundaram and Kirk~R Pruhs.
\newblock An optimal deterministic algorithm for online b-matching.
\newblock \emph{Theoretical Computer Science}, 233\penalty0 (1-2):\penalty0
  319--325, 2000.

\bibitem[Karp et~al.(1990)Karp, Vazirani, and Vazirani]{KarpVV:STOC:1990}
Richard~M Karp, Umesh~V Vazirani, and Vijay~V Vazirani.
\newblock An optimal algorithm for on-line bipartite matching.
\newblock In \emph{22nd Annual ACM Symposium on Theory of Computing}, pages
  352--358, 1990.

\bibitem[Lee and Singla(2018)]{LeeS:ESA:2018}
Euiwoong Lee and Sahil Singla.
\newblock Optimal online contention resolution schemes via ex-ante prophet
  inequalities.
\newblock In \emph{26th Annual European Symposium on Algorithms}. Schloss
  Dagstuhl-Leibniz-Zentrum für Informatik, 2018.

\bibitem[Mehta et~al.(2007)Mehta, Saberi, Vazirani, and
  Vazirani]{MehtaSVV:JACM:2007}
Aranyak Mehta, Amin Saberi, Umesh Vazirani, and Vijay Vazirani.
\newblock Adwords and generalized online matching.
\newblock \emph{Journal of the ACM}, 54\penalty0 (5):\penalty0 22--es, 2007.

\bibitem[Mehta et~al.(2013)]{Mehta:FTTCS:2013}
Aranyak Mehta et~al.
\newblock Online matching and ad allocation.
\newblock \emph{Foundations and Trends in Theoretical Computer Science},
  8\penalty0 (4):\penalty0 265--368, 2013.

\bibitem[Papadimitriou et~al.(2021)Papadimitriou, Pollner, Saberi, and
  Wajc]{PapadimitriouPSW:EC:2021}
Christos Papadimitriou, Tristan Pollner, Amin Saberi, and David Wajc.
\newblock Online stochastic max-weight bipartite matching: Beyond prophet
  inequalities.
\newblock In \emph{22nd ACM conference on Electronic Commerce}, 2021.

\bibitem[Saberi and Wajc(2021)]{SaberiW:ICALP:2021}
Amin Saberi and David Wajc.
\newblock {The Greedy Algorithm Is not Optimal for On-Line Edge Coloring}.
\newblock In \emph{48th International Colloquium on Automata, Languages, and
  Programming}. Schloss Dagstuhl-Leibniz-Zentrum für Informatik, 2021.

\bibitem[Shin and An(2021)]{ShinA:Arxiv:2021}
Yongho Shin and Hyung-Chan An.
\newblock Making three out of two: Three-way online correlated selection.
\newblock \emph{arXiv preprint arXiv:2107.02605}, 2021.

\end{thebibliography}

\appendix

\section{Missing Proofs in Section~\ref{sec:semi-ocs}}
\label{app:semi-ocs}

\subsection{Positive Correlation in $3$-Way Sampling without Replacement}
\label{app:positive-correlation}

Consider the following counter-example which shows that there could be positive correlation in $3$-way (unweighted) sampling without replacement.
The elements are integers from $1$ to $9$.
It has $7$ rounds:
\[
    \eset^1 = \eset^2 = \big\{ 1, 4, 5 \big\}, \eset^3 = \eset^4 = \big\{ 2, 6, 7 \big\}, \eset^5 = \eset^6 = \big\{ 3, 8, 9 \big\}, \eset^7 = \big\{ 1, 2, 3 \big\}
    ~.
\]

Recall that $\uset^t$ denotes the subset of unselected elements after round $t$, and thus $e \in \uset^t$ denotes the event that element $e$ remains unselected after round $t$.
Further $\uset = \uset^7$ denotes the subset of unselected elements at the end.
On the one hand:
\[
    \Pr \big[ 1, 2 \in \uset \big] = \Pr \big[ s^1, s^2 \ne 1 \big] \Pr \big[ s^3, s^4 \ne 2 \big] \Pr \big[ s^5, s^6 \ne 3 \big] \Pr \big[ s^7 = 3 \mid 1, 2, 3 \in \uset^6 \big] 
    = \Big( \frac{1}{3} \Big)^4 = \frac{1}{81}
    ~.
\]

On the other hand:
\begin{align*}
    \Pr \big[ 1 \in \uset \big] &
    = \Pr \big[ s^1, s^2 \ne 1 \big] \Big(
    \Pr \big[ s^3, s^4 \ne 2 \big] \Pr \big[ s^5, s^6 \ne 3 \big] \Pr \big[ s^7 \ne 1 \mid 1, 2, 3 \in \uset^6 \big]  \\
    & 
    \hspace{90pt}
    + 
    \Pr \big[ s^3, s^4 \ne 2 \big] \Pr \big[ 3 \in \big\{ s^5, s^6 \big\} \big] \Pr \big[ s^7 \ne 1 \mid 1, 2 \in \uset^6, 3 \notin \uset^6 \big]
    \\
    & 
    \hspace{90pt}
    + 
    \Pr \big[ 2 \in \big\{ s^3, s^4 \big\} \big] \Pr \big[ s^5, s^6 \ne 3 \big] \Pr \big[ s^7 \ne 1 \mid 1, 3 \in \uset^6, 2 \notin \uset^6 \big] \Big)
    \\
    &
    = \frac{1}{3} \left( \frac{1}{3} \cdot \frac{1}{3} \cdot \frac{2}{3} + \frac{1}{3} \cdot \frac{2}{3} \cdot \frac{1}{2} + \frac{2}{3} \cdot \frac{1}{3} \cdot \frac{1}{2} \right) = \frac{8}{81} ~.
\end{align*}

Further by symmetry:
\[
    \Pr \big[ 2 \in \uset \big] = \frac{8}{81}
    ~.
\]

Therefore:
\[
    \frac{\Pr[1, 2 \in \uset]}{\Pr[1 \in \uset] \Pr[2 \in \uset]} = \frac{81}{64} > 1
    ~.
\]

\subsection{Proof of Theorem~\ref{thm:semi-ocs-hardness}}
\label{app:semi-ocs-hardness}

We shall construct a distribution of instances and prove the desired probability bound holds on average.
By considering an appropriate distribution of instance, we ensure that the randomness of the instance dictates the selection result.

Consider elements $\eset = \{ e^0_1 = 1, e^0_2 = 2, \dots, e^0_{2^k} = 2^k \}$.
Further for $i$ from $1$ to $k$, recursively define $e^i_j$ to be either $e^{i-1}_{2j-1}$ or $e^{i-1}_{2j}$ uniformly at random.
The instance has pairs $\{ e^i_{2j-1}, e^i_{2j} \}$ for all $0 \le i \le k-1$ and all $1 \le j \le 2^{k-i}$, in ascending order of $i$;
the order with respect to $j$ for any fixed $i$ is unimportant yet for concreteness we define it to be in ascending order as well.

In other words, this is a knockout-tournament-like instance.
First partition the $2^k$ elements into $2^{k-1}$ pairs in lexicographical order.
We shall refer to these pairs as the first stage of the instance.
Then, a randomly chosen ``winner'' from each pair advances to the next stage.
Repeat this process until we have the final ``winner'', denoted as $e^k_1$ by the construction above.
We shall refer to the pairs defined with respect to elements $e^{i-1}_j$, $1 \le j \le 2^{k-i+1}$, as the $i$-th stage of the instance.
For example, consider $2^3$ elements $\{1, 2, \dots, 8\}$.
A possible realization of the random instance proceeds as $\{ 1, 2 \}, \{3, 4\}, \{5, 6\}, \{7, 8\}, \{1, 3\}, \{5, 8\}, \{3, 5\}$.
See Figure 1 for an illustration.
\begin{figure}
    \begin{center}
        \begin{tikzpicture}[> = latex]
        \draw (-1,5.5) node{1};
        \draw (-1,5) node{2};
        \draw (-1.5,5.3) -- (-0.5,5.3);
        \draw (-0.5,5.3) -- (-0.5,4.8);
        \draw (-1.5,4.8) -- (-0.5,4.8);
        
        \draw (-1,4) node{3};
        \draw (-1,3.5) node{4};
        \draw (-1.5,3.8) -- (-0.5,3.8);
        \draw (-0.5,3.8) -- (-0.5,3.3);
        \draw (-1.5,3.3) -- (-0.5,3.3);
        
        \draw (-1,2.5) node{5};
        \draw (-1,2) node{6};
        \draw (-1.5,2.3) -- (-0.5,2.3);
        \draw (-0.5,2.3) -- (-0.5,1.8);
        \draw (-1.5,1.8) -- (-0.5,1.8);
        
        \draw (-1,1) node{7};
        \draw (-1,0.5) node{8};
        \draw (-1.5,0.8) -- (-0.5,0.8);
        \draw (-0.5,0.8) -- (-0.5,0.3);
        \draw (-1.5,0.3) -- (-0.5,0.3);
        
        \draw (0,5.25) node{1};
        \draw (0,3.75) node{3};
        \draw (-0.5,5.05) -- (0.5,5.05);
        \draw (0.5,5.05) -- (0.5,3.55);
        \draw (-0.5,3.55) -- (0.5,3.55);
        
        \draw (0,2.25) node{5};
        \draw (0,0.75) node{8};
        \draw (-0.5,2.05) -- (0.5,2.05);
        \draw (0.5,2.05) -- (0.5,0.55);
        \draw (-0.5,0.55) -- (0.5,0.55);
        
        \draw (1,4.5) node{3};
        \draw (1,1.5) node{5};
        \draw (0.5,4.3) -- (1.5,4.3);
        \draw (1.5,4.3) -- (1.5,1.3);
        \draw (0.5,1.3) -- (1.5,1.3);
     \end{tikzpicture}
    \end{center}
    \caption{Knockout tournament-like input sequence.}
    \label{tournament}
\end{figure}
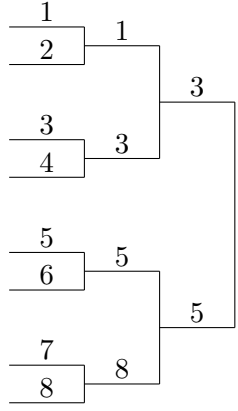

Next we show that with probability at least $2^{-2^k+1}$, the final ``winner'' $e^k_1$ is never selected by the algorithm, despite its $k$ appearances.
Here the probability space is over both the randomness of the algorithm and that of the instance.
In fact, we shall inductively prove the following stronger invariant;
the above claim is the special case when $i = k$. 

\paragraph{Invariant:}
For any $1 \le i \le k$, after processing the pairs involving elements $e^{i-1}_j$'s, i.e., after the first $i$ stages of the tournament, all elements $e^i_j$, $1 \le j \le 2^{k-i}$, in the next stage remain unselected with probability at least:
\[
    2^{-2^k+2^{k-i}}
    ~.
\]

The base case when $i = 0$ is vacuously true.

Next suppose that the invariant holds for $i-1$, and consider the case of $i$.
Below are a set of sufficient conditions under which all elements $e^i_j$ are unselected after the first $i$ stages:
\begin{enumerate}
    \item for any $1 \le j \le 2^{k-i+1}$, $e^{i-1}_j$ is unselected after the first $i-1$ stages;
    \item for any $1 \le j \le 2^{k-i}$, $e^i_j \in \{ e^{i-1}_{2j-1}, e^{i-1}_{2j} \}$ is \emph{not} the one selected by the algorithm for the pair.
\end{enumerate}

By the construction of the random instance, the two events are independent.
The first holds with probability at least $2^{-2^k+2^{k-i+1}}$ by the inductively hypothesis.
The second holds with probability $2^{-2^{k-i}}$ by the construction of the random instance.
Hence, the probability that all elements $e^i_j$ remain unselected after the first $i$ stages is at least:
\[
    2^{-2^k+2^{k-i+1}} \cdot 2^{-2^{k-i}} = 2^{-2^k+2^{k-i}}
    ~.
\]

\section{Missing Proofs in Section~\ref{sec:multi-way}}
\label{app:multi-way}

\subsection{Proof of Lemma~\ref{lemma:multiway-cubic}}
\label{app:multiway-cubic}

Both sides equal $1$ when $x = 0$.
The rest of the proof considers $0<x<1$.
Let $t=\frac{y}{x}$.
We have:
\begin{align*}
    \frac{x}{1-x} w(y)+1-\frac{w(y+x)}{w(y)}
    =&\frac{x}{1-x} w(tx)+1-\frac{w(tx+x)}{w(tx)}\\
    =& \int_{0}^{x}\left( B(x',t) w(t x')-A(x',t)\frac{w(tx'+x')}{w(tx')}\right)dx'\\
    =&\int_{0}^{x} B(x',t) w(t x')\left(1-\frac{A(x',t) w(tx'+x')}{B(x',t) w(tx')^2}\right) dx'\\
    =&\int_{0}^{x} B(x',t) w(t x')\int_{0}^{x'}\frac{C(x'',t)w(tx''+x'')}{B(x'',t)^2 w(tx'')^2}dx'' dx',
\end{align*}
where:
\begin{align*}
    Q(x)
    &
    =\frac{1}{w(x)}\frac{dw(x)}{dx}=1+x+3c x^2
    ~,
    \\[1ex]
    A(x,t)
    &
    =(t+1)Q(t x+x)-t Q(t x) 
    ~,
    \\[1ex]
    B(x,t)
    &
    = \frac{1}{(1-x)^2}+\frac{tx}{1-x} Q(tx)
    ~,
    \\
    C(x,t)
    &
    =A(x,t)\frac{dB}{dx}(x,t)-B(x,t)\frac{dA}{dx}(x,t)-A(x,t)B(x,t)\left((t+1)Q(t x+x)-2t Q(t x)\right)
    ~.
\end{align*}

To prove the lemma, i.e.:
\[
    \frac{w(y+x)}{w(y)}\le \frac{x}{1-x} w(y)+1
    ~,
\]
for any $0 < x < 1$ and $y \ge 0$, we only need to prove $C(x,t) \ge 0$ for any $0 < x < 1$ and any $t \ge 0$.
It is equivalent to show that for any $x, t\ge 0$:
\[
    \frac{(x+1)^6}{x} C\left(\frac{x}{1+x},t\right)\ge 0
    ~.
\]

By computation:
\[
    \frac{(x+1)^6}{x} C\left(\frac{x}{1+x},t\right)=\sum_{i=0}^8 x^i P_i(t)
    ~,
\]
where 
\begin{align*}
    P_0(t)=& -18 c t^2 - 18 c t - 6 c \\
    &+ 6t^2 + 2,\\
    P_1(t)=& 12 c t^3 - 126 c t^2 - 126 c t - 42 c \\&
    + 8 t^3 + 36 t^2 + 17,\\
    P_2(t)=& 42 c t^4 + 54 c t^3 - 408 c t^2 - 390 c t - 126 c \\
    &+ 5 t^4 + 38 t^3 + 88 t^2 + 5 t + 64,\\
    P_3(t)= &54 c^2 t^5 - 27 c^2 t^4 - 144 c^2 t^3 - 135 c^2 t^2 - 54 c^2 t - 9 c^2 \\
    &+ 36 c t^5 + 156 c t^4 + 36 c t^3 - 780 c t^2 - 672 c t - 204 c \\
    &+ 2 t^5 + 17 t^4 + 72 t^3 + 115 t^2 + 30 t + 139,\\
    P_4(t)= &63 c^2 t^6 + 135 c^2 t^5 - 306 c^2 t^4 - 729 c^2 t^3 - 594 c^2 t^2 - 225 c^2 t - 36 c^2 \\
    &+ 21 c t^6 + 75 c t^5 + 183 c t^4 - 117 c t^3 - 882 c t^2 - 654 c t - 180 c \\
    &+ 6 t^5 + 21 t^4 + 72 t^3 + 91 t^2 + 74 t + 190,\\
    P_5(t)= &72 c^2 t^7 - 144 c^2 t^5 - 783 c^2 t^4 - 1242 c^2 t^3 - 927 c^2 t^2 - 342 c^2 t - 54 c^2 \\
    &+ 42 c t^6 + 42 c t^5 + 96 c t^4 - 174 c t^3 - 510 c t^2 - 300 c t - 66 c \\
    &+ 6 t^5 + 11 t^4 + 44 t^3 + 49 t^2 + 96 t + 167,\\
    P_6(t)= &81 c^3 t^8 - 162 c^3 t^7 - 459 c^3 t^6 - 405 c^3 t^5 - 162 c^3 t^4 - 27 c^3 t^3 \\
    &+ 72 c^2 t^7 - 63 c^2 t^6 - 171 c^2 t^5 - 531 c^2 t^4 - 801 c^2 t^3 - 603 c^2 t^2 - 225 c^2 t - 36 c^2\\
    &+ 21 c t^6 + 3 c t^5 + 57 c t^4 - 51 c t^3 - 66 c t^2 + 12 c t + 18 c \\
    &+ 2 t^5 + 2 t^4 + 18 t^3 + 19 t^2 + 69 t + 92,\\
    P_7(t)=& 54 c^2 t^5 - 27 c^2 t^4 - 144 c^2 t^3 - 135 c^2 t^2 - 54 c^2 t - 9 c^2 \\
    &+ 30 c t^4 + 12 c t^3 + 60 c t^2 + 66 c t + 24 c \\
    &+ 4 t^3 + 4 t^2 + 26 t + 29,\\
    P_8(t)=& 18 c t^2 + 18 c t + 6 c \\
    &+ 4 t + 4.
\end{align*}

When $c=\frac{4-2\sqrt{3}}{3}$, we can verify, using numerical computation software, that $P_i(t)$ has positive leading coefficient and no non-negative real roots for $i=1,2,\ldots, 8$, and:
\[
    P_0(t)=\left( 4 \sqrt{3}-6\right) \left(1 - \sqrt{3} t\right)^2\ge 0
    ~. 
\]

Therefore, for any $x, t \ge 0$:
\[
    \frac{(x+1)^6}{x} C\left(\frac{x}{1+x},t\right)\ge 0
    ~.
\]

So the lemma holds.

\subsection{Proof of Lemma~\ref{lemma:multiway-condition}}
\label{app:multiway-condition}

We will use the following lemma which follows by the definition of the weight function $w$ in Eqn.~\eqref{eqn:multiway-weight}.

\begin{lemma}
    \label{lem:multiway-weight-log-convex}
    Function $\log \circ\,w$ is convex in $[0, \infty)$.
\end{lemma}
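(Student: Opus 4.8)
The final statement to prove is Lemma~\ref{lem:multiway-weight-log-convex}: the function $\log \circ\, w$ is convex on $[0, \infty)$.

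\textbf{The plan.} Since $w(y) = \exp\bigl(y + \tfrac{y^2}{2} + c y^3\bigr)$ with $c = \tfrac{4 - 2\sqrt{3}}{3}$, we have by definition $\log w(y) = y + \tfrac{y^2}{2} + c y^3$. This is a polynomial in $y$, so proving convexity on $[0,\infty)$ amounts to checking that its second derivative is nonnegative there. First I would compute $\tfrac{d}{dy}\log w(y) = 1 + y + 3c y^2$ and then $\tfrac{d^2}{dy^2}\log w(y) = 1 + 6c y$. Since $c = \tfrac{4 - 2\sqrt{3}}{3} \approx 0.179 > 0$, the quantity $1 + 6cy$ is a sum of $1$ and a nonnegative term for all $y \ge 0$, hence strictly positive. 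Therefore $\log w$ is (strictly) convex on $[0,\infty)$, which is exactly the claim.

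\textbf{Structure of the write-up.} The proof is essentially a one-line calculation: state $\log w(y) = y + \tfrac{y^2}{2} + c y^3$, differentiate twice to get $(\log w)''(y) = 1 + 6cy$, and observe that this is nonnegative on $[0,\infty)$ because $c \ge 0$. I would present it as a short displayed computation inside the proof environment, with no case analysis needed. One could optionally remark that convexity in fact holds on all of $y \ge -\tfrac{1}{6c}$, but restricting to $[0,\infty)$ is all that the applications (Lemma~\ref{lemma:multiway-condition} and its use via Jensen-type inequalities on the log-scale) require.

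\textbf{Main obstacle.} Frankly there is no real obstacle here — the weight function was chosen precisely so that its logarithm is a cubic polynomial with small positive leading coefficient, making log-convexity transparent. The only thing to be slightly careful about is that convexity is claimed on the half-line $[0,\infty)$ and not on all of $\mathbb{R}$ (the second derivative $1 + 6cy$ does become negative for sufficiently negative $y$), but since masses $y_e^t$ are always nonnegative this is exactly the domain we need. Thus the key step — and the only step — is the elementary observation that $(\log w)'' (y) = 1 + 6cy \ge 1 > 0$ for $y \ge 0$.
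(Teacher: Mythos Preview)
Your proof is correct and is exactly the argument the paper has in mind: the paper simply states that the lemma ``follows by the definition of the weight function $w$,'' and your computation $(\log w)''(y)=1+6cy\ge 1>0$ for $y\ge 0$ makes that explicit.
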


If $\sum_{i=1}^k x_i \ge 1$, the left-hand-side is zero so the inequality holds trivially.
If $\sum_{i=1}^k x_i = 0$, both sides are equal to $1$ so the inequality also holds.
The rest of the proof consider $0 < \sum_{i=1}^k x_i < 1$.

Fix any non-negative $y_1, y_2, \dots, y_k \ge 0$.
Define function $f:[0,+\infty)^k\to (-\infty,+\infty)$ over $\vec{x} = (x_1, x_2, \dots, x_k)$ be the difference between the logarithms of the two sides, i.e.:
\[
    f(\vec{x})= \log \frac{1-\sum_{i=1}^k x_i}{\sum_{i=1}^k x_i w(y_i) +1- \sum_{i=1}^k x_i} -\sum_{i=1}^k\log\frac{w(y_i)}{w(y_i+x_i)}
    ~.
\]

We first argue that $f$ is convex over a simplex, using the log-convexity of the weight function $w$ (Lemma~\ref{lem:multiway-weight-log-convex}).
Consider any $\vec{x} = (x_1, \dots, x_k), \vec{x'} = (x_1', \dots, x_k') \in [0, +\infty)^k$ such that:
\[
    \sum_{i=1}^k x_i = \sum_{i=1}^k x'_i
    ~.
\]

Then, for any $t \in [0,1]$ and the linear $\vec{x''} = t \vec{x} + (1-t) \vec{x'}$, we have:
\begin{align*}
    t f(\vec{x})+(1-t) f(\vec{x'})
    &
    =
    \log\left(1-\sum_{i=1}^k x_i\right)-t \log\left(\sum_{i=1}^k x_i w(y_i) +1- \sum_{i=1}^k x_i\right)\\
    &
    \quad
    -(1-t)\log\left(\sum_{i=1}^k x'_i w(y_i) +1- \sum_{i=1}^k x_i\right)\\
    &
    \quad
    -\sum_{i=1}^{k} \log w(y_i) +\sum_{i=1}^k\left( t \log w(y_i+x_i) +(1-t) \log w(y_i+x'_i)\right)
    ~.
\end{align*}

By the log-convexity of $w$, and by Jensen's inequality on the second and third terms and on the last two terms, this is at least:
\[
    \log\left(1-\sum_{i=1}^k x_i\right)-\log\left(\sum_{i=1}^k x''_i w(y_i) +1- \sum_{i=1}^k x_i\right)-\sum_{i=1}^{k} \log w(y_i) +\sum_{i=1}^k  \log w(y_i+x''_i)
    = f(\vec{x''})
    ~.
\]

Therefore, we have:
\begin{align*}
    f(\vec{x})
    &
    \le \frac{1}{\sum_{i=1}^k x_i} \sum_{i=1}^k x_i f \Big(0,\ldots , 0, \underbrace{\sum_{j=1}^k x_j}_\text{$i$-th entry}, 0,\ldots, 0 \Big)
    \tag{convexity of $f$ on simplex}
    \\
    &
    = \frac{1}{\sum_{i=1}^k x_i}\sum_{i=1}^k x_i \bigg(\log\frac{w(y_i+\sum_{j=1}^k x_j)}{w(y_i)}-\log\Big(\frac{\sum_{j=1}^k x_j}{1-\sum_{j=1}^k x_j} w(y_i)+1\Big)\bigg)
    \\[3ex]
    &
    \le 0
    ~.
    \tag{Lemma~\ref{lemma:multiway-cubic}}
\end{align*}
    
\subsection{A Weaker Version of Lemma~\ref{lemma:multiway-cubic}}
\label{app:multiway-cubic-weak}

The following lemma is a weak version of Lemma~\ref{lemma:multiway-cubic}.
We provide it and a proof that does not involve computer-aided numerical verification.

\begin{lemma}
    For any $y \ge 0$ and any $0 < \delta < 1$:
    \[
        \exp \Big( (y+\delta) + \frac{(y+\delta)^2}{2} + \frac{(y+\delta)^3}{6} \Big) \le \exp \Big( y + \frac{y^2}{2} + \frac{y^3}{6} \Big) + \frac{\delta}{1-\delta} \exp \Big( 2y + y^2 + \frac{y^3}{3} \Big)
        ~.
    \]
\end{lemma}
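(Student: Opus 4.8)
The plan is to reduce the desired inequality, after dividing both sides by $w_0(y) := \exp(y + \frac{y^2}{2} + \frac{y^3}{6})$, to showing
\[
    \exp\Big( \big(y + \tfrac{\delta^2}{2} + \delta y + \ldots\big) \text{-type increment} \Big) - 1 \le \frac{\delta}{1-\delta}\, w_0(y),
\]
and more precisely to prove the pointwise statement $C(x,t) \ge 0$ exactly as in the proof of Lemma~\ref{lemma:multiway-cubic}, but with $c = \frac16$ in place of $c = \frac{4-2\sqrt3}{3}$. So first I would set $\delta = x$, $y = tx$ and run the same factorization: write the difference of the two sides as a double integral $\int_0^x B(x',t)\,w(tx') \int_0^{x'} \frac{C(x'',t)\,w(tx''+x'')}{B(x'',t)^2 w(tx'')^2}\,dx''\,dx'$ with the same auxiliary functions $Q$, $A$, $B$, $C$, where now $Q(x) = 1 + x + \frac{x^2}{2}$ (derivative of $\log w_0$). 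Everything in the integrand except $C$ is manifestly nonnegative, so it suffices to show $C(x,t)\ge 0$ for all $x \in (0,1)$, $t \ge 0$, equivalently $\frac{(x+1)^6}{x} C\!\left(\frac{x}{1+x}, t\right) = \sum_{i=0}^{8} x^i P_i(t) \ge 0$ for all $x,t\ge 0$.

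The point of using $c = \frac16$ rather than $c = \frac{4-2\sqrt3}{3}$ is that with $c = \frac16$ the coefficient polynomials $P_i(t)$ have rational coefficients that factor nicely, so I can exhibit each $P_i(t)$ as a sum of terms of the form (nonnegative rational)$\cdot t^j$ — i.e. all coefficients of $P_i(t)$ turn out to be nonnegative — or, where a coefficient is negative, as an explicit sum of squares times nonnegative coefficients, avoiding any numerical root-finding. I would verify this directly: substitute $c = \frac16$ into the displayed formulas for $P_0,\dots,P_8$ from Appendix~\ref{app:multiway-cubic}, collect, and check nonnegativity of coefficients (the cubic term being much smaller, the quadratic "$+\frac{y^3}{3}$" contributions dominate the "$-18c\,t^2$"-type terms since $18 c = 3$ is small). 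Since each $P_i(t) \ge 0$ for $t \ge 0$ and $x^i \ge 0$, the whole sum is nonnegative, giving $C \ge 0$ and hence the lemma.

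The main obstacle is the bookkeeping: re-deriving $P_0,\dots,P_8$ with $c=\frac16$ and confirming that every coefficient is nonnegative (or splitting off the handful of genuinely negative coefficients into explicit squares). This is the step where something could go wrong — if some $P_i(t)$ does have a positive real root for $c=\frac16$, the clean argument fails and one would need to fall back on a slightly cruder bound, e.g. replacing $\frac{y^3}{6}$ by a smaller cubic coefficient in the statement, or bounding $\frac{\delta}{1-\delta} w_0(y)^2$ from below more wastefully. But heuristically, since $\frac16 < \frac{4 - 2\sqrt3}{3} \approx 0.179$ and the latter is already known to work (Lemma~\ref{lemma:multiway-cubic}), and since the inequality is monotone in $c$ in the favorable direction — a larger $c$ makes $w$ grow faster, which only helps the right-hand side's $w(y)^2$ term relative to the left — the weaker statement with $c=\frac16$ should follow a fortiori, and in fact it should be possible to deduce it directly from Lemma~\ref{lemma:multiway-cubic} by a monotonicity argument, sidestepping the polynomial computation entirely; I would check that route first, and only fall back on the explicit $P_i$ computation if the monotonicity-in-$c$ claim is not literally true at the level of the inequality (as opposed to the level of $w$).
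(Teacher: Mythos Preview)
Your monotonicity-in-$c$ route does not work. The claim that ``larger $c$ only helps the right-hand side's $w(y)^2$ term relative to the left'' ignores that the left side $w(y+\delta)$ also grows with $c$. At $y=0$ the right side equals $\tfrac{1}{1-\delta}$ independently of $c$ while the left side is $\exp(\delta+\tfrac{\delta^2}{2}+c\delta^3)$, so increasing $c$ makes the inequality \emph{tighter} there; for large $y$ the direction reverses. Since $\partial(\mathrm{RHS}-\mathrm{LHS})/\partial c$ changes sign across the $(y,\delta)$ domain, there is no one-line reduction from $c=c^*$ to $c=\tfrac16$. (And even granting your heuristic, the inference runs the wrong way: if larger $c$ helped, knowing the inequality at $c^*>\tfrac16$ would say nothing about the smaller $c$.)

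Your fallback $P_i$ approach may well succeed, but it is not carried out and is not as clean as you hope. The coefficients of $P_i(t)$ are \emph{not} all nonnegative at $c=\tfrac16$: for instance $P_1(t)=10t^3+15t^2-21t+10$ and $P_2(t)=12t^4+47t^3+20t^2-60t+43$. So you would still have to establish positivity of nine polynomials, several of degree up to eight in $t$ and up to cubic in $c$ --- essentially the same computer-style verification that this appendix lemma exists precisely to avoid.

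The paper's proof is entirely different and fully elementary. Since equality holds at $\delta=0$, it compares the $\delta$-derivatives; after taking logarithms, applying the Taylor bound $\ln(1-\delta)\le -\delta-\tfrac{\delta^2}{2}-\tfrac{\delta^3}{3}-\tfrac{\delta^4}{4}$, and the AM--GM step $\tfrac{(y-\delta)^2}{2}+\tfrac{\delta^4}{2}\ge |y-\delta|\,\delta^2$, the problem reduces to bounding $\ln\bigl(1+x+\tfrac{x^2}{2}\bigr)$ for $x=y+\delta$. A short auxiliary lemma shows that $\bigl(x-\ln(1+x+\tfrac{x^2}{2})\bigr)/x^3$ is decreasing, which yields the bounds $\ln(1+x+\tfrac{x^2}{2})\le x$, $\le x-\tfrac{x^3}{24}$, and $\le x-\tfrac{x^3}{36}$ on overlapping ranges of $x$. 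A three-case split --- according to whether $y\ge(1+\sqrt[3]{2})\delta$, $y+\delta\le\tfrac73$, or $\tfrac73\le y+\delta\le 2+\sqrt[3]{2}$ --- then finishes with explicit cubic factorizations, no numerical root-finding required.
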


\begin{proof}
    The inequality holds with equality when $\delta = 0$.
    Hence, it suffices to consider the partials of both sides with respect to $\delta$, and to show that the partial of the right-hand-side is larger, i.e.:
    \[
        \Big(1 + (y+\delta) + \frac{(y+\delta)^2}{2}\Big) \exp \Big( (y+\delta)+\frac{(y+\delta)^2}{2}+\frac{(y+\delta)^3}{6} \Big) \le \frac{1}{(1-\delta)^2} \exp \Big( 2y + y^2 + \frac{y^3}{3} \Big)
        ~.
    \]

    Taking logarithm of both sides and rearrange terms, it is equivalent to:
    \[
        \ln \Big( 1 + (y+\delta) + \frac{(y+\delta)^2}{2} \Big) + 2 \ln \big( 1-\delta \big) \le (y-\delta) + \frac{(y-\delta)^2}{2} + \frac{(y-\delta)^3}{6} - \delta^2 - y\delta^2
        ~.
    \]

    Since $\ln(1-\delta) \le - \delta - \frac{\delta^2}{2} - \frac{\delta^3}{3} - \frac{\delta^4}{4}$ for all $0 < \delta < 1$, after an rearrangement of terms it suffices to show that:
    \[
        \ln \Big( 1 + (y+\delta) + \frac{(y+\delta)^2}{2} \Big) \le (y+\delta) + \frac{(y-\delta)^2}{2} + \frac{y^3}{3} + \frac{2\delta^3}{3} - \frac{(y+\delta)^3}{6} + \frac{\delta^4}{2}
        ~.
    \]

    By the inequality of arithmetic and geometric means:
    \[
        \frac{(y-\delta)^2}{2} + \frac{\delta^4}{2} \ge \big| y - \delta \big| \delta^2
        ~.
    \]

    Hence, we arrive at the final inequality that is sufficient for establishing the inequality regarding the partial derivatives with respect to $y$ and thus, the correct of the lemma:
    \begin{equation}
        \label{eqn:multi-way-idealized-transform}
        \ln \Big( 1 + (y+\delta) + \frac{(y+\delta)^2}{2} \Big) \le (y+\delta) + \big|y-\delta\big|\delta^2 + \frac{y^3}{3} + \frac{2\delta^3}{3} - \frac{(y+\delta)^3}{6}
        ~.
    \end{equation}

    \paragraph{Roadmap of Proving Eqn.~\eqref{eqn:multi-way-idealized-transform}.}
    The naural next step is to upper bound $\ln\big(1+x+\frac{x^2}{2}\big)$ by a polynomial of $x$ to tranform the left-hand-side of the inequality a polynomial over $y$ and $\delta$, just like the right-hand-side.
    The Taylor series $\ln\big(1+x+\frac{x^2}{2}\big) = x - \frac{x^3}{6} + \frac{x^4}{8} - \frac{x^5}{20} + O(x^7)$ suggests natural upper bounds such as $\ln\big(1+x+\frac{x^2}{2}\big) \le x$ and $\ln\big(1+x+\frac{x^2}{2}\big) \le x - \frac{x^3}{6} + \frac{x^4}{8}$.
    Unfortuantely, neithor of these bounds proves Eqn.~\eqref{eqn:multi-way-idealized-transform} for all $y \ge 0$ and all $0 < \delta < 1$.
    In particular, the former fails when $\delta = y$, and the latter leaves a degree-$4$ term that cannot be bounded by the right-hand-side of Eqn.~\eqref{eqn:multi-way-idealized-transform} for sufficiently large $y$.
    Instead, we shall consider three polynomial upper bounds of $\ln \big(1+x+\frac{x^2}{2}\big)$ depending on the range of $x$ and together they cover all the cases.
    These upper bounds are from the next lemma, whose proof is deferred to the end of the subsection.

    \begin{lemma}
        \label{lem:multi-way-idealized-relaxation}
        The function:
        \[
            \frac{x - \ln (1+x+\frac{x^2}{2})}{x^3}
            ~.
        \]
        is decreasing for $x > 0$
    \end{lemma}

    \paragraph{Case 1: Use $\ln \big( 1+x+\frac{x^2}{2} \big) \le x$ for $x \ge 0$, when $y \ge (1+\sqrt[3]{2}) \delta$.}
    In other words, by inequality $\ln \big( 1 + (y+\delta) + \frac{(y+\delta)^2}{2} \big) \le y+\delta$ and $y > \delta$, Eqn.~\eqref{eqn:multi-way-idealized-transform} reduces to:
    \[
        \frac{y^3}{6} - \frac{y^2\delta}{2} + \frac{y\delta^2}{2} - \frac{\delta^3}{2} \ge 0
        ~,
    \]
    or equivalently:
    \[
        \Big( \frac{y}{\delta} \Big)^3 - 3 \Big( \frac{y}{\delta} \Big)^2 + 3 \frac{y}{\delta} - 3 = \Big( \frac{y}{\delta} - 1 \Big)^3 - 2 \ge 0
        ~.
    \]
    
    We remark that this approach can also prove Eqn.~\eqref{eqn:multi-way-idealized-transform} for $y \le c \delta$ for $c \approx 0.83$.
    We skip it since it is covered by the other cases.

    \paragraph{Case 2: Use $\ln \big( 1 + x + \frac{x^2}{2} \big) \le x - \frac{x^3}{24}$ for $0 \le x \le \frac{7}{3}$, if $y + \delta \le \frac{7}{3}$.}
    The stated inequality follows by Lemma~\ref{lem:multi-way-idealized-relaxation} and that for $x = \frac{7}{3}$:
    \[
        \frac{x - \ln (1+x+\frac{x^2}{2})}{x^3} \approx 0.0419 > \frac{1}{24}
        ~.
    \]

    By $\ln \big( 1 + (y+\delta) + \frac{(y+\delta)^2}{2} \big) \le y+\delta - \frac{(y+\delta)^3}{24}$, Eqn.~\eqref{eqn:multi-way-idealized-transform} reduces to:
    \[
        \big|y-\delta\big|\delta^2 + \frac{y^3}{3} + \frac{2\delta^3}{3} - \frac{(y+\delta)^3}{8} \ge 0
        ~.
    \]

    If $y \ge \delta$, the left-hand-side equals:
    \[
        \frac{5}{24} y^3 - \frac{3}{8} y^2 \delta + \frac{5}{8} y \delta^2 - \frac{11}{24} \delta^3 = \frac{1}{24} \big( y-\delta \big) \big( 5y^2 - 4y\delta + 11 \delta^2 \big) \ge 0
        ~.
    \]

    If $y < \delta$, the left-hand-side equals:
    \[
        \frac{5}{24} y^3 - \frac{3}{8} y^2 \delta - \frac{11}{8} y \delta^2 + \frac{37}{24} \delta^3 = \frac{1}{24} \big(\delta-y\big) \big( 37 \delta^2 + 4\delta y - 5y^2 \big) \ge 0
        ~.
    \]

    \paragraph{Case 3: Use $\ln \big( 1+x+\frac{x^2}{2} \big) \le x - \frac{x^3}{36}$ for $0 \le x \le 2+\sqrt[3]{2}$, if $\frac{7}{3} \le y+\delta \le 2+\sqrt[3]{2}$.}
    The stated inequality follows by Lemma~\ref{lem:multi-way-idealized-relaxation} and that for $x = 2+\sqrt[3]{2}$:
    \[
        \frac{x - \ln (1+x+\frac{x^2}{2})}{x^3} \approx 0.028 > \frac{1}{36}
        ~.
    \]

    The argument for this case combines the assumption of $y+\delta > \frac{7}{3}$ and that $0 < \delta < 1$ to derive:
    \[
        y > \frac{4}{3} \delta
        ~.
    \]

    By $\ln \big( 1 + (y+\delta) + \frac{(y+\delta)^2}{2} \big) \le y+\delta - \frac{(y+\delta)^3}{36}$ and by $y > \delta$, Eqn.~\eqref{eqn:multi-way-idealized-transform} reduces to:
    \[
        (y-\delta)\delta^2 + \frac{y^3}{3} + \frac{2\delta^3}{3} - \frac{5(y+\delta)^3}{36} \ge 0
        ~.
    \]

    Rearranging terms, the left-hand-side equals:
    \[
        \frac{7}{36} y^3 - \frac{5}{12} y^2\delta + \frac{7}{12} y\delta^2 - \frac{17}{36} \delta^3 = \frac{1}{144} \big(5y-6\delta\big)^2 y + \frac{1}{48} \big(y^2-\delta^2\big)y + \frac{17}{48} \delta^2 \big( y - \frac{4}{3} \delta \big) \ge 0
        ~.
    \]

    \bigskip
    
    The three cases cover all $y \ge 0$ and $0 < \delta < 1$ since the last two cases prove the lemma for any $y, \delta$ that satisfies $y + \delta \le 2 + \sqrt[3]{2}$.
    For any $y, \delta$ with $y + \delta > 2 + \sqrt[3]{2}$ (and $0 < \delta < 1$) must satisfy $y > (1+\sqrt[3]{2}) \delta$ and therefore is covered by the first case. 
\end{proof}

\begin{proof}[Proof of Lemma~\ref{lem:multi-way-idealized-relaxation}]
    We shall prove that the derivative is non-positive, i.e.:
    \[
        \frac{d}{dx} \frac{x-\ln(1+x+\frac{x^2}{2})}{x^3} = \frac{3}{x^4} \Big( \ln \big(1+x+\frac{x^2}{2}\big) - x + \frac{x^3}{6(1+x+\frac{x^2}{2})} \Big) \le 0
            ~.
    \]

    Equivalently, we need to show that:
    \[
        \ln \big(1+x+\frac{x^2}{2}\big) - x + \frac{x^3}{6(1+x+\frac{x^2}{2})} \le 0
        ~.
    \]
    
    Since this holds with equality at $x = 0$, it suffices to prove that its derivative is non-positive.
    This follows by:
    \[
        \frac{d}{dx} \Big( \ln \big(1+x+\frac{x^2}{2}\big) - x + \frac{x^3}{6(1+x+\frac{x^2}{2})} \Big) = - \frac{x^3(1+x)}{6(1+x+\frac{x^2}{2})^2} \le 0
        ~.
    \]
\end{proof}

\section{Missing Proofs in Section~\ref{sec:ocs}}
\label{app:ocs}

\subsection{Proof of Lemma~\ref{lem:pseudo-path}}
It suffices to show that the pseudo-paths are pairwise disjoint. Consider an arc $(p,c)$ and a pseudo-path $P$ that involves it. It is clear that $(p,c)$ can be adjacent with at most two arcs: the other in-arc of $c$ (if exists) and the other out-arc of $p$ (if exists). As $P$ is maximal, if $c$ has another in-arc in $G^{\text{ex-ante}}$, it should be adjacent to $(p,c)$ in $P$. Similarly, if $p$ has another out-arc $(p,c')$ in $G^{\text{ex-ante}}$ such that rounds $p,c,c'$ have a common element, it should also be adjacent to $(p,c)$ in $P$. Therefore, for any arc $(p,c)$, the set of its adjacent arcs in any $P$ is fixed according to the ex-ante dependence graph. Hence, for any arc $(p,c)$, there is a unique pseudo-path in the collection that involves it. 

\subsection{Proof of Lemma~\ref{lem:pseudo-matching}}

By the definition of good forests, pseudo-paths and pseudo-matchings, the following statements are equivalent:
\begin{enumerate}
    \item A subgraph of the ex-ante dependence graph is a good forest.
    \item A subgraph satisfies: each node has at most one in-arc; and there is no node $p$ with two out-arcs $(p,c)$ and $(p,c')$ such that rounds $p,c,c'$ have a common element.
    \item A subgraph is a union of pseudo-matchings, one for each pseudo-path.
\end{enumerate}

In particular, the equivalence between the first two follows by the definition of good forests.
The equivalence between the last two follows by the definitions of pseudo-paths and pseudo-matchings.

\subsection{Proof of Lemma~\ref{lem:pseudo-path-arrival}}

\begin{lemma}
\label{lem:app-forest-constructor-common-element}
For any three different rounds $1\leq p<c<c'\leq T$ such that $(p,c),(p,c')\in E^{\text{ex-ante}}$ and all rounds have a common element, rounds $p,c'$ have the same set of elements, i.e., $\eset^p=\eset^{c'}$. 
\end{lemma}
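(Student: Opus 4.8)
\textbf{Proof plan for Lemma~\ref{lem:app-forest-constructor-common-element}.}

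The statement concerns three rounds $p < c < c'$ such that both $(p,c)$ and $(p,c')$ are arcs of the ex-ante dependence graph (in particular, by the subscript they carry, $(p,c) = (p,c)_{e_1}$ and $(p,c') = (p,c')_{e_2}$ for some $e_1, e_2 \in \eset^p$), and such that the three rounds share a common element $e^\ast$. The goal is to conclude $\eset^p = \eset^{c'}$. The plan is to exploit the defining ``no round in between'' condition of ex-ante arcs: $(p,c)_{e_1} \in E^\exante$ forces $e_1 \notin \eset^{t}$ for all $p < t < c$, and $(p,c')_{e_2} \in E^\exante$ forces $e_2 \notin \eset^{t}$ for all $p < t < c'$.

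First I would pin down the common element $e^\ast$. Since $e^\ast \in \eset^p$ and $\eset^p = \{e_1, e_2\}$ (each round has exactly two elements), we have $e^\ast \in \{e_1, e_2\}$; say without loss of generality $e^\ast = e_1$. But $e^\ast$ also lies in $\eset^{c'}$, and the arc $(p, c')_{e_2}$ requires $e_2 \notin \eset^{t}$ for every $t$ strictly between $p$ and $c'$. Now $c$ is such a round (we have $p < c < c'$), and I claim $e^\ast = e_1 \in \eset^{c}$: indeed $(p,c)_{e_1}$ is an arc, so $e_1 \in \eset^{c}$ by definition. Hence the common element shared by all three rounds must be $e_1$, not $e_2$ — so fixing $e^\ast = e_1$ is genuinely without loss of generality, and moreover $e_2 \notin \eset^{c}$.

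Next, I would identify the second element of $\eset^{c'}$. We know $e_1 = e^\ast \in \eset^{c'}$ and $e_2 \in \eset^{c'}$ (that is the content of the arc $(p,c')_{e_2}$). Since $\eset^{c'}$ has exactly two elements and $e_1 \ne e_2$ (they are the two distinct elements of $\eset^p$), we get $\eset^{c'} = \{e_1, e_2\} = \eset^p$, which is exactly the claim. The only subtlety to check is that $e_1 \ne e_2$, i.e.\ that $\eset^p$ is genuinely a two-element set rather than a multiset with a repeated element; this is the standing convention in the paper that each round presents a pair of distinct elements, so it may be invoked directly. I do not anticipate a serious obstacle here — the argument is essentially a bookkeeping exercise with the ex-ante arc definition — but the one place to be careful is the ``in between'' quantifier: one must use $p < c < c'$ to place $c$ in the open interval $(p, c')$ so that the constraint from $(p,c')_{e_2}$ applies to round $c$, and separately use the constraint from $(p,c)_{e_1}$ only on the (possibly empty) interval $(p,c)$. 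Once the common element is correctly identified as $e_1$, the rest is immediate.
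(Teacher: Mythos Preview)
Your proposal is correct and follows essentially the same argument as the paper: writing $\eset^p=\{e_1,e_2\}$ with the two out-arcs labeled $(p,c)_{e_1}$ and $(p,c')_{e_2}$, you use the ``no round in between'' clause of the arc $(p,c')_{e_2}$ at $t=c$ to get $e_2\notin\eset^c$, forcing the common element to be $e_1$, and then read off $\eset^{c'}=\{e_1,e_2\}=\eset^p$. The only cosmetic difference is that you first posit $e^\ast=e_1$ and then verify it is forced, whereas the paper deduces $e_2\notin\eset^c$ first; also note that $e_1\neq e_2$ follows already from the fact that a node has at most one out-arc per element, not merely from $|\eset^p|=2$.
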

\begin{proof}
Suppose $\eset^p=\{a,b\}$ and the subscript of $(p,c)$ is $a$.
Then the subscript of $(p,c')$ is $b$.
Since $c'$ is first round involving element $b$ after round $p$, we have $b\notin \eset^{c}$. Further by that $p,c,c'$ have a common element, the common element can only be $a$. In sum, $\eset^{c'}=\{a,b\}=\eset^p$. 
\end{proof}

We shall prove the lemma by contradiction.
Suppose for contrary that there is a pseudo-path $P$ and $1 \leq t \leq T$ such that the subset of arcs in $P$ that arrive in the first $t$ rounds is not a sub-pseudo-path.
Since all arcs in $P$ arrive after round $T$ and form a single pseudo-path, there must be a round $t'>t$ such that in-arcs of $t'$ concatenate two sub-pseudo-paths.

Because of the definition of pseudo-paths and Lemma~\ref{lem:pseudo-path}, for each round $t$, the in-arcs of $t$ are simultaneously added into a pseudo-path such that they are adjacent in the pseudo-path. As a result, there are only two possibilities of the concatenation:
\begin{enumerate}
    \item If $t'$ has only one in-arc, this in-arc is next to only at most one other arc in the pseudo-path by definition.
        Hence, it cannot concatenate two sub-pseudo-paths.
    \item If $t'$ has two in-arcs $(p,t'), (p',t')$ and it concatenates two sub-pseudo-paths together in round $t'$, there must be $c,c'<t'$ such that $(p,c), (p',c') \in P$, and further rounds $p,c,t'$ must have a common element and rounds $p',c',t'$ must have a common element.
        By Lemma~\ref{lem:app-forest-constructor-common-element}, we have $\eset^p=\eset^t=\eset^{p'}$.
        If $p\neq p'$, the in-arc from the earlier one, e.g., $(p,t')$ shall not exist in the ex-ante dependence graph by definition.
        If $p=p'$, on the other hand, arcs $(p,c)$ $(p',c)$ do not exist in the ex-ante dependence graph.
        In fact, they shall be two parallel arcs $(p, t')$ that form a pseudo-path on their own, like the right-most pseudo-path in Figure~\ref{fig:ocs-forest-constructor-pseudo-path}.
\end{enumerate}

In sum, there is always a contraction in all cases.

\subsection{Proof of Lemma~\ref{lem:forest-constructor-automata-stationary}}

With the rows and columns in the order of $\unmatched, \ready, \matched$, the transition matrices $P^+$ of $\sigma^+$ and $P^-$ of $\sigma^-$ are as follows:
\[
		P^+
        =
        \begin{bmatrix}
            0 & 1-p & p \\
            0 & 0 & 1 \\
            1 & 0 & 0
        \end{bmatrix}
       	~,
        \qquad
       	P^-
       	=
       	\begin{bmatrix}
       		0 & 0 & 1\\
       		1 & 0 & 0\\
       		p & 1-p & 0
       	\end{bmatrix}
       	~.
\]

Then, the proof of the stationary distribution follows from the next two equations:
\begin{align*}
\vec{\pi} P^+ &= \bigg( \frac{1}{3-p}, (1-p)\cdot \frac{1}{3-p}, p\cdot \frac{1}{3-p}+\frac{1-p}{3-p} \bigg) = \bigg( \frac{1}{3-p}, \frac{1-p}{3-p}, \frac{1}{3-p} \bigg) = \vec{\pi}~,\\
\vec{\pi} P^- &= \bigg( \frac{1-p}{3-p}+p\cdot \frac{1}{3-p}, (1-p)\cdot \frac{1}{3-p}, \frac{1}{3-p} \bigg) = \bigg( \frac{1}{3-p}, \frac{1-p}{3-p}, \frac{1}{3-p} \bigg) = \vec{\pi}~.
\end{align*}

\subsection{Proof of Lemma \ref{lem:forest-constructor-automata-inverse}}

By the definitions of $\sigma^+$ and $\sigma^-$, the state sequences determine the corresponding choice sequence, i.e. $\state^i=\matched$ if and only if $\choice_i=\yes$ and $\hat{\state}^i=\matched$ if and only if $\hat{\choice}_i=\yes$.
Therefore, it suffices to show that the distributions of the state sequences $(\state^i)_{0 \le i \le \ell}$ and $(\hat{\state}^i)_{0 \le i \le \ell}$ are the same.

With the chain rule: 
\begin{align*}
    \Pr[\state^0, \dots, \state^\ell ] &= \prod_{i=0}^{\ell}\Pr[\state^i \big| \state^0, \dots, \state^{i-1}]~,\\
    \Pr[\hat{\state}^0, \dots, \hat{\state}^{\ell} ] &= \prod_{i=0}^{\ell}\Pr[\hat{\state}^i \big| \hat{\state}^0, \dots, \hat{\state}^{i-1}]~.
\end{align*}

It suffices to show that for any $1\leq i\leq \ell$:
\begin{align*}
\Pr[\state^i \big| \state^0, \dots, \state^{i-1}]
=\Pr[\hat{\state}^i \big| \hat{\state}^0, \dots, \hat{\state}^{i-1}]~.
\end{align*}

The case when $i=0$ follows by Lemma \ref{lem:forest-constructor-automata-stationary}, i.e., by that states $\state^0,\hat{\state}^0$ both follow the common stationary distribution of $\sigma^+, \sigma^-$.
Next, we consider the other conditional probabilities.

For $i_0\leq i\leq \ell$, it follows directly from the memoryless property of probabilistic automata:
\begin{align*}
&\Pr[\state^i \big| \state^0, \dots, \state^{i-1}] = \Pr[\state^i \big| \state^{i-1}]
= \Pr[\hat{\state}^i \big| \hat{\state}^{i-1}] = \Pr[\hat{\state}^i \big|  \hat{\state}^0, \dots, \hat{\state}^{i-1}]~.
\end{align*}

For $1\leq i< i_0$, the memoryless property still holds for $\state^i$. On the other hand, it can be deduced that the memoryless property also holds for $\hat{\state}^i$:
\begin{align*}
\Pr[\hat{\state}^i\big|\hat{\state}^0, \dots, \hat{\state}^{i-1}] &= \frac{\Pr[\hat{\state}^0, \dots,\hat{\state}^i]}{\Pr[ \hat{\state}^0, \dots, \hat{\state}^{i-1}]}\\
&= \frac{\Pr[\hat{\state}^0, \dots, \hat{\state}^{i-2}\big|\hat{\state}^{i-1}, \hat{\state}^i]\cdot \Pr[\hat{\state}^{i-1}, \hat{\state}^i]}{\Pr[\hat{\state}^0, \dots, \hat{\state}^{i-2}\big|\hat{\state}^{i-1}]\cdot \Pr[\hat{\state}^{i-1}]}\\[1ex]
&= \Pr[\hat{\state}^i\big|\hat{\state}^{i-1}]
~.
\end{align*}

It remains to verify that the joint distributions of state pairs $(q^i, q^{i-1})$ and $(\hat{q}^i, \hat{q}^{i_1})$ are identical,
which follows by:
\[
\text{diag}(\vec{\pi})P^+ = 
\begin{bmatrix}
0 & 1-p & p\\
0 & 0 & 1-p\\
1 & 0 & 0
\end{bmatrix}
=\Big(\text{diag}(\vec{\pi})P^-\Big)^T~.
\]

\subsection{Proof of Lemma~\ref{lem:forest-constructor-automata-selection-prob}}

By Lemma~\ref{lem:forest-constructor-automata-matched-state}, to show the first equation it suffices to show for any $i\leq j$:
\[
    \Pr[\state^j=\matched|\state^i=\matched]=f_{i-j}
    ~.
\]

The proof is an induction that corresponds to the recurrence.
First consider the base cases.
The case when $j=i$ is trivial.
The case when $j=i+1$ holds because $\state^i=\matched$ implies $\state^{i+1}=\unmatched$ by the definition of $\sigma^+$.
The case when $j=i+2$ holds because $\state^i=\matched$ implies $\state^{i+1}=\unmatched$, from which $\sigma^+$ transition to $\state^{i+2}=\matched$ with probability $p$ by definition.

Finally consider the case when $j \ge i+3$.
There are only two possibilities in the two rounds after $i$.
The first case is $\state^{i+1} = \unmatched$ and $\state^{i+2} = \matched$, which happens with with probability $p$, and after which $\state^j = \matched$ with probability $f_{j-i-2}$ by the inductive hypothesis.
The other case is $\state^{i+1} = \unmatched$, $\state^{i+2} = \ready$, and $\state^{i+3} = \matched$, which happens with with probability $1-p$, and after which $\state^j = \matched$ with probability $f_{j-i-3}$ by the inductive hypothesis.
Putting together:
\begin{align*}
    \Pr[\state^j=\matched | \state^i=\matched]
    &
    =
    p \cdot f_{j-i-2} + (1-p) \cdot f_{j-i-3}
    \\[1ex]
    &
    =f_{j-i}
    ~.
\end{align*}

Further, according to the definition of $\sigma^+$,  $\state^i=\ready$ if and only if $\state^{i+1}=\matched$ while $\state^i=\unmatched$ if and only if $\state^{i-1}=\matched$. Thus, it follows that:
\begin{align*}
    \Pr \big[ \choice_j = \yes \mid \state^i = \ready \big]
        &
        =
        \Pr \big[ \choice_j = \yes \mid \state^{i+1} = \matched \big]
        =
        f_{j-i-1}
        ~,
        \\
        \Pr \big[ \choice_j = \yes \mid \state^i = \unmatched \big]
        &
        =
        \Pr \big[ \choice_j = \yes \mid \state^{i-1} = \matched \big]
        =
        f_{j-i+1}
        ~.
\end{align*}

\subsection{Proof of Lemma~\ref{lem:forest-constructor-automata-selection-prob-bound}}
By the recurrence, the first seven terms of the sequence $\{f_k\}$ are:
\begin{align*}
f_0&=1~, & f_1&=0~, &  f_2&=p~, & f_3&=1-p~,\\ 
f_4&=p^2~, & f_5&=2p(1-p)~, & f_6&=p^3+(1-p)^2~. & & 
\end{align*}

As a result of $p\geq \frac{\sqrt{5}-1}{2}$, $f_2=p,f_4=p^2\geq f_3=1-p$. For any $k\geq 5$, as $f_k=pf_{k-2}+(1-p)f_{k-3}$, it is easy to see that $f_k\geq f_3=1-p$ by induction.

On the other hand, as $f_6=pf_4+(1-p)f_3$ and $f_4\geq f_3$, $f_6\leq f_4$. Further, as $p\leq \frac{2}{3}$, $f_5=2p(1-p)\geq p^2=f_4\geq f_6$. For any $k\geq 7$, as $f_k=pf_{k-2}+(1-p)f_{k-3}$, it is easy to see that $f_k\geq f_6=p^3+(1-p)^2$ by induction.

\subsection{Proof of Lemma~\ref{lem:pseudo-path-induced}}
Consider any element $e$, any subset of nodes $U\subseteq V$ involving $e$ and any pseudo-path $P=\big((t_i,t'_i)_{e_i}\big)_{1\leq i\leq \ell}$. Consider the subset that is the union of $E^{\text{ex-ante}}_{U,e}\cap P$ and the subset of arcs in $P$ that are adjacent to two distinct arcs of $E^{\text{ex-ante}}_{U,e}\cap P$. Simply by definition, the first statement holds for this subset. Note that any two adjacent arcs in $P$ are either two in-arcs or two out-arcs of a same node, any two adjacent arcs in $P$ have different subscripts. Therefore any two arcs in $E^{\text{ex-ante}}_{U,e}\cap P$ are not adjacent in $P$ and thus each maximal sub-pseudo-path satisfies the fourth statement, i.e. it alternates between arcs with subscript $e$ and arcs with other subscripts, and the second statement, i.e. it is odd-length. Next, we shall show the third statement for this subset. 

By definition of the subset and the fourth statement, it suffices to show that any two arcs with subscript $e$ cannot be two arcs apart in $P$. Since in-arcs of a node are added simultaneously into the same pseudo-path in its corresponding round, the first arcs added into the pseudo-path should be the out-arcs of a node, say the initiative node of $P$. Let $i_0$ be the minimum index among the first arcs of $P$. The initiative node is then $t_{i_0}'$. Let $\{a,b\}$ be the set of elements of initiative node, i.e $\eset^{t_{i_0}'}=\{a,b\}$, and $a$ be the subscript of the $i_0$-th arc, i.e. $e_{i_0}=a$. Then, the subscripts of arcs in $P$ on the pseudo-path can be characterized by the following lemma. 

\begin{lemma} 
The subscripts of arcs in $P$ satisfies:
\begin{enumerate}
	\item For any $1\leq i\leq i_0$, the subscript of the $i$-th arc is $a$ if and only if $i_0-i$ is even; and 
	\item For any $i_0< i\leq \ell$, the subscript of the $i$-th arc is $b$ if and only if $i-i_0$ is odd. 
\end{enumerate}
\end{lemma}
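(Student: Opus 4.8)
The plan is to prove both statements by one induction that moves \emph{outward} from the initiative node $t_{i_0}'$, treating the \emph{negative end} ($i\le i_0$, which gives statement~1) and the \emph{positive end} ($i>i_0$, which gives statement~2) symmetrically, and tracking which element each arc ``carries'': the element $a$ on the negative end and $b$ on the positive end. The degenerate case where $P$ consists of two parallel arcs ($\ell=2$) is immediate, since then $\eset^{t_{i_0}'}=\{a,b\}$, $e_{i_0}=a$, $e_{i_0+1}=b$ directly, so assume $\ell\ge 3$ and that the initiative node contributes both of its in-arcs (the one-arc boundary case is dealt with at the end).

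I would first isolate three structural facts. \textbf{(i)} Consecutive arcs of $P$ have distinct subscripts: whether they form a shared-destination or a shared-origin link they lie in one common round, which has exactly two elements because the instance is $2$-way, and those two elements are their subscripts. \textbf{(ii)} The (at most) two neighbors of an arc $(t_i,t_i')_{e_i}$ in $P$ are reached by links of \emph{different} kinds --- one is the unique other arc with destination $t_i'$, the other the unique other arc with origin $t_i$ --- so the kinds of the links along $P$ alternate (the only exception, two arcs sharing both endpoints, is the excluded case $\ell=2$). Since the first two arcs $i_0,i_0{+}1$ are the two in-arcs of $t_{i_0}'$, the link $(i_0,i_0{+}1)$ is shared-destination, and alternation then groups the arcs of $P$ into consecutive pairs $\{i_0{+}2k,\ i_0{+}2k{+}1\}$ ($k\in\mathbb{Z}$), each pair being the two in-arcs of one round $w_k$, with $w_0=t_{i_0}'$ and $\eset^{w_k}=\{e_{i_0+2k},e_{i_0+2k+1}\}$. \textbf{(iii)} By Lemma~\ref{lem:pseudo-path-arrival} the arcs of $P$ arrive as a contiguous sub-pseudo-path that grows from $t_{i_0}'$ outward, so going outward the rounds $w_k$ appear in increasing order: $w_0<w_1<w_2<\cdots$ on the positive end and $w_0<w_{-1}<w_{-2}<\cdots$ on the negative end.

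The induction (stated for the positive end; the negative end is the mirror image with $a$ and $b$ exchanged) maintains the invariant ``$e_{i_0+j}=b$ iff $j$ is odd''. The base case $j\in\{0,1\}$ holds because $\eset^{w_0}=\{a,b\}$ forces $e_{i_0}=a$ and $e_{i_0+1}=b$. For the inductive step at an even offset, suppose $e_{i_0+2k+1}=b$. Arc $i_0{+}2k{+}1$ joins pair $w_k$ through its destination and pair $w_{k+1}$ through its origin $r=t_{i_0+2k+1}$; since arcs $i_0{+}2k{+}1$ and $i_0{+}2k{+}2$ share that origin, $\eset^r=\{b,\ e_{i_0+2k+2}\}$ (using fact (i)). The shared-origin link $(i_0{+}2k{+}1,\ i_0{+}2k{+}2)$ imposes a common element on the three rounds $r,w_k,w_{k+1}$; and $r<w_k<w_{k+1}$ (the first two inequalities because arcs point forward in time, the last by fact (iii)), so Lemma~\ref{lem:app-forest-constructor-common-element} gives $\eset^r=\eset^{w_{k+1}}$. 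Hence $\eset^{w_{k+1}}=\{b,\ e_{i_0+2k+2}\}=\{e_{i_0+2k+2},\ e_{i_0+2k+3}\}$, which forces $e_{i_0+2k+3}=b$, while $e_{i_0+2k+2}\ne b$ already by fact (i). Reading off the invariant yields statement~2, and the mirrored induction yields statement~1.

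The step I expect to be the main obstacle is the even-offset step above: it is the only place that genuinely uses the pseudo-path axioms, and carrying it out correctly requires pinning down exactly \emph{which} triple of rounds the common-element condition of a shared-origin link constrains, verifying that the shared origin is the smallest of the three so that Lemma~\ref{lem:app-forest-constructor-common-element} applies with $w_{k+1}$ (resp.\ $w_{-(k+1)}$) in the role of the largest round, and disposing of the boundary situations --- a pair $\{i_0{+}2k,i_0{+}2k{+}1\}$ that is truncated because $P$ ends, or an initiative node $t_{i_0}'$ one of whose elements is new, so that it contributes a single arc and one of the two ends is empty (the corresponding statement then holding vacuously). These are finite, mechanical verifications of the same flavor as the worked example in the figures, not a conceptual difficulty.
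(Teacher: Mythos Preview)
Your proof is correct and takes essentially the same approach as the paper: both induct outward from the initiative node, invoking Lemma~\ref{lem:app-forest-constructor-common-element} at each shared-origin link to propagate the tracked element ($a$ on the negative end, $b$ on the positive). The only difference is organizational—you factor out the link-type alternation and the grouping into shared-destination pairs $\{i_0+2k,\,i_0+2k+1\}$ as preliminary structural facts before a clean induction over pairs, whereas the paper folds the link type and the containment ``$t_i'$ involves $a$'' into a single strengthened inductive hypothesis; the substance is identical.
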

\begin{proof}
For $1\leq i\leq i_0$, we shall prove some stronger results:
\begin{enumerate}
\item For any $1\leq i\leq i_0$, the subscript of the $i$-th arc is $a$ if and only if $i_0-i$ is even;
\item For any $1\leq i\leq i_0$, $t_i'$ involves element $a$;
\item For any $2\leq i\leq i_0$, if $i-i_0$ is odd, $t_{i-1}=t_{i}$ and otherwise $t_{i-1}'=t_i'$.
\end{enumerate}
We shall prove it by induction. The base case is that the subscript of the $i_0$-th arc is $a$, $t_{i_0}'$ involves $a$ and that $t_{i_0-1}=t_{i_0}$. For any $1\leq i\leq i_0-1$, given that the subscript of the $i$-th arc is $a$, $t_i'$ involves $a$ and $t_{i-1}=t_i$, the subscript of the $(i-1)$-th arc can't be $a$, as adjacent arcs have different subscripts, and $t_{i-2}\neq t_{i-1}$ (if $i\geq 3$), as each node has at most 2 our-arcs. Then it is clear $t_i'=t_{i-1}'$ and round $t_i=t_{i-1}, t_i',t_{i-1}'$ have a common element. By Lemma~\ref{lem:app-forest-constructor-common-element}, rounds $t_{i-1}, t_{i-1}'$ have the same set of elements. Therefore $t_{i-1}'$ involves $a$. On the other hand, given that the subscript of the $i$-th arc is not $a$, $t_i'$ involves $a$ and $t_{i-1}'=t_i'$, the subscript of the $(i-1)$-th arc is $a$, as $t_i'$ has two out-arcs and one of them has subscript $a$, and $t_{i-2}'\neq t_{i-1}'$ (if $i\geq 3$). Therefore, it is clear that $t_{i-1}'$ involves $a$ and $t_{i-2}=t_{i-1}$.

If $i_0<\ell$, i.e. there is another in-arc of the initiative node, for any $i_0<i\leq \ell$, it follows the symmetry with $1\leq i\leq i_0$.
\end{proof}

With the characterization of the subscripts, it is clear that for any $i\neq i_0-1$, the subscripts of $i$-th arc and $(i+3)$-th arc are different. The last possible violation of the third statement is that the $(i_0-1)$-th arc of $P$ and the $(i_0+2)$-th arc of $P$ have a common element. Note that these two arcs are out-arcs of the origins of in-arcs of the initiative node because of the characterization, this violation can be ruled out by the following lemma. 

\begin{lemma}
If the two origins of in-arcs of the initiative node have a same element, i.e. $\eset^{t_{i_0}}\cap \eset^{t_{i_0+1}}\neq \emptyset$, at least one of them doesn't extend another out-arc in later rounds. 
\end{lemma}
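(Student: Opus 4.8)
The plan is to unwind the definitions so that the statement reduces to a short argument about the ex-ante arcs incident to the initiative node. Write $v=t'_{i_0}$ for the initiative node, so that $\eset^{v}=\{a,b\}$ with $a=e_{i_0}$, and the two in-arcs of $v$ occurring in $P$ are $(t_{i_0},v)_{a}$ and $(t_{i_0+1},v)_{b}$. First I would dispose of the degenerate case $t_{i_0}=t_{i_0+1}$: then $\eset^{t_{i_0}}=\eset^{t_{i_0+1}}=\{a,b\}=\eset^{v}$, these two arcs are parallel, $P$ consists of exactly these two arcs, and so neither origin has any further arc of $P$ incident to it and the conclusion holds vacuously. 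Hence from now on assume $t_{i_0}\ne t_{i_0+1}$.

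Next I would pin down the shared element. Since $(t_{i_0},v)_{a}$ is an ex-ante arc, $a\notin\eset^{t}$ for every $t_{i_0}<t<v$, and symmetrically $b\notin\eset^{t}$ for every $t_{i_0+1}<t<v$; combined with $t_{i_0}\ne t_{i_0+1}$ this first forces $a\ne b$ and then rules out $a$ and $b$ from $\eset^{t_{i_0}}\cap\eset^{t_{i_0+1}}$. So the hypothesis $\eset^{t_{i_0}}\cap\eset^{t_{i_0+1}}\ne\emptyset$ gives $\eset^{t_{i_0}}\cap\eset^{t_{i_0+1}}=\{c\}$ for a single element $c\notin\{a,b\}$, and hence $\eset^{t_{i_0}}=\{a,c\}$ and $\eset^{t_{i_0+1}}=\{b,c\}$; in particular $\eset^{t_{i_0}}\cap\eset^{v}=\{a\}$.

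Now I would show, assuming without loss of generality $t_{i_0}<t_{i_0+1}$ (the roles of the two in-arcs, and the conclusion, are symmetric), that $t_{i_0}$ is an origin that does not extend another out-arc in $P$. Suppose it did; then $P$ contains an arc adjacent to $(t_{i_0},v)_{a}$ other than $(t_{i_0+1},v)_{b}$. This arc cannot share the destination $v$, because $v$ already has in $P$ exactly its two in-arcs $(t_{i_0},v)_{a}$ and $(t_{i_0+1},v)_{b}$; so it shares the origin $t_{i_0}$, and as $\eset^{t_{i_0}}=\{a,c\}$ it must be $(t_{i_0},w)_{c}$, where $w$ is the first round after $t_{i_0}$ containing $c$. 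By the same-origin clause in the definition of a pseudo-path, adjacency of $(t_{i_0},v)_{a}$ and $(t_{i_0},w)_{c}$ in $P$ requires rounds $t_{i_0},v,w$ to share a common element, which must then lie in $\eset^{t_{i_0}}\cap\eset^{v}=\{a\}$, i.e.\ $a\in\eset^{w}$. But $c\in\eset^{t_{i_0+1}}$ forces $w\le t_{i_0+1}<v$, so $t_{i_0}<w<v$, contradicting $a\notin\eset^{t}$ for $t_{i_0}<t<v$. Hence $t_{i_0}$ does not extend another out-arc, which is the assertion.

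The step I expect to be the main obstacle is not any computation but the bookkeeping: translating ``$t_{i_0}$ extends another out-arc in later rounds'' into ``$P$ contains an arc adjacent to $(t_{i_0},v)_{a}$ through the origin $t_{i_0}$'', invoking the same-origin clause of the pseudo-path definition (with its common-element condition) rather than the same-destination clause, and handling the parallel-arc corner case. Once the neighbouring arc is identified, the contradiction with the ex-ante property of $(t_{i_0},v)_{a}$ is immediate; in fact the argument shows a little more, namely that it is always the earlier of the two origins that fails to extend.
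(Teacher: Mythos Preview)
Your argument has a genuine gap at the step where you claim the common element $c$ lies outside $\{a,b\}$. The ex-ante property of $(t_{i_0},v)_a$ gives $a\notin\eset^t$ for $t_{i_0}<t<v$, and that of $(t_{i_0+1},v)_b$ gives $b\notin\eset^t$ for $t_{i_0+1}<t<v$. These two ranges are not symmetric in the way you use them: if $t_{i_0}<t_{i_0+1}$, the first property does exclude $a$ from $\eset^{t_{i_0+1}}$, but the second says nothing about $\eset^{t_{i_0}}$ because $t_{i_0}\notin(t_{i_0+1},v)$. So you have not excluded $b\in\eset^{t_{i_0}}$, i.e.\ the possibility $\eset^{t_{i_0}}=\{a,b\}$. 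In that case $\eset^{t_{i_0}}\cap\eset^v=\{a,b\}$, the other out-arc of $t_{i_0}$ is $(t_{i_0},w)_b$ with $b\in\eset^w$, the same-origin adjacency condition is satisfied via $b$, and your contradiction through $a\notin\eset^w$ never fires.

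The paper's proof sidesteps this entirely. It does not try to pin down $c$ or to show adjacency fails; it just notes that whichever origin is earlier (say $t_{i_0}<t_{i_0+1}$) has its second out-arc, of subscript $c$ (which may well be $b$), landing in some round $w\le t_{i_0+1}<t'_{i_0}$ because $c\in\eset^{t_{i_0+1}}$. Since every node has at most two out-arcs and both of $t_{i_0}$'s are now accounted for with destinations $\le t'_{i_0}$, that origin cannot extend another out-arc in any later round. Your route is repairable along the same line: even in the bad case $\eset^{t_{i_0}}=\{a,b\}$, the arc $(t_{i_0},w)_b$ arrives in round $w<t'_{i_0}$, so if it were the $(i_0-1)$-th arc of $P$ it would arrive before the $i_0$-th arc and contradict the choice of $i_0$ as the index of the earliest arc. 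But that is precisely the arrival-time observation the paper uses directly.
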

\begin{proof}
If the two origins are the same, it is clear that the origin has already extended two out-arcs and it can't further extend. Otherwise, suppose the element is $c$. One of the origins have already extended an out-arc with subscript $c$ before round $t_{i_0}'$ and it can't further extend.
\end{proof}

\subsection{Proof of Lemma~\ref{lem:ocs-unmatched-ready-ratio}}

We shall prove a stronger claim that lets the left-hand-side probability be further conditioned on an arbitrary realization of $\state^{i_{m-1}-4}$.
This probability equals:
\[
    \Pr \Big[ \state^{i_{m-1}} = \unmatched \mid X_{i_1}, \dots, X_{i_{m-1}} = 0, \state^{i_{m-1}-4} \Big]
    =
    \Pr \Big[ \state^{i_{m-1}} = \unmatched \mid X_{i_{m-1}} = 0, \state^{i_{m-1}-4} \Big]
    ~.
\]

By Bayes' rule, it is further equal to:
\[
    \frac{\Pr \big[ \state^{i_{m-1}} = \unmatched \mid \state^{i_{m-1}-4} \big]}{\Pr \big[ \state^{i_{m-1}} = \unmatched \mid \state^{i_{m-1}-4} \big] + \Pr \big[ \state^{i_{m-1}} = \ready \mid \state^{i_{m-1}-4} \big]}
    ~.
\]

It remains to show that:
\begin{equation}
    \label{eqn:ocs-unmatched-ready-ratio}
    p \cdot \Pr \big[ \state^{i_{m-1}} = \unmatched \mid \state^{i_{m-1}-4} \big]
    \ge
    \Pr \big[ \state^{i_{m-1}} = \ready \mid \state^{i_{m-1}-4} \big]
    ~.
\end{equation}

Consider the transition matrix $P^+$ of the automaton $\sigma^+$, with columns and rows in the order of $\unmatched$, $\ready$, and $\matched$:
\[
    P^+
    =
    \begin{bmatrix}
        0 & 1-p & p \\
        0 & 0 & 1 \\
        1 & 0 & 0
    \end{bmatrix}
    ~.
\]

The transition after four steps (from $i_{m-1}-4$ to $i_{m-1}$) is:
\[
    \big(P^+\big)^4
    =
    \begin{bmatrix}
        p^2 & (1-p)^2 & 2p(1-p) \\
        p & 0 & 1-p \\
        1-p & p(1-p) & p^2
    \end{bmatrix}
    ~.
\]

Since the first column multplied by $p$ dominates the second column in every entry, we prove Eqn.~\eqref{eqn:ocs-unmatched-ready-ratio}, and thus the lemma.

\subsection{Proof of Lemma~\ref{lem:forest-ocs-structural}}
Consider the collection of all maximal tree-paths consisting of nodes in $U$. 

Note that in the good forest, for any node $p$ with two children $c$ and $c'$, the corresponding rounds have no common element. Since every node in $U$ involves element $e$, for any node $p\in U$, at most one of its children is in $U$. Since the collection consists of maximal tree-paths consisting of nodes in $U$, for any $p,c\in U$ such that $p$ is the parent of $c$, $p$ is on one tree-path in the collection if and only if $c$ is on the tree-path. Therefore, for any node $p$ and any tree-path in the collection, the neighbors (i.e., parent or children) of $p$ in the path is fixed. It is clear that any two distinct tree-paths are disjoint. Moreover, if there is an arc $(p,c)$ between nodes in distinct tree-paths, there should be two tree-paths involving node $p$, which contradicts to the fact the tree-paths are pairwise disjoint.

Consider one path in the collection consisting of nodes $t_1,t_2,\cdot,t_k\in U$. For each $2\leq i\leq k$, if arc $(t_{i-1},t_i)$ have subscript $e$, it is clear that the algorithm sets $\ell_{t_i}(e)=\ell_{t_{i-1}}(e)$. Otherwise, as both node $t_{i-1}$ and $t_i$ involve $e$ and the element corresponding to subscript of the arc, the label of the other element in these nodes clearly imply $\ell_{t_i}(e)=\ell_{t_{i-1}}(e)$. Therefore, element $e$ has the same label in each tree-path.

\subsection{Proof of Lemma~\ref{lem:forest-ocs-from-origin}}
    By symmetry, consider label $\ell = \head$ without loss of generality.
    The lemma holds vacuously for $k \ge 4$ since by design the automaton never selects the same label four times in a roll.%
    \footnote{The longest identical sections are selecting $\head$ in three consecutive rounds from state $\tail^2$.}
    Next we prove the cases of $k = 1, 2, 3$.
    The remaining argument lets $q^j$ denote the state after round $j$, and lets $s^j$ denote the selected label in round $j$.

    By the symmetry of automaton $\sigma^*$, for any round $j$:
    \begin{equation}
        \label{eqn:forest-ocs-automaton-symmetry}
        \Pr \big[ q^j = q_\head \big] = \Pr \big[ q^j = q_\tail \big]
        ~,
        \qquad
        \Pr \big[ q^j = q_{\head^2} \big] = \Pr \big[ q^j = q_{\tail^2} \big]
        ~.
    \end{equation}

    Further, by the above symmetry and by:
    \begin{align*}
        \Pr \big[ q^j = q_{\head^2} \big]
        &
        = \Pr \big[ q^{j-1} = q_\head \big] \Pr \big[ s^j = \head \mid q^{j-1} = q_\head \big] = \Pr \big[ q^{j-1} = q_\head \big] \frac{1-\beta}{2}
        ~,
        \\
        \Pr \big[ q^j = q_\head \big]
        &
        \ge \Pr \big[ q^{j-1} = q_\tail \big] \Pr \big[ s^j = \head \mid q^{j-1} = q_\tail \big] = \Pr \big[ q^{j-1} = q_\tail \big] \frac{1+\beta}{2}
        ~,
    \end{align*}
    we have:
    \begin{equation}
        \label{eqn:forest-ocs-automaton-state-bias}
        \frac{\Pr [ q^j = q_\head ]}{\Pr [ q^j = q_{\head^2} ]}
        \ge
        \frac{1+\beta}{1-\beta}
        ~.
    \end{equation}

    If $k = 1$, the symmetry implies that the marginal probability of selecting each label in round $i$ equals $\frac{1}{2}$. 
    
    If $k = 2$, the probabilities of selecting $\tail$ two consecutive times from each of the states are:
    \begin{align*}
        &
        \Pr \big[ s^i = s^{i+1} = \tail \mid q^{i-1} = q_\origin \big]
        = \Pr \big[ s^i = \tail \mid q^{i-1} = q_\origin \big] \Pr \big[ s^{i+1} = \tail \mid q^i = q_\tail \big]
        = \frac{1-\beta}{4}
        ~,
        \\
        &
        \Pr \big[ s^i = s^{i+1} = \tail \mid q^{i-1} = q_\head \big]
        = \Pr \big[ s^i = \tail \mid q^{i-1} = q_\head \big] \Pr \big[ s^{i+1} = \tail \mid q^i = q_\tail \big]
        = \frac{1-\beta^2}{4}
        ~,
        \\[1.5ex]
        &
        \Pr \big[ s^i = s^{i+1} = \tail \mid q^{i-1} = q_\tail \big]
        = \Pr \big[ s^i = \tail \mid q^{i-1} = q_\tail \big] \Pr \big[ s^{i+1} = \tail \mid q^i = q_{\tail^2} \big]
        = 0
        ~,
        \\[1.5ex]
        &
        \Pr \big[ s^i = s^{i+1} = \tail \mid q^{i-1} = q_{\head^2} \big]
        = \Pr \big[ s^i = \tail \mid q^{i-1} = q_{\head^2} \big] \Pr \big[ s^{i+1} = \tail \mid q^i = q_\origin \big]
        = \frac{1}{2}
        ~,
        \\[1.5ex]
        &
        \Pr \big[ s^i = s^{i+1} = \tail \mid q^{i-1} = q_{\tail^2} \big]
        = 0
        ~.
    \end{align*}

    Hence:
    \[
        \Pr \big[ s^i = s^{i+1} = \tail \big]
        =
        \Pr \big[ q^{i-1} = q_\origin \big] \frac{1-\beta}{4}
        +
        \Pr \big[ q^{i-1} = q_\head \big] \frac{1-\beta^2}{4}
        +
        \Pr \big[ q^{i-1} = q_{\head^2} \big] \frac{1}{2}
        ~.
    \]

    By Eqn.~\eqref{eqn:forest-ocs-automaton-state-bias}, we further get that:
    \begin{align*}
        \Pr \big[ s^i = s^{i+1} = \tail \big]
        &
        \ge
        \Pr \big[ q^{i-1} = q_\origin \big] \frac{1-\beta}{4}
        + 
        \Pr \big[ q^{i-1} = q_\head \text{ or } q_{\head^2} \big]
        \Big( \frac{1+\beta}{2} \frac{1-\beta^2}{4} + \frac{1-\beta}{2} \frac{1}{2} \Big)
        \\
        &
        =
        \Pr \big[ q^{i-1} = q_\origin \big] \frac{1-\beta}{4}
        + 
        \Pr \big[ q^{i-1} = q_\head \text{ or } q_{\head^2} \big] \Big( \frac{(1+\beta)^2}{2} + 1 \Big) \frac{1-\beta}{4}
        \\
        &
        = 
        \Big( \Pr \big[ q^{i-1} = q_\origin \big] + 2 \Pr \big[ q^{i-1} = q_\head \text{ or } q_{\head^2} \big] \Big) \frac{1-\beta}{4}
        \tag{$\beta = \sqrt{2}-1$}
        \\
        &
        = 
        \frac{1-\beta}{4}
        ~.
        \tag{Eqn.~\eqref{eqn:forest-ocs-automaton-symmetry}}
    \end{align*}
    

    If $k = 3$, the automaton must start from $q_{\head^2}$ in order to selet $\tail$ in three consecutive rounds.
    The probability equals:
    \[
        \Pr \big[ q^{i-1} = q_{\head^2} \big]
        \Pr \big[ s^i = \tail \mid q^{i-1} = q_{\head^2} \big]
        \Pr \big[ s^{i+1} = \tail \mid q^i = q_\origin \big]
        \Pr \big[ s^{i+2} = \tail \mid q^{i-1} = q_\tail \big]
        ~.
    \]

    The first term is at most $\frac{1-\beta}{4}$ by Equations~\eqref{eqn:forest-ocs-automaton-symmetry} and \eqref{eqn:forest-ocs-automaton-state-bias}.
    The last three equal $1$, $\frac{1}{2}$, and $\frac{1-\beta}{2}$ respectively.
    Hence:
    \[
        \Pr \big[ s^i = s^{i+1} = s^{i+2} = \tail \big] = \frac{(1-\beta)^2}{16} < \frac{(1-\beta)^2}{8}
        ~.
    \]

\subsection{Proof of Lemma~\ref{lem:forest-ocs-from-other}}
    It suffices to prove it for $i = 2$, since otherwise it reduces to the case of $i = 2$ by conditioning on the state after round $i-2$.
    Further, if the automaton starts from the original state $q_\origin$, it follows from Lemma~\ref{lem:forest-ocs-from-origin}.
    If the automaton starts from $q_{\head^2}$ or $q_{\tail^2}$, it resets back to the original state $q_\origin$ after the first round and once again the lemma reduces to Lemma~\ref{lem:forest-ocs-from-origin}.
    Finally the lemma holds vacuously for $k \ge 4$ since automaton $\sigma^*$ never selects the same label in four consecutive rounds.

    The remaining proof consider starting from $q_\head$ and $q_\tail$ and $k \in \{ 1, 2, 3 \}$.
    By symmetry, we consider label $\ell = \head$ without loss of generality.
    If the automaton starts from $q_\head$:
    \begin{align*}
        \Pr \big[ s^2 = \tail \mid q^0 = q_\head \big]
        &
        = \Pr \big[ s^1 = \head \mid q^0 = q_\head \big] \Pr \big[ s^2 = \tail \mid q^1 = q_{\head^2} \big] \\
        &
        \quad
        + \Pr \big[ s^1 = \tail \mid q^0 = q_\head \big] \Pr \big[ s^2 = \tail \mid q^1 = q_\tail \big]
        \\
        &
        =
    \frac{1-\beta}{2} \cdot 1 + \frac{1+\beta}{2} \cdot \frac{1-\beta}{2}
        \\[.5ex]
        &
        = \frac{1}{2}
        ~;
        \tag{$\beta = \sqrt{2}-1$}
        \\[2ex]
        \Pr \big[ s^2 = s^3 = \tail \mid q^0 = q_\head \big]
        &
        = \Pr \big[ s^1 = \head \mid q^0 = q_\head \big] \Pr \big[ s^2 = \tail \mid q^1 = q_{\head^2} \big] \Pr \big[ s^3 = \tail \mid q^2 = q_\origin \big] \\
        &
        \quad
        + \Pr \big[ s^1 = \tail \mid q^0 = q_\head \big] \Pr \big[ s^2 = \tail \mid q^1 = q_\tail \big] \Pr \big[ s^3 = \tail \mid q^2 = q_{\tail^2} \big]
        \\
        &
        = \frac{1-\beta}{2} \cdot 1 \cdot \frac{1}{2} + \frac{1+\beta}{2} \cdot \frac{1-\beta}{2} \cdot 0
        \\
        &
        = \frac{1-\beta}{4}
        ~;
        \\[2ex]
        \Pr \big[ s^2 = s^3 = s^4 = \tail \mid q^0 = q_\head \big]
        &
        = \Pr \big[ s^1 = \head \mid q^0 = q_\head \big] \Pr \big[ s^2 = \tail \mid q^1 = q_{\head^2} \big]
        \\
        &
        \qquad
        \Pr \big[ s^3 = \tail \mid q^2 = q_\origin \big] \Pr \big[ s^4 = \tail \mid q^3 = q_\head \big]
        \\
        &
        = \frac{1-\beta}{2} \cdot 1 \cdot \frac{1}{2} \cdot \frac{1-\beta}{2}
        \\
        &
        = \frac{(1-\beta)^2}{8}
        ~.
    \end{align*}

    The last case omits the $s^1 = \tail$ option because the automaton by design never selects $\tail$ in four consecutive rounds.

    If the automaton starts from $q_\tail$:
    \begin{align*}
        \Pr \big[ s^2 = \tail \mid q^0 = q_\tail \big]
        &
        = \Pr \big[ s^1 = \head \mid q^0 = q_\tail \big] \Pr \big[ s^2 = \tail \mid q^1 = q_\head \big] \\
        &
        \quad
        + \Pr \big[ s^1 = \tail \mid q^0 = q_\tail \big] \Pr \big[ s^2 = \tail \mid q^1 = q_{\tail^2} \big]
        \\
        &
        =
        \frac{1+\beta}{2} \cdot \frac{1+\beta}{2} + \frac{1-\beta}{2} \cdot 0 
        \\[.5ex]
        &
        = \frac{1}{2}
        ~;
        \tag{$\beta = \sqrt{2}-1$}
        \\[2ex]
        \Pr \big[ s^2 = s^3 = \tail \mid q^0 = q_\tail \big]
        &
        = \Pr \big[ s^1 = \head \mid q^0 = q_\tail \big] \Pr \big[ s^2 = \tail \mid q^1 = q_\head \big] \Pr \big[ s^3 = \tail \mid q^2 = q_\tail \big] \\[1ex]
        &
        = \frac{1+\beta}{2} \cdot \frac{1+\beta}{2} \cdot \frac{1-\beta}{2}
        \\
        &
        = \frac{1-\beta}{4}
        ~.
        \tag{$\beta = \sqrt{2}-1$}
    \end{align*}

    The second case omits the $s^1 = \tail$ option since automaton $\sigma^*$ cannot select $\tail$ in three consecutive rounds starting from state $q_\tail$.
    
    Finally, it is impossible to have $s^2 = s^3 = s^4 = \tail$ starting from state $q^0 = q_\tail$ because from here we cannot have $q^1 = q_{\head^2}$, the only state of automaton $\sigma^*$ that could lead to selecting $\tail$ in the next three rounds.

\subsection{Proof of Lemma~\ref{lem:forest-ocs-fork}}
    Let $q^j, s^j$ denote the states and selected labels of the first copy, and let $\hat{q}^j, \hat{s}^j$ denote those of the second copy.
    If the initial state is the original state $q_\origin$, it follows by Lemma~\ref{lem:forest-ocs-from-origin}.
    If the initial state is $q_{\head^2}$ or $q_{\tail^2}$, the lemma holds because the first selections in the two copies are the same.

    By symmetry, we next without loss of generality that the initial state is $q^0 = \hat{q}^0 = q_\head$.
    From $q_\head$ it is impossible to select $\head$ in the next two rounds, or to select $\tail$ in the next three rounds.
    Hence, the lemma follows if $k \ge 2$ or $\hat{k} \ge 3$.
    It remains to consider $k = 1$, and $\hat{k} = 1, 2$.

    If $k = \hat{k} = 1$:
    \begin{align*}
        \Pr \big[ s^1 = \head, \hat{s}^1 = \tail \mid q^0 = \hat{q}^0 = q_\head \big]
        &
        = 
        \Pr \big[ s^1 = \head  \mid q^0 = q_\head \big]
        \Pr \big[ \hat{s}^1 = \tail \mid \hat{q}^0 = q_\head \big]
        \\[1ex]
        &
        = \frac{1-\beta}{2} \frac{1+\beta}{2}
        \\
        &
        < \frac{1}{4}
        ~.
    \end{align*}

    If $k = 1$ and $\hat{k} = 2$:
    \begin{align*}
        \Pr \big[ s^1 = \head, \hat{s}^1 = \hat{s}^2 = \tail \mid q^0 = \hat{q}^0 = q_\head \big]
        &
        = 
        \Pr \big[ s^1 = \head \mid q^0 = q_\head \big]
        \Pr \big[ \hat{s}^1 = \hat{s}^2 = \tail \mid \hat{q}^0 = q_\head \big]
        \\[1ex]
        &
        = \frac{1-\beta}{2} \frac{1-\beta^2}{4}
        \\
        &
        < \frac{1-\beta}{8}
        ~.
    \end{align*}

\section{Missing Proofs in Section~\ref{sec:matching}}

\subsection{Proof of Theorem~\ref{thm:lp-solution}}

\label{app:lp-solution}

\begin{proof}
    We first verify the \emph{feasibility} of the stated solution.
    Constraint~\eqref{eqn:two-way-lp-gain-split} holds with equality by the definitions of $a(k)$ and $b(k)$.

    Constraint~\eqref{eqn:approximate-dual-feasible} also holds with equality.
    When $k = 0$, it follows by:
    \begin{align}
        \Gamma
        &
        = p(0) - \sum_{i=0}^{\infty} \Big(\Big(\frac{2}{3}\Big)^i - \Big(\frac{2}{3}\Big)^{i+1}\Big) p(i)
        \tag{Definition of $\Gamma$, $p(0) = 1$}
        \\
        &
        = \sum_{i=0}^\infty \Big(\frac{2}{3}\Big)^{i+1} p(i) - \sum_{i=1}^\infty \Big(\frac{2}{3}\Big)^i p(i)
        \notag
        \\
        &
        =
        \sum_{i=0}^\infty \Big(\frac{2}{3}\Big)^{i+1} \big( p(i)-p(i+1) \big) 
        ~.
        \label{eqn:two-way-lp-alternative-Gamma}
    \end{align}

    This equals $2 b(0)$ by definition.
    Then, it further holds inductively for $k \ge 1$ because:
    \begin{align*}
        b(k)
        &
        = \frac{3}{2} b(k-1) - \frac{1}{2} \big( p(k-1) - p(k) \big)
        \tag{Definition of $b(k), b(k-1)$}
        \\
        &
        = b(k-1) - \frac{1}{2} a(k-1)
        ~.
        \tag{Definition of $a(k-1)$}
    \end{align*}

    That is, the left-hand-side of Constraint~\eqref{eqn:approximate-dual-feasible} stays the same from $k-1$ to $k$.
    Since the above equation $b(k) = b(k-1) - \frac{1}{2} a(k-1)$ would also imply Constraint~\eqref{eqn:two-way-lp-monotone} provided that $a(k-1) \ge 0$, it remains to verify that $a(k)$ and $b(k)$ are nonnegative.
    By its definition and by $p(k+1) \le \frac{2}{3} p(k)$, we get that $b(k) \ge 0$.
    The non-negativity of $a(k)$ follows by its definition and by:
    \begin{align*}
        b(k)
        &
        = \frac{1}{3} p(k) - \sum_{i=k+1}^\infty \Big( \Big(\frac{2}{3}\Big)^{i-k-1} - \Big(\frac{2}{3}\Big)^{i-k} \Big) p(i) 
        \tag{Definition of $b(k)$}
        \\
        &
        \le \frac{1}{3} p(k)
        \\[2ex]
        &
        \le p(k) - p(k+1)
        ~.
        \tag{$p(k+1) \le \frac{2}{3} p(k)$}
    \end{align*}

    Next we establish its \emph{optimality}.
    Multiplying Constraint~\eqref{eqn:approximate-dual-feasible} by $\big(\frac{2}{3}\big)^k$ and summing over $k \ge 0$:
    \[
        \sum_{k=0}^\infty \Big( \frac{2}{3} \Big)^k \Big( \sum_{i=0}^{k-1} a(i)+2 b(k) \Big) \geq \sum_{k=0}^\infty \Big( \frac{2}{3} \Big)^k \Gamma
        ~.
    \]
    
    Grouping terms on the left and dividing both sides by $3$, this is:
    \[
        \sum_{i=0}^\infty \Big(\frac{2}{3}\Big)^{i+1} a(i) + \sum_{i=0}^\infty \Big(\frac{2}{3}\Big)^{i+1} b(i) \ge \Gamma
        ~.
    \]

    Further by Constraint~\eqref{eqn:two-way-lp-gain-split}, the left-hand-side is at most:
    \[
        \sum_{i=0}^\infty \Big(\frac{2}{3}\Big)^{i+1} \big( p(i)-p(i+1) \big)
        ~.
    \]

    This equals the optimal $\Gamma$ in the theorem by Eqn.~\eqref{eqn:two-way-lp-alternative-Gamma}.
\end{proof}

\subsection{Proof of Theorem~\ref{thm:multi-way-lp-solution}}
\label{app:multi-way-lp-solution}

\begin{proof}
    We first verify its \emph{feasibility}.
    Constraint Eqn.~\eqref{eqn:multi-way-lp-gain-split} holds with equality by definition, i.e.:
    \begin{equation}
        \label{eqn:multi-way-lp-solution-gain-split}
        a(y) + b(y) = -p'(y)
        ~,
        \qquad\qquad
        \forall y \ge 0
        ~.
    \end{equation}

    Constraint Eqn.~\eqref{eqn:multi-way-approximate-dual-feasible} holds for $y = 0$ from integration by parts:
    \[
        b(0) = - \int_0^\infty p'(z) e^{-z} dz = \int_{0}^\infty e^{-z} \big( 1 - p(y) \big) dz = \Gamma
        ~.
    \]

    It further holds for $y > 0$ since its left-hand-side is a constant for all $y$.
    Indeed, the derivative of the left-hand-side is:
    \begin{align*}
        a(y) + b'(y)
        &
        = a(y) - e^y \int_y^\infty p'(z)e^{-z} dz + p'(y)
        \tag{Definition of $b(y)$}
        \\
        &
        = a(y) + b(y) + p'(y)
        \tag{Definition of $b(y)$}
        \\[2ex]
        &
        = 0
        ~.
        \tag{Eqn.~\eqref{eqn:multi-way-lp-solution-gain-split}}
    \end{align*}
    
    Constraint~\ref{eqn:multi-way-lp-monotone} holds, i.e., $b(y)$ is decreasing because:
    \begin{align*}
        b'(y)
        &
        = - e^y \int_y^\infty p'(z)e^{-z} dz + p'(y)
        \tag{Definition of $b(y)$}
        \\
        &
        = - e^y \int_y^\infty \big( p'(y) - p'(z) \big) e^{-z} dz
        \\[.5ex]
        &
        \le 0
        ~.
        \tag{Convexity of $p$}
    \end{align*}

    Finally, $b(y)$ is non-negative by definition and by that $p$ is decreasing.
    The non-negativity of $a(y)$ follows by:
    \begin{align*}
        a(y)
        &
        = -p'(y) + e^y \int_y^\infty p'(z) e^{-z} dz
        \tag{Definitions of $a(y), b(y)$}
        \\
        &
        = e^y \int_{y}^\infty e^{-z} p''(z)dz
        \tag{Integration by parts}
        \\[1ex]
        &
        \ge 0
        ~.
        \tag{Convexity of $p$}
    \end{align*}
\end{proof}

\end{document}